\numberwithin{figure}{section}
\numberwithin{equation}{section}
\title[Shifted tableaux and products of Schur's symmetric functions]
{Shifted tableaux and products of Schur's symmetric functions}
\author[K.~Shigechi]{Keiichi~Shigechi}
\email{k1.shigechi AT gmail.com}
\date{\today}
\newcommand\tikzpic[2]{
\raisebox{#1\totalheight}{
\begin{tikzpicture}
#2
\end{tikzpicture}
}}
\newtheorem{theorem}[figure]{Theorem}
\newtheorem{example}[figure]{Example}
\newtheorem{lemma}[figure]{Lemma}
\newtheorem{defn}[figure]{Definition}
\newtheorem{prop}[figure]{Proposition}
\newtheorem{cor}[figure]{Corollary}
\newtheorem{remark}[figure]{Remark}
\begin{document}
\begin{abstract}
We introduce a new combinatorial object, semistandard increasing 
decomposition tableau and study its relation to a semistandard 
decomposition tableau introduced by Kra\'skiewicz and developed 
by Lam and Serrano.
We also introduce generalized Littlewood--Richardson coefficients for 
products of Schur's symmetric functions and give combinatorial 
descriptions in terms of tableau words.
The insertion algorithms play central roles for proofs.
A new description of shifted Littlewood--Richardson coefficients 
is given in terms of semistandard increasing decomposition 
tableaux.
We show that a ``big" Schur function is expressed as a sum of 
products of two Schur $P$-functions, and vice versa.
As an application, we derive two Giambelli formulae for big Schur 
functions: one is a determinant and the other is a Pfaffian.
\end{abstract}

\maketitle

\section{Introduction}
The ordinary representation of the symmetric group plays important 
roles in symmetric functions, representations, and the combinatorics
of Young tableaux (see {\it e.g.} \cite{Ful97,FulHar91,Sag01} and references therein).
The projective representations of the symmetric and alternating 
groups were studied by Schur in \cite{Schur11} and Schur $P$- and $Q$-functions 
were introduced.
The expansion of the $Q$-function in terms of power-sum symmetric 
functions gives the characters of the projective representation, 
as the Schur functions do for the ordinary representation.
After the invention of {\it shifted} tableaux by Thrall \cite{Thr52},
there has been progress in connecting shifted tableaux and the projective 
representation.
Shifted tableaux have combinatorial structures as the theory of 
ordinary tableaux: shifted analogue of the Robinson--Schensted--Knuth 
correspondence studied by Sagan \cite{Sag87} and Worley \cite{Wor84}, and 
a shifted analogue of the Littlewood--Richardson coefficient 
by Stembridge \cite{Ste89}. 
Sagan and Worley introduced two combinatorial tools for the study 
of $Q$-functions: one is a shifted insertion and the other 
is shifted {\it jeu-de-taquin}. 
The shifted insertion possesses a shifted analogue of Schensted correspond, 
namely a bijection between a permutation and a pair of shifted tableaux of 
the same shape. 
Haiman constructed a {\it mixed} insertion which is dual to the shifted 
insertion in \cite{Hai89}. Here, ``dual" means that application of the mixed insertion 
to the inverse of a permutation $w$ gives the same tableaux as the shifted 
insertion applied to $w$.
The pair of tableaux consists of the mixed insertion tableau and the 
mixed recording tableau.
In \cite{Ser10}, Serrano introduce the 
{\it shifted Knuth (or shifted plactic) relations} as a shifted analogue 
of the plactic relations introduced by Knuth \cite{Knu70} and further developed 
as the plactic monoid by Lascoux and Sch\"utzenberger \cite{LasSch81}. 
Two words have the same mixed insertion tableau if and only if 
they are shifted Knuth-equivalent. 
Another combinatorial object is a semistandard decomposition 
tableau (SSDT) \cite{Kra89,Lam96,Ser10}.
A SSDT is an output of Kra\'skiewicz insertion for a word.
The algorithm was introduced by Kra\'skiewicz for the hyperoctahedral
group~\cite{Kra89} and further developed by Lam to study the $B_{n}$ Stanley symmetric
functions~\cite{Lam96}.
A SSDT is also characterized by a shifted tableau, namely two words are shifted
plactic equivalent if and only if they have the same semistandard 
Kra\'skiewicz insertion tableau~\cite{Ser10}.

The expansion coefficients of a product of two Schur functions 
in terms of Schur functions are known as a Littlewood--Richardson
coefficient. 
There are many combinatorial models to describe this coefficient:  
Littlewood--Richardson (LR) tableaux~\cite{LitRic34}, lattice words based on 
plactic monoids (see {\it e.g.}~\cite{LasLecThi02,LasSch81}), and 
puzzles~\cite{KnuTao99,KnuTaoWood04}. 
Similarly, expansion coefficients of a product of two $P$-functions 
are known as Littlewood--Richardson--Stembridge (LRS) coefficients.
Its combinatorial description based on the study of the projective 
representation is given by Stembridge in \cite{Ste89}.
Another one in terms of SSDT's is given by Cho in \cite{Cho13}. 
A description of expansion coefficients of a $P$-function in terms of 
Schur functions is also given in \cite{Ste89}.
This coefficient also appears in an expansion of ``big" 
Schur function in terms of $Q$-functions.
Thus, the expansion coefficient of a product of Schur's symmetric 
functions in terms of Schur functions or $P$-functions can be 
calculated by successive applications of the above-mentioned 
expansion coefficients.

The purpose of this paper is three-fold.
First, we introduce shifted analogue of a Yamanouchi word (shifted 
Yamanouchi word for short) and a new concept called 
{\it semistandard increasing 
decomposition tableau} (SSIDT). 
We show the shifted and mixed insertion algorithms are characterized 
by Yamanouchi words and shifted Yamanouchi words, respectively.
We construct a SSIDT from a SSDT. By construction, we have a bijection 
between a SSDT and a SSIDT, which implies that a SSIDT is bijective 
to a shifted tableau.
A word obtained from an ordinary Young tableau is a concatenation of 
weakly increasing sequences. 
Since a SSDT is a concatenation of hook words and its shape is a strict 
partition $\lambda$, it can be viewed as a shifted analogue of an ordinary Young tableau.
On the other hand, a SSIDT is expressed as a concatenation of 
weakly increasing sequences. 
The shape of a SSIDT is not $\lambda$ but characterized by 
$\lambda$.
From these properties, a SSIDT is also viewed as another shifted 
analogue of an ordinary Young tableau.
We have two types of SSIDT for a given strict partition $\lambda$,
which have the shapes $\epsilon^+(\lambda)$ and $\epsilon^{-}(\lambda)$ 
(see Section \ref{sec:SSIDT} for definition).
By construction, a SSIDT of shape $\epsilon^{\pm}(\lambda)$ is a 
bijective to a SSDT of shape $\lambda$.
There is also a one-to-one correspondence between SSIDT's of shape $\epsilon^+(\lambda)$
and $\epsilon^{-}(\lambda)$.

Secondly, we survey generalized Littlewood--Richardson 
coefficients.
We have four types of symmetric functions: a Schur function 
$s_{\alpha}$, a ``big" Schur function $\hat{S}_{\beta}$, a 
$P$-function $P_{\lambda}$ and a $Q$-function $Q_{\lambda}$ where $\alpha$ and $\beta$ are 
ordinary partitions and $\lambda$ is a strict partition.
A Schur $Q$-function $Q_{\lambda}$ is related to the $P$-function 
by $Q_{\lambda}=2^{l(\lambda)}P_{\lambda}$ where $l(\lambda)$ is the length of $\lambda$.
Roughly speaking, $Q_{\lambda}$ contains the same information as $P_{\lambda}$.
We consider a product of these three functions and expand 
it in terms of Schur functions $s_{\alpha}$ or $P$-functions $P_{\lambda}$
(see Section \ref{sec:GLR} for definition).
We call these expansion coefficients generalized Littlewood--Richardson 
coefficients.
Generalized Littlewood--Richardson coefficients can be essentially calculated
by using LR coefficients and LRS coefficients successively.
However, this method is not efficient.
We propose simple combinatorial descriptions of generalized
Littlewood--Richardson coefficients in terms of tableau
words. 
The proofs are elementary but combinatorial. 
We make use of insertion algorithms introduced in Section \ref{sec:Schur}.
We also give alternative combinatorial descriptions 
for the Littlewood--Richardson--Stembridge coefficients, 
one of which is in terms of SSIDT (see Theorem \ref{thrm:PPPword}).

Finally, we study a relation between a ``big" Schur function 
$\hat{S}_{\alpha}$ and Schur $P$-functions.
A Schur function $\hat{S}_{\alpha}$ can be expressed as 
a sum of products of two Schur $P$-functions.
This expansion has a remarkable property: the expansion 
coefficient is either $1$ or $-1$ except an overall factor, 
which means that it is multiplicity free.
By inverting this relation, a product of two $P$-functions 
can be expanded in terms of $\hat{S}$-functions.
Again, the expansion coefficient is $1$ or $-1$ except an 
overall factor. 
Similarly, a skew ``big" Schur function $\hat{S}_{\alpha/\beta}$ can 
be also expanded in terms of products of two skew $P$-functions.
When two partitions $\alpha$ and $\beta$ are shift-symmetric, 
we recover the result by J\'ozefiak and Pragacz \cite{JozPra91}.
We also give an expression of $\hat{S}$-functions in terms of 
perfect matchings.
Recall that Schur function $s_{\alpha}$ and $P$-function $P_{\lambda}$ 
have Giambelli formulae: $s_{\alpha}$ is determinant and 
$P_{\lambda}$ is a Pfaffian.
By deducing from the expression in terms of perfect matchings, 
we propose two Giambelli formulae for a skew Schur 
$\hat{S}$-function $\hat{S}_{\alpha/\beta}$:
one is determinant and the other is Pfaffian.

The plan of this paper is as follows.
In Section \ref{sec:Schur}, we introduce basic facts about 
Schur symmetric functions and insertion algorithms.
In Section \ref{sec:SSIDT}, we give the definition of 
a semistandard increasing tableaux, and show that 
the map from a SSDT to a SSIDT is well-defined.
Section \ref{sec:GLR} is devoted to the analysis of 
generalized Littlewood--Richardson coefficients.
In Section \ref{sec:SinPP}, we show that a ``big" Schur 
function is expanded in terms of products of two $P$-functions and 
vice versa.
We give a description of big Schur function in terms of 
perfect matchings.  
Based on these expressions, we give two Giambelli formulae for 
a ``big" Schur function $\hat{S}_{\alpha/\beta}$.

\section{Schur's symmetric functions and insertion algorithms}
\label{sec:Schur}
\subsection{Schur's symmetric functions}
A {\it partition} of $n$ is a finite non-increasing sequence of 
positive integers $\lambda=(\lambda_1,\lambda_2,\ldots,\lambda_{l})$
such that $\sum_{i=1}^{l}\lambda_i=n$. 
The $\lambda_i$ are called the {\it parts} of the partition.
The integer $n$ is called the size of $\lambda$ and denoted $|\lambda|$.
The {\it length} $l(\lambda)$ of a partition $\lambda$ is defined as 
the number of parts, {\it i.e.,} $l(\lambda):=l$.
The {\it Young diagram} of $\lambda$ is an array of boxes with $\lambda_i$
boxes in the $i$-th row.
The {\it skew Young diagram} $\lambda/\mu$ is obtained by removing a Young 
diagram $\mu$ from $\lambda$ containing $\mu$.
We denote by $\alpha^{T}$ the conjugate of a partition $\alpha$, {\it i.e.},
$\alpha^{T}_{i}=\#\{j|\alpha_{j}\ge i\}$.

A {\it strict partition} $\lambda=(\lambda_1,\ldots,\lambda_l)$ is a partition 
with all parts distinct, {\it i.e.,} $\lambda_1>\lambda_2>\ldots>\lambda_l$.
For a strict partition $\lambda$, the {\it shifted diagram} or {\it shifted 
shape} of $\lambda$ is an array of boxes where the $i$-th row has $\lambda_i$
boxes and is shifted $i-1$ steps to the right with respect to the first row.
The {\it skew shifted diagram} $\lambda/\mu$ is obtained by removing 
a shifted diagram $\mu$ from $\lambda$ containing $\mu$.
The main diagonal of a skew shifted diagram is a set of boxes indexed $(i,i)$.

Let $\lambda$ and $\mu$ be a strict partition satisfying $l(\lambda)=l(\mu)$ or
$l(\lambda)=l(\mu)+1$.
We construct an ordinary Young diagram $\alpha$ from $\lambda$ and $\mu$ 
by transposing $\lambda$ and glueing together $\lambda$ and $\mu$ along the 
main diagonal (see the right picture in Fig. \ref{fig:ssymtab}). 
We denote the diagram by $\alpha=\lambda\otimes\mu$.
When $\lambda=\mu$, {\it i.e.,} $\alpha=\lambda\otimes\lambda$, we call 
$\alpha$ a {\it shift-symmetric tableau}.

\begin{figure}[ht]
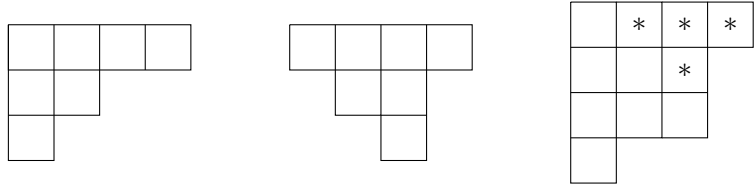

\tikzpic{-0.5}{
\draw(0,0)--(2.4,0)--(2.4,-0.6)--(0,-0.6)--(0,0);
\draw(0.6,0)--(0.6,-1.8)--(0,-1.8)--(0,0);
\draw(1.2,0)--(1.2,-1.2)--(0,-1.2);
\draw(1.8,0)--(1.8,-0.6);
}
\qquad
\tikzpic{-0.5}{
\draw(0,0)--(2.4,0)--(2.4,-0.6)--(0,-0.6)--(0,0);
\draw(1.2,0)--(1.2,-1.2)--(0.6,-1.2)--(0.6,0);
\draw(1.2,-1.2)--(1.2,-1.8)--(1.8,-1.8)--(1.8,0);
\draw(1.2,-1.2)--(1.8,-1.2);
}
\qquad
\tikzpic{-0.5}{
\draw(0,0)--(0,-2.4)--(0.6,-2.4)--(0.6,0)--(0,0);
\draw(0.6,0)--(1.2,0)--(1.2,-1.8)--(0,-1.8);
\draw(1.2,0)--(1.8,0)--(1.8,-1.8)--(1.2,-1.8);
\draw(1.8,0)--(2.4,0)--(2.4,-0.6)--(0,-0.6);
\draw(0,-1.2)--(1.8,-1.2);
\draw(0.9,-0.3)node{$\ast$};
\draw(1.5,-0.3)node{$\ast$};
\draw(2.1,-0.3)node{$\ast$};
\draw(1.5,-0.9)node{$\ast$};
}
\caption{An ordinary tableau of shape $(4,2,1)$ (left picture), a shifted 
tableau of shape $(4,2,1)$ (middle picture) and an ordinary tableau 
of shape $(4,3,3,1)=(4,2,1)\otimes(3,1)$ (right picture).
In the right picture, a tableau with $\ast$ is of shape $(3,1)$.
}
\label{fig:ssymtab}
\end{figure}

A {\it semistandard Young tableau} $T$ of shape $\lambda$ is a filling of the
shape $\lambda$ with letters from the alphabet $X=\{1<2<3<\ldots\}$ such that 
\begin{itemize}
\item each row is weakly increasing from left to right
\item each column is increasing from top to bottom.
\end{itemize}
A {\it skew semistandard Young tableau} is defined analogously.
The {\it content} of $T$ is an integer sequence $v:=(v_1,v_2\ldots)$, 
where $v_i$ is the number of the letter $i$ in $T$.
A semistandard tableaux is called {\it standard} when its content 
is $(1,1,\ldots,1)$.
We denote the set of semistandard tableaux of shape $\lambda$ by 
$\mathrm{SSYT}(\lambda)$.

A {\it semistandard shifted Young tableau} $T$ of (ordinary or shifted) 
shape $\lambda$ is a filling of the shifted shape $\lambda$ with letters 
from the marked alphabet $X'=\{1'<1<2'<2<\ldots\}$ such that
\begin{itemize}
\item rows and columns of $T$ are weakly increasing
\item each letter $i$ appears at most once in every column
\item each letter $i'$ appears at most once in every row
\item no primed letter on the main diagonal.
\end{itemize}
A {\it skew shifted Young tableau} is defined analogously.
The {\it content} of $T$ is an integer sequence $v:=(v_1,v_2\ldots)$, 
where $v_i$ is the number of the letter $i$ and $i'$ in $T$.
A semistandard shifted tableau called {\it marked standard} when 
its content is $(1,1,\ldots,1)$ and the letter is possibly primed once.
A marked standard tableau is called {\it standard} when it is a marked
standard tableau without primed letters.
We denote the set of semistandard shifted tableau of shape $\lambda$ by 
$\mathrm{SSShYT}(\lambda)$.

For an ordinary or shifted tableau $T$ with content $v=(v_1,v_2,\ldots)$, 
we denote the monomial by 
\begin{eqnarray*}
x^{T}:=x_1^{v_1}x_2^{v_2}\cdots.
\end{eqnarray*}

For a partition $\alpha$, we define {\it Schur functions} as
\begin{eqnarray*}
s_{\alpha}&:=&s_{\alpha}(X)=\sum_{T\in \mathrm{SSYT}(\alpha)}x^{T}, \\
\hat{S}_{\alpha}&:=&\hat{S}_{\alpha}(X)=\sum_{T\in \mathrm{SSShYT}(\alpha)}x^{T}.
\end{eqnarray*}
The {\it skew Schur functions} $s_{\alpha/\beta}$ and $\hat{S}_{\alpha/\beta}$
are defined similarly for the skew shape $\alpha/\beta$.

For a strict partition $\lambda$, the Schur $P$- and $Q$-functions are defined as
\begin{eqnarray*}
P_{\lambda}&:=&P_{\lambda}(X)=\sum_{T\in\mathrm{SSShT}(\lambda)}x^{T}, \\
Q_{\lambda}&:=&Q_{\lambda}(X)=2^{l(\lambda)}P_{\lambda}.
\end{eqnarray*}
The {\it skew Schur $P$- and $Q$-functions} $P_{\lambda/\mu}$ and 
$Q_{\lambda/\mu}:=2^{l(\lambda)-l(\mu)}P_{\lambda/\mu}$ are 
defined similarly.

\subsection{Littlewood--Richardson coefficients}
We define the Littlewood--Richardson coefficient $a_{\alpha\beta}^{\gamma}$
as 
\begin{eqnarray*}
s_{\alpha}s_{\beta}=\sum_{\gamma}a_{\alpha\beta}^{\gamma}s_{\gamma},
\end{eqnarray*}
that is, $a_{\alpha\beta}^{\gamma}$ is the multiplicity of $s_{\gamma}$ in the 
product of two Schur functions $s_{\alpha}$ and $s_{\beta}$.

The LR coefficient $a_{\alpha\beta}^{\gamma}$ is expressed in terms of 
Yamanouchi words \cite{LitRic34}.
We have 
\begin{eqnarray*}
a_{\alpha\beta}^{\gamma}=\#\{T\in T(\alpha/\beta;\gamma)|\ \mathrm{read}(T)\text { is a Yamanouchi word}\}.
\end{eqnarray*}

The coefficients $a_{\alpha\beta}^{\gamma}$ also appear in the case of 
$\hat{S}$-functions (see e.g. \cite{Mac95}): 
\begin{eqnarray}
\label{eqn:LRSS}
\hat{S}_{\alpha}\hat{S}_{\beta}=\sum_{\gamma}a_{\alpha\beta}^{\gamma}\hat{S}_{\gamma}.
\end{eqnarray}
Let $\lambda,\mu$ and $\nu$ be strict partitions satisfying $\lambda,\mu\subseteq\nu$.
Similarly, a Littlewood--Richardson--Stembridge coefficient 
$d_{\lambda\mu}^{\nu}$ is defined in terms of Schur $P$-function as 
\begin{eqnarray*}
P_{\lambda}P_{\mu}=\sum_{\nu}d_{\lambda\mu}^{\nu}P_{\nu}.
\end{eqnarray*}
\begin{theorem}[Stembridge \cite{Ste89}]
\label{thrm:StemPP}
We have 
\begin{eqnarray*}
d^{\lambda}_{\mu\nu}=
\#\{T\in T'(\lambda/\mu;\nu)|\ \mathrm{read}(T)\text { is an LRS word} \}.
\end{eqnarray*}
\end{theorem}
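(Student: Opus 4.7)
The plan is to prove Theorem \ref{thrm:StemPP} in parallel with the classical proof of the Littlewood--Richardson rule, substituting shifted (Sagan--Worley or mixed) insertion for ordinary RSK. Writing $P_\mu P_\nu = \sum_{U,V} x^U x^V$ over pairs $(U,V) \in \mathrm{SSShT}(\mu) \times \mathrm{SSShT}(\nu)$, the goal is to match each monomial with that of a single shifted tableau $T$ of some shape $\lambda$ together with ``recording'' data supported on the skew shape $\lambda/\mu$.

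The first step is to fix once and for all the superstandard tableau $U_0$ of shape $\mu$ (row $i$ filled with $i$'s) and to invoke the shifted Knuth (plactic) equivalence from Serrano \cite{Ser10}, together with the shifted Schensted correspondence of Sagan \cite{Sag87} and Worley \cite{Wor84}, in order to regroup the sum by mixed--insertion class. Concatenating the reading word of $U_0$ with that of $V$ and inserting produces a pair $(P,Q)$ of shifted tableaux of the same shape $\lambda$, and the number of such pairs with $P$ ranging over shape $\lambda$ (equivalently, the number of recording tableaux $Q$) is exactly $d^\lambda_{\mu\nu}$. The second step identifies these recording tableaux with skew shifted tableaux in $T'(\lambda/\mu;\nu)$: the sequence of cells added as the letters of $V$ are inserted into $U_0$ traces out the skew shape $\lambda/\mu$, and labelling those cells by the corresponding entries of $V$ yields an element of $T'(\lambda/\mu;\nu)$. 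Finally, by induction on $|V|$ I would show that the $Q$'s arising in this way are precisely those whose reading word is an LRS word, using a careful analysis of shifted row and column bumping.

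The main obstacle will be this last characterization. The LRS condition combines a shifted lattice (Yamanouchi) property with a constraint on the parity of first occurrences, and the delicate behaviour arises at the main diagonal, where the rules for primed and unprimed letters differ. One must verify, by induction on the inserted letter, that a bumping step preserves (and is uniquely determined by) the running LRS count; the most intricate case is when a primed letter is bumped onto the diagonal and row insertion switches to column insertion, since this alters which entries count as ``first occurrences'' of a given index. This is essentially the technical core of Stembridge's original argument, and I would not expect to avoid the case-by-case verification on how bumping paths interact with primed letters near the diagonal.
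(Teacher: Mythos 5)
This statement is quoted in the paper as a known result of Stembridge \cite{Ste89}; the paper gives no proof of it, so there is no internal argument to compare against. Your outline follows the insertion-theoretic route associated with Sagan--Worley, Haiman and Serrano \cite{Ser10} (and, in another guise, Cho \cite{Cho13}), which is a genuinely different strategy from Stembridge's original argument. That route can be made to work, but as written your proposal has a real gap at its center rather than merely deferred routine verification.

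Two steps are asserted rather than proved. First, you fix the superstandard tableau $U_{0}$ of shape $\mu$ and declare that the number of recording tableaux $Q$ of shape $\lambda$ ``is exactly $d^{\lambda}_{\mu\nu}$''; but $d^{\lambda}_{\mu\nu}$ is defined by $P_{\mu}P_{\nu}=\sum_{\lambda}d^{\lambda}_{\mu\nu}P_{\lambda}$, where the first factor ranges over all of $\mathrm{SSShT}(\mu)$, not just $U_{0}$. To reduce to a single representative you need the full shifted-plactic package: that $P_{\mathrm{mix}}(\mathrm{read}(T))=T$ for a shifted tableau $T$, that the fibre of mixed insertion over a fixed insertion tableau is parametrized by the recording tableau, and --- crucially --- that the condition ``the last $|\nu|$ letters of the word form the reading word of \emph{some} tableau of shape $\nu$'' depends on $Q$ alone. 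This last point is precisely the Littlewood--Richardson phenomenon; it is not automatic, and in the shifted setting it is where all of the difficulty with primed letters and the main diagonal resides. Second, identifying the recording data with elements of $T'(\lambda/\mu;\nu)$ whose reading word is an LRS word requires a destandardization step (the recording tableau is a marked standard tableau, whereas $T'(\lambda/\mu;\nu)$ consists of semistandard shifted tableaux), followed by the bumping analysis near the diagonal that you yourself flag as ``the technical core.'' Until that lemma is established, the argument only sets up notation. Either carry out that case analysis in full, or explicitly invoke Serrano's theorem that two words have the same mixed insertion tableau iff they are shifted-plactic equivalent and derive the LRS characterization from his results; as it stands the proposal is a plan for a proof, not a proof.
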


\subsection{Skew functions}
The skew Schur functions are expressed in terms of Littlewood--Richardson coefficients 
as 
\begin{eqnarray*}
s_{\alpha/\beta}&=&\sum_{\gamma}a_{\beta\gamma}^{\alpha}s_{\gamma}, \\
Q_{\lambda/\mu}&=&\sum_{\nu}d_{\mu\nu}^{\lambda}Q_{\nu}, \\
\hat{S}_{\alpha/\beta}&=&\sum_{\gamma}a_{\beta\gamma}^{\alpha}\hat{S}_{\gamma}.
\end{eqnarray*}
The skew $P$-function is given by $P_{\lambda/\mu}=2^{l(\mu)-l(\lambda)}Q_{\lambda/\mu}$.

\subsection{Basic properties of Schur functions}
We summarize properties of Schur $s$-functions and $P$-functions.
The reader is referred to \cite{Mac95} for detailed definitions and proofs.

Let $\Lambda:=\bigoplus_{n\ge0}\Lambda^{n}$ be the graded ring of symmetric 
functions in the variables $x_1,x_2,\ldots$ with coefficients in $\mathbb{Z}$.
The {\it power-sum symmetric function} $p_{r}$, $r>0$, is defined by 
$p_{r}:=x_1^{r}+x_{2}^{r}+\cdots$, and 
we denote $p_{\lambda}:=p_{\lambda_{1}}p_{\lambda_{2}}\cdots$ for a 
partition $\lambda$.
The set $\{p_{\lambda}|\ |\lambda|=n\}$ forms a basis of $\Lambda^{n}$.
Another basis of the ring $\Lambda$ is Schur functions $s_{\Lambda}$.
We have $\{s_{\lambda}|\ |\lambda|=n\}$ is a $\mathbb{Z}$-bases of $\Lambda^{n}$. 
We define the bilinear form $\langle,\rangle$ on $\Lambda$ via the generating 
function 
\begin{eqnarray*}
\prod_{i,j}\frac{1}{1-x_{i}y_{j}}=\sum_{\lambda}s_{\lambda}(x)s_{\lambda}(y).
\end{eqnarray*}
Thus Schur functions are orthonormal bases of $\Lambda^{n}$, namely 
\begin{eqnarray*}
\langle s_{\lambda},s_{\mu}\rangle=\delta_{\lambda,\mu}.
\end{eqnarray*}

Let $\Omega_{\mathbb{Q}}=\bigoplus_{n\ge0}\Omega^{n}_{\mathbb{Q}}$ be 
the graded subalgebra of $\Lambda_{\mathbb{Q}}$ 
generated by $\{p_{2r-1}|\ r\ge1\}$ and let $\Omega=\Omega_{\mathbb{Q}}\cap\Lambda$
denote the $\mathbb{Z}$-coefficient graded subring of $\Omega_{\mathbb{Q}}$.
Let $OP_{n}$ denote the partitions of $n$ with odd parts.
Then, the set $\{p_{\lambda}|\ \lambda\in OP_{n}\}$ forms a basis of 
$\Omega_{Q}^{n}$.
We define an inner product $[,]$ on $\Omega_{\mathbb{C}}^{n}$ by
the generating function
\begin{eqnarray*}
\prod_{i,j}\frac{1+x_{i}y_{j}}{1-x_{i}y_{j}}
&=& \sum_{\lambda\in OP_{n}}\frac{1}{z_{\lambda}}2^{l(\lambda)}
p_{\lambda}(x)p_{\lambda}(y) \\
&=&\sum_{\lambda\in DP}Q_{\lambda}(x)P_{\lambda}(y),
\end{eqnarray*}
where $DP$ denote the set of distinct partitions.
where $z_{\lambda}=\prod_{i\ge1}i^{m_{i}}m_{i}!$ for 
$\lambda=(1^{m_{1}},2^{m_{2}},\cdots)$.
Thus, we have 
\begin{eqnarray*}
&&[p_{\lambda},p_{\mu}]=z_{\lambda}2^{-l(\lambda)}\delta_{\lambda\mu} 
\quad\text{ for }\lambda,\mu\in OP_{n}, \\
&&[P_{\lambda},Q_{\mu}]=\delta_{\lambda\mu} 
\quad\text{ for } \lambda,\mu\in DP.
\end{eqnarray*}
The bilinear form $\langle,\rangle$ and $[,]$ are related 
by Proposition 9.1 in \cite{Ste89}:
\begin{eqnarray}
\label{ex:sinPSinP}
\langle s_{\alpha},P_{\lambda}\rangle=[S_{\alpha},P_{\lambda}],
\end{eqnarray}
where $\alpha$ is an ordinary partition.

Define symmetric functions $q_{n}\in\Lambda^{n}$ by the generating 
function
\begin{eqnarray*}
\sum_{n\ge0}q_{n}t^{n}=\prod_{i}\frac{1+x_{i}t}{1-x_{i}t}.
\end{eqnarray*}
A $Q$-function $Q_{\lambda}$ with $l(\lambda)=2$ can be expressed 
in terms of $q_{n}$, namely we have  
\begin{eqnarray*}
Q_{(r,s)}=q_{r}q_{s}+2\sum_{i=1}^{s}(-1)^{i}q_{r+i}q_{s-i}.
\end{eqnarray*} 
The skew Schur $\hat{S}$-function has a determinant expression:
\begin{eqnarray}
\label{Sdet1}
\hat{S}_{\alpha/\beta}=\det\left[q_{\alpha_{i}-\beta_{j}-i+j}\right]_{1\le i,j\le l},
\end{eqnarray}
where $q_{-r}=0$ for $r>0$ and $l:=l(\alpha)$.

Let $\mathcal{S}_{2n}$ be the symmetric group of order $2n$ and $\mathcal{F}_{n}$
be the subset 
\begin{eqnarray*}
\mathcal{F}_{n}:=\left\{\rho\in\mathcal{S}_{2n}\Big\vert
\begin{array}{c}
\rho(1)<\rho(3)<\cdots<\rho(2n-1), \\
\rho(2i-1)<\rho(2i) \ (1\le i\le n)
\end{array}
\right\}.
\end{eqnarray*}
For a skew symmetric matrix $A:=(a_{ij})$ of order $2n$, 
we define the Pfaffian of $A$ as 
\begin{eqnarray}
\label{eqn:defPf}
\mathrm{pf}[A]
:=
\sum_{\rho\in\mathcal{F}_{n}}\epsilon(\rho)
\prod_{i=1}^{n}a_{\rho(2i-1)\rho(2i)},
\end{eqnarray}
where $\epsilon(\rho)$ is the sign of the permutation $\rho$.

Let $\tilde{l}(\lambda):=l(\lambda)$ for $l(\lambda)\equiv0\pmod2$ and 
$\tilde{l}(\lambda):=l(\lambda)+1$ for $l(\lambda)\equiv1\pmod2$.
We denote $P[\lambda/\mu]:=P_{\lambda/\mu}$.
We define a skew-symmetric matrix $P_{i,j}(\lambda,\mu)$ as 
\begin{eqnarray*}
P_{i,j}(\lambda,\mu):=
\begin{cases}
P[(\lambda_{i},\lambda_{j})], & 1\le i< j\le \tilde{l}(\lambda), \\
P[(\lambda_{i}-\mu_{\tilde{l}(\mu)+\tilde{l}(\lambda)-j+1})], 
& 1\le i\le\tilde{l}(\lambda), 
\tilde{l}(\lambda)+1\le j\le \tilde{l}(\lambda)+\tilde{l}(\mu), \\
0, & otherwise,
\end{cases}
\end{eqnarray*}
for $1\le i<j\le \tilde{l}(\lambda)+\tilde{l}(\mu)$.
A skew $P$-function is expressed as (see {\it e.g.} \cite{JozPra91})
\begin{eqnarray}
\label{eqn:skewPpf}
P_{\lambda/\mu}=\mathrm{pf}[P_{i,j}(\lambda,\mu)].
\end{eqnarray}
Especially, a (non-skew) $P$-function has a simple Pfaffian expression:
\begin{eqnarray}
\label{eqn:Ppf}
P_{\lambda}=\mathrm{pf}\left[P_{(\lambda_{i},\lambda_{j})}\right]_{1\le i,j\le l(\lambda)},
\end{eqnarray} 
where $\lambda_{l+1}=0$ for $l$ odd.

\subsection{Insertion algorithms}
We introduce four insertion algorithms used in this paper. 
We start with the the Robinson--Schensted--Knuth (RSK) correspondence.

\begin{defn}[RSK insertion \cite{Schensted61}]
For a word $w=w_{1}\cdots w_{n}$ in the alphabet $X$, 
we recursively define a sequence of tableaux 
$(T_{0},U_{o}):=(\emptyset,\emptyset), (T_{1},U_{1}),\ldots, (T_{n},U_{n})=(T,U)$.
For $1\le i\le n$, we insert $w_{i}$ into $T_{i-1}$ as follows.

We start with $z:=w_{i}$, $S:=T_{i-1}$ and $p=1$.
\begin{enumerate}
\item[($\clubsuit$)] Insert $z$ into the $p$-th row of $S$, bumping out the smallest letter 
$a\in X$ which is strictly greater than $z$.
Let new $S$ be the tableau where $a$ is replaced by $z$ in $S$.
We define new $z:=a$ and new $p\mapsto p+1$, and go to ($\clubsuit$).
\end{enumerate}
The insertion algorithm stops when a letter is placed at the end of a row.

A tableau $U_{i}$ is obtained from $U_{i-1}$ by adding a box with content $i$ 
on the same location as a new box added to $T_{i-1}$ to obtain $T_{i}$. 
\end{defn}
We call $T$ the insertion tableau and $U$ the recording tableau.
When a word $w$ has a pair $(T,U)$, we denote $w=\mathrm{RSK}^{-1}(T,U)$.

A {\it shifted mixed insertion} was introduced by Haiman for the correspondence
between permutations and pairs of shifted Young tableaux.
This mixed insertion can be viewed as a shifted analogue of the 
Robinson--Schensted--Knuth correspondence.
A semistandard generalization of the mixed insertion of Haiman was given 
by Serrano. 
This extended insertion gives a correspondence between words in the alphabet $X$
and pairs of semistandard shifted and standard shifted Young tableaux.

\begin{defn}[Mixed insertion \cite{Hai89}]
For a word $w=w_1\cdots w_n$ in the alphabet $X$, we recursively define 
a sequence of tableaux, 
$(T_0,U_0):=(\emptyset,\emptyset),(T_1,U_1),\ldots,(T_n,U_n)=(T,U)$.
For $1\le i\le n$, we insert $w_i$ into $T_{i-1}$ as follows.

We start with $z:=w_i$, $S:=T_{i-1}$ and $(p,q):=(1,1)$.
\begin{itemize}
\item[($\bigstar$)] Insert $z$ into the $p$-th row of $S$, bumping out 
the smallest letter $a\in X'$ which is strictly greater than $z$.
Let new $S$ be the tableau where $a$ is replaced by $z$ in $S$ and 
$(p,q)$ be the position of $a$ in old $S$.
\end{itemize}
\begin{enumerate}
\item If $a$ is not on the main diagonal, then
\begin{enumerate}
\item if $a$ is unprimed, then we insert $a$ into the $p+1$-th row.
We bump out the smallest element $b$ which is strictly greater than $a$. 
We set $a:=b$, $(p,q)$ be the position of $b$ in $S$ and go to ($\bigstar$) .
\item if $a$ is primed, then we insert $a$ into the $q+1$-th column to the right.
We bump out the smallest element $b$ which is strictly greater than $a$.
We set $a:=b$, $(p,q)$ be the position of $b$ in $S$ and go to $(1)$.
\end{enumerate}
\item If $a$ is on the main diagonal ($a$ is unprimed), then we prime it and 
insert it into the $q+1$-th column to the right. 
We bump out the smallest element $b$ which is strictly greater than $a$.
We set $a:=b$, $(p,q)$ be the position of $b$ in $S$ and 
go to $(1)$. 
\end{enumerate}
The insertion algorithm stops when a letter is placed at the end of a row 
or a column.

A tableau $U_{i}$ is obtained by adding a box with content $i$ on the same 
location as a new box added to $T_{i-1}$ to obtain $T_{i}$.
\end{defn}

We call $T$ the {\it mixed insertion tableau} and $U$ the {\it mixed recording 
tableau} and denote them $P_{\mathrm{mix}}(w)$ and $Q_{\mathrm{mix}}(w)$, 
respectively.
When a word $w$ has a pair $(T,U)$ by the mixed insertion, 
we denote $w=\mathrm{mixRSK}^{-1}(T,U)$.

In case of ordinary tableaux, Knuth has proved that two words $u$ and $v$ 
in $X$ have the same RSK insertion tableau if and only if $u$ and $v$ 
are equivalent modulo the plactic relations \cite{Knu70}.
The following theorem is a shifted analogue of the plactic relations for 
the mixed insertion.
\begin{theorem}[shifted plactic relations \cite{Ser10}]
\label{thrm:splactic}
Two words have the same mixed insertion tableau if and only if 
they are equivalent modulo the following {\it shifted plactic relations}:
\begin{eqnarray}
\label{eqn:pl1}
&abdc\sim adbc \quad\text{ for } a\le b\le c<d, \\
\label{eqn:pl2}
&acdb\sim acbd \quad\text{ for } a\le b<c\le d, \\
\label{eqn:pl3}
&dacb\sim adcb \quad\text{ for } a\le b<c<d, \\
\label{eqn:pl4}
&badc\sim bdac \quad\text{ for } a<b\le c<d, \\
\label{eqn:pl5}
&cbda\sim cdba \quad\text{ for } a<b<c\le d,\\
\label{eqn:pl6}
&dbca\sim bdca \quad\text{ for } a<b\le c\le d, \\
\label{eqn:pl7}
&bcda\sim bcad \quad\text{ for } a<b\le c\le d, \\
\label{eqn:pl8}
&cadb\sim cdab \quad\text{ for } a\le b<c\le d.
\end{eqnarray}
\end{theorem}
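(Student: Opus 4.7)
The plan is to establish the equivalence by proving the two implications separately. For the ``if'' direction, I would verify that each of the eight relations \eqref{eqn:pl1}--\eqref{eqn:pl8} preserves the mixed insertion tableau when applied to four consecutive letters of a word. Concretely, for each relation $w \sim w'$ and each shifted tableau $T$, I would show that inserting the four letters of $w$ into $T$ and inserting the four letters of $w'$ into $T$ produce the same tableau. The relations are designed so that the eight patterns of strict versus weak inequalities correspond precisely to the eight ways in which two successive bumps during mixed insertion may interact (row-bump followed by row-bump, row-bump followed by column-bump, a bump hitting the main diagonal, and so on), so each relation reduces to a finite bumping analysis.

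For the converse, I would show that every word $w$ is shifted plactic equivalent to the reading word of its mixed insertion tableau, $\mathrm{read}(P_{\mathrm{mix}}(w))$, which serves as a canonical representative of its equivalence class. The argument proceeds by induction on $|w|$. Writing $w = u \cdot a$ and applying the induction hypothesis to $u$, it remains to show that $\mathrm{read}(P_{\mathrm{mix}}(u)) \cdot a$ is equivalent to $\mathrm{read}(P_{\mathrm{mix}}(u \cdot a))$. This reduces to simulating one step of mixed insertion, namely inserting the single letter $a$ into $P_{\mathrm{mix}}(u)$, as a sequence of local plactic rewrites that move $a$ into its final position while pushing each bumped letter along the bumping path. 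Since each bumping substep touches only one row (or one column), the simulation can be carried out one substep at a time, each substep corresponding to a single application of one of the eight relations.

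The main obstacle will be the case analysis forced by the hybrid nature of mixed insertion: unlike the classical RSK situation, where Knuth's two relations suffice because only row-bumps occur, the mixed insertion switches between row-bumping, column-bumping, and the diagonal priming step. Each of these transitions must be matched against exactly one of the eight plactic relations, and the precise inequalities in \eqref{eqn:pl1}--\eqref{eqn:pl8} must line up with the precise comparisons used by the bumping rules on the marked alphabet $X'$ (including the strict/weak asymmetry between primed and unprimed letters). Keeping careful track of which letters are primed at each stage of the simulation, especially when a bump has just crossed the main diagonal and acquired a prime, is the most delicate piece of bookkeeping and the place where the argument is most likely to go wrong; one could alternatively appeal to Haiman's duality between mixed insertion and Sagan--Worley shifted insertion to transfer the analogous statement from one setting to the other, but the bookkeeping difficulty reappears in translating the relations across the duality.
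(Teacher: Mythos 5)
This theorem is quoted from Serrano \cite{Ser10}; the paper gives no proof of its own, so there is no internal argument to compare against. Your outline does follow the broad shape of Serrano's proof (verify that each relation preserves the mixed insertion tableau; show every word is equivalent to a canonical representative determined by its insertion tableau), but there is a concrete gap at the centre of the second half.

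You take $\mathrm{read}(P_{\mathrm{mix}}(w))$ as the canonical representative. However, $P_{\mathrm{mix}}(w)$ is a semistandard shifted tableau over the marked alphabet $X'$, with primed entries off the main diagonal, whereas mixed insertion and the eight relations operate on words over the unprimed alphabet $X$. The row reading word of $P_{\mathrm{mix}}(w)$ is therefore not even an element of the monoid in question, and simply erasing the primes yields a word whose mixed insertion tableau is in general \emph{not} $P_{\mathrm{mix}}(w)$. Serrano has to introduce a special ``mixed reading word'' of a shifted tableau, which interleaves a column-reading of the primed entries with a row-reading of the unprimed entries in a prescribed order, precisely so that the resulting word in $X$ re-inserts to the original tableau; establishing that every word is shifted-plactic equivalent to this mixed reading word is the real content of the theorem. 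Without specifying that reading order, the inductive step ``$\mathrm{read}(P_{\mathrm{mix}}(u))\cdot a \sim \mathrm{read}(P_{\mathrm{mix}}(u\cdot a))$'' is not a well-posed statement, and the claim that each bumping substep corresponds to a single application of one of the eight relations cannot even be formulated, let alone checked. A secondary issue is in your forward direction: quantifying over ``each shifted tableau $T$'' is an infinite verification, so one needs a locality argument showing that only a bounded neighbourhood of the bumping routes is relevant; for mixed insertion the bumping route alternates between row and column moves and can acquire primes at the diagonal, so the classical RSK bumping-route lemmas do not transfer verbatim and this step also requires genuine work.
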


There is another insertion process for the correspondence between permutations
and pairs of standard and marked standard shifted tableaux.
We call this insertion process {\it shifted insertion}.
We define the shifted insertion following \cite{Sag87}.

\begin{defn}[Shifted insertion \cite{Sag87}]
For a word $w=w_1\cdots w_n$ in the alphabets $X$, we recursively 
define a sequence of shifted tableaux 
$(T_0,U_0):=(\emptyset,\emptyset), (T_1,U_1),\ldots,(T_n,U_n)=(T,U)$.
For $1\le i\le n$, we insert a letter $w_i$ into $T_{i-1}$ as follows:

We start with $z:=w_{i}$, $S:=T_{i-1}$, $(p,q):=(1,1)$ and 
$C:=\mathrm{Schensted}$.
\begin{enumerate}
\item  If $C=\mathrm{Schensted}$, then 
we insert $z$ into the $p$-th row of $S$, bumping out the smallest letter 
$a\in X$ which is strictly greater than $z$. 
Let (new) $S$ be the tableau where $a$ is replaced by $z$ in $S$.
If $a$ is on the main diagonal, we set $z=a$, $q\mapsto q+1$ and 
$C=\mathrm{non\text{-}Schensted}$ and go to $(2)$. Otherwise, we set $z=a$,
$p\mapsto p+1$ and go to $(1)$.

\item If $C=\mathrm{non\text{-}Schensted}$, then 
we insert $z$ into the $q$-th column of $S$, bumping out the smallest letter 
$a\in X$ which is greater than or equal to $z$. 
Let (new) $S$ be the tableau where $a$ is replaced by $z$ in $S$.
We set $z=a$, $q\mapsto q+1$ and go to $(2)$.
\end{enumerate}
The insertion process stops when a letter is placed at the end of a row or 
a column. If $C=\mathrm{Schensted}$ (resp. $C=\mathrm{non\text{-}Schensted}$) 
when the insertion process stops, we call the process Schensted 
(resp. non-Schensted) move.  

A tableau $U_{i}$ is obtained by adding a box with content $i$ on the same 
location as a new box added to $T_{i-1}$ to obtain $T_{i}$.
If the insertion is non-Schensted move, we put a prime on $i$.
\end{defn}

We call $T$ the {\it shifted insertion tableau} and $U$ the 
{\it shifted recording tableau} and denote them $P_{\mathrm{shift}}(w)$ and 
$Q_{\mathrm{shift}}(w)$.

For a permutation $\pi\in\mathcal{S}_{l}$, the mixed insertion and 
the shifted insertion are related by Theorem 6.10 in \cite{Hai89}:
\begin{eqnarray*}
(P_{\mathrm{mix}}(w),Q_{\mathrm{mix}}(w))
=
(Q_{\mathrm{shift}}(w^{-1}),P_{\mathrm{shift}}(w^{-1})).
\end{eqnarray*}

We introduce the notion of {\it semistandard decomposition tableaux}
in the following.

A word $w=w_{1}\ldots w_{l}$ on $X$ is called a {\it hook word}
if there exists $1\le m\le l$ such that 
\begin{eqnarray}
\label{eqn:hookword}
w_{1}>w_{2}\cdots>w_{m}\le w_{m+1}\le\cdots\le w_{l}.
\end{eqnarray}
We denote by $(w\downarrow)$ the subword $w_{1}w_{2}\cdots w_{m}$ of $w$ 
and by $(w\uparrow)$ the subword $w_{m+1}\cdots w_{l}$.

\begin{defn}[Semistandard Kra\'skiewicz (SK) insertion \cite{Kra89,Lam96,Ser10}]
\label{defn:SKins}
For a given hook word $w=(w\downarrow)\ast(w\uparrow)$ with 
$(w\downarrow)=w_{1}\cdots w_{m}$ and $(w\uparrow)=w_{m+1}\cdots w_{l}$
and a letter $x\in X$, the insertion of $x$ into $w$ is the word 
$ux$ if $wx$ is a hook word, or the $w'$ with an element $u$ which is 
bumped out in the following way:
\begin{enumerate}
\item let $y_j$ be the leftmost element in $(w\uparrow)$ which is strictly greater than x
\item replace $y_j$ by $x$
\item let $y_i$ be the leftmost element in $(w\downarrow)$ which is less than or equal to $y_j$
\item replace $y_i$ by $y_j$ and bump out $u:=y_i$. A word $w'$ is a word obtained from $w$ 
by replacing $y_i$ and $y_j$ in $w$ by $y_j$ and $x$.  
\end{enumerate}
\end{defn}

The insertion of $x$ into a SSDT $T$ with rows $w_1,\ldots,w_l$ is defined as follows.
First, we insert $x$ into $w_1$ and bumps out $x_1$. Then, we insert $x_1$ into the 
second row $w_2$. We continue this process until an element $x_i$ is placed at the end
of the row $w_i$.

The {\it SK insertion tableau} of a word $w=w_{1}\cdots w_{n}$ is obtained from the 
empty tableau by inserting the letters $w_{1},\ldots,w_n$. At an each steps, we obtain 
a SSDT. We denote by $P_{\mathrm{SK}}(w)$ the SK insertion tableau for a word $w$.
The {\it SK recording tableau} is the standard shifted Young tableau obtained by adding 
a box with content $i$ on the same location as a new box added to 
$P_{\mathrm{SK}}(w_1\cdots w_{i-1})$ to obtain $P_{\mathrm{SK}}(w_1\cdots w_{i})$

\subsection{Tableau words}
\label{sec:tw}
Let $T$ be an ordinary or shifted tableau.
The {\it reading word} $\mathrm{read}(T)$ is obtained by 
reading the contents of $T$ from the bottom row to the top row 
and in an each row from left to right.

For any letter $i$ in $X'$, we set $(i')'=i$ and $(i)'=i'$.
Let $w=w_1\cdots w_n$ be a reading word of $T$ in the alphabet $X'$.
We write $w'={w_n}'\cdots {w_1}'$.
The {\it weak reading word} $\mathrm{wread}(T)$ is obtained by 
reading unprimed letters of the word $w'w$.

For a word $w:=w_{1}\cdots w_{n}$, we define a reversed word 
$\mathrm{rev}(w):=w_{n}\cdots w_{1}$.

Given two words $v=v_{1}\cdots v_{m}$ and $w=w_{1}\cdots w_{n}$ , 
we define the concatenation of $v$ and $w$ as 
$v\ast w:=v_1\cdots v_{m}w_{1}\cdots w_{n}$.

Let $w=w_1\cdots w_n$ be a word in the alphabet $X$. 
A word with {\it lattice property} is a word satisfying 
$\#\{w_q: w_{q}=i,1\le q\le p\}\ge\#\{w_q: w_{q}=i+1,1\le q\le p\}$
for all $i$ and $1\le p\le n$. In other words, the occurrence of $i$ in 
$w_1\cdots w_p$ is greater than or equal to the occurrence of $i+1$ in
$w_1\ldots w_p$.
A {\it Yamanouchi} word $w$ is a word such that its reversal word satisfies
the lattice property.

We say that a word $w$ is a {\it weak Yamanouchi} word if 
weak reading word is a Yamanouchi word.

We follow \cite{Ste89} to define a Littlewood--Richardson--Stembridge 
(LRS) word. 
Let $w:=w_{1}w_{2}\cdots w_{n}$ be a word over the alphabet $X'$.
We define $m_{i}(j)$ $(0\le j\le 2n, i\ge1)$ depending on $w$ 
as 
\begin{eqnarray*}
m_{i}(j)
&:=&\text{number of } i \text{ in } w_{n-j+1},\ldots, w_{n} 
\qquad\text{ for } 0\le j\le n \\
m_{i}(n+j)&=&m_{i}(n)+\text{number of } i' \text{ in } w_{1},\ldots,w_{j}
\qquad\text{ for } 0<j\le n.
\end{eqnarray*}
The word $w$ is said to satisfy the {\it lattice property} if 
$m_{i}(j)=m_{i-1}(j)$, we have 
\begin{eqnarray*}
w_{n-j}&\neq& i,i'\qquad \text{if } 0\le j<n, \\
w_{j-n+1}&\neq&i-1,i' \qquad \text{if }n\le j<2n.
\end{eqnarray*}
A word $w$ is said to be an LRS word if it satisfies (i) $w$ 
satisfies the lattice property (ii) the first occurrence of 
a letter $i$ or $i'$ in $w$ is $i$. 

The condition (i) for an LRS word can be rephrased as follows.
For any letter $i$ in $X'$, we set $\widehat{i}=(i+1)'$ and 
$\widehat{i'}=i$.
Let $w=w_1\cdots w_n$ be a word in the alphabet $X'$.
We write $\widehat{w}=\widehat{w_n}\cdots \widehat{w_1}$.
The concatenated word of $\widehat{w}w$ is a Yamanouchi 
word, that is, every letters $i$ or $i'$ is preceded by more occurrence of 
$i-1$ than that of $i$ in $\widehat{w}w$ from right to left.

Let $w$ be a word in the alphabet $X$ and 
$\max(w)$ be the maximum letter in $w$.
We recursively define a strictly increasing sequence of length $r$, 
$\mathrm{seq}(r):=(\mathrm{seq}(r,i)|\ 1\le i\le r)$, starting from 
$r=\max(w)$ to $r=1$ as follows. 
A number $\mathrm{seq}(r,1)$ is the position (from left end) of the 
leftmost $r$ in $w$.
A number $\mathrm{seq}(r,i+1)$, $1\le i\le r-1$, is the position 
(from left end) of the leftmost $r-i$ which is right to the position 
$\mathrm{seq}(r,i)$. 
We delete the letters appearing in $\mathrm{seq}(r)$ from $w$ and 
construct $\mathrm{seq}(r-1)$ from the remaining letters in $w$ 
by the same procedure as above.
We say a word $w$ is {\it shifted Yamanouchi} word if 
(i) $w$ is a Yamanouchi word, (ii) there exits at least one 
$i-1$ to the left of the leftmost $i$, and 
(iii) A sequence of positive integers $\mathrm{seq}(r,r)$ 
for $1\le r\le \max(w)$ is strictly increasing, {\it i.e.,} 
$\mathrm{seq}(r,r)<\mathrm{seq}(r+1,r+1)$.

The shifted insertion and the mixed insertion are characterized by 
Yamanouchi words and shifted Yamanouchi words.
\begin{prop}
Let $w$ be a word of content $\lambda$.
Then, we have 
\begin{eqnarray}
\label{eqn:shapeYam1}
\#\{w|\ \mathrm{shape}(P_{\mathrm{shift}}(w))=\lambda\}
&=&\#\{w|\ w \text{ is a Yamanouchi word}\},\\
\label{eqn:shapeYam2}
\#\{w|\  \mathrm{shape}(P_{\mathrm{mix}}(w))=\lambda\}
&=&\#\{w|\  w \text{ is a shifted Yamanouchi word}\}.
\end{eqnarray}
\end{prop}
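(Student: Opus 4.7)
Plan: I focus on~\eqref{eqn:shapeYam1}; the proof of~\eqref{eqn:shapeYam2} will follow a parallel template with mixed insertion replacing shifted insertion and the shifted Yamanouchi condition replacing the Yamanouchi one. Throughout I implicitly assume $\lambda$ is a strict partition, since the shape of $P_{\mathrm{shift}}(w)$ is always strict and the LHS is otherwise zero. Because $P_{\mathrm{shift}}(w)$ has content equal to that of $w$, the condition $\mathrm{shape}(P_{\mathrm{shift}}(w))=\lambda$ forces $P_{\mathrm{shift}}(w)$ to be the unique unprimed semistandard shifted tableau $T_\lambda$ of shape $\lambda$ whose $i$-th row consists of $\lambda_i$ copies of $i$. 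Hence the statement reduces to the bijective claim
\begin{equation*}
P_{\mathrm{shift}}(w)=T_\lambda \iff w \text{ is a Yamanouchi word,}
\end{equation*}
for $w$ of content $\lambda$; taking cardinalities on both sides then yields~\eqref{eqn:shapeYam1}.

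For the direction $(\Leftarrow)$ I would argue by induction on $|w|$, using the key observation that deleting the \emph{first} letter of a Yamanouchi word leaves another Yamanouchi word, because its suffixes are precisely the proper suffixes of $w$. Correspondingly, I analyse how prepending a letter to a word affects the shifted insertion tableau, and use the lattice condition on the suffixes of $w$ to check, by a bump-chain analysis, that the intermediate tableaux remain super-standard and that the final tableau is $T_\lambda$. The direction $(\Rightarrow)$ goes by contrapositive: at the first suffix of $w$ at which the lattice condition fails, the shifted insertion deposits a cell outside the profile of $T_\lambda$, so $P_{\mathrm{shift}}(w)\ne T_\lambda$. An alternative and perhaps cleaner route is to invoke the classical characterization ($w$ of content $\lambda$ is Yamanouchi iff $P_{\mathrm{RSK}}(w)$ is the ordinary super-standard tableau of shape $\lambda$), and then verify directly by comparing bump chains that whenever the content $\lambda$ is strict and the target shape is $\lambda$ itself, the shifted and the classical insertions yield tableaux of the same shape; this reduces the proposition to the classical fact.

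The principal technical obstacle is the two-mode nature of the shifted insertion---Schensted (row) bumps interleaved with non-Schensted (column) bumps, the latter triggered whenever a diagonal cell is disturbed. A single inserted letter can switch modes several times along a single bump chain, so a clean induction on word length requires a refined invariant recording both the intermediate shape and the on/off-diagonal status of the most recently added cell; the Yamanouchi hypothesis then controls those transitions through the suffix counts of $w$. The analogous obstacle for~\eqref{eqn:shapeYam2} is slightly stronger: mixed insertion permits primed letters inside $P_{\mathrm{mix}}(w)$ itself, and the extra clause (the strictly increasing sequence $\mathrm{seq}(r,r)$) in the definition of a shifted Yamanouchi word is precisely what is needed to force those primes into the unique pattern compatible with $P_{\mathrm{mix}}(w)=T_\lambda$.
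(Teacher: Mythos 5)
Your reduction of~\eqref{eqn:shapeYam1} to the statement ``$P_{\mathrm{shift}}(w)=T_\lambda$ iff $w$ is Yamanouchi'' is correct and is implicitly how the paper reads the claim as well. The gap is in the inductive scheme you propose for proving it. You delete the \emph{first} letter because that is what preserves the Yamanouchi property (the condition lives on suffixes), but shifted insertion builds $P_{\mathrm{shift}}(w)$ left to right, so $P_{\mathrm{shift}}(w_2\cdots w_n)$ is \emph{not} an intermediate stage of the computation of $P_{\mathrm{shift}}(w)$; the entire content of the proof would then sit inside the unproved ``analysis of how prepending a letter affects the shifted insertion tableau,'' which is not a routine bump-chain computation. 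Moreover the invariant you state --- ``the intermediate tableaux remain super-standard'' --- is false: for the Yamanouchi word $w=211$ of content $(2,1)$, the tableau after one insertion is the single cell $2$ and after two insertions is the one-row tableau $1\,2$, neither of which is superstandard for its content (prefixes of Yamanouchi words need not even have partition content). Your alternative route has the same hidden difficulty: for $w=211$ the RSK intermediate shapes are $(1),(1,1),(2,1)$ while the shifted intermediate shapes are $(1),(2),(2,1)$, so the two insertions do \emph{not} track each other shape-by-shape, and the claim that they agree in final shape for Yamanouchi words of strict content is exactly the thing to be proved, not a verification.

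The paper avoids all of this by arguing forward on each letter rather than by induction on length: writing $w=w'\ast y\ast w''$, the Yamanouchi condition on the suffix $y\ast w''$ guarantees that after $y$ enters the first row it is subsequently bumped by a later $y-1$, which is in turn bumped by a later $y-2$, and so on, so that every occurrence of $y$ is driven down to row $y$ and the final tableau is $T_\lambda$; the converse is handled by locating the first letter whose suffix violates the lattice condition. If you want to keep your structure, you would need either to carry out the prepending analysis in full (essentially a shifted analogue of how promotion interacts with insertion) or to replace the false superstandardness invariant by a genuine one; as written, the argument does not close. The same structural objection applies verbatim to your sketch for~\eqref{eqn:shapeYam2}.
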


\begin{proof}
We first show the right term of Eqn.(\ref{eqn:shapeYam1}) implies
the left term.
Suppose that the word $w$ is written as a concatenation
of subwords $w=w'\ast y\ast w''$.
Since $w$ is a Yamanouchi word, the number of $x$ satisfying 
$x<y$ is strictly greater than that of $y$ in $w''$.
The insertion of $y$ into $P_{\mathrm{shift}}(w')$ results in 
a tableau such that $y$ is in the first row. 
The Yamanouchi property ensures that $y$ is bumped out by 
a letter $y-1$ and $y$ is inserted into the second row, 
a letter $y-2$ bumps out these $y-1$ and $y$ and insert them 
into the next rows, and finally a letter $1$ bumps out letters 
from $2$ to $y$ which are inserted into the next rows.
As a result, the letter $i$, $1\le i\le l(\lambda)$, is in
the $i$-th row of a tableau $P_{\mathrm{shift}}(w)$.
This is nothing but $\mathrm{shape}(P_{\mathrm{shift}}(w))=\lambda$. 
Further, if $w$ is not a Yamanouchi word, it is easy to see that 
$\mathrm{shape}(P_{\mathrm{shift}}(w))\neq\lambda$.
Thus, Eqn.(\ref{eqn:shapeYam1}) is true.

For Eqn.(\ref{eqn:shapeYam2}), observe that $P_{\mathrm{mix}}(w)$ 
does not have a primed entry.
Suppose that $w$ is a shifted Yamanouchi word of content $\lambda$.
From the properties (ii) of a shifted Yamanouchi word,
at least one $x-1$ appears left to $x$ in $w$.
The property (iii) implies that, in the word $w$, 
there are at least one $x$, $x<y$, in-between the $i$-th leftmost 
letter $y$ and the $(i+1)$-th leftmost letter $y$ for 
$1\le i\le \max(w)-y$.
Suppose $w=w'\ast y\ast v\ast v'$.
We insert $v'$ into a tableau $T=P_{\mathrm{mix}}(w'\ast y\ast v)$.
The above constraint implies that if $y$ is in the $j$-th row
of $T$, bumped out by $x$ and inserted into the $(j+1)$-th row,
there are at least one $z$, $j\le z\le y-1$, in the $(j+1)$-th 
row.
Thus, a letter $y$ cannot be placed on the main diagonal in the 
$j$-th row with $1\le j<y$.
After the mixed insertion, $P_{\mathrm{mix}}(w)$ does not have 
a primed entry and the Yamanouchi property ensures that 
$\mathrm{shape}(P_{\mathrm{mix}}(w))=\lambda$.
If $w$ is not a shifted Yamanouchi word, we have at least one 
primed entries, or the shape of $P_{\mathrm{mix}}(w)$ is not
equal to $\lambda$.
This completes the proof.
\end{proof}

Let $\lambda$ be a strict partition and $w$ be a word of content $\lambda$. 
Then, the left hand side of Eqn. (\ref{eqn:shapeYam2}) is equal to 
the number of $Q_{\mathrm{mix}}(w)$, that is the number of standard 
shifted tableaux of shape $\lambda$.
Let $\mathrm{SShTab}(\lambda)$ denote the set of standard shifted tableaux 
of shape $\lambda$.
\begin{prop}
We have 
\begin{eqnarray*}
|\mathrm{SShTab}(\lambda)|
=\#\{w |\ w \text{ is a shifted Yamanouchi word of content }\lambda \}.
\end{eqnarray*}
\end{prop}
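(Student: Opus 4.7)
The plan is to deduce this identity immediately by combining Eqn.~(\ref{eqn:shapeYam2}) of the preceding proposition with the bijectivity of the mixed insertion. The mixed insertion $w \mapsto (P_{\mathrm{mix}}(w), Q_{\mathrm{mix}}(w))$ is a content-preserving bijection onto pairs of shifted tableaux of equal shape, with $P$ semistandard and $Q$ standard. Restricting to words $w$ of content $\lambda$ whose insertion tableau has shape $\lambda$, the number of such words equals the number of pairs $(P,Q)$ where $P$ is a semistandard shifted tableau of shape $\lambda$ with content $\lambda$ and $Q\in\mathrm{SShTab}(\lambda)$.

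The key step, which I view as the only non-trivial input, is to verify that the first coordinate $P$ is forced: there is exactly one semistandard shifted tableau of shape $\lambda$ and content $\lambda$, namely the filling whose $i$-th row consists entirely of unprimed letters $i$. I would prove this by induction on $l(\lambda)$, peeling off the top row. Since no primed entry is permitted on the main diagonal, the cell $(1,1)$ must contain an unprimed letter; combined with the weak monotonicity of row $1$ in the order $1'<1<2'<2<\cdots$, this rules out any $1'$ appearing to the right of $(1,1)$. Since any $1'$ must sit at the top of its column (as $1'$ is the minimum letter of $X'$), and the top cells of all columns of a shifted diagram lie in row $1$, no $1'$ occurs anywhere in the tableau. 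Consequently any unprimed $1$ must also occupy a top-of-column cell in row $1$, forcing row $1$ to consist of exactly $\lambda_1$ unprimed $1$'s. Removing row $1$ and relabelling $i\mapsto i-1$ leaves a semistandard shifted tableau of strict shape $(\lambda_2,\ldots,\lambda_l)$ with content $(\lambda_2,\ldots,\lambda_l)$, to which the inductive hypothesis applies.

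Once the uniqueness of $P$ is established, the map $w\mapsto Q_{\mathrm{mix}}(w)$ is a bijection between the set counted by the left-hand side of Eqn.~(\ref{eqn:shapeYam2}) and $\mathrm{SShTab}(\lambda)$. Combining this with Eqn.~(\ref{eqn:shapeYam2}) yields the claimed equality. The principal obstacle is the uniqueness argument above; everything else is a direct transcription of the bijectivity of mixed insertion and the previous proposition.
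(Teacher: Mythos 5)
Your proof is correct and follows essentially the same route as the paper: the paper likewise derives the proposition from Eqn.~(\ref{eqn:shapeYam2}) by identifying the words counted on its left-hand side with their mixed recording tableaux $Q_{\mathrm{mix}}(w)$. The only difference is that you make explicit the uniqueness of the insertion tableau of shape and content $\lambda$, which the paper leaves implicit; your inductive argument for that uniqueness is sound.
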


\section{Semistandard increasing decomposition tableau}
\label{sec:SSIDT}
For a strict partition $\lambda=(\lambda_1,\lambda_2,\cdots,\lambda_l)$, 
we define two skew shapes $\epsilon^+(\lambda)$ and $\epsilon^-(\lambda)$ 
as follows.
The skew shape $\epsilon^{+}(\lambda)$ is obtained by transposing $\lambda$ 
and shifting the $i$-th row $l-i$ steps to the right  with respect to the $\lambda_1$-th 
row.
Let $\lambda'$ be the ordinary partition whose parts are the same as the ones of 
$\lambda$.
The skew shape $\epsilon^-(\lambda)$ is obtained from $\lambda'$ by rotating 
$\lambda'$ $90$ degrees in anti-clockwise direction and shifting the $i$-th row 
$l-i$ steps to the right with respect to the $\lambda_1$-th row.
The shapes $\epsilon^{\pm}(\lambda)$ have $l(\lambda)$ anti-diagonals and its 
$i$-th anti-diagonal is of length $\lambda_{i}$.
See Figure \ref{fig:epsipm} for an example.
\begin{figure}[ht]
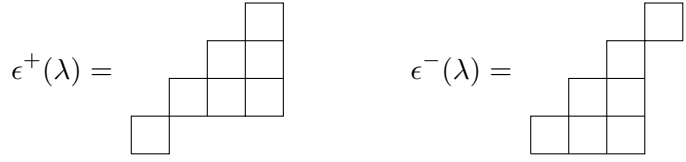

\begin{eqnarray*}
\epsilon^{+}(\lambda)=
\tikzpic{-0.5}{
\draw(0,0)--(0,1/2)--(1/2,1/2)--(1/2,0)--(0,0);
\draw(1/2,1/2)--(2,1/2)--(2,2)--(1.5,2)--(1.5,1/2);
\draw(1/2,1/2)--(1/2,1)--(2,1)(1,1/2)--(1,1.5)--(2,1.5);
}
\qquad\qquad
\epsilon^{-}(\lambda)=
\tikzpic{-0.5}{
\draw(0,0)--(3/2,0)--(3/2,2)--(2,2)--(2,1.5)--(1.5,1.5);
\draw(0,0)--(0,1/2)--(1.5,1/2);
\draw(1/2,0)--(1/2,1)--(1.5,1);
\draw(1,0)--(1,1.5)--(1.5,1.5);
}
\end{eqnarray*}
\caption{The shapes $\epsilon^{\pm}(\lambda)$ for $\lambda=(4,2,1)$.}
\label{fig:epsipm}
\end{figure}

We denote by $\epsilon^{\pm}(T)$ a semistandard tableau of the shape 
$\epsilon^{\pm}(\lambda)$ and corresponding to a shifted tableau $T$.
Especially, we consider a semistandard tableau $\epsilon^{\pm}(T)$ 
such that the shape of $T$ is $\lambda$ and 
$P_{\mathrm{mix}}(\mathrm{read}(\epsilon^{\pm}(T))=T$. 
We call a semistandard tableau $\epsilon^{\pm}(T)$ {\it a semistandard 
increasing decomposition tableau} (SSIDT).
A SSIDT can be seen as a shifted analogue of an ordinary Young 
tableau.
A reading word $w$ of $\epsilon^{\pm}(T)$ is decomposed into 
sequences $w=u_{l}u_{l-1}\cdots u_{1}$ 
where $l:=l(\lambda)$, each word $u_{i}$ is weakly increasing and 
$w$ is compatible with the shape $\epsilon^{\pm}(\lambda)$.

In the rest of this section, we construct a SSIDT $\epsilon^{\pm}(T)$
corresponding to a shifted tableau $T$. 
We first construct $\epsilon^{-}(T)$ from a SSDT.
Then, by using $\epsilon^{-}(T)$, we construct a SSIDT $\epsilon^{+}(T)$. 

\subsection{Construction of \texorpdfstring{$\epsilon^{-}(T)$}{e-(T)}}
Let $w$ be a word in the alphabet $X$. We denote by $\mathrm{SK}(w)$
the SSDT obtained by semistandard Kra\'skiewicz insertion and 
by $\lambda$ the shape of $\mathrm{SK}(w)$.
Let $R_{i}, 1\le i\le l(\lambda),$ be the $i$-th row of $\mathrm{SK}(w)$.
By definition, $R_{i}$ is a hook word of length $\lambda_i$.
Recall that $(R_{i}\downarrow)$ is the decreasing part of $R_{i}$.
Then, we have 
\begin{prop}[Theorem 4.1 in \cite{Lam96}]
\label{prop:dp}
The decreasing parts $(R_{i}\downarrow), 1\le i\le l(\lambda)$, form 
a strict partition.
\end{prop}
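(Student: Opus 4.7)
The plan is to prove the proposition by induction on the length of $w$, tracking how the lengths $m_i := |(R_i\downarrow)|$ evolve under the SK insertion of a single letter. The empty-word base case is vacuous. For the inductive step, inserting a letter $x$ into $\mathrm{SK}(w_1\cdots w_{n-1})$ produces a chain of row insertions starting at row $1$; I would analyze the effect of this chain on the sequence $(m_i)$ and verify that the strict inequalities $m_1 > m_2 > \cdots$ are preserved.

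The first step is a local, row-level analysis. Writing $R_i = w_1 > \cdots > w_m \le w_{m+1} \le \cdots \le w_l$ and inserting a letter $z$ via Definition~\ref{defn:SKins}, either $R_i z$ is already a hook word (and $z$ is appended) or the bumping procedure applies with $y_j = w_{m+s}$ and $y_i = w_k$ from Definition~\ref{defn:SKins}. A direct but somewhat tedious case analysis over the sub-cases ($(R_i\uparrow)$ empty or not; $k < m$ versus $k = m$; $s = 1$ versus $s > 1$; comparison of $w_m$ and $z$) shows that in every case $m_i$ either stays the same or increases by exactly one, and pins down which configurations produce the increase.

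The heart of the proof is a \emph{coupling lemma} between consecutive rows in the chain: if in the chain of row insertions the row-$i$ step increases $m_i$ and the old invariant is tight at row $i-1$, that is $m_{i-1} = m_i + 1$, then the row-$(i-1)$ step also increases $m_{i-1}$. Granted this, the induction closes immediately: whenever $m_i^{\mathrm{new}} = m_i + 1$, either $m_{i-1}^{\mathrm{new}} = m_{i-1} + 1$ by the coupling, or $m_{i-1} \ge m_i + 2$ by initial slack, and in both cases $m_{i-1}^{\mathrm{new}} > m_i^{\mathrm{new}}$.

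The main obstacle I expect is the coupling lemma. It requires comparing the bumped letter $u = y_{i-1}$ emerging from row $i-1$ to the hook structure of $R_i$ and using the inductively inherited structural constraints between consecutive rows of a SSDT to force the exceptional configuration to recur at row $i$. Concretely, the tightness $m_{i-1} = m_i + 1$ plus the explicit form of the increase at row $i-1$ should pin down $u$ enough to match one of the ``increase'' sub-cases when $u$ is inserted into $R_i$. This cross-row book-keeping, carried out by a somewhat delicate case analysis, is the technically substantial part of the argument.
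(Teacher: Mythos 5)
The paper never proves this proposition; it is imported wholesale from Lam (Theorem~4.1 of \cite{Lam96}), so there is no in-paper argument to compare against and your proposal has to stand entirely on its own. The outer skeleton of your induction is fine: each row-level insertion changes $m_i$ by $0$ or $+1$, the bumping chain touches an initial segment of the rows, and the only way strictness can break is the tight configuration $m_{i-1}=m_i+1$ with an increase at row $i$ but not at row $i-1$. The problem is that your coupling lemma is not a technical detail to be checked at the end --- it \emph{is} the theorem --- and it is provably not derivable from the invariant your induction actually carries.

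Concretely, take $R_{i-1}=53124$ and $R_i=54$. Both are hook words, the row lengths $5>2$ form a strict partition, and the decreasing-part lengths are $3>2$, so every hypothesis available to your induction holds, with the tightness $m_{i-1}=m_i+1$. Inserting $x=3$ into $R_{i-1}$: the leftmost entry of $(R_{i-1}\uparrow)=(2,4)$ exceeding $3$ is $4$, which is replaced by $3$; the leftmost entry of $(R_{i-1}\downarrow)=(5,3,1)$ that is $\le 4$ is the $3$, which is replaced by $4$ and bumped out. The new row is $54123$ with decreasing part $(5,4,1)$ --- still length $3$, no increase --- and the bumped letter $u=3$ is appended to $R_i=54$ to give $543$, whose decreasing part now has length $3$. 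The invariant is destroyed. This pair is of course not a genuine SSDT (running SK insertion on the reading word $54\ast 53124$ produces a different tableau), but nothing in your stated inductive hypothesis rules it out, so the tightness $m_{i-1}=m_i+1$ together with ``the explicit form of the increase'' does \emph{not} pin down $u$ the way you hope. To close the induction you must identify, and prove is preserved by insertion, a strictly stronger invariant relating consecutive rows --- essentially the maximality characterization of SSDT rows due to Kra\'skiewicz/Lam/Serrano, or conditions of the kind this paper later extracts via reverse insertion in Lemma~\ref{lemma:epminus2} --- and that unnamed invariant is where the entire content of the proof lives. As written, the argument has a genuine gap.
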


Below, we construct the tableau $\epsilon^{-}(\lambda)$ from a SSDT $\mathrm{SK}(w)$.

We denote by $\mu$ the strict partition formed by the decreasing parts of 
$\mathrm{SK}(w)$.
From the construction, the shape $\mu$ is of length $l(\lambda)$ and 
its part $\mu_i$ is smaller than or equal to $\lambda_i$.
We consider a up-right path from the leftmost box in the $l(\lambda)$-th row 
to the rightmost box in the first row.
A path is inside of the shape $\lambda$ and therefore the length of a path 
is $\lambda_1$.
Let $(x_{i},y_{i}), 1\le i\le \lambda_1$, be the coordinate of a box in 
the up-right path. 
Here, the coordinate system is matrix notation, that is, $x$ increases 
from top to bottom and $y$ increases from left to right.
The coordinate of the starting point is $(x_1,y_1):=(l(\lambda),l(\lambda))$.
First, we move along the shape of $\mu$, {\it i.e.,} the $i$-th coordinate is 
$(x_{i},y_{i})=(x_{i-1},y_{i-1}+1)$ if the box on $(x_{i-1},y_{i}+1)$ 
is in $\mu$ and $(x_{i},y_{i})=(x_{i-1}-1,y_{i-1})$ if the box on 
$(x_{i-1},y_{i-1}+1)$ is not in $\mu$.
If we arrive at $(1,\mu_{1})$, we move only right from $(1,\mu_{1})$ to 
$(1,\lambda_{1})$.
We denote by $p_{1}$ the obtained path.

Three boxes on $(x_{i-1},y_{i-1}), (x_{i-1},y_{i-1}+1)$ and $(x_{i-1}+1,y_{i-1})$
are in the path $p_{1}$ and the box on $(x_{i-1}+1,y_{i-1}+1)$ is not in $p_{1}$.
We say that a path $p_{1}$ has a corner at $(x_{i-1},y_{i-1})$.
We define a path $p_2$ from $p_1$ by bending the path $p_{1}$ at a corner as follows.
Suppose that $p_1$ has a corner at $z_{i}:=(x_{i},y_{i})$. Then, it is obvious that 
a box $z_{i+1}:=(x_{i}+1,y_{i}+1)$ is not in $p_1$.
If the content of the box on $z_{i}$ in the SSDT $\mathrm{SK}(w)$ is greater than 
or equal to that of the box on $z_{i+1}$, 
we obtain a new path $p_2$ by locally changing the path $p_{1}$ such that 
the path $p_2$ pass through $z_{i+1}$ instead of $z_{i}$.
We call this local change of a path as a bending.
We perform a bending at a corner of $p_2$ until no bending occurs at corners in 
the path any more. 
We denote by $p$ the obtained up-right path in $\lambda$.
Let $\mathrm{word}(p)$ be a word reading the content of $\mathrm{SK}(w)$ along 
the path $p$ starting from the box on $(x_1,y_1)$. 
Recall that the length of $\mathrm{word}(p)$ is $\lambda_1$.
We put the letters in $\mathrm{word}(p)$ on the first anti-diagonal of the shape 
$\epsilon^{-}(\lambda)$ from bottom to top.

We remove the boxes on the path $p$ from the SSDT $\mathrm{SK}(w)$ and denote 
by $\mathrm{SK}'(w)$ the tableau with removed boxes.
We introduce the {\it reverse SK insertion of type I} to obtain a new 
SSDT $\mathrm{SK}(w')$ whose shape is $(\lambda_2,\lambda_3,\ldots \lambda_{l})$.
We construct $\mathrm{SK}(w')$ from $\mathrm{SK}'(w)$ by the reverse insertion 
defined below.

\begin{defn}[reverse SK insertion of type I]
Let $R$ be a hook word with $(R\downarrow):=s_1\ldots s_{m}$ and 
$(R\uparrow):=s_{m+1}\ldots s_{l}$.
We reversely insert a letter $r\in X$ into $R$ in the following way:
\begin{enumerate}
\item let $s_{i}$ be the rightmost element in $(R\downarrow)$ which is 
greater than or equal to $r$
\item replace $s_{i}$ by $r$
\item we insert $s_i$ into $(R\uparrow)$ such that the obtained sequence 
is a weakly increasing one. 
\end{enumerate}
\end{defn}
We denote by $R\twoheadleftarrow \{r_1,r_2,\ldots,r_{l}\}$ the successive reverse 
SK insertions of $r_1,r_2,\ldots,r_{l}$ into $R$.

Let $L_{i}$ (resp. $R_{i}$) be a word 
in the $i$-th row of $\mathrm{SK}'(w)$ and left (resp. right) to the path $p$.
Suppose that $R_{i+1}:=r_{1}\ldots r_{k}$. 
The $i$-th row of $\mathrm{SK}(w')$ is obtained by the following 
reverse SK insertion: 
$L_{i}\twoheadleftarrow \{r_{k},r_{k-1},\ldots,r_{1}\}$.
By construction of a path $p$, the word $R_{i}$ is 
weakly increasing. 

We construct a path $p'$ from $\mathrm{SK}(w')$ and put the word $\mathrm{word}(p)$
on the second anti-diagonal of $\epsilon^{-}(\lambda)$ from bottom to top.
We repeat above procedures $l(\lambda)$ times and obtain the tableau 
$\epsilon^{-}(T)$ from $\mathrm{SK}(w)$.
See Example \ref{ex:epminus} below.

\begin{theorem}
\label{thrm:epminus}
The construction of $\epsilon^{-}(T)$ is well-defined, {\it i.e.}, 
a word obtained from $\mathrm{SK}(w)$ is compatible with the shape 
$\epsilon^{-}(\lambda)$.
Further, the tableau word $\mathrm{read}(\epsilon^{-}(T))$ produces the same 
mixed insertion tableau as $\mathrm{SK}(w)$.
\end{theorem}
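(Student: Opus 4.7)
The plan is to induct on $l(\lambda)$. The inductive step extracts the first anti-diagonal of $\epsilon^{-}(T)$ via the path $p$, performs reverse SK insertion of type I to obtain a tableau $\mathrm{SK}(w')$ claimed to be a SSDT of shape $(\lambda_2,\ldots,\lambda_l)$, and applies the inductive hypothesis to this residue. The theorem thereby reduces to four sub-claims: (a) $\mathrm{word}(p)$ is weakly increasing of length $\lambda_1$; (b) the reverse SK insertion produces a genuine SSDT of shape $(\lambda_2,\ldots,\lambda_l)$; (c) the two anti-diagonals joined by their shared column are strictly increasing top-to-bottom; and (d) $\mathrm{read}(\epsilon^{-}(T))$ and $\mathrm{read}(\mathrm{SK}(w))$ are shifted-plactic equivalent.

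For (a) I would analyse the terminal path $p$. The initial path $p_1$ traces the boundary of $\mu$, which is a strict partition by Proposition \ref{prop:dp}, so every step of $p_1$ moves between a box in some $(R_i\downarrow)$ and either the rightward neighbour in the same row or the upward neighbour in the next row. A bending at a corner $z_i$ is triggered only when $T(z_i)\ge T(z_{i+1})$, so after bending terminates, every remaining corner of $p$ satisfies the reverse inequality $T(\cdot)<T(\cdot\text{SE})$; combined with the hook-word structure and the strictness of consecutive decreasing parts, this forces each straight stretch and each corner of $p$ to be read in weakly increasing order. Claim (c) follows from the same argument applied at the meeting point of $p$ and $p'$, where strictness of the decreasing parts supplies the column-strict inequality. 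For (b), I would verify that the reverse SK insertion is literally the inverse of the Kra\'skiewicz insertion in Definition \ref{defn:SKins}: the rightmost portion $R_i$ of the $i$-th row of $\mathrm{SK}'(w)$ is weakly increasing by construction of $p$, so the reinjection $L_i\twoheadleftarrow\{r_k,\ldots,r_1\}$ with $R_{i+1}=r_1\cdots r_k$ restores each row to a hook word and the SSDT axioms remain intact by reversibility.

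For (d) I would invoke Theorem \ref{thrm:splactic}, which reduces the mixed-insertion identity to exhibiting a chain of shifted plactic relations from $\mathrm{read}(\mathrm{SK}(w))$ to $\mathrm{read}(\epsilon^{-}(T))$. This chain decomposes into two kinds of moves. The reverse SK insertion move preserves the shifted-plactic class because, by Serrano's characterization of SSDTs, two words have the same SK insertion tableau if and only if they are shifted-plactic equivalent, and inverting a sequence of SK insertions gives a word in the same class. The path-extraction move rewrites the row-reading of $\mathrm{SK}(w)$ as ``read $p$ bottom-to-top, then read the remainder''; this differs from the row reading by local rearrangements concentrated at each bending corner of $p$, which I would match one-by-one to relations (\ref{eqn:pl1})--(\ref{eqn:pl8}).

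The hardest step will be the case analysis in (d) for the bending move. Each bending rearranges four letters: $T(z_i)$, $T(z_{i+1})$, and their two neighbours on the path, and the induced local rewrite of the reading word must fall under one of the eight shifted plactic relations, each of which carries its own strict/non-strict inequality hypothesis. Since $z_i$ may lie in either the decreasing or the increasing portion of its row, with independent placements for its neighbours, several subcases arise; the pivotal inputs in each subcase will be the triggering inequality $T(z_i)\ge T(z_{i+1})$ and the hook-word/SSDT structure, which together are expected to supply the precise strict/weak inequality demanded by the relevant relation among (\ref{eqn:pl1})--(\ref{eqn:pl8}).
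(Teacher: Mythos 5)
Your proposed induction is on $l(\lambda)$, peeling off the first anti-diagonal, whereas the paper inducts on the number of letters, removing the last letter $v_{|\lambda|}$ of $\mathrm{read}(\mathrm{SK}(w))$ and controlling how this single removal perturbs every path via the reverse SK insertion of type II (Lemmas \ref{lemma:epminus2}--\ref{lemma:epminus5}) before reassembling $\epsilon^{-}(T)$ with Lemma \ref{lemma:epminus1}. Your route is not unreasonable in outline, but as written it contains concrete errors. Sub-claim (a) is false: $\mathrm{word}(p)$ is in general not weakly increasing --- in the paper's own Example \ref{ex:epminus} the first path word is $3112344$. Compatibility with $\epsilon^{-}(\lambda)$ is not a condition on a single anti-diagonal at all; it is the pair of conditions $p_{i,j}\le p_{i+1,j}$ (same row, consecutive anti-diagonals) and strict increase down a column, which link \emph{consecutive paths}, so your (a) and (c) do not together capture what must be proved, and (a) cannot be proved because it is not true. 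In (b) you justify that $\mathrm{SK}(w')$ is again a SSDT ``by reversibility,'' but the operation used in the construction is the reverse SK insertion of type I, which is \emph{not} the inverse of the Kra\'skiewicz insertion (that role is played by type II, as the paper's remark states); type I is an ad hoc column-removal-and-reinjection procedure whose output being a SSDT of shape $(\lambda_2,\dots,\lambda_l)$ requires a genuine argument, not an appeal to invertibility.

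The most serious gap is in (d). The reading word of $\epsilon^{-}(T)$ is taken row by row, and each row of $\epsilon^{-}(T)$ interleaves entries from \emph{all} the anti-diagonals; $\mathrm{word}(p)$ is not a contiguous block of $\mathrm{read}(\epsilon^{-}(T))$, so the equivalence you need is not ``read $p$, then read the remainder'' and cannot be localized to the bending corners of a single path. Bending does not permute any letters of $\mathrm{SK}(w)$ --- it only changes which boxes are assigned to which anti-diagonal --- so there is no per-corner four-letter rewrite to match against relations (\ref{eqn:pl1})--(\ref{eqn:pl8}). What is actually required is a global commutation of interleaved increasing sequences, which is exactly the content of the paper's Lemma \ref{lemma:epminus1}, and the paper can only apply it because its letter-by-letter induction isolates the effect of one new letter on the two rightmost columns of $\epsilon^{-}(T')$. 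Under your row-by-row induction you would still have to prove an analogue of that lemma plus the inter-path inequalities $p_{i,\lambda_i-1}\le p_{i+1,\lambda_{i+1}}$ from scratch, and the proposal gives no mechanism for doing so.
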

Before we move to a proof of Theorem \ref{thrm:epminus}, we introduce 
five lemmas needed later.

\begin{lemma}
\label{lemma:epminus1}
Suppose that the word $w=w_{1}\cdots w_{2l}$ satisfies 
\begin{enumerate}
\item $w_{1}\cdots w_{l}$ and $w_{l+1}\cdots w_{2l}$ are weakly increasing, 
\item $w_{i}>w_{i+l-1}$ for $2\le i\le l$, 
\item $w_{l}\le w_{2l}$.
\end{enumerate}
Then, $w\sim w'$ where $w':=w_{1}\cdots w_{l}w_{2l}w_{l+1}\cdots w_{2l-1}$.
\end{lemma}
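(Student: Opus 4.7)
The plan is to establish $P_{\mathrm{mix}}(w) = P_{\mathrm{mix}}(w')$ and then invoke Theorem~\ref{thrm:splactic} to conclude $w\sim w'$. Write $a_i := w_i$ for $1 \le i \le l$ and $b_j := w_{l+j}$ for $1 \le j \le l$. Because $w$ and $w'$ share the prefix $a_1\cdots a_l$, the mixed insertions of both words pass through a common intermediate tableau $T_0$ whose only row is $a_1\,a_2\,\cdots\,a_l$ (no bumping occurs, since by condition~(1) this prefix is weakly increasing). The two words then differ only in where $b_l=w_{2l}$ is inserted relative to $b_1,\ldots,b_{l-1}$: in $w$ the letter $b_l$ is inserted last, in $w'$ it is inserted immediately after the prefix, and the relative order of $b_1,\ldots,b_{l-1}$ is preserved in both cases.

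The central claim I would prove is that $b_l$ is ``passive'' in the subsequent insertions, so the two insertion orders commute. First, condition~(3) $a_l\le b_l$ makes every entry of row~$1$ of $T_0$ at most $b_l$, so inserting $b_l$ into $T_0$ simply places $b_l$ at cell $(1,l+1)$ with no bumping. Second, once $b_l$ sits at $(1,l+1)$, I claim it is never bumped during the insertions of $b_1,\ldots,b_{l-1}$. I would prove this by induction on $i$, maintaining the invariant that just before $b_i$ is inserted, row~$1$ of the current tableau still contains $a_{i+1},a_{i+2},\ldots,a_l$ among its entries (together with $b_1,\ldots,b_{i-1}$ and, when $a_1>b_1$, one primed copy $a_1'$ produced by the main-diagonal rule of Definition of the mixed insertion). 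The interlacing condition~(2) gives $a_{i+1}>b_i$, hence the smallest entry of row~$1$ strictly greater than $b_i$ is either $a_1'$ or some $a_j$ with $j\ge i+1$, and all of these are $\le a_l\le b_l$. The letter $b_l$ at the far-right cell is therefore never chosen for bumping, and after all of $b_1,\ldots,b_{l-1}$ are processed, inserting $b_l$ last in the order of $w$ simply appends it at the end of row~$1$ (again since every row~$1$ entry is $\le b_l$), producing exactly the tableau obtained by the order of $w'$.

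The main obstacle is the careful case analysis of the mixed insertion when $a_1>b_1$. In that case the insertion of $b_1$ bumps $a_1$ off the main diagonal, primes it to $a_1'$, and triggers a column-insertion that kicks $a_2$ down to row~$2$. Subsequent insertions can bump $a_1'$ further to the right via another column insertion, each time displacing the next $a_{i+1}$ to row~$2$. Tracking these column-insertion chains across all the subcases ($b_i<a_1$ versus $b_i\ge a_1$ at each step) is the book-keeping-heavy part of the argument. However, one checks that the inductive invariant above survives each subcase: row~$1$ retains the entries $a_{i+1},\ldots,a_l$ until $b_i$ has been processed, every row~$1$ entry is dominated by $a_l\le b_l$, and in particular the cell $(1,l+1)$ occupied by $b_l$ is never selected for bumping. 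This uniform invariant is what makes the commutation go through and yields $P_{\mathrm{mix}}(w)=P_{\mathrm{mix}}(w')$, completing the proof.
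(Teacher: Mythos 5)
Your proposal is correct, but it takes a genuinely different route from the paper's. The paper proves $w\sim w'$ purely by word rewriting: it interleaves the two halves into $w_{1}w_{2}w_{l+1}w_{3}w_{l+2}\cdots w_{2l-2}w_{l}w_{2l-1}w_{2l}$ using relations (\ref{eqn:pl1}) and (\ref{eqn:pl7}), swaps the final pair $w_{2l-1}w_{2l}$ with (\ref{eqn:pl2}), and then de-interleaves; this needs only the definition of shifted-plactic equivalence as the closure of the relations, not Theorem~\ref{thrm:splactic} itself. You instead compute $P_{\mathrm{mix}}(w)$ and $P_{\mathrm{mix}}(w')$ and show they coincide, which forces you to invoke the substantive direction of Theorem~\ref{thrm:splactic} (equal mixed insertion tableaux imply shifted-plactic equivalence). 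Your insertion analysis is sound: after the common prefix produces the single row $a_{1}\cdots a_{l}$, condition (3) lets $b_{l}$ sit passively at $(1,l+1)$, and the interlacing condition (2) guarantees that each $b_{i}$ always finds a bumping target among the retained entries $a_{1}'$ (or $a_{1}$) and $a_{i+1},\dots,a_{l}$, all of which are $\le a_{l}\le b_{l}$ and lie strictly to the left of $b_{l}$; hence $b_{l}$ is never bumped, the column insertions of $a_{1}'$ never reach column $l+1$, and the two insertion orders commute. Two small points are worth tightening: when $a_{1}\le b_{1}$ the surviving diagonal entry is the unprimed $a_{1}$ rather than $a_{1}'$ (harmless, since it is still $\le b_{l}$ and is never a bumping candidate), and in the boundary case $a_{i+1}=b_{l}$ you should note explicitly that the leftmost entry of that value is the one bumped. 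The trade-off is clear: your argument is mechanical to verify and makes the role of each hypothesis transparent, at the price of leaning on Serrano's theorem; the paper's chain of relations is logically lighter but terse, with each displayed ``$\sim$'' hiding many individual applications of the relations.
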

\begin{proof}
By using shifted plactic relations, we have 
\begin{eqnarray*}
w_{1}\cdots w_{l}w_{l+1}\cdots w_{2l-1}w_{2l}&\sim& 
w_{1}w_{2}w_{l+1}w_{3}w_{l+2}w_{4}\cdots w_{2l-2}w_{l}w_{2l-1}w_{2l}
\qquad \text{ by } (\ref{eqn:pl1}) \text{ and } (\ref{eqn:pl7}) \\
&\sim& w_{1}w_{2}w_{l+1}w_{3}w_{l+2}w_{4}\cdots w_{2l-2}w_{l}w_{2l}w_{2l-1}
\qquad \text{ by } (\ref{eqn:pl2}) \\ 
&\sim& w_{1}w_{2}w_{l+1}w_{3}w_{l+2}w_{4}\cdots w_{l-1}w_{l}w_{2l}w_{2l-2}w_{2l-1}
\quad \text{ by } (\ref{eqn:pl2}) \text{ and } (\ref{eqn:pl7}) \\
&\sim& w_{1}\cdots w_{l}w_{2l}w_{l+1}\cdots w_{2l-1}
\qquad \text{ by } (\ref{eqn:pl2}) \text{ and } (\ref{eqn:pl7}).
\end{eqnarray*}
\end{proof}

We introduce the inverse of the SK insertion.
\begin{defn}[reverse SK insertion of type II]
Let $w$ be a hook word satisfying Eqn. (\ref{eqn:hookword}).
Define $(w\Downarrow):=w_{1}\cdots w_{m-1}$ and 
$(w\Uparrow):=w_{m}\cdots w_{l}$. 
We reversely insert a letter $x\in X$ into $w$ and obtain 
a new word $w'$ with the elements $(y,z)$ as follows:
\begin{enumerate}
\item let $w_{i}$ be the rightmost element in $(w\Downarrow)$ 
which is greater than or equal to $x$, and set $y:=w_{i}$,
\item replace $w_{i}$ by $x$, 
\item let $w_{j}$ be the rightmost element in $(w\Uparrow)$ 
which is smaller than $w_{i}$,
\item replace $w_{j}$ by $w_{i}$ and bump out $w_{j}$. 
A word $w'$ is obtained from $w$ by replacing $w_{i}$ and 
$w_{j}$ by $x$ and $w_{i}$, and $z:=w_{j}$.  
\end{enumerate}
If there exists no $w_{j}$ in the step (3), we insert $w_{i}$ into 
$(w\Uparrow)$ such that the obtained sequence is weakly increasing.
\end{defn}

\begin{remark}
We apply the reverse SK insertion of type II only when $x\le w_{1}$
and $(w\Downarrow)\neq\emptyset$.
Otherwise, $xw$ is a hook word and the length of $xw$ is the length 
of $w$ plus one. This means that $w$ can not be a row of a semistandard
decomposition tableau.

The reverse SK insertion of type II is an inverse of the SK insertion.
For a given SK recording tableau, one can obtain a word by 
applying the reverse SK insertion of type II to the corresponding element 
in the SK insertion tableau.
\end{remark}

Let $w:=w_{l}\cdots w_{1}$ be a word such that $w_{i}$ 
is a hook word of length $\lambda_{i}$ and $w$ satisfies 
$w=\mathrm{read}(SK(w))$.
Since the word $\mathrm{read}(\mathrm{SK}(w))$ gives the 
same SK insertion tableau as $\mathrm{SK}(w)$ itself, 
$\mathrm{SK}(w)$ is of shape $\lambda$ if and only if 
a partial word $w_{i+1}w_{i}$ form a tableau word of shape 
$(\lambda_{i}, \lambda_{i+1})$.
Further, if $\lambda_{i}>\lambda_{i+1}+1$,
we have a SSDT of shape $(\lambda_{i+1}+1,\lambda_{i+1})$ 
by deleting from the $(\lambda_{i+1}+2)$-th to $\lambda_{i}$-th 
elements in $w_{i}$.
Therefore, it is enough to give criteria for a SSDT of shape 
$(n,n-1)$ to check whether a word $w$ gives a SK insertion 
tableau of shape $\lambda$.
Let $w:=w_{2}w_{1}$ and we denote $w_{1}:=u_{1}\cdots u_{n}$ 
and $w_{2}:=v_{1}\cdots v_{n-1}$.
We construct two sequences of letters $y:=y_{1}\cdots y_{n-1}$ 
and $z:=z_{1}\cdots z_{n-1}$ by using 
reverse SK insertion of type II as follows.
We start the process below with $i=n$ and $w'_{n}=w_{1}$, decrease $i$ one-by-one.
By reverse SK insertion of type II, we insert $v_{i-1}$ into $w'_{i}$ 
and obtain a new word $w''_{i-1}$ with elements $(y_{i-1},z_{i-1})$.
Since a word $w''_{i-1}$ is of length $i-1$, we delete the last element 
in $w''_{i-1}$ and obtain a new word $w'_{i-1}$.
We continue the process until we obtain words $y$ and $z$ of length $n-1$.
We denote by $w'_{i,j}$ an $j$-th element from left in a word $w'_{i}$. 

\begin{lemma}
\label{lemma:epminus2}
If a word $w:=w_{2}w_{1}$ produces a SSDT of shape $(n,n-1)$ by 
the SK insertion, then 
\begin{enumerate}
\item $v_{i-1}\le w'_{i,1}$ for $2\le i\le n$, 
\item $y_{i-1}>w'_{i,i}$ for $2\le i\le n$,
\item $(w'_{i}\Downarrow)\neq\emptyset$ for $2\le i\le n$.
\end{enumerate}
\end{lemma}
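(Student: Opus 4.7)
The plan is to establish conditions (1)--(3) simultaneously by downward induction on $i$, starting at $i=n$ and descending to $i=2$, using the hook-word structure of each row together with the strict-partition property of their decreasing parts (Proposition~\ref{prop:dp}).

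For the base case $i=n$, we have $w'_n=w_1=u_1\cdots u_n$ and $v_{n-1}$ is the last letter of the second row. Condition (3) is immediate from Proposition~\ref{prop:dp}: since $|(w_2{\downarrow})|\ge 1$ and the decreasing parts of the two rows form a strict partition, $|(w_1{\downarrow})|\ge 2$, hence $(w_1{\Downarrow})$ is nonempty. Condition (1) amounts to $v_{n-1}\le u_1$, which I would deduce from the SSDT compatibility of the first and last entries of the two rows. For condition (2), I would trace $v_{n-1}$ back through a single forward SK insertion step: because it was bumped from the first row into the second, the bumping rule forces the corresponding element of $(w_1{\downarrow})$, namely $y_{n-1}$, to strictly exceed the rightmost entry $u_n=w'_{n,n}$ of the first row.

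For the inductive step, assume (1)--(3) hold at level $i+1$, so the reverse SK insertion of type~II of $v_i$ into $w'_{i+1}$ is well-defined. The key observation is that this reverse step is, by construction, the inverse of a single forward SK insertion step taking $w'_i$ to $w'_{i+1}$ while bumping $v_i$ out. I would exploit this reversibility to see that the hook structure of $w'_i$ is preserved, to produce an auxiliary SSDT of smaller shape to which Proposition~\ref{prop:dp} applies (yielding nonemptiness of $(w'_i{\Downarrow})$), and to translate the SSDT compatibility at the deletion step into the inequality $v_{i-1}\le w'_{i,1}$ asserted in condition (1).

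The main obstacle is condition (2), $y_{i-1}>w'_{i,i}$, since the reverse step alters both the strictly decreasing portion and the tail of the weakly increasing portion, so the required strict inequality is not automatic from the hook structure alone. I plan to handle this by a case analysis on the position of the element $w_j$ of $(w'_i{\Uparrow})$ that is replaced in the reverse step: if $w_j$ is not the final entry, then $w'_{i,i}$ coincides with $w'_{i+1,i+1}$ and the inductive hypothesis transports the inequality; if $w_j$ is the final entry, I would invoke Lemma~\ref{lemma:epminus1} together with the shifted plactic equivalences of Theorem~\ref{thrm:splactic} to argue that any violation of the strict inequality would create a plactic equivalence forcing $\mathrm{SK}(w)$ to have a shape different from $(n,n-1)$, contradicting the hypothesis.
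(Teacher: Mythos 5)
Your overall architecture --- downward induction from $i=n$, a base case argued from the hook-word structure and the insertion rules, and an inductive step resting on the fact that the reverse SK insertion of type II inverts a forward SK insertion step and yields a smaller auxiliary SSDT --- is the same as the paper's. Your treatment of condition (3) at the base case is actually cleaner than the paper's: deducing $|(w_1{\downarrow})|\ge 2$ from Proposition~\ref{prop:dp} is correct and more direct than the paper's contradiction argument (which shows that $(w_1{\Downarrow})=\emptyset$ would make $w_2 w_1$ a single hook word of length $2n-1$). Conditions (1) and (3) in the inductive step are also handled in essentially the paper's way, namely by exhibiting the word $v_1\cdots v_{i-1}w'_i$ as one whose SK insertion tableau has shape $(i,i-1)$ and re-running the base-case reasoning there.

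The genuine gap is in your handling of condition (2) in the inductive step. First, the identification ``$w'_{i,i}$ coincides with $w'_{i+1,i+1}$'' is false: $w'_i$ is obtained from $w''_i$ (which has length $i+1$) by deleting the \emph{last} entry, so when the reverse insertion does not disturb the final positions one gets $w'_{i,i}=w'_{i+1,i}$, not $w'_{i+1,i+1}$. Second, even with the correct identification, the inductive hypothesis at level $i+1$ asserts $y_i>w'_{i+1,i+1}$, which concerns the element $y_i$ produced by reverse-inserting $v_i$ into $w'_{i+1}$; the inequality you need, $y_{i-1}>w'_{i,i}$, concerns the different element $y_{i-1}$ produced by the \emph{next} reverse insertion, so nothing ``transports.'' Your fallback via Lemma~\ref{lemma:epminus1} and the shifted plactic relations of Theorem~\ref{thrm:splactic} is not viable either: that lemma rearranges a word of a very specific two-block increasing form and gives no handle on an inequality between a bumped element and the tail of a row. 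The way to close this is the paper's: once you know $P_{\mathrm{SK}}(v_1\cdots v_{i-1}w'_i)$ has shape $(i,i-1)$, the pair $(y_{i-1},z_{i-1})$ is exactly the data of the single forward SK insertion of the last letter $w'_{i,i}$ into the one-row tableau built from $v_1\cdots v_{i-1}w'_{i,1}\cdots w'_{i,i-1}$, and step (1) of Definition~\ref{defn:SKins} (the bumping element is \emph{strictly} greater than the inserted letter) gives $y_{i-1}>w'_{i,i}$ directly --- i.e., condition (2) should be re-derived at level $i$ by the same base-case argument, not transported from level $i+1$.
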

\begin{proof}
We first show that $v_{n-1}\le u_{1}$, $y_{n-1}>w'_{n,n}=u_{n}$ and 
$(w_{1}\Downarrow)\neq\emptyset$. 
Suppose that $v_{n-1}>u_{1}$. 
Since $w_{1}$ is a hook word, a word $v_{n-1}w_{1}$ is also a hook 
word of length $n+1$. 
By the SK insertion, it is obvious that the shape of $\mathrm{SK}(w_{1}w_{2})$
is not $(n,n-1)$.
Thus, we have $v_{n-1}\le u_{1}$.
Since $v_{n-1}\le u_{1}$, a word $w_{2}u_{1}$ is a hook word 
of length $n$.
Recall that the reverse SK insertion of type II is the inverse of 
the SK insertion.
By definitions of these insertions, $y_{n-1}$ is an element 
which is bumped out by the insertion of $u_{n}$ into 
$\mathrm{SK}(w_{2}u_{1}\ldots u_{n-1})$.
From the process (1) in Definition \ref{defn:SKins}, 
we have $y_{n-1}>u_{n}$. 
Suppose that $(w_{1}\Downarrow)=\emptyset$.
Then, $w_{1}=(w_{1}\Uparrow)$. 
Since $v_{n-1}\le u_{1}$, $w_{1}w_{2}$ is a hook word 
of length $2n-1$. 
The SK insertion tableau of $w_{2}w_{1}$ does not have 
the shape $(n,n-1)$, which is a contradiction to the assumption.
Thus, we have $(w_{1}\Downarrow)\neq\emptyset$.

It is easy to check that the reverse SK insertion of type II is an 
inverse of the SK insertion.
This implies that the word $v_{1}\cdots v_{i-1}w''_{i}z_{i}$
produces the same SK insertion tableau as $v_{1}\cdots v_{i}w'_{i+1}$.
Note that a word $w'_{i}$ is obtained from $w''_{i}$ by deleting the last 
element.
If $P_{\mathrm{SK}}(w)$ is of shape $(n,n-1)$, then 
$P_{\mathrm{SK}}(v_{1}\cdots v_{i-1}w'_{i})$ has to be 
of shape $(i,i-1)$.  
Combining this with the argument above, we obtain  
$v_{i-1}<w'_{i,1}$, $y_{i-1}>w'_{i,i}$ 
and $(w'_{i}\Downarrow)\neq\emptyset$.
\end{proof}

\begin{example}
For example, let $w:=w_{1}w_{2}$ with $w_{1}=123$ and $w_{2}=5433$.
The shape of $P_{\mathrm{SK}}(w)$ is not $(4,3)$, since we have 
a sequence of tableaux by the reverse SK insertion of type II:
\begin{eqnarray*}
\begin{matrix}
5 & 4 & 3 & 3 \\
  & 1 & 2 & 3 
\end{matrix}
\qquad
\Rightarrow
\qquad
\begin{matrix}
5 & 3 & 3 \\
  & 1 & 2 
\end{matrix}
\leftarrow4,3\qquad\Rightarrow\qquad
\begin{matrix}
2 & 3 \\
  & 1 
\end{matrix}
\leftarrow5,3,4,3
\end{eqnarray*}
The third tableau corresponding to the word $123$ 
violates the condition (3) in Lemma \ref{lemma:epminus2}.
Note that, for example, the SK insertion of $43$ to $533$ produces a word $35433$, 
and one can check that the reverse SK insertion of type II is 
the inverse of the SK insertion.
\end{example}

Let $w:=w_{l}\ldots w_{2}w_{1}$ be the reading word of a SSDT of shape 
$\lambda$.  
The word $w_{i}$ is a hook word of length $\lambda_{i}$.
We denote by $w_{i,j}$ the $j$-th element of $w_{i}$.
A strict partition $\lambda'$ is obtained from $\lambda$
by the reverse SK insertion of type II, namely, insert 
$w_{2,\lambda_{2}}$ into a word $w_{1}$.
From Lemma \ref{lemma:epminus2}, if we insert $w_{i,\lambda_{i}}$
into $w_{i-1}$ by the reverse SK insertion of type II, 
then we bump out $w_{i-1,\lambda_{i-1}}$ and successively 
insert $w_{i-1,\lambda_{i-1}}$ into $w_{i-2}$. 
Let $p$ be the path of length $\lambda_{1}$ constructed from 
the SSDT $\mathrm{SK}(w)$. 
By construction, the path $p$ contains the element $w_{1,\lambda_{1}}$.
If $\lambda_{1}>\lambda_{2}+1$, the path $p$ contains the elements 
$w_{1,j}$ with $\lambda_{2}+1\le j\le \lambda_{1}$.  
Thus, without loss of generality, it is enough to consider the 
case where $\lambda_{1}=\lambda_{2}+1$.
We consider two cases: (a) the element $w_{2,\lambda_{2}}$ is 
in the path $p$, and (b) $w_{2,\lambda_{2}}$ is not in $p$.

\begin{lemma}
\label{lemma:epminus3}
Suppose that the element $w_{2,\lambda_{2}}$ is in the path $p$.
Then, we have $(w_{1}\uparrow)=\emptyset$. 
\end{lemma}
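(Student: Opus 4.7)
The plan is to split according to whether $(w_{2}\uparrow)$ is empty, and to argue the nontrivial case by contrapositive.

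\emph{If $\mu_{2}=\lambda_{2}$}, then $(w_{2}\uparrow)=\emptyset$ and the initial path $p_{1}$ runs along the right boundary of row~$2$ out to $(2,\mu_{2}+1)=(2,\lambda_{1})$ before turning up to $(1,\lambda_{1})$. Hence $(2,\lambda_{1})$ automatically lies on $p_{1}\subseteq p$, so the hypothesis is vacuous; Proposition~\ref{prop:dp} gives $\mu_{1}>\mu_{2}=\lambda_{2}=\lambda_{1}-1$, forcing $\mu_{1}=\lambda_{1}$ and hence $(w_{1}\uparrow)=\emptyset$.

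\emph{If $\mu_{2}<\lambda_{2}$}, then $p_{1}$ exits row~$2$ at $(2,\mu_{2}+1)$ and traverses row~$1$ from $(1,\mu_{2}+1)$ to $(1,\lambda_{1})$, so $(2,\lambda_{1})$ is reached in $p$ only after bending at each corner $(1,\mu_{2}+k)$, $k=1,\ldots,\lambda_{2}-\mu_{2}$, and each such bending requires the entry inequality $w_{1,\mu_{2}+k}\ge w_{2,\mu_{2}+k}$. I argue by contrapositive: suppose $(w_{1}\uparrow)\ne\emptyset$. Then $\mu_{1}\le\lambda_{2}$, and Proposition~\ref{prop:dp} gives $\mu_{1}>\mu_{2}$, so $j=\mu_{1}$ lies in the range of corners. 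It suffices to show that the bending at $j=\mu_{1}$ fails. Now $w_{1,\mu_{1}}$ is the last and smallest entry of $(w_{1}\downarrow)$, while $w_{2,\mu_{1}}$ lies in $(w_{2}\uparrow)$ because $\mu_{1}>\mu_{2}$, so the task reduces to proving $w_{1,\mu_{1}}<w_{2,\mu_{1}}$.

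For this strict inequality I apply Lemma~\ref{lemma:epminus2} to the chain of reverse SK insertions of type~II that peel the letters of $w_{2}$ into $w_{1}$. At stage $i$, item~(2) of the lemma, $y_{i-1}>w'_{i,i}$, forces the bumped element to come from strictly inside $(w'_{i}\Downarrow)$, i.e., not from the end of the current decreasing part; item~(1), $v_{i-1}\le w'_{i,1}$, anchors the inserted value against the first entry of $w'_{i}$. Iterating these inequalities down from $i=\lambda_{2}+1$ and tracing how $w_{2,\mu_{1}}$ is placed relative to $(w_{1}\downarrow)$ identifies an entry of $(w_{1}\downarrow)$ strictly larger than $w_{1,\mu_{1}}$ (such an entry exists by strict decrease of $(w_{1}\downarrow)$) that $w_{2,\mu_{1}}$ dominates, yielding $w_{2,\mu_{1}}>w_{1,\mu_{1}}$.

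The main obstacle is precisely this bookkeeping: the words $w'_{i}$ shift at every reverse insertion, and one must consistently track how the weakly increasing part of $w_{2}$ and the strictly decreasing part of $w_{1}$ interact under iteration, in order to pin $w_{2,\mu_{1}}$ to a predecessor of $w_{1,\mu_{1}}$ inside $(w_{1}\downarrow)$ and thereby obtain the contradiction with the bending condition.
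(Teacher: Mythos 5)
Your first case ($\mu_{2}=\lambda_{2}$) is correct and clean: Proposition~\ref{prop:dp} forces $\mu_{1}=\lambda_{1}$ there, so $(w_{1}\uparrow)=\emptyset$ irrespective of the path. The logical frame of your second case is also sound: if $w_{2,\lambda_{2}}\in p$ then every corner $(1,\mu_{2}+k)$ up to $k=\lambda_{2}-\mu_{2}$ must have been bent, so in particular $w_{1,\mu_{1}}\ge w_{2,\mu_{1}}$ would have to hold, and the lemma would follow from the single inequality $w_{1,\mu_{1}}<w_{2,\mu_{1}}$.

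The problem is that this inequality is the entire content of the lemma in the main case, and you do not prove it. You invoke Lemma~\ref{lemma:epminus2}, but its conclusions ($v_{i-1}\le w'_{i,1}$ and $y_{i-1}>w'_{i,i}$) constrain only the first and last entries of the evolving words $w'_{i}$; they say nothing directly about how $w_{2,\mu_{1}}$ compares with the minimum $w_{1,\mu_{1}}$ of $w_{1}$, and the passage from ``iterating these inequalities'' to ``identifies an entry of $(w_{1}\downarrow)$ \ldots that $w_{2,\mu_{1}}$ dominates'' is exactly the step you then concede is ``the main obstacle.'' So the argument stops where the work begins. It is also worth noting that the paper does not take this route at all: it never shows that any particular bending fails. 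Instead it assumes the hypothesis (hence all bending inequalities, e.g.\ $d_{i}\le a_{i}$ and $d_{m_{1}+i}\le b_{i}$ in its notation) together with $(w_{1}\uparrow)\neq\emptyset$, feeds the letters of the second row back into the first by the reverse SK insertion, and shows the minimum of the resulting first row lands weakly to the left of the minimum of the second row, contradicting Proposition~\ref{prop:dp} for the smaller SSDT. Your target inequality may well be true (it holds in small examples), but establishing it requires an argument of essentially the same depth as the paper's, which is absent here; as written, the proposal is an unproved reduction rather than a proof.
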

\begin{proof}
Suppose that $(w_{1}\uparrow)\neq\emptyset$. 
From Proposition \ref{prop:dp}, decreasing parts $(w_{i}\downarrow)$ form
a strict partition $\mu$.
We have two cases: (i) $\mu_{1}>\mu_{2}+1$, and (ii) $\mu_{1}=\mu_{2}+1$.
\paragraph{Case (i)} The SSDT $w$ is locally given by 
\begin{eqnarray*}
\begin{matrix}
\cdots & a_{0} &a_{1} & a_{2} & \cdots & a_{m_{1}} & b_{1} & \cdots & b_{m_{2}} \\
& \cdots & c & d_{1} & \cdots & d_{m_{1}-1} & d_{m_{1}}  & \cdots & d_{m_1+m_2-1} \\
\end{matrix}
\end{eqnarray*}
where $b_{1}=\min\{w_{1}\}$, $c=\min\{w_{2}\}$, $b_{m_2}=w_{1,\lambda_{1}}$
and $d_{m_{1}+m_{2}-1}=w_{2,\lambda_{2}}$. 
Since $w_{1}$ and $w_{2}$ are hook words, we have 
\begin{eqnarray*}
&a_{1}>a_{2}>\cdots>a_{m_{1}}>b_{1}\le \cdots\le b_{m_{2}}, \\ 
&c\le d_{1}\le \cdots \le d_{m_1+m_2-1}.
\end{eqnarray*}
From the assumption, $b_{m_2}$ and $d_{m_1+m_2-1}$ are in the path $p$.
By construction of a path, a path before bending is 
$ca_{1}\cdots a_{m_1}b_{1}\cdots b_{m_{2}}$.
By bending the path, we have constraints on $a_{i}$ and $d_{i}$, namely 
$d_{i}\le a_{i}$ for $1\le i\le m_{1}$.
Similarly, we have constraints on $b_{i}$ and $d_{i}$, that is, 
$d_{m_{1}+i}\le b_{i}$ for $1\le i\le m_{2}$. 
By inserting elements from $d_{m_{1}+m_{2}-1}$ to $d_{m_{1}+1}$ into 
the first row by the reverse SK insertion of type II, the obtained 
SSDT looks locally like
\begin{eqnarray*}
\begin{matrix}
a'_{0} & a'_{1} & a'_{2} & \cdots & a'_{m_{1}} & b'_{1} & \cdots & b'_{m_{2}} \\
 & c & d_{1} & \cdots & d_{m_{1}-1}  & d_{m_{1}}
\end{matrix},
\end{eqnarray*}
If $a_{m_{1}}\le d_{m_{1}+1}$, we have $a'_{m_{1}}=a_{m_1}$.
If $a_{m_{1}}>d_{m_{1}+1}$, we have $a'_{j}=d_{m_{1}+1}$ for some $j\le m_{1}$.
Note that $d_{1}\le d_{2}\le \cdots\le d_{m_{1}}\le d_{m_{1}+1}\le b_{1}$ and 
we insert a letter into $(w\Downarrow)$ rather than $(w\downarrow)$ in case 
of the reverse SK insertion of type II.
Therefore, if we insert the elements form $d_{m_{1}}$  to $d_{1}$ into the 
first row by the reverse SK insertion of type II,
we have a increasing sequence $d_{1},d_{2},\ldots,d_{m_{1}+1}$ left to 
the position of $a_{m_{1}}$ in $w$.
Thus, the minimum of the first row appears at the position $a'_{0}$ or 
left to $a'_{0}$. On the other hand, $c$ is the minimum in the second row,
this contradicts to the fact that decreasing parts form a strict 
partition in a SSDT. 
Thus, we have $(w_{1}\uparrow)=\emptyset$.

\paragraph{Case (ii)}
The SSDT $w$ looks locally like 
\begin{eqnarray*}
\begin{matrix}
\ldots & a_{1} & c_{1} & \ldots & c_{m} \\
\ldots & a_{2} & d_{1} & \ldots & d_{m}
\end{matrix}
\end{eqnarray*}
where $a_{1}=\min\{w_{1}\}$ and $a_{2}=\min\{w_{2}\}$.
Since the path $p$ includes the elements $c_{m}$ and $d_{m}$, 
bending gives the constraints: $a_{1}\ge d_{1}$ and 
$c_{i}\ge d_{i+1}$ for $1\le i\le m-1$.
If $d_{m}\le a_{1}$, 
$d_{m}$ is placed left to $a_{1}$ by the reverse insertion 
of type II.
Then, the element $d_{m}$ is the minimum of the first row and 
left to $a_{2}$. This contradicts to Proposition \ref{prop:dp}.
If $d_{m}>a_{1}$, let $j$ be the maximal integer such that 
$d_{j}\le a_{1}$. 
The integer $j$ exists since we have $d_{1}\le a_{1}$.
By a similar argument to above, $d_{j}$ is placed 
left to $a_{1}$ by the reverse SK insertion of type II.
This also contradicts to Proposition \ref{prop:dp}.
Thus, we have $(w_{1}\uparrow)=\emptyset$.
\end{proof}

\begin{lemma}
\label{lemma:epminus4}
Suppose that the element $w_{2,\lambda_{2}}$ is in the path $p$.
Then, we have $w_{1,\lambda_{1}-1}\ge w_{2,\lambda_{2}}$.
\end{lemma}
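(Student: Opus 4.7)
The plan is to reduce to the setting already set up before Lemma~\ref{lemma:epminus3}: by Lemma~\ref{lemma:epminus3} one has $(w_1\uparrow)=\emptyset$, so $w_1$ is strictly decreasing of length $\lambda_1$, and by the earlier reduction one may assume $\lambda_1=\lambda_2+1$, in which case the statement becomes $w_{1,\lambda_2}\ge w_{2,\lambda_2}$. In matrix coordinates the two boxes in question are $(1,\lambda_2)$ and $(2,\lambda_1)$, and they are diagonally adjacent: the latter is precisely the diagonal below-right box of the former. The argument then splits into two cases according to whether $(2,\lambda_1)$ lies in the initial path $p_1$; since $p_1$ traces the outline of $\mu$, this happens if and only if $\mu_2=\lambda_2$.

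The first case is $\mu_2<\lambda_2$. Then $(2,\lambda_1)\notin p_1$, yet $(2,\lambda_1)\in p$, so the box must have been added during the bending process. The only corner whose diagonal below-right box equals $(2,\lambda_1)$ is the corner $(1,\lambda_2)$, so bending was performed at $(1,\lambda_2)$; the very bending criterion gives $w_{1,\lambda_2}\ge w_{2,\lambda_2}$, as desired.

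The second case is $\mu_2=\lambda_2$, in which both $w_1$ and $w_2$ are strictly decreasing. Write $n:=\lambda_1$, $u_j:=w_{1,j}$, $v_j:=w_{2,j}$, and assume for contradiction that $u_{n-1}<v_{n-1}$. I apply Lemma~\ref{lemma:epminus2}: in the first reverse SK insertion of type II, the rightmost index $i_1^\ast$ in $(w_1\Downarrow)=u_1\cdots u_{n-1}$ with $u_{i_1^\ast}\ge v_{n-1}$ satisfies $i_1^\ast\le n-2$. A direct induction on the subsequent iterations inserting $v_{n-2},v_{n-3},\ldots$ shows that the previously inserted $v_{n-j+1}$ stays at position $i_{j-1}^\ast$ in the current word $w'_{n-j+1}$, and because $v_{n-j}>v_{n-j+1}$ the new bump index satisfies $i_j^\ast\le i_{j-1}^\ast-1$. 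After at most $i_1^\ast$ steps the index drops to $1$, so the first entry of the resulting word becomes some $v_k$; condition~(1) of Lemma~\ref{lemma:epminus2} at the next step then demands $v_{k-1}\le v_k$, contradicting the strict decrease of $w_2$.

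The main obstacle is the bookkeeping in the second case: one must verify that after each reverse insertion the word remains strictly decreasing, that the previously placed $v$'s are never displaced, and that consequently the bump indices $i_j^\ast$ genuinely strictly decrease at each step and eventually force the failure of condition~(1) of Lemma~\ref{lemma:epminus2}.
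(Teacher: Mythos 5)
Your proof is correct, and its case split coincides exactly with the paper's: the paper distinguishes whether the smallest entry of $w_2$ sits at position $\lambda_2$ (i.e.\ $w_2$ is strictly decreasing, which is your $\mu_2=\lambda_2$) or not (your $\mu_2<\lambda_2$). In the first of your cases the two arguments are essentially the same — the paper chains the bending inequalities $a_i\ge c_i$ along the whole increasing part of $w_2$, whereas you isolate the single bending that can possibly introduce the box $(2,\lambda_1)$ into the path, namely the one at the corner $(1,\lambda_1-1)$, and read off $w_{1,\lambda_1-1}\ge w_{2,\lambda_2}$ from the bending criterion; this is a cleaner way to get the same inequality. Where you genuinely diverge is the strictly decreasing case: the paper disposes of it in three lines by a forward SK insertion argument (if $w_{1,\lambda_1-1}<w_{2,\lambda_2}$, inserting the strictly decreasing $w_1$ on top of the strictly decreasing $w_2$ overflows the first row beyond length $\lambda_1$), while you run the reverse SK insertion of type~II from Lemma~\ref{lemma:epminus2} and track the bump indices $i_j^\ast$, showing they strictly decrease until condition~(1) of that lemma forces $v_{k-1}\le v_k$, contradicting the strict decrease of $w_2$. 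Your route is longer and requires the bookkeeping you flag (the word stays strictly decreasing after each reverse insertion because $u_{i_j^\ast-1}>u_{i_j^\ast}\ge v_{n-j}$ and, by maximality of $i_j^\ast$, $v_{n-j}>u_{i_j^\ast+1}$; the placed $v$'s are untouched since subsequent replacements occur only in $(w\Uparrow)$ and the deletion is of the last entry; and $i_j^\ast\le i_{j-1}^\ast-1$ because every entry at position $\ge i_{j-1}^\ast$ of the decreasing part is $<v_{n-j}$), but all of these checks go through, and the payoff is that your version makes explicit the mechanism that the paper's terse overflow claim leaves to the reader.
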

\begin{proof}
From Lemma \ref{lemma:epminus3}, we have $(w_{1}\uparrow)=\emptyset$.
Let $w_{2,i}$ be the smallest element in $w_{2}$.
Then, we have two cases: (a) $i\neq \lambda_{2}$ and (b) $i=\lambda_{2}$.

For case (a), let $b=w_{2,i}$. The SSDT $\mathrm{SK}(w)$ looks locally like
\begin{eqnarray*}
\begin{matrix}
\ldots & a_1 & a_2 & \cdots & a_{n} \\
\ldots & b &  c_1 & \cdots & c_{n-1}
\end{matrix},
\end{eqnarray*}
where $a_{n}=w_{1,\lambda_{1}}$, $c_{n-1}=w_{2,\lambda_{2}}$, 
$a_{i}>a_{i+1}$ for $1\le i\le n-1$ and $c_{i}\le c_{i+1}$ for 
$1\le i\le n-2$.
A path before bending is $ba_{1}\cdots a_{n}$. 
Since $c_{n-1}$ is included in the path $p$, we have to bend 
the path at corners $a_{i}$ for $1\le i\le n-1$.
From the definition of bending, we have $a_{i}\ge c_{i}$ 
for $1\le i\le n-1$.
Thus we have $a_{n-1}=w_{1,\lambda_{1}-1}\ge w_{2,\lambda_{2}}=c_{n-1}$.

For case (b), suppose that $w_{1,\lambda_{1}-1}<w_{2,\lambda_{2}}$.
We insert $w_{1,i}$ for $1\le i\le \lambda_{1}$ into $w_{2}$ 
by the SK insertion.
Since $w_{1}$ and $w_{2}$ are strictly decreasing sequences, 
the length of the first row after the insertion is greater than 
$\lambda_{1}$.
This is a contradiction to the shape of $\mathrm{SK}(w)$.
Thus, we have $w_{1,\lambda_{1}-1}\ge w_{2,\lambda_{2}}$.
\end{proof}

Suppose that we have $\lambda_{i}=\lambda_{i+1}+1$. 
From Lemma \ref{lemma:epminus2},
we have $w_{i+1,\lambda_{i+1}}\le w_{i,1}$ and $(w_{i}\Downarrow)\neq\emptyset$.
Let $y$ be the rightmost element in $(w_{i}\Downarrow)$ 
which is greater than or equal to $w_{i+1,\lambda_{i+1}}$.
\begin{lemma}
\label{lemma:epminus5}
Suppose that $w_{i+1,\lambda_{i+1}}$ is not in the path $p$.
Then, the element $y$ is not included in the path $p$.
\end{lemma}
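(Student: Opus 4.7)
The plan is to argue by contradiction: assume $y$ lies on $p$, and derive a contradiction with the hypothesis that $b := w_{i+1,\lambda_{i+1}}$ does not lie on $p$. The main tool is the stability of $p$ under bending: by construction, we iterate bending at every corner until no further bending is possible, so at every corner $z$ of the final path $p$, the content of $z$ must be strictly less than the content of its diagonal neighbor $z'$ (otherwise one more bending step would have been performed).

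First I would locate $y$ and $b$ precisely in the shifted diagram. Since $y \in (w_i\Downarrow)$, the element $y$ sits inside the strictly decreasing part of row $i$, and by the maximality in the definition of $y$, every element of $(w_i\Downarrow)$ strictly to the right of $y$ is less than $b$. The element $b$ occupies the rightmost box of row $i{+}1$, and the hypothesis $\lambda_i = \lambda_{i+1}+1$ implies that the box immediately below $y$'s right neighbor still belongs to row $i{+}1$. On this local two-row strip I would write down explicitly where the path $p$ enters row $i$ from row $i{+}1$, using the assumption that $b$ is not on $p$ to conclude that the path exits row $i{+}1$ at a column strictly to the left of the rightmost one.

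Next I would assemble the contradiction. If $y$ sits on $p$, then, because $(w_i\downarrow)$ is strictly decreasing, the elements of row $i$ immediately to the right of $y$ are strictly less than $y$, and in particular strictly less than $b$. Combined with the placement of $b$ in row $i{+}1$, this produces a corner of $p$ at which the content of the corner is $\ge b$ while the diagonal neighbor contains $b$ itself. By the bending rule this corner would have been bent, contradicting the stability of the final path $p$. Hence no such $y$ can occur.

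The main obstacle is handling the exact column at which the path leaves row $i{+}1$ and enters row $i$; this will likely require splitting into subcases according to whether $(w_i\uparrow)$ is empty and whether $(w_i\downarrow)$ ends before or at the rightmost column of row $i{+}1$, in parallel with the case analysis in the proofs of Lemmas \ref{lemma:epminus3} and \ref{lemma:epminus4}. In each subcase, I expect the argument to reduce to the single local inequality $y \ge b$ forcing a forbidden corner, so the bookkeeping should be routine once the path geometry on the two-row strip is fixed.
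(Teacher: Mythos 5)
Your overall strategy coincides with the paper's: assume $y$ lies on $p$ and contradict the stability of the final path under bending at the corner where $p$ enters row $i$. However, the decisive step in your argument fails as stated. You claim to produce "a corner of $p$ at which the content of the corner is $\ge b$ while the diagonal neighbor contains $b$ itself." The diagonal neighbor of the corner at which $p$ turns from row $i{+}1$ into row $i$ is the box of row $i{+}1$ one column to the right of that corner; it equals the box of $b=w_{i+1,\lambda_{i+1}}$ only if the path happens to enter row $i$ at its second-to-rightmost column. In general the path enters row $i$ further to the left (indeed it must, at or to the left of $y$'s column, since $y$ sits in the strictly decreasing part), so the diagonal neighbor is some interior entry $c_1$ of row $i{+}1$, and nothing you have said compares $c_1$ with $b$ or with the corner content. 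Since an SSDT has no columnwise monotonicity, this comparison cannot be waved through as "routine bookkeeping."

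The missing ingredient, which is exactly what the paper's proof uses, is a structural property of the path construction: the entries of row $i{+}1$ lying to the right of the path form a weakly increasing sequence $c_1\le c_2\le\cdots\le c_m\le b$ terminating at $b$ (this is the assertion "the word $R_i$ is weakly increasing" in the construction of $\epsilon^-(T)$). With it, the corner entry $z$ satisfies $z\ge y\ge b\ge c_1$, so the corner is bendable, contradicting stability; equivalently, stability gives $y\le z<c_1\le b$, contradicting $y\ge b$. Also note that your intermediate remark that the entries of row $i$ to the right of $y$ are "strictly less than $y$, and in particular strictly less than $b$" is a non sequitur ($<y$ does not imply $<b$ when $y\ge b$); the correct statement, that entries of $(w_i\Downarrow)$ to the right of $y$ are $<b$ by the maximality defining $y$, is true but is not the fact the proof needs — the needed inequality concerns row $i{+}1$, not row $i$.
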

\begin{proof}
Suppose that a letter $y$ is included in the path $p$.
From the definition of $y$, we have $y\ge w_{i+1,\lambda_{i+1}}$.
Since $y\in(w_{1}\Downarrow)$, there exists $z\ge y$ such that 
a path has a corner at $z$.
A SSDT $\mathrm{SK}(w)$ locally looks like as follows:
\begin{eqnarray*}
\begin{matrix}
\ldots & z & \cdots & y & \cdots\\
\ldots & c_{0} & c_{1} & \cdots & c_{m} & w_{i+1,\lambda_{i+1}}
\end{matrix}.
\end{eqnarray*}
The elements $c_{0}z\cdots y\cdots$ form 
the path.
By definition, the elements right to the path form a weakly 
increasing sequence, we have 
$c_{1}\le c_{2}\le\ldots\le c_{m}\le w_{i+1,\lambda_{i+1}}$.
Since $p$ has a corner at $z$, we have $z<c_{1}$.
From these observations, we have $y\le z<c_{1}\le w_{i+1,\lambda_{i+1}}$, which 
is a contradiction to the assumption.
Therefore, $y$ is not included in $p$.
\end{proof}

\begin{proof}[Proof of Theorem \ref{thrm:epminus}]
We prove Proposition by induction.
Let $v:=\mathrm{read}(\mathrm{SK}(w))=v_{1}\cdots v_{|\lambda|}$ 
and $v':=v_{1}\cdots v_{|\lambda|-1}$.
We denote by $\lambda$ the shape of $\mathrm{SK}(v)$. 
We assume that Proposition is true for $\mathrm{SK}(v')$.
If $l(\lambda)=1$, it is obvious that 
$\mathrm{read}(\epsilon^{-}(T))=\mathrm{read}(\mathrm{SK}(w))$. 
This implies the word is compatible with the shape $\epsilon^{-}(\lambda)$ and 
$\mathrm{read}(\epsilon^{-}(T))$ has the same insertion tableau as the word $v$.

Suppose that $\lambda_{1}>\lambda_{2}+1$. 
We denote by $\lambda'$ a strict partition obtained from $\lambda$ 
by deleting the rightmost box of the first row of $\lambda$.
The shape of $\mathrm{SK}(v')$ is given by $\lambda'$.
From the construction of paths from $\mathrm{SK}(v')$ and 
$\mathrm{SK}(v)$, the element $v_{|\lambda|}$ is placed 
at the rightmost box of the first anti-diagonal 
in $\epsilon^{-}(\lambda)$.
This means that the word obtained from $\mathrm{SK}(v)$ is compatible with 
the shape $\epsilon^{-}(\lambda)$.
Since a word in $\epsilon^{-}(\lambda')$ has the same 
mixed insertion tableau $P_{\mathrm{mix}}(v')$ as $\mathrm{SK}(v')$ 
by assumption and $v_{|\lambda|}$ is the last element of the word in 
$\epsilon^{-}(\lambda)$, the word in $\epsilon^{-}(\lambda)$ and $v$
have the same mixed insertion tableau. 
Thus, Proposition holds true in this case.

Below, we assume $\lambda_{1}=\lambda_{2}+1$ without loss of generality.
Let $k$ be the maximal integer such that $\lambda_{1}=\lambda_{k}+k-1$ and 
$\lambda'$ be a strict partition obtained from $\lambda$ by replacing 
$\lambda_{k}$ with $\lambda_{k}-1$.
We denote by $v:=\mathrm{read}(\mathrm{SK}(w))$ the reading tableau 
word of $\mathrm{SK}(w)$.
By reversing the SK insertion, we have a word $w'$ of length $|\lambda|-1$ 
such that $\mathrm{SK}(w')$ is of shape $\lambda'$ and 
there exists a letter $x$ satisfying $\mathrm{SK}(w)=\mathrm{SK}(w'x)$.
Since $v$ produces the shape $\lambda$, one can take $x=v_{|\lambda|}$. 
Let $v'$ be the tableau word $v':=\mathrm{read}(\mathrm{SK}(w'))$.
Suppose that $v'_{|\lambda|-1}\le x$.
The letter $v'_{|\lambda|-1}$ is the rightmost content of the first 
row of $\mathrm{SK}(w')$.
The insertion of $x$ to the first row of $\mathrm{SK}(w')$ results in 
$v'x$. This implies that the length of the first row increases after 
the insertion. This contradicts $\lambda_{1}=\lambda'_{1}$.
Thus, we have $v'_{|\lambda|-1}>x$.
Let $R_{1}$ be the first row of $\mathrm{SK}(w')$.
Suppose that a hook word $R_{1}$ does not have an increasing part, 
{\it i.e.}, $(R\downarrow)=\emptyset$.
If we insert $x$ into $R_{1}$, the word $R_{1}x$ is a hook word 
of length $\lambda'_{1}+1$.
This contradicts $\lambda_{1}=\lambda'_{1}$.
Thus, a hook word $R_{1}$ has an increasing part, 
{\it i.e.}, $(R_{1}\uparrow)\neq\emptyset$.

Let $\epsilon^{-}(T)$ (resp. $\epsilon^{-}(T')$) be a tableau word of 
shape $\epsilon^{-}(\lambda)$ (resp. $\epsilon^{-}(\lambda')$) corresponding 
to $\mathrm{SK}(w)$ (resp. $\mathrm{SK}(w')$). 
We denote by $q_{i}$ the tableau word of length $\lambda_{i}$ obtained 
by reading the $i$-th anti-diagonal of $\epsilon^{-}(T)$. 
We define $q'_{i}$ for $\epsilon^{-}(T')$ in a similar way.

Let $p'$ be a path of length $\lambda'_{1}$ in $\mathrm{SK}(w')$.
Suppose that the path $p'$ is obtained from a path $p''$ by bending at corners.
The path $p''$ is along the shape $\mu$ where $\mu$ is a strict partition 
formed by decreasing parts of $\mathrm{SK}(w')$.
We denote by $\mathrm{read}(p''):=p''_{1}\cdots p''_{\lambda_{1}}$ 
the reading word along the path $p''$ in $\mathrm{SK}(w')$.
Since $(R_{1}\uparrow)\neq\emptyset$, we have 
$p''_{\lambda_{1}-1}\le p''_{\lambda_{1}}$. 
By definition, we have $p'_{\lambda_{1}}=p''_{\lambda_{1}}$.
However, since a bending at a corner gives an equal or smaller letter, 
we have $p'_{\lambda_{1}-1}\le p''_{\lambda_{1}}=p'_{\lambda_{1}}$.
Similarly, for $1\le i\le k$, the $i$-th row of $\mathrm{SK}(w')$ has a non-empty 
increasing part since the decreasing parts of $\mathrm{SK}(w')$ 
form a strict partition (see Proposition \ref{prop:dp}).
We denote by $p'_{i}$ the path obtained from $\mathrm{SK}_{i}(w')$, 
where $\mathrm{SK}_{1}(w'):=\mathrm{SK}(w')$ and 
$\mathrm{SK}_{i+1}(w')$ is a SSDT obtained from $\mathrm{SK}_{i}(w')$ by 
removing boxes in $p'_{i}$ and successively applying the reverse 
SK insertion of type I.
Let $R_{1}^{i}$ be the first row of $\mathrm{SK}_{i}(w')$.
Since the $i$-th row of $\mathrm{SK}(w')$ has an increasing part and 
the reverse SK insertion of type I does not decrease the length of 
the increasing part, we have $(R_{1}^{i}\downarrow)\neq\emptyset$.
From Lemma \ref{lemma:epminus3}, the rightmost element of the second 
row of $\mathrm{SK}_{i}(w')$ is not in the path $p'_{i}$.
Therefore, the rightmost and the second rightmost elements of the 
first row in $\mathrm{SK}_{i}(w')$ are both in the path $p'_{i}$.
This implies $p'_{i,\lambda_{i}-1}\le p'_{i,\lambda_{i}}$.

Let $r$ be the rightmost element in the second row of $\mathrm{SK}(w')$.
Since $r$ is not in the path $p'_{1}$, we apply the reverse SK insertion 
of type I to $r$. 
By definition, $r$ bumps out an element $y$ greater than or equal to $r$, 
that is, $r\le y$.
Since we can not apply bending in $p'_{1}$, we have $p'_{1,\lambda_{1}-1}<r$.
From these, we have $p'_{1,\lambda_{1}-1}<r\le y\le p'_{2,\lambda_{2}}$, 
which implies $p'_{1,\lambda_{1}-1}\le p'_{2,\lambda_{2}}$.
By a similar argument, we have 
$p'_{i,\lambda_{i}-1}\le p'_{i+1,\lambda_{i+1}}$.

Suppose that $\epsilon^{-}(T')$ corresponds to $\mathrm{SK}(w')$. 
Let $a_{1},\cdots,a_{k}$ be the elements forming the second rightmost column of 
$\epsilon^{-}(T')$ and $b_{1},\cdots,b_{k}$ be the elements forming the rightmost 
column.
By summarizing the above observations, these elements satisfy
$a_{i}\le b_{i}$ for $1\le i\le k-1$, $a_{i}\le b_{i+1}$ for $1\le i\le k-2$.
If we denote $b_{0}:=x$, we have $b_{0}<b_{1}$.
Thus, we can put $b_{0}$ above $b_{1}$ in $\epsilon^{-}(T')$ which corresponds
to the insertion of $x$ into $\epsilon^{-}(T')$.
By using Lemma \ref{lemma:epminus1} successively, we obtain a new 
tableau $\epsilon^{-}(T'')$ of the shape $\epsilon^{-}(\lambda)$.
A schematic procedure is as follows:
\begin{eqnarray*}
\begin{matrix}
 & b_{0}  \\
  & b_{1} \\
a_{1} & b_{2} \\
\vdots & \vdots \\ 
a_{k-2} & b_{k-1} \\ 
a_{k-1} &   \\ 
a_{k} &  
\end{matrix}\quad \rightarrow \quad
\begin{matrix}
  & b_{0} \\
a_{1} & b_{1} \\
\vdots & \vdots \\ 
a_{k-1} & b_{k-1}   \\ 
a_{k} &  
\end{matrix}.
\end{eqnarray*}
By construction, $\epsilon^{-}(T'')$ is compatible with the shape 
$\epsilon^{-}(\lambda)$ and its tableau word produces the same 
mixed insertion tableau as $\mathrm{SK}(w)$.

To prove Theorem, we have to show that $\epsilon^{-}(T)$ obtained 
from $\mathrm{SK}(w)$ is nothing but $\epsilon^{-}(T'')$.
Suppose that the path $p$ of length $\lambda_{1}$ constructed from 
$\mathrm{SK}(w)$ contains $w_{i,\lambda_{i}}$ for $1\le i\le q\le k$.
From  Lemma \ref{lemma:epminus3}, a word $w_{i}$, $1\le i\le q$, is 
strictly decreasing.
The element $w_{q,\lambda_{q}-1}$ is contained by $p$ since a path 
is an up-right path.
We insert $w_{k,\lambda_{k}}$ into $w_{k-1}\cdots w_{1}$ by the reverse 
SK insertion of type II.
The insertion is characterized by Lemma \ref{lemma:epminus2}.
From Lemma \ref{lemma:epminus2} and Lemma \ref{lemma:epminus5}, 
the partial path from $w_{l(\lambda),1}$ to 
$w_{q,\lambda_{q}-1}$ is not changed by the insertion.
The partial path from $w_{q,\lambda_{q}-1}$ to $w_{1,\lambda_1}$ 
changes locally by the insertion as follows:
\begin{eqnarray*}
\begin{matrix}
 & w_{1,\lambda_{1}-1} & w_{1,\lambda_{1}} \\
 & \vdots & \vdots \\
 & w_{q-1,\lambda_{q-1}-1} & w_{q-1,\lambda_{q-1}} \\
 & w_{q,\lambda_{q}-1} &  w_{q,\lambda_{q}}
\end{matrix}\qquad\Rightarrow\qquad
\begin{matrix}
 & w_{2,\lambda_{2}} & w_{1,\lambda_{1}-1} \\
 & \vdots & \vdots \\
 & w_{q,\lambda_{q}} & w_{q-1,\lambda_{q-1}-1} \\
 & w_{q,\lambda_{q}-1} &  \alpha
\end{matrix}\leftarrow w_{1,\lambda_{1}},
\end{eqnarray*}
where $\alpha\ge w_{q,\lambda_{q}}$ and $\leftarrow w_{1,\lambda_{1}}$ means 
that this element is bumped out by the insertion.
Since $w_{i,\lambda_{i}}\le w_{i-1,\lambda_{i-1}-1}$ for 
$2\le i\le q$ (by Lemma \ref{lemma:epminus4}), 
the new path of length $\lambda_{1}$ is identical 
to the old path except the last element. 
Note that $w_{1,\lambda_{1}}<w_{1,\lambda_{1}-1}$.
This condition corresponds to 
putting $w_{1,\lambda_{1}}$ above $w_{1,\lambda_{1}-1}$ in $\epsilon^{-}(T')$. 
To obtain paths from $\mathrm{SK}(w)$, we delete the path of length $\lambda_{1}$ and 
perform the reverse SK insertion of type I on it.
Let $p_i$ (resp. $p'_i$) be the $i$-th path of length $\lambda_{i}$ 
(resp. $\lambda'_{i}$) obtained from $\mathrm{SK}(w)$ (resp. $\mathrm{SK}(w')$).
The difference between $p_i$ and $p'_i$ is the last element by using a similar 
argument above.
More precisely, we have $p'_{i,\lambda'_{i}}=p_{i+1,\lambda_{i+1}}$
for $1\le i\le k-1$ and $p_{i}=p'_{i}$ for $k<i\le l(\lambda)$.
In this way, we obtain $\epsilon^{-}(T')$ from $\mathrm{SK}(w')$.
Combining these observations, we have shown that $\epsilon^{-}(T)$ is 
$\epsilon^{-}(T'')$. 
This completes the proof.
\end{proof}

\begin{example}
\label{ex:epminus}
Let $\lambda=(7,4,3)$. 
An example of a SSDT of shape $(7,4,3)$ is the 
leftmost tableau in Figure \ref{fig:epminus}.
From this SSDT, we obtain a path $p_{1}$ of length $7$ and 
its word is given by $\mathrm{word}(p_{1})=3112344$.
\begin{figure}[ht]
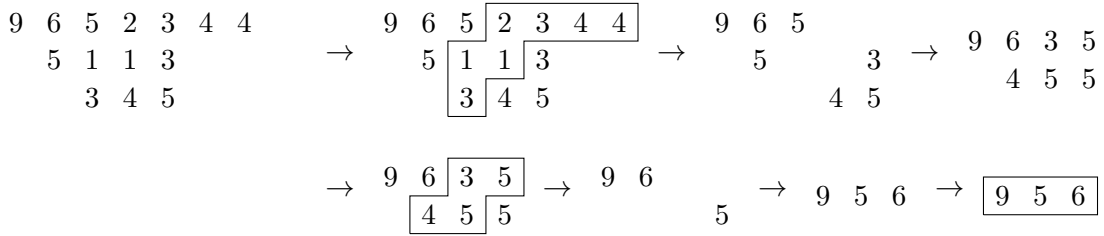

\begin{eqnarray*}
\tikzpic{-0.5}{
\draw(0,0)node{$9$}(0.5,0)node{$6$}(1,0)node{$5$}(1.5,0)node{$2$}
          (2,0)node{$3$}(2.5,0)node{$4$}(3,0)node{$4$};
\draw(0.5,-0.5)node{$5$}(1,-0.5)node{$1$}(1.5,-0.5)node{$1$}
          (2,-0.5)node{$3$};
\draw(1,-1)node{$3$}(1.5,-1)node{$4$}
          (2,-1)node{$5$};
}&&\rightarrow
\tikzpic{-0.5}{
\draw(0,0)node{$9$}(0.5,0)node{$6$}(1,0)node{$5$}(1.5,0)node{$2$}
          (2,0)node{$3$}(2.5,0)node{$4$}(3,0)node{$4$};
\draw(0.5,-0.5)node{$5$}(1,-0.5)node{$1$}(1.5,-0.5)node{$1$}
          (2,-0.5)node{$3$};
\draw(1,-1)node{$3$}(1.5,-1)node{$4$}
          (2,-1)node{$5$};
\draw(0.75,-1.25)--(0.75,-0.25)--(1.25,-0.25)--(1.25,0.25)--(3.25,0.25)
      --(3.25,-0.25)--(1.75,-0.25)--(1.75,-0.75)--(1.25,-0.75)
      --(1.25,-1.25)--(0.75,-1.25);
}\rightarrow
\tikzpic{-0.5}{
\draw(0,0)node{$9$}(0.5,0)node{$6$}(1,0)node{$5$};
\draw(0.5,-0.5)node{$5$}(2,-0.5)node{$3$};
\draw(1.5,-1)node{$4$}(2,-1)node{$5$};
}\rightarrow
\tikzpic{-0.5}{
\draw(0,0)node{$9$}(0.5,0)node{$6$}(1,0)node{$3$}(1.5,0)node{$5$};
\draw(0.5,-0.5)node{$4$}(1,-0.5)node{$5$}(1.5,-0.5)node{$5$};
} \\[11pt]
&&\rightarrow
\tikzpic{-0.5}{
\draw(0,0)node{$9$}(0.5,0)node{$6$}(1,0)node{$3$}(1.5,0)node{$5$};
\draw(0.5,-0.5)node{$4$}(1,-0.5)node{$5$}(1.5,-0.5)node{$5$};
\draw(0.25,-0.75)--(0.25,-0.25)--(0.75,-0.25)--(0.75,0.25)--(1.75,0.25)
     --(1.75,-0.25)--(1.25,-0.25)--(1.25,-0.75)--(0.25,-0.75);
}\rightarrow
\tikzpic{-0.5}{
\draw(0,0)node{$9$}(0.5,0)node{$6$};
\draw(1.5,-0.5)node{$5$};
}\rightarrow
\tikzpic{-0.5}{
\draw(0,0)node{$9$}(0.5,0)node{$5$}(1,0)node{$6$};
}\rightarrow
\tikzpic{-0.5}{
\draw(0,0)node{$9$}(0.5,0)node{$5$}(1,0)node{$6$};
\draw(-0.25,-0.25)--(-0.25,0.25)--(1.25,0.25)--(1.25,-0.25)--(-0.25,-0.25);
}
\end{eqnarray*}
\caption{A semistandard decomposition tableau and its paths.}
\label{fig:epminus}
\end{figure}
By removing the boxed elements, we obtain a SSDT of shape $(4,3)$ (see 
rightmost picture in the first row in Figure \ref{fig:epminus}).
By extracting a path $p_{2}$, $\mathrm{word}(p_{2})=4535$, we obtain 
a SSDT of shape $(3)$. A path $p_{3}$ is equal to the third SSDT, namely 
$\mathrm{word}(p_{3})=956$.
By putting the elements in the paths $p_{1}, p_{2}$ and $p_{3}$ on the anti-diagonals, 
we obtain a semistandard increasing decomposition tableau $\epsilon^{-}(T)$ of shape 
$\epsilon^{-}(\lambda)$ (see Figure \ref{fig:epminus2}) 
corresponding to the SSDT of shape 
$(7,4,3)$.
One can easily check that the  words constructed from the SSDT and 
$\epsilon^{-}(\lambda)$ produce the same shifted tableau by the mixed 
insertion. 
\begin{figure}[ht]
\begin{eqnarray*}
\epsilon^{-}(T)=
\begin{matrix}
  & & & & & & 4 \\
  & & & & & 4 & \\
  & & & & 3 & & \\
  & & & 2 & 5 \\
  & & 1 & 3 & 6 \\
  & 1 & 5 & 5 \\
3 & 4 & 9 
\end{matrix}\quad\quad
P_{\mathrm{mix}}(\epsilon^{-}(T))
=
\begin{matrix}
1 & 1 & 2 & 3' & 3 & 4 & 4\\
  & 3 & 4' & 5 & 5 \\
  &   & 5 & 6 & 9'
\end{matrix}
\end{eqnarray*}
\caption{A SSIDT $\epsilon^{-}(T)$ and a shifted tableau corresponding to 
$\epsilon^{-}(T)$.}
\label{fig:epminus2}
\end{figure}
\end{example}

\subsection{Construction of \texorpdfstring{$\epsilon^{+}(T)$}{e+(T)}}
\label{sec:SSIDTplus}
Let $\lambda$ be a strict partition.
We construct a bijection between a semistandard shifted tableau $T$ of shape 
$\lambda$ and a semistandard shifted tableau $\epsilon^{+}(T)$. 
We call also $\epsilon^{+}(T)$ a semistandard increasing decomposition tableau.
There are five steps for this bijection: 
\begin{enumerate}
\item perform the standardization on $T$ and obtain a standard tableau $\mathrm{stand}(T)$, 
\item perform a ``dual" operation on $\mathrm{stand}(T)$ with respect to 
primes and obtain $\overline{T}$,
\item construct a SSDT from $\overline{T}$, then obtain 
a tableau $\epsilon^{-}(\overline{T})$ from the SSDT,
\item take a flip of $\epsilon^{-}(\overline{T})$ and obtain 
a standard tableau $\epsilon^{+}(\overline{T})$,
\item destandardization of $\epsilon^{+}(\overline{T})$.
\end{enumerate}

\paragraph{Step 1:}
We denote by $\#(\alpha)$ the number of a letter $\alpha\in X'$ in a 
tableau $T$. 
We define a standard tableau $\mathrm{stand}(T)$ of $T$ from $T$ as follows.
We replace $1'$'s in $T$ with the letter $1,2,\ldots,\#(1')$ from 
top to bottom. 
Successively, replace $1$'s by the letter 
$\#(1')+1,\#(1')+2,\ldots,\#(1')+\#(1)$ from left to right.
Continue with $2'$'s in $T$ which are replaced by 
$\#(1')+\#(1)+1,\ldots,\#(1')+\#(1)+\#(2')$, and so on, 
until we obtain a standard tableau.
We denote by $\mathrm{stand}(T)$ the standard shifted tableau corresponding 
to a tableau $T$.
Note that when we replace primed (resp. unprimed) letter with a letter, 
we work from top to bottom (resp. from left to right).
We define a content as $\#(\lambda):=(\#(1')+\#(1),\#(2')+\#(2),\ldots)$.

\paragraph{Step 2:}
A given standard shifted tableau $\mathrm{stand}(T)$, we consider the following 
``dual" operation:
\begin{enumerate}[(i)]
\item letters on the main diagonal are unchanged,  
\item a letter $\alpha\in X'$ which is not on the main diagonal is replaced 
by $\alpha'$.  Here, we set $(i)'=i'$ and $(i')'=i$ for $i\in X$.
\end{enumerate}
The new semistandard shifted tableau is denoted by $\overline{T}$.
Note that this operation is an involution, {\it i.e.}, 
$\overline{\overline{T}}=T$.

\paragraph{Step 3:}
For our purpose, it is enough to find a word $w\in X$ such that 
$P_{\mathrm{mix}}(w)=\overline{T}$. 
This is easily done by reversing the mixed insertion starting from 
a pair of tableaux $(\overline{T},U)$ where $U$ is an arbitrary chosen 
standard shifted tableau.
Once $w$ is fixed, we can obtain a SSDT by the SK insertion and 
a tableau $\epsilon^{-}(\overline{T})$ from the SSDT.

\paragraph{Step 4:}
We reflect $\epsilon^{-}(\overline{T})$ over the diagonal line and 
obtain a standard tableau $\epsilon^{+}(\overline{T})$. 

\paragraph{Step 5:}
Let $\alpha$ be the content of $T$. 
We replace the letters from $1$ to $\alpha_1$ in $\epsilon^{+}(\overline{T})$
by $1$, the letters from $\alpha_1+1$ to $\alpha_1+\alpha_2$ in $\epsilon^{+}(\overline{T})$
by $2$, and so on.
This destandardization of $\epsilon^{+}(\overline{T})$ gives 
a semistandard increasing decomposition tableau $\epsilon^{+}(T)$.

\begin{theorem}
\label{thrm:epplus}
Let $T$ be a shifted tableau of shape $\lambda$ and $\epsilon^{+}(T)$ be a semistandard 
increasing decomposition tableau of shape $\lambda$ constructed by the above steps.
Then, $\epsilon^{+}(T)$ is well-defined, {\it i.e.}, the reading 
word $w:=\mathrm{read}(\epsilon^{+}(T))$ satisfies 1) the word $w$ is compatible with 
the shape $\epsilon^{+}(\lambda)$ and 2) $P_{\mathrm{mix}}(w)=T$.
\end{theorem}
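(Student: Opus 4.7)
The plan is to verify that each of the five steps is well-defined and then combine them with Theorem \ref{thrm:epminus} to obtain the desired identity $P_{\mathrm{mix}}(\mathrm{read}(\epsilon^{+}(T))) = T$.

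I would first dispose of the routine steps. Step 1 (standardization) and Step 5 (destandardization) are mutually inverse bijections between semistandard and standard shifted tableaux, and destandardization commutes with mixed insertion in the usual sense: if $u$ is a standard word and $u^{\ast}$ is its destandardization to content $\#(\lambda)$, then $P_{\mathrm{mix}}(u^{\ast})$ is the destandardization of $P_{\mathrm{mix}}(u)$. For Step 2 (the dual), the key observation is that every integer from $1$ to $|\lambda|$ occurs exactly once in a standard shifted tableau, so the marked ordering $1'<1<2'<2<\cdots$ restricted to the entries is determined by strict absolute-value order. Flipping primes off the main diagonal therefore preserves all row- and column-monotonicity requirements, and since main-diagonal entries remain unprimed the "no prime on the diagonal" condition is automatic. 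Hence $\overline{T}$ is again a standard shifted tableau and the operation is an involution. Step 3 applies Theorem \ref{thrm:epminus} directly, yielding $\epsilon^{-}(\overline{T})$ with $P_{\mathrm{mix}}(\mathrm{read}(\epsilon^{-}(\overline{T})))=\overline{T}$. For Step 4 (the flip), I would check that reflecting across the appropriate axis sends the shape $\epsilon^{-}(\lambda)$ to $\epsilon^{+}(\lambda)$ by matching their anti-diagonals of common lengths $\lambda_{1},\dots,\lambda_{l}$, and that rows and columns of the flipped filling remain weakly or strictly increasing because the original tableau is standard with distinct entries.

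Once well-definedness is in place, the identity
\begin{equation*}
P_{\mathrm{mix}}(\mathrm{read}(\epsilon^{+}(T))) \;=\; T
\end{equation*}
reduces, via the destandardization compatibility noted above, to the standardized statement $P_{\mathrm{mix}}(\mathrm{read}(\epsilon^{+}(\overline{T})))=\mathrm{stand}(T)$. Combined with Theorem \ref{thrm:epminus}, this becomes the claim that the geometric flip converting $\epsilon^{-}(\overline{T})$ into $\epsilon^{+}(\overline{T})$ corresponds, at the level of mixed-insertion images, to the prime-flipping involution $\overline{T}\leftrightarrow\mathrm{stand}(T)$. I would establish this correspondence inductively on $|\lambda|$, paralleling the peel-off-a-box argument used in the proof of Theorem \ref{thrm:epminus}: removing the last box added during standardization and invoking the inductive hypothesis, I would track the local change in the reading word induced by the flip and match it cell-by-cell to the prime swap using the shifted plactic relations of Theorem \ref{thrm:splactic}.

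The main obstacle will be this inductive step: understanding precisely how the geometric flip alters the anti-diagonal decomposition of the reading word and matching that alteration, move by move, to the prime-flipping dual on the shifted-tableau side. In favorable cases the change amounts to a single adjacent transposition covered by one of the relations (\ref{eqn:pl1})--(\ref{eqn:pl8}), but when the removed box lies near the main diagonal the analysis becomes delicate because the dual fixes the diagonal while the flip may reorganize entries that straddle it. The base case $l(\lambda)=1$ is immediate, since then the shape consists of a single anti-diagonal, the flip acts trivially on reading words, and the dual is a no-op on diagonal entries.
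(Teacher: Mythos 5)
Your reduction of the problem is sound as far as it goes: Steps 1, 2, 3 and 5 are handled essentially as the paper does, and you correctly identify that everything hinges on relating the geometric flip of $\epsilon^{-}(\overline{T})$ to the prime-flipping dual $\overline{T}\leftrightarrow\mathrm{stand}(T)$. But at exactly that point you propose a cell-by-cell induction that you yourself flag as the unresolved obstacle, and this is a genuine gap rather than a deferred technicality. The paper closes it with two specific ingredients your proposal does not contain. First, Lemma \ref{lemma:epplus1}: the reading word $\overline{w}$ of the flipped tableau is shifted-plactic equivalent to $\mathrm{rev}(w)$, the reversal of the reading word of $\epsilon^{-}(\overline{T})$; this is proved by decomposing $\epsilon^{+}(\lambda)$ into staircases and applying Lemma \ref{lemma:epminus1} together with the relations of Theorem \ref{thrm:splactic}, and it is a statement about the shape alone, so no box-removal induction on $T$ is required. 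Second, Haiman's Proposition \ref{prop:epplus}, $\overline{P_{\mathrm{mix}}(w)}=P_{\mathrm{mix}}(\mathrm{rev}(w))$, which converts word reversal into the dual operation on the insertion tableau. Chaining these with Theorem \ref{thrm:epminus} gives $P_{\mathrm{mix}}(\overline{w})=P_{\mathrm{mix}}(\mathrm{rev}(w))=\overline{P_{\mathrm{mix}}(w)}=\overline{\overline{T}}$, and the involutivity of the bar operation finishes part 2). Without identifying reversal as the intermediary, your induction would have to track how deleting a box propagates simultaneously through the SK insertion, the path construction, the flip and the prime swap --- essentially re-deriving all of Theorem \ref{thrm:epminus} and more --- and you offer no mechanism for the ``delicate'' diagonal cases you mention.

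There is a second, smaller gap in part 1). Showing that the flipped standard tableau fits the shape $\epsilon^{+}(\lambda)$ is not the real issue; the issue is that after destandardization each letter $j$ must occur at most once in every column of $\epsilon^{+}(T)$. This requires knowing that the standardized letters $i_1,\dots,i_r$ corresponding to a fixed $j$ (which form a vertical strip followed by a horizontal strip in $\overline{T}$) occur as a decreasing subsequence of the reading word, hence lie in distinct rows of $\epsilon^{-}(\overline{T})$ and therefore in distinct columns after the flip. That is the content of Lemma \ref{lemma:epplus2}, which your proposal neither supplies nor replaces.
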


Before proceeding to a proof of Theorem \ref{thrm:epplus}, we introduce 
a proposition and two lemmas needed for the proof.

\begin{prop}[Proposition 8.8 in \cite{Hai89}]
\label{prop:epplus}
We have $\overline{P_{\mathrm{mix}}(w)}=P_{\mathrm{mix}}(\mathrm{rev}(w))$.
\end{prop}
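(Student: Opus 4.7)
My plan is to prove this by induction on $|w|$, with the crucial reduction being to the standard (permutation) case via standardization, where the duality between mixed and shifted insertion becomes available.

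Step 1 (Standardization). I would first establish compatibility lemmas between standardization and the three ingredients in the statement: mixed insertion, the bar operation, and word reversal. Concretely, I would verify that $\mathrm{stand}(P_{\mathrm{mix}}(w)) = P_{\mathrm{mix}}(\mathrm{stand}(w))$ (after accounting for the fact that equal letters in $w$ are broken by read order, so that their insertion order in $\mathrm{stand}(w)$ reproduces the bumping behavior), and that $\overline{\mathrm{stand}(T)} = \mathrm{stand}(\overline{T})$ since standardization is determined position by position and the bar operation is a position-wise involution that preserves the diagonal. Combined with the observation that reversal interacts with standardization in a controlled way (reading backwards swaps the left-to-right order of equal letters but this is absorbed into a suitable convention), this reduces the problem to the case where $w$ is a permutation.

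Step 2 (Permutation case via duality). For a permutation $\pi$, I would invoke the duality recalled in the excerpt,
\begin{eqnarray*}
(P_{\mathrm{mix}}(\pi), Q_{\mathrm{mix}}(\pi)) = (Q_{\mathrm{shift}}(\pi^{-1}), P_{\mathrm{shift}}(\pi^{-1})),
\end{eqnarray*}
which converts the identity $\overline{P_{\mathrm{mix}}(\pi)} = P_{\mathrm{mix}}(\mathrm{rev}(\pi))$ into a statement about the shifted-insertion $Q$-tableaux of $\pi^{-1}$ and $(\mathrm{rev}(\pi))^{-1}$. Since $(\mathrm{rev}(\pi))^{-1}$ is obtained from $\pi^{-1}$ by complementing values ($i \mapsto n+1-i$ in one-line notation), the desired statement reduces to a known behavior of shifted insertion under value-complementation, provable by tracking Schensted versus non-Schensted moves and showing that complementation swaps their roles — precisely the combinatorial content of the bar operation on the recording side.

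Step 3 (Direct inductive alternative). Failing a clean reduction through duality, I would proceed directly: write $w = a \cdot v$, so that $\mathrm{rev}(w) = \mathrm{rev}(v) \cdot a$. By the inductive hypothesis $P_{\mathrm{mix}}(\mathrm{rev}(v)) = \overline{P_{\mathrm{mix}}(v)}$, it suffices to show that inserting $a$ as the final letter into $\overline{P_{\mathrm{mix}}(v)}$ yields the bar of the tableau obtained by inserting $v$ into the one-box tableau containing $a$. This comes down to verifying that the bar involution intertwines the two bumping paths, case by case on whether each bumped entry is primed, unprimed, diagonal, or off-diagonal.

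The hard part is precisely this last bookkeeping: the three branches in the mixed-insertion definition (off-diagonal unprimed bumping $\to$ next row, off-diagonal primed bumping $\to$ next column, diagonal unprimed bumping with priming $\to$ next column) interact nontrivially with the bar operation, since barring swaps columns and rows of the bumping paths in a position-dependent way. This is why I expect Step 2 via the shifted/mixed duality to be the cleanest route, reducing the proof to the well-understood symmetry of shifted insertion under value-complementation rather than an intricate direct case analysis of mixed-insertion bumping.
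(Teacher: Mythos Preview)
The paper does not prove this proposition; it is stated with attribution to Haiman \cite{Hai89} (Proposition 8.8 there) and used as a black box in the proof of Theorem \ref{thrm:epplus}. There is therefore no proof in the present paper to compare your proposal against.

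Your outline is a plausible route to an independent proof, but two points need care. First, in Step 1 the claim that standardization interacts cleanly with reversal is delicate: reversing a word with repeated letters and then standardizing does \emph{not} give the reverse of the standardization, because equal letters get relabeled in opposite left-to-right orders. You would have to check that this discrepancy is exactly absorbed by the bar operation, or simply restrict to permutations from the outset (which suffices for the paper's only use of this proposition, where $w$ is the reading word of a standard tableau). Second, in Step 2 the identity you reduce to, namely $\overline{Q_{\mathrm{shift}}(\sigma)} = Q_{\mathrm{shift}}(\sigma^{c})$ for the value-complement $\sigma^{c}$, is itself a nontrivial statement about shifted insertion; via the Haiman duality you invoke, it is essentially equivalent to the proposition you are trying to prove. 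So Step 2 is a reformulation rather than a genuine reduction, and you would still owe a direct argument --- either the case analysis of Step 3, or an independent proof on the shifted side.
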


Given a shifted tableau $T$, we denote $S=\mathrm{stand}(T)$. 
\begin{lemma}
\label{lemma:epplus1}
Let $w$ be the reading word of $\epsilon^{-}(S)$ and $\overline{w}$
be the reading word of a tableau which is obtained  
by reflecting $\epsilon^{-}(S)$ over the diagonal line.
Then, we have $\overline{w}\sim \mathrm{rev}(w)$.
\end{lemma}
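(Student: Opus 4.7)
The plan is to reduce the claimed shifted plactic equivalence to an equality of mixed insertion tableaux via Theorem \ref{thrm:splactic}, and then verify that equality directly from the structure of the reflection.

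First, by Theorem \ref{thrm:epminus}, the reading word $w=\mathrm{read}(\epsilon^{-}(S))$ satisfies $P_{\mathrm{mix}}(w)=S$. Applying Proposition \ref{prop:epplus} then yields
\begin{eqnarray*}
P_{\mathrm{mix}}(\mathrm{rev}(w))=\overline{P_{\mathrm{mix}}(w)}=\overline{S}.
\end{eqnarray*}
By Theorem \ref{thrm:splactic}, two words are shifted plactic equivalent if and only if they share the same mixed insertion tableau. Hence the assertion $\overline{w}\sim\mathrm{rev}(w)$ is equivalent to $P_{\mathrm{mix}}(\overline{w})=\overline{S}$, and the problem reduces to establishing this identity.

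To analyze $\overline{w}$, I would examine what the reflection does to the entries of $\epsilon^{-}(S)$. The main anti-diagonal of $\epsilon^{-}(\lambda)$, which carries the first anti-diagonal (of length $\lambda_{1}$) of $\epsilon^{-}(S)$, is fixed by the reflection; the remaining anti-diagonals are flipped from one side of it to the other, while the cell contents move along with the cells. Reading the reflected tableau bottom-to-top and left-to-right traverses the original entries of $\epsilon^{-}(S)$ in a precise order determined by the flipped shape, and one can describe $\overline{w}$ in terms of a rearrangement of $w$ that interleaves the anti-diagonals in the opposite direction from that used for $w$.

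Granting this description, I would prove $P_{\mathrm{mix}}(\overline{w})=\overline{S}$ by induction on $|\lambda|$. The inductive step removes the last entry $a$ of $w$ --- which sits at an outer corner of $\epsilon^{-}(S)$ and corresponds to the last cell inserted --- and checks that, after reflection, this same entry is the last letter of $\overline{w}$ as well, and that the reflection of $\epsilon^{-}(S)$ with the corner removed is the reflection of the smaller SSIDT. The inductive hypothesis then identifies $P_{\mathrm{mix}}$ of the truncated $\overline{w}$, and it remains to verify that inserting $a$ at the end produces $\overline{S}$; this last step uses Proposition \ref{prop:epplus} applied to the one-letter extension together with the fact that the mixed insertion of a fresh letter is compatible with the bar involution.

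The principal obstacle is the second step: giving a clean combinatorial description of $\overline{w}$ in terms of $\epsilon^{-}(S)$, and in particular arguing that removing an outer corner commutes with the reflection. Since the reflection shuffles the secondary anti-diagonals across the first one, an outer corner of $\epsilon^{-}(S)$ need not remain geometrically on an outer corner of the reflected shape; however, the SSIDT constraints (rows weakly increasing, columns strictly increasing, together with the anti-diagonal structure of Theorem \ref{thrm:epminus}) should force the cell holding the last inserted letter to remain readable last in both tableaux. If the inductive approach proves too delicate, the fallback is to rewrite $\overline{w}$ into $\mathrm{rev}(w)$ by a carefully chosen sequence of the shifted plactic moves (\ref{eqn:pl1})--(\ref{eqn:pl8}), using the adjacency pattern of the entries in the two readings to apply one relation at a time.
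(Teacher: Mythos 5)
Your opening reduction is sound: by Theorem \ref{thrm:epminus} we have $P_{\mathrm{mix}}(w)=S$, Proposition \ref{prop:epplus} gives $P_{\mathrm{mix}}(\mathrm{rev}(w))=\overline{S}$, and Theorem \ref{thrm:splactic} converts the claim into $P_{\mathrm{mix}}(\overline{w})=\overline{S}$. But the induction you propose for that identity does not go through, and the step where it fails is exactly the content of the lemma. The last entry of $w$ is \emph{not} the last letter of $\overline{w}$: in the staircase case $\lambda=(n,n-1,\ldots,1)$ one has $w=a_{n,1}\cdots a_{n,n}\,a_{n-1,1}\cdots a_{1,1}$ and $\overline{w}=a_{1,1}a_{2,2}\cdots a_{n,n}\,a_{2,1}\cdots a_{n,1}$, so $w$ ends in $a_{1,1}$ (the cell fixed by nothing relevant) while $\overline{w}$ ends in $a_{n,1}$, which is the \emph{first} letter of $w$. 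Peeling the last letter of $\overline{w}$ therefore corresponds to \emph{prepending} a letter to $w$, which is not a single mixed-insertion step into $P_{\mathrm{mix}}(w)$; and Proposition \ref{prop:epplus} is a statement about reversing the entire word, so it supplies no ``one-letter compatibility with the bar involution'' of the kind your inductive step invokes. In addition, the truncated reflected tableau need not be the reflection of $\epsilon^{-}(S'')$ for any smaller standard $S''$, so the inductive hypothesis does not apply to it. You flag this obstacle yourself, but the proposal never resolves it, so as written there is no proof.

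Your fallback --- rewriting $\overline{w}$ into $\mathrm{rev}(w)$ by explicit shifted plactic moves --- is in fact the paper's actual argument, and it is not a minor patch but the whole proof: the paper first treats the staircase $\lambda=(n,n-1,\ldots,1)$ by induction on $n$, using Lemma \ref{lemma:epminus1} repeatedly to interchange the column reading ($\Downarrow$) and the reversed row reading ($\Leftarrow$) after splitting off the outer column, and then handles general $\lambda$ by decomposing $\epsilon^{+}(\lambda)$ into overlapping staircases and applying the same interchanges piecewise. To complete your proof you would need to carry out that rewriting (or an equivalent direct computation of $P_{\mathrm{mix}}(\overline{w})$); the reduction to insertion tableaux by itself only restates the lemma.
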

\begin{proof}
We first show Lemma is true when $\lambda=(n,n-1,\ldots,1)$.
In this case, we have $\epsilon^{+}(\lambda)=\epsilon^{-}(\lambda)$.
We enumerate the rows of $\epsilon^{-}(\lambda)$ by $1,\ldots,n$ 
from top to bottom and the boxes in the $i$-th row by $1,\ldots,i$ 
for $1\le i\le n$.
Let $a_{i,j}$ be the content of the $j$-th box in the $i$-th 
row in $\epsilon^{-}(S)$.
By definition, we have 
\begin{eqnarray*}
w&=&a_{n,1}\cdots a_{n,n}a_{n-1,1}\cdots a_{n-1,n-1}\cdots a_{2,1}a_{2,2}a_{1,1} \\
\overline{w}&=&a_{1,1}a_{2,2}\cdots a_{n,n}a_{2,1}\cdots a_{n,n-1}\cdots a_{n-1,1}a_{n,2}a_{n,1} \\
\end{eqnarray*}
Since $S$ is standard, $a_{i,j}$ satisfies 
\begin{eqnarray*}
&a_{i,1}<a_{i,2}<\ldots<a_{i,i},\\
&a_{i,1}<a_{i+1,2}<\ldots<a_{n,n-i+1},
\end{eqnarray*}
for $1\le i\le n$.
We depict $\mathrm{rev}(w)$ and $\overline{w}$ as 
\begin{eqnarray*}
\mathrm{rev}(w)=
\tikzpic{-0.5}{
\draw(0,0)--(3/2,0)--(3/2,3/2)--(0,0);
\draw(1.1,0.5)node{$\Leftarrow$};
}\qquad,\qquad
\overline{w}=
\tikzpic{-0.5}{
\draw(0,0)--(3/2,0)--(3/2,3/2)--(0,0);
\draw(1.1,0.5)node{$\Downarrow$};
},
\end{eqnarray*}
where $\Leftarrow$ means that we read a word from right to left 
starting from the top row to bottom, and $\Downarrow$ means that 
we read a word from top to bottom starting from the rightmost column
to the leftmost column.
We make use of induction on $n$. Suppose that we have 
$\overline{w}\sim\mathrm{rev}(w)$ for some $n$.
For $n+1$, we have 
\begin{eqnarray}
\label{eqn:revw}
\tikzpic{-0.5}{
\draw(0,0)--(3/2,0)--(3/2,3/2)--(0,0);
\draw(1.1,0.5)node{$\Downarrow$};
}\quad\sim\quad
\tikzpic{-0.5}{
\draw(0,0)--(1.35,0)--(1.35,1.35)--(0,0);
\draw(1,0.5)node{$\Downarrow$};
\draw(1.35,0)--(1.65,0)--(1.65,1.65)--(1.35,1.65)--(1.35,0);
\draw(1.5,0.8)node{$\Downarrow$};
}\quad\sim\quad
\tikzpic{-0.5}{
\draw(0.3,0.3)--(1.35,0.3)--(1.35,1.35)--(0.3,0.3);
\draw(1,0.6)node{$\Leftarrow$};
\draw(1.35,0)--(1.65,0)--(1.65,1.65)--(1.35,1.65)--(1.35,0);
\draw(1.5,0.8)node{$\Downarrow$};
\draw(0,0)--(1.35,0)--(1.35,0.3)--(0,0.3)--(0,0);
\draw(0.7,0.15)node{$\Leftarrow$};
}.
\end{eqnarray}
In Eqn. (\ref{eqn:revw}), the second picture means we first read the rightmost
column and then read the word in the triangle of size $n$, the third picture
is obtained by applying the induction assumption on the second picture and by 
decomposing the triangle of size $n$ into a single row and a triangle of size $n-1$.
Since $a_{n,n}$ is the maximal content, by applying Lemma \ref{lemma:epminus1} 
successively, the last picture in Eqn. (\ref{eqn:revw}) is equal to 
\begin{eqnarray*}
\tikzpic{-0.5}{
\draw(0.3,0.3)--(1.35,0.3)--(1.35,1.35)--(0.3,0.3);
\draw(1,0.6)node{$\Leftarrow$};
\draw(1.35,0.3)--(1.65,0.3)--(1.65,1.65)--(1.35,1.65)--(1.35,0.3);
\draw(1.5,0.8)node{$\Downarrow$};
\draw(0,0)--(1.65,0)--(1.65,0.3)--(0,0.3)--(0,0);
\draw(0.7,0.15)node{$\Leftarrow$};
}\quad\sim\quad
\tikzpic{-0.5}{
\draw(0.3,0.3)--(1.35,0.3)--(1.35,1.35)--(0.3,0.3);
\draw(1,0.6)node{$\Downarrow$};
\draw(1.35,0.3)--(1.65,0.3)--(1.65,1.65)--(1.35,1.65)--(1.35,0.3);
\draw(1.5,0.8)node{$\Downarrow$};
\draw(0,0)--(1.65,0)--(1.65,0.3)--(0,0.3)--(0,0);
\draw(0.7,0.15)node{$\Leftarrow$};
}\quad\sim\quad
\tikzpic{-0.5}{
\draw(0.3,0.3)--(1.65,0.3)--(1.65,1.65)--(0.3,0.3);
\draw(1.2,0.65)node{$\Downarrow$};
\draw(0,0)--(1.65,0)--(1.65,0.3)--(0,0.3)--(0,0);
\draw(0.7,0.15)node{$\Leftarrow$};
}\quad\sim\quad
\tikzpic{-0.5}{
\draw(0.3,0.3)--(1.65,0.3)--(1.65,1.65)--(0.3,0.3);
\draw(1.2,0.65)node{$\Leftarrow$};
\draw(0,0)--(1.65,0)--(1.65,0.3)--(0,0.3)--(0,0);
\draw(0.7,0.15)node{$\Leftarrow$};
}\quad\sim\quad
\tikzpic{-0.5}{
\draw(0,0)--(3/2,0)--(3/2,3/2)--(0,0);
\draw(1.1,0.5)node{$\Leftarrow$};
}
\end{eqnarray*}
where we have used the induction assumption in the first and 
third equality.

For a general $\lambda$, recall that $\epsilon^{+}(\lambda)$ has 
$\lambda_{i}$ boxes in the $i$-th anti-diagonal.
The shape $\epsilon^{+}(\lambda)$ is obtained as a union of 
the shapes $\epsilon^{+}(\mu)$'s where $\mu$ is a staircase.
For example, if $\lambda=(4,3,1)$, $\epsilon^{+}(\lambda)$
is a union of $\epsilon^{+}(\mu_{1})$ and $\epsilon^{+}(\mu_{2})$
with $\mu_{1}=(3,2,1)$ and $\mu_{2}=(2,1)$ (see Figure \ref{fig:revw}).
\begin{figure}
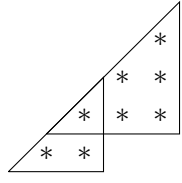

\begin{eqnarray*}
\tikzpic{-0.5}{
\draw(0.5,0.5)node{$\ast$}(1,0.5)node{$\ast$}
     (1,1)node{$\ast$}(1.5,1)node{$\ast$}(2,1)node{$\ast$}
     (1.5,1.5)node{$\ast$}(2,1.5)node{$\ast$}
     (2,2)node{$\ast$};
\draw(0,0.25)--(1.25,0.25)--(1.25,1.5)--(0,0.25);
\draw(0.5,0.75)--(2.25,0.75)--(2.25,2.5)--(0.5,0.75);
}
\end{eqnarray*}
\caption{An example of $\epsilon^{+}(\lambda)$.}
\label{fig:revw}
\end{figure}
Suppose that $\mu_{1}$ and $\mu_{2}$ are staircases, and 
$\epsilon^{+}(\mu_{1})$ and $\epsilon^{+}(\mu_{2})$ 
overlap.
We have 
\begin{eqnarray*}
\tikzpic{-0.5}{
\draw(0,0)--(2,0)--(2,1)--(3,1)--(3,3)--(0,0);
\draw(2.5,1.5)node{$\Leftarrow$};
}\quad&\sim&
\tikzpic{-0.5}{
\draw(0,0)--(2,0)--(2,1)--(3,1)--(3,3)--(0,0);
\draw(2.3,1.5)node{$\Downarrow$};
\draw(2,1)--(1,1);
\draw(1.4,0.5)node{$\Leftarrow$};
}\quad\sim
\tikzpic{-0.5}{
\draw(0,0)--(2,0)--(2,1)--(3,1)--(3,3)--(0,0);
\draw(2.4,1.5)node{$\Downarrow$};
\draw(2,1)--(1,1);
\draw(1.4,0.5)node{$\Leftarrow$};
\draw(2,1)--(2,2);
\draw(1.7,1.3)node{$\Downarrow$};
}\quad\sim
\tikzpic{-0.5}{
\draw(0,0)--(2,0)--(2,1)--(3,1)--(3,3)--(0,0);
\draw(2.4,1.5)node{$\Downarrow$};
\draw(2,1)--(1,1);
\draw(1.4,0.5)node{$\Leftarrow$};
\draw(2,1)--(2,2);
\draw(1.7,1.3)node{$\Leftarrow$};
} \\
&\sim&
\tikzpic{-0.5}{
\draw(0,0)--(2,0)--(2,1)--(3,1)--(3,3)--(0,0);
\draw(2.4,1.5)node{$\Downarrow$};
\draw(1.4,0.7)node{$\Leftarrow$};
\draw(2,1)--(2,2);
}\quad\sim
\tikzpic{-0.5}{
\draw(0,0)--(2,0)--(2,1)--(3,1)--(3,3)--(0,0);
\draw(1.4,0.7)node{$\Downarrow$};
}.
\end{eqnarray*}
By applying the procedure above to $\epsilon^{+}(\lambda)$, 
we obtain $\overline{w}\sim \mathrm{rev}(w)$.
\end{proof}

\begin{lemma}
\label{lemma:epplus2}
Let $w$ and $T$ be a word and a shifted 
tableau such that $P_{\mathrm{mix}}(w)=T$.
Then
\begin{enumerate}
\item If the unprimed letters from $i_1$ to $i_r$ form a vertical strip from top 
to bottom in $T$, then $i_r,\ldots,i_1$ form a decreasing sequence in $w$.
\item Let $i_j:=i_1+j-1$ for $1\le j\le r$ be unprimed letters.
If the letters $\alpha,i'_{2},\ldots,i'_{r}$ with $\alpha=i_{1}$ or $i'_{1}$ 
form a horizontal strip from left to right in $T$, then $i_r,\ldots,i_1$ 
form a decreasing sequence in $w$.
\end{enumerate}
\end{lemma}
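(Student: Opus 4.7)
The plan is to prove both statements simultaneously by induction on $r$, using reverse mixed insertion as the main tool. The base case $r=1$ is immediate by content preservation: an (un)primed $i_1$ in $T$ forces an occurrence of the letter $i_1$ in $w$, since mixed insertion preserves the multiplicity of each unprimed alphabet letter between the input word and its mixed insertion tableau.

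For the inductive step of part 1, I would fix a standard shifted recording tableau $U$ of shape $\mathrm{shape}(T)$ whose maximum label $|w|$ is placed at the cell containing $i_r$; this is possible because that cell is an outer corner of the strip, and one can extend this choice to a full standardization of $\mathrm{shape}(T)$. Then $w=\mathrm{mixRSK}^{-1}(T,U)$, and the last letter $x$ of $w$ is obtained by one step of reverse mixed insertion starting from the cell of $i_r$, producing a smaller tableau $T''$. The two key facts to verify are: (a) $x\ge i_r$, which follows from the standard reverse-bumping inequality that the carried letter weakly increases up the bumping path until ejection from row $1$; and (b) the strip $i_1,\ldots,i_{r-1}$ survives in $T''$, because each $i_{r-1}$ lies strictly above $i_r$ and the reverse-bumping path either bypasses the cells of $i_1,\ldots,i_{r-1}$ or traverses them carrying a letter $\ge i_r$, which cannot displace a strictly smaller value under the mixed reverse-insertion rules. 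Applying the inductive hypothesis to $T''$ and $w_1\cdots w_{|w|-1}$ then produces a decreasing subsequence $i_{r-1},\ldots,i_1$ in the truncated word; appending $x\ge i_r$ at position $|w|$ extends this to a decreasing subsequence of length $r$ in $w$, and by content-matching one can replace the leading term by an occurrence of exactly $i_r$ preceding $i_{r-1}$ in $w$.

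For part 2, the argument is parallel but initiates the reverse mixed insertion at the rightmost cell $i_r'$ of the horizontal strip; since $i_r'$ is primed, the first step of reverse bumping is a reverse column-bumping, and the analogous case analysis shows that $\alpha, i_2', \ldots, i_{r-1}'$ remains a horizontal strip in the reduced tableau, while the extracted letter is again $\ge i_r$. The leftmost cell $\alpha$ needs slight special treatment according to whether $\alpha=i_1$ or $\alpha=i_1'$, but in either case the base case of the induction supplies the final letter $i_1$ of the decreasing subsequence.

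The main obstacle is the rigorous verification of fact (b) above and its counterpart in part 2: that the reverse mixed insertion path never disturbs the remaining strip. This is complicated by the interplay of row-bumping, column-bumping, and the special treatment of main-diagonal cells (where unprimed letters acquire primes in the forward direction and lose them in the reverse), so a case analysis on the local configuration at each bumping step is needed. A convenient alternative, which I would try in parallel, is to route the entire argument through the shifted plactic relations (Theorem~\ref{thrm:splactic}): once one checks that the ``decreasing subsequence of consecutive values'' property is preserved under each of the relations (\ref{eqn:pl1})--(\ref{eqn:pl8}), it suffices to exhibit the subsequence in a single convenient representative of the plactic class (for example, a suitable reading of a standardization of $T$), which bypasses the detailed reverse-insertion bookkeeping.
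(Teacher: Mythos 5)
Your proposal does not match the paper's argument and, as written, it would not go through. The paper proves the lemma by a short contrapositive using \emph{forward} insertion: if $i_j$ occurs before $i_{j+1}$ in $w$, then since both are unprimed and $i_j<i_{j+1}$, the later-inserted $i_{j+1}$ lands weakly higher and strictly to the right of $i_j$, so the two letters form a horizontal strip rather than sitting in a vertical strip; part (2) is handled by analysing when a letter acquires a prime (it must pass through the main diagonal and be bumped), which again forces a horizontal/vertical configuration incompatible with the hypothesis. No reverse insertion or choice of recording tableau is needed.

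Your induction has several concrete problems. First, the cell containing $i_r$ is the bottom of the strip but need not be an outer corner of $\mathrm{shape}(T)$ (letters larger than $i_r$ may sit below or to its right), so you cannot in general place the maximal recording label there. Second, the bumping chain in forward mixed insertion is increasing, so the letter extracted by reverse insertion from the cell holding $i_r$ satisfies $x\le i_r$, not $x\ge i_r$. Third, and most seriously, the induction is oriented the wrong way: the claim is that $i_r$ occurs \emph{earliest} in $w$ and $i_1$ latest, so the letter peeled off the end of $w$ should be tied to $i_1$; appending a letter $x\ge i_r$ at position $|w|$, after a decreasing subsequence ending in $i_1$, yields $i_{r-1},\dots,i_1,x$, which is not decreasing. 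Finally, the step you yourself flag as the main obstacle --- that the reverse bumping path does not disturb the remaining strip --- is exactly the content of the lemma and is left unproved; the proposed detour through the shifted plactic relations is likewise only a sketch (one would have to verify that each of (\ref{eqn:pl1})--(\ref{eqn:pl8}) preserves the relative order of the relevant occurrences, and identify an actual representative of the class of $w$ in the unprimed alphabet, since $\mathrm{read}(T)$ contains primed letters and is not such a word). I would redo the proof along the paper's contrapositive lines.
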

\begin{proof}
We assume that $i_{j}$ and $i_{j+1}$ form an increasing sequence in $w$ 
for (1) and (2).
Suppose that $i_{j}$ (resp. $i_{j+1}$) are in the $k$-th (resp. $k'$-th) row
of a tableau.

For (1), since we insert the letter $i_{j}$ into a tableau before $i_{j+1}$, and 
$i_j$ and $i_{j+1}$ are unprimed, it is obvious that $k\ge k'$.
The letters $i_{j}$ and $i_{j+1}$ form a horizontal strip.
By taking contraposition, (1) follows.

For (2), if $i_{j}$ and $i_{j+1}$ are unprimed, 
we have $k\ge k'$ and $i_{j+1}$ is in a column right to $i_{j}$ from (1).
An unprimed letter becomes primed when it is placed on the main diagonal and 
bumped out by a smaller letter.
Suppose that $i_{j}$ is in the $m_{0}$-th column of the main diagonal 
and $i_{j+1}$ is in the $m_{1}$-th column with $m_{1}>m_{0}$.
If $m_{1}=m_{0}+1$, the letters $i'_{j}$ is placed above $j_{j+1}$ in the $m_{1}$-th 
column after bumping.
If $m_{1}\ge m_{0}+2$, $i'_{j}$ and $i_{j+1}$ form a horizontal strip from 
left to right.
To have $i_{j+1}$ primed, we consider bumpings of $i_{j+1}$ by smaller letters.
In the case of $m_{1}\ge m_{0}+2$, we arrive at a configuration such that 
$i'_{j}$ is above $i_{j+1}$ by bumping of $i_{j+1}$.
When $i_{j+1}$ becomes primed, $i'_{j}$ and $i'_{j+1}$ form a vertical strip 
from top to bottom.
By taking contraposition, (2) follows.
\end{proof}

\begin{proof}[Proof of Theorem \ref{thrm:epplus}]
For (1), observe that $\overline{T}$ is a standard tableau obtained from 
$\mathrm{stand}(T)$.
Let $i_{1},\ldots,i_{q}$ (resp. $i_{q+1},\ldots,i_{r}$) be unprimed letters 
in $\mathrm{stand}(T)$ corresponding to $j'$ (resp. $j$) in $T$.
In $\overline{T}$, the letters $i_{1},\ldots,i_{q+1}$  
form a vertical strip if there exists a letter $j$ below all $j'$'s.
In $\overline{T}$, the letters $i_{1},\ldots,i_{q}$ form a vertical strip if 
there is no letter $j$ below all $j'$'s.
Similarly, the letters $\alpha,i'_{q+2},\ldots,i_{r}$ with $\alpha=i_{q+1}$ or $i'_{q+1}$
form a horizontal strip in $\overline{T}$.
From Lemma \ref{lemma:epplus2}, $i_{r},\ldots,i_{1}$ form a decreasing sequence 
in $\epsilon^{-}(\overline{T})$.
Since a row in $\epsilon^{-}(\overline{T})$ is an increasing sequence, 
there exists at most one $i_{s}$, $1\le s\le r$, in a row.
The tableau $\epsilon^{+}(\overline{T})$ is obtained from $\epsilon^{-}(\overline{T})$ 
by reflecting it over the anti-diagonal. 
Thus, there exists at most one $i_{s}$ in a column of $\epsilon^{-}(\overline{T})$.
After destandardization, the tableau $\epsilon^{+}(T)$ contains at most one $j$ 
in a column. This implies that $\epsilon^{+}(T)$ is compatible with the shape 
$\epsilon^{+}(\lambda)$.

For (2), let $w$ be the reading word of $\epsilon^{-}(\mathrm{stand}(T))$ 
and $\overline{w}$ be the reading word of a tableau which is obtained by 
reflecting $\epsilon^{-}(\mathrm{stand}(T))$ over the main diagonal.
Then, from Proposition \ref{prop:epplus} and Lemma \ref{lemma:epplus1}, 
we have 
\begin{eqnarray*}
\overline{P_{\mathrm{mix}}(w)}=P_{\mathrm{mix}}(\mathrm{rev}(w))
=P_{\mathrm{mix}}(\overline{w}).
\end{eqnarray*}
Note that the reflection of $\epsilon^{-}(\overline{T})$ over the main diagonal
corresponds to performing the ``dual" operation with respect to primes on it.
Thus, we have $P_{\mathrm{mix}}(\mathrm{read}(\epsilon^{+}(T)))=T$.
\end{proof}

\begin{example}
Let $T$ be a shifted tableau in the left picture of 
Figure \ref{fig:eps1}.
\begin{figure}[ht]
\begin{eqnarray*}
T=\begin{matrix}
1 & 2' & 3 & 4' \\
  & 2 & 4 & 5 \\
  &   & 6
\end{matrix}
\qquad\qquad
\overline{T}=
\begin{matrix}
1 & 2 & 4' & 5 \\
  & 3 & 6' & 7' \\
  &   & 8
\end{matrix}
\end{eqnarray*}
\caption{A tableau $T$ and $\overline{T}$.}
\label{fig:eps1}
\end{figure}
After standardization and taking dual with respect to primes, we have 
a tableau $\overline{T}$ as the right picture of Figure \ref{fig:eps1}.
There are several words which form $\overline{T}$ by the mixed insertion,
an example of them is $w=41786352$.
We obtain a SSDT $SK(w)$ and $\epsilon^{-}(\overline{T})$ as
in Figure \ref{fig:eps2}. 
\begin{figure}[ht]
\begin{eqnarray*}
SK(w)=
\begin{matrix}
8 & 6 & 5 & 2 \\
  & 7 & 1 & 3 \\
  &   & 4
\end{matrix}
\qquad\qquad
\epsilon^{-}(\overline{T})
=
\begin{matrix}
 &   &   & 2 \\
 &   & 3 & 5  \\
 & 1 & 6 \\
4 & 7 & 8
\end{matrix}
\end{eqnarray*}
\caption{A SSDT $SK(w)$ and $\epsilon^{-}(\overline{T})$.}
\label{fig:eps2}
\end{figure} 
By reflecting $\epsilon^{-}(\overline{T})$ over the diagonal line 
and performing a destandardization,
we obtain 
\begin{eqnarray*}
\begin{matrix}
  &  &  & 4 \\
  &  & 1 & 7 \\
  & 3 & 6 & 8 \\
2 & 5 
\end{matrix}
\qquad\mapsto\qquad
\epsilon^{+}(T)=
\begin{matrix}
  &  &  & 3 \\
  &  & 1 & 5 \\
  & 2 & 4 & 6 \\
2 & 4 
\end{matrix}
\end{eqnarray*}
The mixed insertion of the reading word $\mathrm{read}(\epsilon^{+}(T))=24246153$
is $T$.
\end{example}

\section{Generalized Littlewood--Richardson coefficients}
\label{sec:GLR}
\subsection{Generalized Littlewood--Richardson rules}
We define generalized Littlewood--Richardson (LR) coefficients $b_{\alpha\lambda}^{\beta}$, 
$c_{\alpha\beta}^{\gamma}$, $e_{\lambda\mu}^{\alpha}$, 
$f_{\lambda\alpha}^{\mu}$, $g_{\alpha\beta}^{\gamma}$ and $h_{\alpha\beta}^{\lambda}$ as follows:
\begin{eqnarray*}
s_{\alpha}P_{\lambda}=\sum_{\beta}b_{\alpha\lambda}^{\beta}s_{\beta}, \\
s_{\alpha}\hat{S}_{\beta}=\sum_{\gamma}c_{\alpha\beta}^{\gamma}s_{\gamma}, \\
P_{\lambda}P_{\mu}=\sum_{\nu}e_{\lambda\mu}^{\alpha}s_{\alpha},\\
P_{\lambda}\hat{S}_{\alpha}=\sum_{\beta}f_{\lambda\alpha}^{\mu}P_{\mu}, \\
\hat{S}_{\alpha}\hat{S}_{\beta}=\sum_{\gamma}g_{\alpha\beta}^{\gamma}s_{\gamma}, \\
\hat{S}_{\alpha}\hat{S}_{\beta}=\sum_{\lambda}h_{\alpha\beta}^{\lambda}P_{\lambda}.
\end{eqnarray*}
Let $b_{\lambda}^{\alpha}:=b_{\emptyset\lambda}^{\alpha}$ and 
$e^{\alpha}_{\lambda}:=e_{\lambda\emptyset}^{\alpha}$, {\it i.e.}, 
$b_{\lambda}^{\alpha}$ and $e_{\lambda}^{\alpha}$ are 
an expansion of $P_{\lambda}$ in terms of Schur 
functions $s_{\alpha}$.  
Then we have
\begin{eqnarray*}
b_{\lambda}^{\alpha}=e^{\alpha}_{\lambda}.
\end{eqnarray*}
Similarly, we have $c_{\emptyset\alpha}^{\gamma}=g_{\alpha\emptyset}^{\gamma}$ 
and $f_{\emptyset\alpha}^{\lambda}=h_{\alpha\emptyset}^{\lambda}$.
Eqn. (\ref{ex:sinPSinP}) implies 
\begin{eqnarray*}
b_{\lambda}^{\alpha}=2^{-l(\lambda)}f_{\emptyset\alpha}^{\lambda}.
\end{eqnarray*}

\subsection{Tableau words}
We will give expressions for generalized LR coefficients in terms of tableau 
words.

Let $\alpha$, $\beta$ and $\gamma$ be an ordinary or shifted partitions satisfying 
$\beta\subseteq\alpha$.
We denote by $T(\alpha/\beta;\gamma)$ be a set of tableaux (without primed letters)
of shape $\alpha/\beta$ whose content is $\gamma$.
Similarly, we denote by $T'(\alpha/\beta;\gamma)$ be a set of shifted tableaux 
(possibly with primed letters) of shape $\alpha/\beta$ whose content is $\gamma$.
We denote by $T''(\alpha/\beta;\gamma)$ be a set of shifted tableaux 
(possibly with primed letters and diagonal elements can be primed) 
of shape $\alpha/\beta$ whose content is $\gamma$.
We define $T''(\alpha;*):=\bigcup_{\beta}T''(\alpha;\beta)$.

\begin{theorem}
\label{thrm:sP}
Let $\alpha,\beta$ be ordinary partitions and $\lambda$ be a strict partition.
We have 
\begin{eqnarray*}
b_{\alpha\lambda}^{\beta}=\#\{T\in T'(\beta/\alpha;\lambda)|\ \mathrm{read}(T)\text { is an LRS word}\}.
\end{eqnarray*}
\end{theorem}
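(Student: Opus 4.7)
The plan is to establish a weight-preserving bijection between pairs $(T_0,U)$ with $T_0\in\mathrm{SSYT}(\alpha)$ and $U\in\mathrm{SSShYT}(\lambda)$ on one side, and pairs $(T_\beta,R)$ with $T_\beta\in\mathrm{SSYT}(\beta)$ and $R\in T'(\beta/\alpha;\lambda)$ whose reading word is LRS on the other, as $\beta$ varies. Since $s_\alpha P_\lambda=\sum_{T_0,U}x^{T_0}x^U$ and $\sum_\beta b_{\alpha\lambda}^\beta s_\beta=\sum_\beta b_{\alpha\lambda}^\beta\sum_{T_\beta}x^{T_\beta}$, and since $R$ carries no free weight (its content is the fixed partition $\lambda$), matching $x^{T_0}x^U=x^{T_\beta}$ and equating coefficients of $s_\beta$ proves the theorem.

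To define the map $\Phi\colon(T_0,U)\mapsto(T_\beta,R)$, I would read off the word $w=\mathrm{read}(U)$ in the marked alphabet $X'$ and insert its letters, in the order they appear in $w$, into $T_0$ by a hybrid RSK/mixed procedure: an unprimed letter undergoes ordinary row insertion (bump the smallest strictly larger entry), while a primed letter undergoes column insertion (bump the smallest weakly larger entry), following the bumping dichotomy of Haiman's mixed insertion but applied to an ordinary starting tableau. Each newly created cell is recorded in $R$, labeled by the row index of $U$ from which that letter of $w$ originates and marked with a prime exactly when the inserted letter of $w$ is primed. Because $T_0$ is unprimed, the column-insertion branch settles strict increases down columns, so the final insertion tableau $T_\beta$ is a genuine semistandard tableau of shape $\beta$, while $R$ lives on the ordinary skew shape $\beta/\alpha$ with content $\lambda$ and inherits, from $U$ itself, the semistandard constraints defining $T'(\beta/\alpha;\lambda)$.

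The main obstacle is to verify that $\mathrm{read}(R)$ is an LRS word and that $\Phi$ is invertible. The LRS lattice condition and the ``first-occurrence-unprimed'' clause have to be deduced from the row-by-row structure of $U$ together with how the hybrid bumping interacts with the shifted plactic relations of Theorem \ref{thrm:splactic}; the crux is showing that $R$ depends only on the shifted plactic class of $w$ (hence only on $U$, not on the chosen reading-word representative), and that this invariance is forced precisely by the LRS constraint. For invertibility, I would scan $R$ from right to left and reverse-bump the corresponding cells out of $T_\beta$: the LRS property guarantees that each reverse step consistently produces either an unprimed row-uninserted letter or a primed column-uninserted letter, assembling in the end a word equivalent to $\mathrm{read}(U)$ and a residual tableau equal to $T_0$. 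The scheme follows the template of Stembridge's proof of Theorem \ref{thrm:StemPP}, with the initial $P$-tableau replaced by the ordinary SSYT $T_0$ and the bumping rules on the unprimed side modified to the ordinary (unshifted) RSK.
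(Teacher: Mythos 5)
Your generating-function framework is sound: a weight-preserving bijection $(T_0,U)\mapsto(T_\beta,R)$ of the kind you describe would indeed prove the theorem. But the bijection itself is only asserted, not constructed, and the assertions hide the entire content of the statement. Concretely: (i) your hybrid insertion feeds primed letters of $\mathrm{read}(U)$ into the unprimed tableau $T_0$, yet you claim the output $T_\beta$ is an ordinary SSYT; you never say how the primes disappear, nor why the column-insertion branch terminates in a cell that keeps $\beta/\alpha$ a legal skew shape and keeps $R$ a legal element of $T'(\beta/\alpha;\lambda)$. (ii) The prime pattern of $R$ is copied verbatim from $U$, but $U$ ranges over \emph{all} of $\mathrm{SSShYT}(\lambda)$, whereas the LRS condition forces a rigid normalization (the first occurrence of $i$ or $i'$ must be $i$, together with the lattice condition on the counts $m_i(j)$); nothing in your construction produces this normalization, and you in fact invoke the LRS constraint to argue that $R$ is well defined on shifted plactic classes --- which is circular, since the LRS property of $R$ is exactly what has to be proved. (iii) Invertibility is likewise asserted via ``the LRS property guarantees\dots'' with no mechanism. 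These are not routine verifications: showing that the recording object of an insertion algorithm is characterized by an LRS/lattice condition is a theorem on the level of Stembridge's Theorem \ref{thrm:StemPP} itself, and your sketch supplies no argument for it.

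For comparison, the paper avoids all of this with a short duality computation: $b^{\beta}_{\alpha\lambda}=\langle s_{\alpha}P_{\lambda},s_{\beta}\rangle=\langle s_{\beta/\alpha},P_{\lambda}\rangle=[\hat{S}_{\beta/\alpha},P_{\lambda}]$; then $\hat{S}_{\beta/\alpha}=Q_{(\beta+\delta)/(\alpha+\delta)}=\sum_{\nu}d^{\beta+\delta}_{\alpha+\delta,\nu}Q_{\nu}$ and the orthogonality $[Q_{\lambda},P_{\mu}]=\delta_{\lambda\mu}$ give $b^{\beta}_{\alpha\lambda}=d^{\beta+\delta}_{\alpha+\delta,\lambda}$, which Theorem \ref{thrm:StemPP} converts into the stated count because $\beta/\alpha$ and $(\beta+\delta)/(\alpha+\delta)$ are the same skew diagram. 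If you insist on a bijective proof you would essentially have to reprove Stembridge's rule along the way; the efficient route is to reduce to it.
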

By setting $\alpha=\emptyset$ in Theorem \ref{thrm:sP}, we have 
\begin{cor}[Stembridge \cite{Ste89}]
Let $\alpha$ be an ordinary partition and $\lambda$ be a strict partition.
\begin{eqnarray*}
b_{\lambda}^{\alpha}=\#\{T\in T'(\alpha;\lambda)|\ \mathrm{read}(T)\text { is an LRS word} \}.
\end{eqnarray*}
\end{cor}

\begin{proof}[Proof of Theorem \ref{thrm:sP}]
We have
\begin{eqnarray*}
b^{\beta}_{\alpha\lambda}
&=&\langle s_{\alpha}P_{\lambda},s_{\beta}\rangle \\
&=&\langle s_{\beta/\alpha},P_{\lambda} \rangle \\
&=&\sum_{\gamma}a^{\beta}_{\alpha\gamma}\langle s_{\gamma},P_{\lambda} \rangle \\
&=&\sum_{\gamma}a^{\beta}_{\alpha\gamma}\left[ \hat{S}_{\gamma},P_{\lambda} \right] \\
&=&\left[ \hat{S}_{\beta/\alpha},P_{\lambda}\right].
\end{eqnarray*}
The Schur function $\hat{S}_{\beta/\alpha}$ is written by a skew $Q$-function as 
$\hat{S}_{\beta/\alpha}=Q_{\beta+\delta/\alpha+\delta}=\sum_{\nu}d^{\beta+\delta}_{\alpha+\delta,\nu}Q_{\nu}$
where $\delta:=(l-1,l-2,\ldots,0)$ with $l=l(\beta)$. 
By the orthogonality $\left[Q_{\lambda},P_{\mu}\right]=\delta_{\lambda\mu}$, we obtain 
\begin{eqnarray}
\label{sP1}
b^{\beta}_{\alpha\lambda}=d^{\beta+\delta}_{\alpha+\delta,\lambda}.
\end{eqnarray}
The skew shape $\beta/\alpha$ is equivalent to $\beta'/\alpha'$ where 
$\alpha'=\alpha+\delta$ and $\beta'=\beta+\delta$.
We complete the proof by the equality (\ref{sP1}) together with Theorem \ref{thrm:StemPP}. 
\end{proof}

\begin{theorem}
\label{thrm:sS}
Let $\alpha, \beta$ and $\gamma$ be ordinary partitions.
We have 
\begin{eqnarray*}
c_{\alpha\beta}^{\gamma}=\#\{T\in T''(\gamma/\alpha;\beta): 
\mathrm{wread}(T) \text{ is a Yamanouchi word}\}.
\end{eqnarray*}
\end{theorem}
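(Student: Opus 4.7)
The plan follows the same orthogonality strategy as the proof of Theorem~\ref{thrm:sP}, but with the bilinear form $\langle\cdot,\cdot\rangle$ in place of $[\cdot,\cdot]$, and uses the combinatorics of the mixed insertion together with the SSIDT construction of Section~\ref{sec:SSIDT} to identify the Schur-expansion coefficient of $\hat{S}_\beta$. First I would reduce to the $\alpha=\emptyset$ case by writing
\begin{equation*}
c^\gamma_{\alpha\beta}
=\langle s_\alpha\hat{S}_\beta,s_\gamma\rangle
=\langle \hat{S}_\beta,s_{\gamma/\alpha}\rangle
=\sum_{\mu}a^\gamma_{\alpha\mu}\,K^{\mathrm{mix}}_{\beta\mu},
\end{equation*}
where $K^{\mathrm{mix}}_{\beta\mu}:=\langle \hat{S}_\beta,s_\mu\rangle$ is the coefficient of $s_\mu$ in the Schur expansion of $\hat{S}_\beta$. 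By the classical Littlewood--Richardson rule each $a^\gamma_{\alpha\mu}$ counts tableaux $T_1\in T(\gamma/\alpha;\mu)$ whose ordinary reading word is Yamanouchi, so the theorem reduces to establishing a combinatorial formula for $K^{\mathrm{mix}}_{\beta\mu}$ and then assembling the two factors.

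Secondly, and this is the heart of the argument, I would prove
\begin{equation*}
K^{\mathrm{mix}}_{\beta\mu}=\#\{T\in T''(\mu;\beta):\ \mathrm{wread}(T)\text{ is Yamanouchi}\}
\end{equation*}
via a weight-preserving bijection between $\mathrm{SSShYT}(\beta,\nu)$ and pairs $(V,T)$ with $V\in\mathrm{SSYT}(\mu,\nu)$ and $T\in T''(\mu;\beta)$ Yamanouchi-in-wread, summed over $\mu$. Given $T_2\in\mathrm{SSShYT}(\beta,\nu)$, I would standardize $T_2$ as in Step~1 of Section~\ref{sec:SSIDTplus}, read off a word $w\in(X')^{|\beta|}$, apply the mixed insertion to obtain a pair $(P_{\mathrm{mix}}(w),Q_{\mathrm{mix}}(w))$ of shifted tableaux, and then convert $P_{\mathrm{mix}}(w)$ into an ordinary-skew-shape tableau via the map $\epsilon^{+}$ of Theorem~\ref{thrm:epplus}. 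After destandardization, this yields $V$ of shape $\mu$ with content $\nu$, while the original pattern of primes of $T_2$ is simultaneously recorded as a tableau $T\in T''(\mu;\beta)$. The condition $\mathrm{shape}(P_{\mathrm{mix}}(w))=\lambda$ appearing in Eqn.~(\ref{eqn:shapeYam2}) should translate, under the $w'w$ unpriming rule used to define $\mathrm{wread}$, into the Yamanouchi property of $\mathrm{wread}(T)$.

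Thirdly, I would assemble the two factors: given an LR tableau $T_1\in T(\gamma/\alpha;\mu)$ with $\mathrm{read}(T_1)$ Yamanouchi and a tableau $T\in T''(\mu;\beta)$ with $\mathrm{wread}(T)$ Yamanouchi, construct $T'\in T''(\gamma/\alpha;\beta)$ by replacing each letter $i$ in $T_1$ (in a fixed canonical order, e.g.\ row-reading from top to bottom) with the corresponding entry of row $i$ of $T$ (read from left to right). The content of $T'$ equals $\beta$ by construction, and the Yamanouchi property of $\mathrm{wread}(T')$ inherits from that of $\mathrm{wread}(T)$ by the same transitivity argument used in the classical LR rule, together with the observation that primes in $T$ never migrate across rows during the substitution.

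The main obstacle is the second step. Although Schur-positivity of $\hat{S}_\beta$ is essentially known, extracting a tableau model for $K^{\mathrm{mix}}_{\beta\mu}$ requires coupling two insertion procedures on the marked alphabet $X'$: the mixed insertion, which produces a shifted tableau, and the $\epsilon^{+}$-construction of Section~\ref{sec:SSIDTplus}, which repackages that shifted tableau as an ordinary-shape SSIDT. The delicate point is that it is the \emph{weak} reading word, defined via the $w'w$ rule that reads primed letters in reversed order, that corresponds to the Yamanouchi condition, rather than the straight reading word; verifying that this matches Eqn.~(\ref{eqn:shapeYam2}) precisely is where the SSIDT formalism of Section~\ref{sec:SSIDT} is essential and where the bulk of the technical work will lie.
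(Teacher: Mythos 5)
Your strategy is genuinely different from the paper's: the paper expands $\hat{S}_{\beta}$ via its determinant expression in the $q_{n}$ (Eqn.~(\ref{eqn:Sdet})), writes $c^{\gamma}_{\alpha\beta}$ as a signed sum over pairs $(\pi,T)$ with $T\in T''(\gamma/\alpha;\pi\ast\beta)$ using the Pieri rule for $s_{\alpha}q_{(n)}$, and then cancels all non-Yamanouchi terms with a sign-reversing involution built from the Bender--Knuth and Stembridge involutions (a shifted analogue of Gasharov's proof of the classical LR rule). You instead propose a manifestly positive factorization $c^{\gamma}_{\alpha\beta}=\sum_{\mu}a^{\gamma}_{\alpha\mu}K^{\mathrm{mix}}_{\beta\mu}$ followed by a bijective assembly. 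The factorization is a correct algebraic identity, but as written the proposal has two genuine gaps.

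First, your second step is precisely the $\alpha=\emptyset$ case of the theorem, so it carries essentially the entire content of the statement, and you do not prove it: you outline a pipeline (standardize, mixed-insert, apply $\epsilon^{+}$, destandardize) and then assert that the shape condition of Eqn.~(\ref{eqn:shapeYam2}) ``should translate'' into the Yamanouchi property of $\mathrm{wread}(T)$. That translation is exactly the hard point. Moreover, Eqn.~(\ref{eqn:shapeYam2}) concerns words over the unprimed alphabet $X$ and \emph{shifted} Yamanouchi words, whereas your input is a filling of the ordinary shape $\beta$ over the marked alphabet $X'$ and the target is ordinary Yamanouchi-ness of a weak reading word; neither the hypotheses nor the conclusion of that proposition match what you need, so a new lemma would have to be stated and proved. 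Second, the assembly step is not sound as stated: replacing each letter $i$ of the LR tableau $T_{1}$ by the entries of row $i$ of $T$ does not obviously produce an element of $T''(\gamma/\alpha;\beta)$ --- the marked-alphabet conditions (each $i$ at most once per column, each $i'$ at most once per row) and the behaviour of the weak reading word, which reads the primed letters of the reversed word $w'$ before the unprimed letters of $w$, are not preserved by a letterwise substitution in any evident way, and no inverse map is given to show the correspondence is a bijection rather than an overcount. The paper sidesteps both difficulties by never passing through a positive model for $\hat{S}_{\beta}$ at all; if you want to pursue your route, the closest available tools are Lemma~\ref{lemma:SSs1}, which gives a model for $\langle\hat{S}_{\beta},s_{\mu}\rangle$ (though with shape and content in roles opposite to your $K^{\mathrm{mix}}_{\beta\mu}$), and the explicit standardization/RSK bijections of Lemma~\ref{lemma:PPs1}, which illustrate the level of care the assembly step actually requires.
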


In \cite{Gas98}, Gasharov has proved the Littlewood--Richardson 
rule by using the Bender--Knuth involution \cite{BenKnu72}.
We show Theorem \ref{thrm:sS} by a shifted analogue of the proof proposed 
by Gasharov. 
We make use of the involutions introduced by Stembridge in \cite{Ste90}.

We introduce two lemmas needed later for a proof of Theorem \ref{thrm:sS}.
For a partition $\beta$ of length $l$ and a permutation $\pi\in\mathcal{S}_{l}$, we define 
\begin{eqnarray*}
\pi\ast\beta:=(\beta_{\pi(i)}-\pi(i)+i)_{1\le i\le l}.
\end{eqnarray*}
Since $\hat{S}_{\beta}$ has a determinant expression (Eqn.(\ref{Sdet1})), we have 
\begin{eqnarray}
\label{eqn:Sdet}
\hat{S}_{\beta}=\sum_{\pi\in\mathcal{S}_{l}}\mathrm{sgn}(\pi)q_{\pi\ast\beta}.
\end{eqnarray}
Then, $c^{\gamma}_{\alpha\beta}$ is rewritten as
\begin{eqnarray*}
c^{\gamma}_{\alpha\beta}&=&\langle s_{\alpha}\hat{S}_{\beta},s_{\gamma}\rangle \\
&=&\langle s_{\gamma/\alpha},\hat{S}_{\beta}\rangle \\
&=&\sum_{\pi\in\mathcal{S}_{l}}\mathrm{sgn}(\pi)\langle s_{\gamma/\alpha},q_{\pi\ast\beta}\rangle.
\end{eqnarray*}
When a strict partition $\lambda$ is a single row, {\it i.e.,} $\lambda=(n)$, we have 
$q_{(n)}=2P_{(n)}$.
From Theorem \ref{thrm:sP}, one can easily obtain a Pieri formula for a product 
of $s_{\alpha}$ and $q_{(n)}$.
The factor two comes from the fact that the first unprimed integer in 
$T\in T'(\gamma/\alpha;(n))$ is free to choose primed or unprimed.
Therefore, we have 
\begin{eqnarray}
\label{eq:sS0}
c^{\gamma}_{\alpha\beta}=\sum_{(\pi,T)\in\mathcal{L}}\mathrm{sgn}(\pi),
\end{eqnarray}
where 
\begin{eqnarray*}
\mathcal{L}:=\{(\pi,T)\ \vert\ \pi\in\mathcal{S}_{l}, T\in T''(\gamma/\alpha;\pi\ast\beta)\}.
\end{eqnarray*}
Let $\mathcal{L}_{0}$ be the subset of $\mathcal{L}$:
\begin{eqnarray*}
\mathcal{L}_{0}:=\{(\pi,T)\in\mathcal{L} \ \vert\ \mathrm{wread}(T) \text{ is a Yamanouchi word} \}.
\end{eqnarray*}

\begin{lemma}
\label{lemma:sS0}
If $(\pi,T)\in\mathcal{L}_{0}$, $\pi=\mathrm{Id}$.
\end{lemma}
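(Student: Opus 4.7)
The plan is to extract from the Yamanouchi property of $\mathrm{wread}(T)$ only the coarse information that its content is a partition, and then to deduce $\pi=\mathrm{Id}$ purely from the combinatorial formula defining $\pi\ast\beta$.

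First I would verify that the multiset of letters appearing in $\mathrm{wread}(T)$ coincides with the content of $T$, namely $\pi\ast\beta$. Writing $w=\mathrm{read}(T)$, the unprimed letters of $w'$ are exactly the values of the primed letters of $w$ read in reverse, while the unprimed letters of $w$ contribute as they are. Hence, for every $i$, the multiplicity of $i$ in $\mathrm{wread}(T)$ equals (number of $i$ in $T$) plus (number of $i'$ in $T$), which is $(\pi\ast\beta)_i$.

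Next I would invoke the standard observation that every Yamanouchi word has a partition-shaped content: applying the lattice property to the reversal of the full word gives that the total number of $i$'s is at least the total number of $(i+1)$'s for every $i$. Combined with the previous step, this forces $\pi\ast\beta$ to be weakly decreasing.

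The main step, and the only one requiring a short computation, is to show that $\pi\ast\beta$ being weakly decreasing forces $\pi=\mathrm{Id}$. Suppose for contradiction that $\pi\neq\mathrm{Id}$; then $\pi$ admits a descent at some index $i$, i.e.\ $\pi(i)>\pi(i+1)$. Since $\beta$ is itself a partition, $\beta_{\pi(i)}\le\beta_{\pi(i+1)}$, and the definition of $\pi\ast\beta$ yields
\[
(\pi\ast\beta)_i-(\pi\ast\beta)_{i+1}=\bigl(\beta_{\pi(i)}-\beta_{\pi(i+1)}\bigr)+\bigl(\pi(i+1)-\pi(i)\bigr)-1\le 0+(-1)-1=-2,
\]
contradicting monotonicity. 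Therefore $\pi$ has no descents and must be the identity. I expect this descent computation to be the crux; the preceding steps are bookkeeping with the definitions of $\mathrm{wread}$ and of a Yamanouchi word, and the only mild subtlety is noting that $\mathrm{wread}(T)$ and $T$ share the same content in the first place.
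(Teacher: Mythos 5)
Your proof is correct and follows essentially the same route as the paper: both arguments reduce the Yamanouchi property to the statement that the content $\pi\ast\beta$ is weakly decreasing, and then exploit the strict decrease of $j\mapsto\beta_j-j$ (in your case via the descent computation, in the paper's via directly forcing $\pi(i)<\pi(i+1)$) to conclude $\pi=\mathrm{Id}$. The two final steps are trivially equivalent, so no substantive difference remains.
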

\begin{proof}
Since $(\pi,T)\in\mathcal{L}_{0}$, the number of $i$ and $i'$ is equal to or 
greater than that of $i+1$ and $(i+1)'$. 
This implies that $(\pi\ast\beta)_{i}\ge(\pi\ast\beta)_{i+1}$, 
{\it i.e.,} $\beta_{\pi(i)}-\pi(i)+i\ge\beta_{\pi(i+1)}-\pi(i+1)+i+1$.
On the other hand, we have $\beta_{1}-1>\beta_{2}-2>\cdots>\beta_{l}-l$.
Thus we have $\pi(i)<\pi(i+1)$ for $1\le i\le l$. 
This property can be satisfied by only $\pi=\mathrm{Id}$.
\end{proof}

\begin{lemma}
\label{lemma:sS1}
There exists a bijection 
$\phi:\mathcal{L}-\mathcal{L}_{0}\ni(\pi,T)\mapsto(\pi',T')\in\mathcal{L}-\mathcal{L}_{0}$
with $\mathrm{sgn}(\pi)=-\mathrm{sgn}(\pi')$.
\end{lemma}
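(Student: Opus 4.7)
The plan is to build a sign-reversing involution $\phi$ on $\mathcal{L}-\mathcal{L}_{0}$ in the style of Bender--Knuth and Stembridge. Given $(\pi,T)\in\mathcal{L}-\mathcal{L}_{0}$, the weak reading word $\mathrm{wread}(T)$ is not Yamanouchi, so there is a smallest integer $i\ge 1$ together with a leftmost position $p$ at which the lattice property first fails, i.e., reading $\mathrm{wread}(T)$ from the right, the count of $i+1$ first strictly exceeds the count of $i$ at position $p$. This ``first bad position'' is the canonical data driving the involution; its uniqueness is what will force $\phi\circ\phi=\mathrm{id}$.

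First I would use the definition of $\mathrm{wread}(T)$ as the unprimed subword of $w'w$ to translate the violation back to an explicit local configuration of $i$'s, $i'$'s, $(i+1)$'s, and $(i+1)'$'s in $T$. Following the switching technique of Stembridge~\cite{Ste89}, I would pair up $i$-entries and $(i+1)$-entries in parenthesis fashion: an $(i+1)$-entry lying in a column strictly to the west of some $i$-entry gets matched, and the matching is iterated until no more pairings exist. The unmatched entries at levels $i$ and $i+1$ form a ``free region'' which is a horizontal strip in the relevant rows; the involution $\phi$ then exchanges within this free region the roles of levels $i$ and $i+1$, sending unprimed $i$'s to $(i+1)$'s and primed $i'$'s to $(i+1)'$'s (and vice versa). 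A routine verification of the semistandard conditions (strictness of rows and columns, ``at most once'' conditions for primed and unprimed entries) shows that the output $T'$ is again a valid tableau in $T''(\gamma/\alpha;\nu')$ for some composition $\nu'$, and that the transformation preserves the first-violation data so that $T'\notin\mathcal{L}_{0}$.

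Next I would verify that $\nu'=\pi'\ast\beta$ with $\pi'$ obtained from $\pi$ by right multiplication with the transposition $s_{i}$. Since the swap changes only the multiplicities at levels $i$ and $i+1$, the new content differs from $\pi\ast\beta$ only in components $i$ and $i+1$; a count of the free region entries, combined with the already noted identities $(\pi s_{i}\ast\beta)_{i}=(\pi\ast\beta)_{i+1}-1$ and $(\pi s_{i}\ast\beta)_{i+1}=(\pi\ast\beta)_{i}+1$, pins down $\pi'=\pi s_{i}$ exactly. Hence $\mathrm{sgn}(\pi')=-\mathrm{sgn}(\pi)$ as required, and applying $\phi$ a second time recovers $(\pi,T)$ since both the first-violation index $i$ and the free region are unchanged by the swap.

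The main obstacle I anticipate is the bookkeeping for primed entries. Because $\mathrm{wread}(T)$ reads the primed part of the reversed word $w'$ together with the usual word $w$, an $(i+1)'$ contributes to the lattice count in a place quite different from where it physically sits in $T$, and the Stembridge matching has to respect this duality in order for the swap to produce a valid $T''$-tableau. Calibrating the pairing of primed with unprimed letters across the $w'w$ concatenation, and showing it still descends to a bona fide horizontal-strip swap on $T$, is where the bulk of the case analysis will live.
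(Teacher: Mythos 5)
Your overall strategy---a sign-reversing involution keyed to the first failure of the lattice property, built from Bender--Knuth/Stembridge switches---is the same one the paper uses. But the construction as you describe it has a concrete arithmetic gap. You need the new content to be $\pi'\ast\beta$ with $\pi'=\pi\circ(i,i+1)$, and as you yourself record, $(\pi s_{i}\ast\beta)_{i}=(\pi\ast\beta)_{i+1}-1$ and $(\pi s_{i}\ast\beta)_{i+1}=(\pi\ast\beta)_{i}+1$. A Bender--Knuth swap of the free (unmatched) entries at levels $i$ and $i+1$ is content-\emph{reversing}: it exchanges the multiplicities exactly, producing $s_{i}\cdot(\pi\ast\beta)$ rather than $\pi s_{i}\ast\beta$, which is off by one in both coordinates. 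The paper repairs this by \emph{not} swapping everything free: it fixes the prefix $w_{r}\cdots w_{1}$ of $\mathrm{wread}(T)$ up to the first violation (where $m_{i+1}=m_{i}+1$ by minimality of $r$) and exchanges the $i$/$(i+1)$ counts only in the remaining part of the word; the surplus of one $(i+1)$ in the frozen prefix is exactly what converts the plain exchange into the shifted one. Concretely, the operation $\Omega_{i,i+1}$ is restricted to the rows at or below the violating entry $v_{r'}$ and excludes $v_{r'}$ itself (resp.\ $\Omega'_{i',(i+1)'}$ acts only strictly above $v_{r'}$ in the primed case). Your ``free region'' is determined by column-matching alone, not by the position $r$, so it does not implement this restriction, and the resulting pair would not lie in $\mathcal{L}$ with the permutation $\pi s_i$.

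The second missing ingredient is the one you flag as the anticipated obstacle but do not resolve: the interaction of $i$ with $(i+1)'$. In the marked alphabet these letters interlock in strips, and the paper handles this by conjugating the whole switch by Stembridge's rotation bijection $\psi=\psi_{i,(i+1)'}:RC\to CR$ (with the nonstandard order $(i+1)'<i$), i.e.\ $T'=\psi^{-1}\circ\omega_{i',(i+1)'}\circ\Omega_{i,i+1}\circ\psi(T)$. The auxiliary Lemma \ref{lemma:sS2} then shows $\psi$ preserves the location of the first violation and the prefix counts, and the freeness claims (S1)--(S2) guarantee the restricted switches are legal and that $\tilde r=r$, which is what makes $\phi$ an involution. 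Without the $\psi$ conjugation and without a proof that your swap fixes the first-violation data, both the well-definedness of $T'$ as an element of $T''(\gamma/\alpha;\pi'\ast\beta)$ and the identity $\phi\circ\phi=\mathrm{id}$ remain unestablished.
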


For a construction of a bijection $\phi$, we make use of the involution on
tableaux developed by Bender and Knuth for ordinary partitions \cite{BenKnu72} and 
by Stembridge for shifted partitions \cite{Ste90}. 
We briefly review these involutions before we move to a proof of Lemma \ref{lemma:sS1}.

A skew diagram $\lambda/\mu$ is said to be {\it detached} if 
$\lambda/\mu$ has at most one box on the main diagonal.

Let $R_{1,2}(\lambda/\mu)$ be the set of shifted tableaux whose contents
are $1$ and $2$.
Then, we have 
\begin{lemma}[Bender--Knuth \cite{BenKnu72}]
Suppose that $\lambda/\mu$ is detached. 
There exists a content-reversing involution $\omega_{1,2}$ on $R_{1,2}(\lambda/\mu)$.
\end{lemma}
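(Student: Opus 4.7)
The plan is to build $\omega_{1,2}$ as a shifted analogue of the classical Bender--Knuth involution. Given $T \in R_{1,2}(\lambda/\mu)$, I would first identify the \emph{coupled cells}: a $2$-type entry (i.e.\ a $2$ or a $2'$) in row $i$ sitting directly above a $1$-type entry in row $i+1$. These form vertical dominoes and are left untouched by $\omega_{1,2}$. By the column-strictness for unprimed letters together with the fact that $1'$ can occur at most once per row and $2'$ at most once per row, the \emph{free} (uncoupled) entries in each row form a contiguous interval which reads, from left to right, at most one $1'$, then some copies of $1$, then at most one $2'$, then some copies of $2$.

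Within each row, if the free block contains $a$ entries of $1$-type and $b$ entries of $2$-type, I would replace it by a block of $b$ entries of $1$-type followed by $a$ entries of $2$-type, assigning primes by a rule determined solely by the immediate neighbors (which are frozen): a $1'$ is placed at the leftmost position of the new block precisely when the entry of the previous row in that column is a $1$-type entry, and a $2'$ is placed at the leftmost position of the $2$-portion precisely when the corresponding entry of the previous row is a $2$-type entry. This is exactly the rule forced by the semistandard conditions once the block sizes have been swapped.

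The verifications I would carry out are, in order, (i) the output is a semistandard shifted tableau; (ii) the total content $(\#\{1,1'\},\#\{2,2'\}) = (a,b)$ has been reversed to $(b,a)$; and (iii) the coupled/free decomposition and the prime-placement rule depend only on data that $\omega_{1,2}$ does not alter, whence $\omega_{1,2}^2 = \mathrm{id}$. Step (ii) is immediate from the construction, and step (iii) reduces to the observation that couplings are purely vertical and occur between frozen pairs, so neither the pairing nor the surrounding data is disturbed.

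The main obstacle is step (i), specifically the interaction with the main diagonal. A naive swap could try to place a $1'$ or a $2'$ on a diagonal cell of $\lambda/\mu$, violating the no-primed-diagonal rule, and could also conflict with the ``at most one $i'$ per row'' and ``at most one $i$ per column'' constraints across adjacent rows. The detached hypothesis is used precisely here: since $\lambda/\mu$ has at most one box on the main diagonal, at most one cell requires attention, and its prime status after the swap is forced to be unprimed by the placement rule above, because the cell directly above it (if any) lies in $\mu$ and is absent. I would complete the argument by a short case analysis verifying that in the boundary row containing the (unique) diagonal cell, the rule yields an unprimed entry there, and that in all other rows the semistandard conditions are preserved automatically.
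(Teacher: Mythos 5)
Your proposal is in the right family of arguments (a row-wise Bender--Knuth swap of ``free'' entries), but as written it does not prove the lemma, for three concrete reasons. First, the domain is misread: in this paper $R_{1,2}(\lambda/\mu)$ consists of shifted tableaux filled with the \emph{unprimed} letters $1$ and $2$ only; the primed letters are handled separately by $\omega_{1',2'}$ on $C_{1',2'}(\lambda/\mu)$ and by the bijection $\psi$ on $RC(\lambda/\mu)$. All of your machinery about prime placement, the ``at most one $i'$ per row'' constraint, and the no-primed-diagonal rule is therefore addressing a different (and harder) statement, and even for that statement the involutivity of your prime-placement rule is asserted rather than checked. Second, your definition of a coupled pair is backwards: since columns are weakly increasing with respect to $1'<1<2'<2$, a $2$-type entry can never sit directly above a $1$-type entry; the coupled configuration is a $1$ above a $2$.

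Third, and most importantly, the detached hypothesis is used for the wrong purpose. In the paper's proof its role is to guarantee that within each row the entries read, from left to right, as non-free $1$'s, then free $1$'s, then free $2$'s, then non-free $2$'s --- equivalently, that a $1$ lying to the left of a non-free $1$ in the same row is itself non-free. This can fail precisely when $\lambda/\mu$ has two consecutive diagonal boxes $(i,i)$ and $(i+1,i+1)$: then $(i,i)$ holds a free $1$ (there is no box $(i+1,i)$) while $(i,i+1)$ can hold a non-free $1$ sitting above a $2$ in $(i+1,i+1)$, so the free block is not contiguous and the swap is ill-defined. You instead derive contiguity from column-strictness alone (which is insufficient) and spend the detached hypothesis on a diagonal-prime issue that does not arise in the unprimed setting. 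To repair the argument, drop the primes, fix the orientation of the coupling, and use detachedness to establish the ordering of free and non-free entries within each row; the swap of the free block from $1^{r_1}2^{r_2}$ to $1^{r_2}2^{r_1}$ then gives a content-reversing involution exactly as in the classical case.
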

\begin{proof}
Since contents of $\lambda/\mu$ are $1$ and $2$, there are at most two boxes in a column.
Further, if there are two boxes in a column, the contents are $1$ and $2$ from top to bottom.
We say that the content $1$ (resp. $2$) is free if there is no $2$ (resp. $1$) in the same 
column.
Suppose that a $1$ is not free in $\lambda/\mu$ and denote by $p_1$ the position of this $1$. 
Since $\lambda/\mu$ is detached, a $1$ left to $p_1$ in the same row is not free.
Similarly, suppose that $2$ is not free in $\lambda/\mu$ and denote by $p_2$ the position of
this $2$. Then, a $2$ right to $p_2$ in the same row is not free.
In a row of $\lambda/\mu$, we have non-free $1$'s, $r_{1}$ free $1$'s, $r_{2}$ free $2$'s and 
non-free $2$'s from left to right.
The involution $\omega_{1,2}$ exchange $r_{1}$ and $r_{2}$, {\it i.e.,}
the row of $\omega_{1,2}(\lambda/\mu)$ has non-free $1$'s, $r_{2}$ free $1$'s, $r_{1}$ 
free $2$'s and non-free $2$'s from left to right.
By construction, $\omega_{1,2}$ is content-reversing.
\end{proof}

Similarly, let $C_{1',2'}(\lambda/\mu)$ be the set of shifted tableaux whose 
contents are $1'$ and $2'$.
Then, we have 
\begin{cor}
Suppose that $\lambda/\mu$ is detached. 
There exists a content-reversing involution $\omega_{1',2'}$ on $C_{1,2}(\lambda/\mu)$.
\end{cor}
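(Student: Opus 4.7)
The plan is to transpose the Bender--Knuth style argument used in the preceding lemma, swapping the roles of rows and columns throughout. Since tableaux in $C_{1',2'}(\lambda/\mu)$ use only the primed letters $1'$ and $2'$, the tableau restriction that each primed letter appears at most once per row means every row contains at most one $1'$ and at most one $2'$ (with $1'$ to the left of $2'$ by row-weak-increase). Thus the natural ``pairs'' of entries live in rows rather than in columns, which is precisely the dual of the unprimed situation.

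I would then define a $1'$ at position $(i,j)$ to be \emph{non-free} if row $i$ contains a $2'$ at some column $j' > j$, and a $2'$ to be non-free if its row contains a $1'$ to its left; otherwise the letter is \emph{free}. The structural core of the proof is to establish that, reading any column from top to bottom, the entries appear in the order: non-free $1'$'s, free $1'$'s, free $2'$'s, non-free $2'$'s. Given this, the involution $\omega_{1',2'}$ is defined column by column: in each column, swap the multiplicities of the free $1'$'s and the free $2'$'s. By the structural statement this preserves the weakly increasing column pattern, and because free letters by definition do not share a row with any letter whose change could affect them, row-weak-increase and the ``one per row'' constraint are automatically preserved. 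Content reversal and involutivity then follow at once.

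The main obstacle is the structural step. Consider the hardest case: a $2'$ at $(r_1, c)$ is non-free via a $1'$ at $(r_1, c')$ with $c' < c$, while a $2'$ at $(r_2, c)$ with $r_2 > r_1$ is free. Since row $r_2$ has no $1'$ strictly left of $c$ and only primed letters are available, all boxes $(r_2, r_2), \ldots, (r_2, c-1)$ must lie in $\mu$. Using the strictness of $\mu$ (a strict partition strictly decreases by at least one per row) and the shifted-shape geometry, this forces $(r_1, c')$ itself to lie in $\mu$, contradicting $1' \in \lambda/\mu$ there; the case $c' < r_2$ is handled similarly, using that no primed letter can sit on the main diagonal to exclude $c' = r_1$. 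The $1'$--block analysis is dual and slightly easier.

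The detached hypothesis enters precisely to guarantee that no ``double diagonal'' pathology arises: with at most one main-diagonal box in $\lambda/\mu$, the interaction between the ``no primed entry on the main diagonal'' rule and the column swap cannot produce an invalid configuration, so $\omega_{1',2'}$ lands inside $C_{1',2'}(\lambda/\mu)$ as required.
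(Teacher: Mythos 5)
Your proposal is correct and follows exactly the paper's route: the paper proves the corollary in one line by declaring $\omega_{1',2'}$ to be the transpose of the Bender--Knuth involution $\omega_{1,2}$ (rows replaced by columns), which is precisely the row/column-swapped free/non-free analysis you carry out in detail. Your fleshing-out of the structural step (non-free $1'$'s, free $1'$'s, free $2'$'s, non-free $2'$'s down a column, using strictness of $\mu$) is a sound elaboration of what the paper leaves implicit.
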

The involution $\omega_{1',2'}$ is obtained by transposing $\omega_{1,2}$, {\it i.e.,}
we replace a row in the definition of $\omega_{1,2}$ by a column.

Let $RC(\lambda/\mu)$ be the set of tableaux with contents $1$ and $2'$.
Similarly, $CR(\lambda/\mu)$ be the set of tableaux with contents $2'$ and 
$1$, but with a nonstandard ordering $2'<1$.

\begin{lemma}[Stembridge  \cite{Ste90}]
Suppose that $\lambda/\mu$ is detached.
There exits a content-preserving bijection $\psi:RC(\lambda/\mu)\rightarrow CR(\lambda/\mu)$.
\end{lemma}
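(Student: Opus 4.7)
The plan is to construct $\psi$ by decomposing each $T\in RC(\lambda/\mu)$ into \emph{matched} vertical dominoes $\bigl(\begin{smallmatrix}1\\2'\end{smallmatrix}\bigr)$ and a set of \emph{free} cells, in the spirit of the Bender--Knuth involution $\omega_{1,2}$ of the preceding lemma. First I would declare a cell $(r,c)$ carrying the label $1$ to be matched whenever $(r+1,c)$ lies in $\lambda/\mu$ and carries $2'$, pair the two cells, and call all remaining cells free. Since $1$ appears at most once per column and each row is weakly increasing in the order $1<2'$, the free cells in each row form a single consecutive block, in which $T$ reads (from left to right) as a run of $1$'s followed by at most one $2'$ at the rightmost free position.

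Next I would define $\psi(T)$ by three simultaneous local modifications: (a) flip each matched domino to $\bigl(\begin{smallmatrix}2'\\1\end{smallmatrix}\bigr)$ so that $2'$ sits on top and $1$ on the bottom, as required by the $CR$-order; (b) within each row, move the rightmost free $2'$ (if any) to the leftmost free cell of that row; (c) leave the remaining free $1$'s where they are. By construction $\psi$ preserves the row-wise (hence the total) content.

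The verification that $\psi(T)\in CR(\lambda/\mu)$ amounts to checking four conditions: row weak-increase in the $CR$-order (immediate from (b)); column weak-increase in the $CR$-order, noting that each column in $\psi(T)$ is a run of $2'$'s on top, coming either from the flipped dominoes or from original column-$2'$ stacks, followed by at most one $1$ at the bottom; the multiplicity conditions that $1$ occurs at most once per column and $2'$ at most once per row (both preserved cell-by-cell); and finally the condition that no $2'$ lies on the main diagonal, where the detachment hypothesis plays its central role. Involutivity of $\psi$ then follows because the matched pairs and row-wise free blocks of $\psi(T)$ are set-theoretically identical to those of $T$, so applying the three modifications again undoes them and recovers $T$.

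The main obstacle will be the verification of column weak-increase in the $CR$-order together with the no-diagonal-prime condition, both of which become delicate at the boundary between matched and free regions of adjacent rows. The detachment hypothesis (at most one diagonal cell in $\lambda/\mu$) is precisely what is needed to rule out the pathological configurations there; without it, the relocated $2'$ could be forced onto a diagonal cell, violating the shifted-tableau rule.
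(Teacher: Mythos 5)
Your construction is genuinely different from the paper's, but it does not work. Concrete counterexample: take $\lambda=(5,4,3)$ and $\mu=(4,3,2)$, so that $\lambda/\mu$ is a single column of three boxes in column $5$ (a connected, detached strip), and let $T\in RC(\lambda/\mu)$ be the filling reading $1,2',2'$ from top to bottom. Your rule finds exactly one matched domino (the top two cells), flips it, and leaves the bottom $2'$ untouched, producing the column $2',1,2'$; this is not weakly increasing from top to bottom in the order $2'<1$, so it is not in $CR(\lambda/\mu)$. The correct image is $2',2',1$. Your claimed verification that every column of $\psi(T)$ is a run of $2'$'s followed by at most one $1$ at the bottom is exactly what fails here, and the failure has nothing to do with the diagonal: it occurs whenever the strip takes two consecutive vertical steps. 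A second failure mode: for the two-row strip with bottom row in columns $1,2$ and top row in columns $2,3$, the filling $1,2'$ (bottom row), $1,2'$ (top row) is sent by your rule to a tableau whose top row contains two $2'$'s. The structural reason is that on a connected detached strip every entry of an $RC$-filling is forced by the shape except the entry at the top-right end, while every entry of a $CR$-filling is forced except the entry at the bottom-left end; the bijection must therefore carry the single free entry from one end of the strip to the other, shifting each intermediate entry one step along the strip. A domino flip combined with a row-local rearrangement moves information by at most one row and cannot accomplish this transport.

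The paper's proof is the global version of what you are attempting: it reduces to a connected component (which, being detached and filled by only two letters, is a strip), notes the rigidity just described, and defines $\psi$ as the cyclic rotation of the reading word along the strip; content preservation and invertibility are then immediate. One further small point: you argue for ``involutivity,'' but $\psi$ maps $RC(\lambda/\mu)$ to the different set $CR(\lambda/\mu)$ (with a different order and hence different forced entries), so what must be checked is invertibility, not that applying the same moves twice returns $T$.
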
  
\begin{proof}
A skew shape $\lambda/\mu$ is a union of connected components. 
The action of $\psi$ is defined on each connected component. 
We assume that $\lambda/\mu$ is connected, which implies that 
$\lambda/\mu$ is a strip.

Let $T$ be a tableau in $RC(\lambda/\mu)$ and $\mathrm{read}(T):=a_{n}\cdots a_{1}$.
Since $T$ is a semistandard tableau, we have $a_{i}=1$ if $a_{i}$ and $a_{i+1}$ are 
in the same row, and $a_{i}=2'$ if $a_{i}$ and $a_{i+1}$ are in the same column 
except for $a_{1}$. The content $a_{1}$ is either $1$ or $2'$.
We define $\psi$ as a rotation of $\mathrm{read}(T)$:
$\psi(a_{i-1})=a_{i}$ for $2\le i\le n$ and $\psi(a_{n})=a_{1}$.
By construction, $\psi(T)\in CR(\lambda/\mu)$ and $\psi$ is invertible.
Therefore, $\psi$ is content-preserving and bijective.
\end{proof}

We construct a bijection $\phi$ in Lemma \ref{lemma:sS1} as follows. 
Given $(\pi,T)\in\mathcal{L}-\mathcal{L}_{0}$, we 
define $\mathrm{wread}(T)=w_{N}\cdots w_{1}$.
From the definition of reading and weak reading words,
there exists an integer $N'$ such that $w_{i}$ for $1\le i\le N'$
(resp. $w_{i}$ for $i>N'$) corresponds to an unprimed (resp. primed) 
element in $\mathrm{read}(T)$.
Let $r:=r(T)$ be an integer such that 
\begin{eqnarray*}
r(T)=\min\{k\ \vert\ w_{k}\cdots w_{1} \text{ is not a Yamanouchi word}\}.
\end{eqnarray*}
We have two cases: $r\le N'$ or $r>N'$.
First, we consider $r\le N'$. 
The element $w_{r}$ is unprimed in $\mathrm{wread}(T)$.
For a tableau word $w$, We denote by $m_{i}(w)$ the number of $i$ in $w$.
By definition of $r$, the word $w_{r-1}\cdots w_{1}$ is a Yamanouchi word 
and the word $w_{r}\cdots w_{1}$ is not.
Therefore, $w_{r}\ge2'$ and one can set $w_{r}=i+1$ or $(i+1)'$.
More precisely, we have $w_{r}=i+1$ if $r\le N'$ and 
$w_{r}=(i+1)'$ if $r>N'$.
We have 
\begin{eqnarray*}
m_{1}(w_{r-1}\cdots w_{1})\ge\cdots\ge m_{r}(w_{r-1}\cdots w_{1})
\end{eqnarray*}
and 
\begin{eqnarray}
\label{eq:sS1}
m_{i+1}(w_{r}\cdots w_{1})=m_{i}(w_{r}\cdots w_{1})+1.
\end{eqnarray}

Recall that $\psi_{i,(i+1)'}$ is a content-preserving bijection and 
we have a non-standard ordering $(i+1)'<i$.
Let $S$ be a strip formed by $\{i,(i+1)'\}$ including the element 
$w_{r}$ if $w_{r}=(i+1)'$. 
We denote by $s_{1}\in S$ the most upper-right element in $S$.
Let $\widetilde{T}:=\psi(T)$, $\mathrm{wread}(\widetilde{T}):=v_{N}\cdots v_{1}$ 
and $r':=r(\widetilde{T})$. 
Note that $\psi$ is also shape-preserving.
\begin{lemma}
\label{lemma:sS2}
We have: 
\begin{enumerate}
\item If $r\le N'$, then the position of $v_{r'}$ in $\psi(T)$ is the same
as the one of $w_{r}$ in $T$.
\item For $r>N'$, we have the following:
\begin{enumerate}
\item if $s_{1}=i$, then the position of $v_{r'}$ in $\psi(T)$ is one step 
upper than $w_{r}$.
\item if $s_{1}=(i+1)'$ and $w_{r}$ is not in the most upper-right box in $S$, 
then the position of $v_{r'}$ in $\psi(T)$ is the same as the one of $w_{r}$ in $T$.
\item if $s_{1}=(i+1)'$ and $w_{r}=s_{1}$, then 
the position of $v_{r'}$ in $\psi(T)$ is the leftmost box in $S$ which is 
in the same row as $w_{r}$.
\end{enumerate}
\end{enumerate}
Further, we have 
\begin{eqnarray*}
m_{i}(w_{r}\dots w_{1})=m_{i}(v_{r'}\cdots v_{1})
\end{eqnarray*}
for $1\le i\le r$.
\end{lemma}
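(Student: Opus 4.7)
The plan is to exploit the detailed structure of the rotation $\psi=\psi_{i,(i+1)'}$ and its interaction with the weak reading word. Since $T$ is shifted and $S$ is a strip of $\{i,(i+1)'\}$, $\psi$ restricts to a cyclic rotation on each connected component of $S$. First I would record the position-to-index correspondence for $\mathrm{wread}(T)$: the portion $w_1,\dots,w_{N'}$ indexes unprimed letters of $T$ in reverse reading order, and $w_{N'+1},\dots,w_N$ indexes primed letters of $T$ in forward reading order. Since $\psi$ can change which positions of $S$ hold $i$ versus $(i+1)'$, an entry may cross the boundary $N'$ in $\mathrm{wread}$; but this reshuffling is controlled by the explicit rotation.

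For part (1), the letter $w_r=i+1$ is unprimed and lies outside $S$, so its position and content are unchanged in $\widetilde{T}$. It remains to identify $r'$ with the $\mathrm{wread}$-index of this position in $\widetilde{T}$ and to verify that the lattice property first fails at $r'$. Content preservation of $\psi$ on $\{i,(i+1)'\}$ implies that, up through the position of $w_r$, the cumulative counts of $i$ and of $i+1$ in $\mathrm{wread}$ agree between $T$ and $\widetilde{T}$; so the failure occurs at the same location, giving the stated identification.

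For part (2), $w_r=(i+1)'$ sits inside $S$. Using Stembridge's explicit description of $\psi$ as a cyclic shift of the reading word of the connected component of $S$ containing $w_r$, the subcases reduce to direct inspection. In~(a), $s_1=i$ means the box directly above $w_r$ in its column holds $i$ in $T$, and the rotation shifts $(i+1)'$ into that box in $\widetilde{T}$. In~(b), $s_1=(i+1)'$ with $w_r\neq s_1$ forces the rotation to leave $w_r$'s position occupied by a (different) $(i+1)'$. In~(c), $w_r=s_1$ is at the top-right end of its component, and the cyclic wrap sends $(i+1)'$ to the leftmost box of $S$ in the same row.

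The main obstacle is the count equality $m_i(w_r\cdots w_1)=m_i(v_{r'}\cdots v_1)$, whose proof requires matching contributions from $S$ on both sides. The strategy is to establish a bijection between entries of $S$ read before $w_r$ in $\mathrm{wread}(T)$ and entries of $S$ read before $v_{r'}$ in $\mathrm{wread}(\widetilde{T})$: an $i$ in $S$ contributes the value $i$ to the second half of $\mathrm{wread}$, while an $(i+1)'$ contributes the value $i+1$ to the first half, and $\psi$ permutes these contributions so that each cumulative count is preserved. Verifying this matching in each subcase of~(1),~(2a),~(2b),~(2c), together with Eqn.~(\ref{eq:sS1}) which pins down the imbalance created by $w_r$, will yield the desired equality.
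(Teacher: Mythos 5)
Your overall strategy—treating $\psi$ as a cyclic rotation of the strip $S$, locating $w_r$ relative to $S$, and doing a case analysis—is the same route the paper takes. But there is a genuine gap at the heart of the argument. In part (1) you claim that ``content preservation of $\psi$ on $\{i,(i+1)'\}$ implies that, up through the position of $w_r$, the cumulative counts of $i$ and of $i+1$ in $\mathrm{wread}$ agree between $T$ and $\widetilde{T}$.'' This does not follow: content preservation is a global statement, while $\psi$ converts individual $(i+1)'$'s into $i$'s and vice versa inside $S$, and an $(i+1)'$ contributes the value $i+1$ to the primed (first-read) half of $\mathrm{wread}$ whereas an $i$ contributes the value $i$ to the unprimed half. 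So the rotation genuinely relocates contributions both in value and in position within $\mathrm{wread}$, and the agreement of cumulative counts at intermediate positions is precisely the final (and hardest) assertion of the lemma, $m_i(w_r\cdots w_1)=m_i(v_{r'}\cdots v_1)$. Invoking it to establish part (1) is circular, and you then defer its actual proof to an unexecuted ``bijection strategy.''

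The missing ingredient is the systematic use of the minimality of $r$ to pin down the geometry of $S$ relative to the box of $w_r$ before any inspection of cases. In the paper's argument this is what does all the work: for $w_r=i+1$, minimality forces the strip $S$ of $i$'s and $(i+1)'$'s to sit entirely above the horizontal strip of $(i+1)$'s containing $w_r$, with row-length inequalities ($n_1\ge n_2$, or $n_1\ge n_2+1$ when the row has no $(i+1)'$) that survive the rotation, and rules out a free $i$ to the left of $b_1$; for $w_r=(i+1)'$, it rules out $i'$ in the box immediately to the left. Only with these constraints in place can one verify row by row that the Yamanouchi property is preserved for all $p<r$, that the first failure in $\widetilde T$ occurs at the claimed box, and that the counts match. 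Without them, the ``direct inspection'' in your subcases (2a)--(2c) is not determined: the location of $v_{r'}$ depends on where $S$ sits relative to $w_r$, which is exactly what minimality of $r$ controls. (A smaller point: $s_1$ is the most upper-right element of the whole strip $S$, not necessarily the box directly above $w_r$, so your reading of the hypothesis in (2a) also needs correcting.)
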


\begin{proof}
For simplicity, we make use of the identification 
$i\leftrightarrow1, i+1\leftrightarrow2$.
We denote by $b_{2}$ the box corresponding to $w_{r}$ in $T$.

For (1), we denote by $b_{1}$ the box just above $b_{2}$ in $T$.
Suppose that the content of $b_{1}$ (resp. $b_{2}$) is $1$ (resp. $2$) 
and there are $k$ $2$'s right to the box $b_{2}$ in $T$.
Denote by $b_{3}$ the box $k$ steps right to the box $b_{1}$ and 
by $w_{s}$ the content of $b_{3}$.
Since the order of letters is $1<2'<2$, we have $k$ $1$'s 
right to the box $b_{1}$ or $k-1$ $1$'s and one $2'$ right to 
the box $b_{1}$, namely $w_{s}=1$ or $w_{s}=2'$.
In the former case, since $m_{2}(w_{r}\ldots w_{1})=m_{1}(w_{r}\ldots w_{1})+1$
and $m_{2}(w_{r}\ldots w_{s})\le m_{1}(w_{r}\ldots w_{s})$, 
there exists $s'<r$ satisfying $m_{2}(w_{s'}\ldots w_{1})=m_{1}(w_{s'}\ldots w_{1})+1$.
This contradicts the minimality of $r$. 
Thus, the content $w_s$ of the box $b_{3}$ is $2'$ if $b_1$ is $1$.
Further, if there is a box with content $1$ left to the box $b_{1}$,
this also contradicts the minimality of $r$.
The above observations can be extended to the case where 
the letter $2$'s form a horizontal strip and the letters $1$ and $2'$
form a strip $S$ in $T$.
\begin{figure}
\begin{eqnarray*}
\begin{matrix}
 & & & 1 & 1 & 1 & 2' \\
 & & & 2' & 2 & 2 & 2\\
 & & & 2' \\
1 & 1 & 1 & 2' \\
\boxed{2} & 2 & 2 & 2
\end{matrix}\quad
\mapsto\quad
\begin{matrix}
 & & & 2' & 1 & 1 & 1 \\
 & & & 2' & 2 & 2 & 2\\
 & & & 2' \\
2' & 1 & 1 & 1 \\
\boxed{2} & 2 & 2 & 2
\end{matrix}
\end{eqnarray*}
\caption{An example of the action of $\psi$ on a tableau. 
The boxed letter $2$'s correspond to $w_{r}$ and $v_{r'}$.}
\label{fig:sS1}
\end{figure}
The strip $S$ has to be above the box $b_{2}$. 

In general, we consider a configuration of letters $1,2'$ 
and $2$ such that $1$'s and $2'$'s form a strip $S$ and $2$'s form a 
horizontal strip below $S$.
Suppose that the $i$-th row from top contains $n_{1}$ $1$'s, the $(i+1)$-th
row contains $n_{2}$ $2$'s and the $i$-th row is not placed at the bottom of $S'$.
We have $n_{1}\ge n_{2}$ if the $i$-th row contains $2'$ and $n_{1}\ge n_{2}+1$ 
if the $i$-th row does not contain $2'$.
After applying $\psi$ on $S$, the number of $1$'s in the $i$-th row is 
greater than or equal to the one of $2$'s in the $(i+1)$-th row.
The total numbers of $1$'s in $S$ and $\psi(S)$ are the same.
This means that if $w_{p}\ldots w_{1}$ is Yamanouchi, then $v_{p}\ldots v_{1}$
is also Yamanouchi for all $p<r$.
Note that in the case where the content of $b_{1}$ is not $1$, the letter $2$ in $b_{2}$ 
is irrelevant to the bijection $\psi$.
Since the bijection $\psi$ is defined as a rotation of the strip $S$ and 
content-preserving, the position of $v_{r'}$ is the same as the one of $w_{r}$.  
See Figure \ref{fig:sS1} for an example.
Therefore, the statement (1) is true.

For (2), we have $w_{r}=2'$ and denote by $b_{1}$ the box one step left to 
the box $b_{2}$ in $T$.
Suppose that the content of $b_{1}$ is $1'$. For primed integers, we read the 
content of $T$ from left to right in a row. 
Primed integers appear at most once in a row of $T$. 
Therefore, we have that $m_{1}(w_{r}w_{r-1})=m_{2}(w_{r}w_{r-1})$.
This is equivalent to $m_{2}(w_{r-2}\ldots w_{1})=m_{1}(w_{r-2}\ldots w_{1})+1$,
which is a contradiction of the minimality of $r$.
Thus, the content of $b_{1}$ is not $1'$.
\begin{figure}[ht]
\begin{eqnarray*}
\begin{matrix}
 & 1 & 1 \\
 & \boxed{2'} \\ 
 & 2' \\
 & 2' \\
1 & 2' 
\end{matrix}\quad\mapsto\quad
\begin{matrix}
 & \boxed{2'} & 1 \\
 & 2' \\ 
 & 2' \\
 & 2' \\
1 & 1 
\end{matrix}
\quad,\quad
\begin{matrix}
 & & 1' \\
 & & 2'  \\
 & & \boxed{2'} \\
 & & 2' \\
1 & 1 & 2'
\end{matrix},\quad\mapsto\quad
\begin{matrix}
 & & 1' \\
 & & 2'  \\
 & & \boxed{2'} \\
 & & 2' \\
2' & 1 & 1
\end{matrix}
\end{eqnarray*}
\begin{eqnarray*}
\begin{matrix}
 & 1 & 1 & \boxed{2'} \\
 & 2' \\
 & 2' \\
1 & 2' \\
\end{matrix}\quad\mapsto\quad
\begin{matrix}
 & \boxed{2'} & 1 & 1 \\
 & 2' \\
 & 2' \\
2' & 1 \\
\end{matrix}
\end{eqnarray*}
\caption{Examples of an action of $\psi$. The boxed $2'$ corresponds to
$w_{r}$ and $v_{r'}$.}
\label{fig:sS2}
\end{figure}
If we apply $\psi$ on a strip with $s_{1}=1$, a letter $2'$ is moved 
upward by one step.
A condition $s_{1}=1$ implies that this letter $1$ appears in $w$ 
as $w_{j}$ with some $j\le N'$ before and after the application of $\psi$.
The statement (2a) directly follows from this observation.
If $s_{1}=2'$ and we apply $\psi$ on the strip $S$, 
the lowest row of a new strip contains $2'$ but the top row does not.
Combining the observation above with the fact that a letter $2'$ is 
moved upward by $\psi$,
the statements (2b) and (2c) holds true.
Figure \ref{fig:sS2} is examples of the action of $\psi$ on $S$.

By summarizing the above observations, 
it is obvious that $m_{i}(w_{r}\ldots w_{1})=m_{i}(v_{r'}\ldots v_{1})$
for $1\le i\le r$.
\end{proof}

Let $T'$ be a tableau $T':=\psi(T)$ and denote by $T'(p,q)$ the $q$-th 
element in the $p$-th row in $T'$. 
We say that the element of $T'$, $T'(p,q)=i+1$ (resp. $i$), is free 
when there exists no $i$ (resp. $i+1$) in the same column of $T'(p,q)$. 
Similarly, we say that the element $i'$ (resp. $(i+1)'$) is free 
when there exists no $(i+1)'$ (resp. $i'$) in the same row.

We claim: 
\begin{itemize}
\item[(S1)]
Suppose that $w_{r}=i+1$.
The integer $i+1$ in $T'$, which is left to $w_{r}$ and 
in the same row as $w_r$, is free.
\item[(S2)]
Suppose that $w_{r}=(i+1)'$.
The alphabet $(i+1)'$ in $T'$, which is lower than $w_{r}$ and 
in the same column as $w_{r}$, is free. 
\end{itemize}
Let $T'(p,q)$ be the element $w_r$ in $T'$.
Suppose that $T'(p,q')=i+1$ with $q'<q$ and $T'(p,q')$ is not free.
Then, we have $T'(p-1,q')=i$.
Since the ordering of alphabet in $T'$ is $i'<(i+1)'<i<i+1$ and $T'$ 
is a semistandard tableau, the integer $i+1$, which is right to $w_r$ and 
in the same row as $w_r$, is not free.
We denote by $T'(p,q^{\prime\prime})$ the rightmost $i+1$, which is right to 
$w_r$ and in the same row. 
We have $T'(p-1,q^{\prime\prime})=i$ and denote by $w_s:=i$ the content 
of $T'(p-1,q^{\prime\prime})$.
The above consideration also implies that the elements $T(p-1,m)=i$ 
with $q'\le m<q^{\prime\prime}$.
We have 
\begin{eqnarray*}
m_{i}(w_r\cdots w_s)\ge m_{i+1}(w_r\cdots w_s).
\end{eqnarray*}
Together with Eqn.(\ref{eq:sS1}), we obtain 
\begin{eqnarray*}
m_{i}(w_{s-1}\cdots w_{1})<m_{i+1}(w_{s-1}\cdots w_{1}),
\end{eqnarray*}
which is a contradiction against the minimality of $r$.
Thus, the statement (S1) holds true.
By transposing the above argument, the statement (S2) follows.

\begin{proof}[Proof of Lemma \ref{lemma:sS1}]
In the above notation, we construct a bijection 
$\phi(\pi,T)=(\pi',T')$ as follows.
First, we define $\pi':=\pi\circ(i,i+1)$ where $(i,i+1)$ is 
a transposition in $\mathcal{S}_l$.
We perform two operations on $\widetilde{T}:=\psi_{i,(i+1)'}(T)$.
We consider two cases: 1) $v_{r'}=i+1$ and 2) $v_{r'}=(i+1)'$. 

\paragraph{Case 1}
We perform $\omega_{i,i+1}$ on $i$'s and $i+1$'s except for $v_{r'}$ 
which are in the same row as $v_{r'}$.
We also perform $\omega_{i,i+1}$ on $i$'s and $i+1$'s 
which are in the lower rows than $v_{r'}$.
We call this operation $\Omega_{i,i+1}$.
Successively, we perform $\omega_{i',(i+1)'}$ on all $i'$'s and $(i+1)'$'s
in $\Omega_{i,i+1}(\widetilde{T})$.
Then, we obtain 
$T'=\psi^{-1}\circ \omega_{i',(i+1)'}\circ\Omega_{i,i+1}\circ \psi(T)$.

\paragraph{Case 2}
Since $v_{r'}=(i+1)'$, we do not perform any operation with respect to 
$i$ and $i+1$.
We perform $\omega_{i',(i+1)'}$ on $i'$'s and $(i+1)'$'s which are strictly 
upper than $v_{r'}$.
We call this operation $\Omega'_{i',(i+1)'}$.
A tableau $T'$ is given by 
$T':=\psi^{-1}\circ\Omega'_{i',(i+1)'}\circ\psi(T)$.

Let $\tilde{r}:=r(T')$ and $\mathrm{wread}(T')=w'_{N}\cdots w'_{1}$. 
We show that the weights of $T$ and $T'$ are related as 
$\mathrm{wt}(T')=\pi\ast\mathrm{wt}(T)$.
From the construction of $T'$ and Lemma \ref{lemma:sS2}, 
we have $\tilde{r}=r$, $w_{r}\cdots w_{1}=w'_{r}\cdots w'_{1}$ and 
\begin{eqnarray*}
m_{i}(w_{N}\cdots w_{r+1})&=&m_{i+1}(w'_{N}\cdots w'_{r+1}), \\
m_{i+1}(w_{N}\cdots w_{r+1})&=&m_{i}(w'_{N}\cdots w'_{r+1}).
\end{eqnarray*}
Therefore, we have 
\begin{eqnarray*}
m_{i}(\mathrm{wread}(T'))&=&m_{i}(\mathrm{wread}(T))-1, \\
&=&\beta_{\pi(i+1)}-\pi(i+1)+i, \\
m_{i+1}(\mathrm{wread}(T'))&=&m_{i}(\mathrm{wread}(T))+1, \\
&=&\beta_{\pi(i)}-\pi(i)+i+1, 
\end{eqnarray*}
which implies $\mathrm{wt}(T')=\pi\ast\mathrm{wt}(T)$.
Further, $(\pi',T')\in \mathcal{L}-\mathcal{L}_{0}$ since 
$w'_{r}\cdots w'_{1}$ is not a Yamanouchi word.

We have $\phi(\pi',T')=(\pi,T)$ by construction, which implies that 
$\phi$ is a bijection.

\end{proof}

\begin{proof}[Proof of Theorem \ref{thrm:sS}]
From Eqn.(\ref{eq:sS0}), we have 
\begin{eqnarray}
\label{eq:sS2}
c^{\gamma}_{\alpha\beta}
=\sum_{(\pi,T)\in\mathcal{L}_{0}}\mathrm{sgn}(\pi)
+\sum_{(\pi,T)\in\mathcal{L}-\mathcal{L}_{0}}\mathrm{sgn}(\pi).
\end{eqnarray}
Form Lemma \ref{lemma:sS0}, the first term of Eqn.(\ref{eq:sS2})
becomes the total number of $T\in T''(\gamma/\alpha;\beta)$ whose 
weak reading word is a Yamanouchi word.
The second term of Eqn.(\ref{eq:sS2}) is zero from Lemma \ref{lemma:sS1}.
Therefore, we complete the proof.
\end{proof}

\begin{theorem}
\label{thrm:PPs}
We have
\begin{eqnarray*}
e_{\lambda\mu}^{\alpha}=\sum_{\beta}\#\{(T,U)\in T(\beta;\lambda)\times T(\alpha/\beta;\mu): 
\mathrm{read}(T) \text{ and }\mathrm{read}(U)\text { are LRS words}\}.
\end{eqnarray*}
\end{theorem}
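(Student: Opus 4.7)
The plan is to reduce the statement to two successive applications of Theorem~\ref{thrm:sP}, by passing through the Schur basis. Since $\{s_\alpha\}$ is a $\mathbb{Z}$-basis of $\Lambda$, one may first expand $P_\lambda$ in that basis and then multiply the result by $P_\mu$ one Schur factor at a time. This turns the definition of $e^\alpha_{\lambda\mu}$ into a convolution of two Littlewood--Richardson-type coefficients that have already been given combinatorial descriptions in the excerpt.

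Concretely, I would begin by invoking the Corollary to Theorem~\ref{thrm:sP} (the case $\alpha=\emptyset$), which gives
\begin{equation*}
P_\lambda \;=\; \sum_\beta b^\beta_\lambda\, s_\beta, \qquad
b^\beta_\lambda \;=\; \#\{T\in T'(\beta;\lambda) : \mathrm{read}(T)\text{ is an LRS word}\}.
\end{equation*}
Multiplying both sides by $P_\mu$ and then expanding each product $s_\beta P_\mu$ via Theorem~\ref{thrm:sP} yields
\begin{equation*}
P_\lambda P_\mu \;=\; \sum_\beta b^\beta_\lambda\, s_\beta P_\mu
\;=\; \sum_\alpha \Bigl( \sum_\beta b^\beta_\lambda\, b^\alpha_{\beta\mu}\Bigr) s_\alpha.
\end{equation*}
Comparing with the definition $P_\lambda P_\mu = \sum_\alpha e^\alpha_{\lambda\mu} s_\alpha$ immediately gives the numerical identity $e^\alpha_{\lambda\mu} = \sum_\beta b^\beta_\lambda\, b^\alpha_{\beta\mu}$.

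The final step is to substitute the combinatorial expressions for $b^\beta_\lambda$ and $b^\alpha_{\beta\mu}$ furnished by Theorem~\ref{thrm:sP}: the former counts tableaux $T$ of shape $\beta$ and content $\lambda$ whose reading word is LRS, while the latter counts tableaux $U$ of skew shape $\alpha/\beta$ and content $\mu$ whose reading word is LRS. Because the two factors are independent, their product is exactly the number of pairs $(T,U)$ of the claimed form, and summing over $\beta$ produces the desired formula. There is essentially no obstacle once Theorem~\ref{thrm:sP} is in hand; the only matter requiring care is the bookkeeping of shapes (ensuring $T$ has shape $\beta$ and $U$ has skew shape $\alpha/\beta$, and that the intermediate sum over $\beta$ ranges only over partitions with $\beta\subseteq\alpha$) and the convention that, since LRS words may carry primed letters, the tableau set $T(\beta;\lambda)$ in the statement is to be understood in the primed sense $T'(\beta;\lambda)$, consistent with Theorem~\ref{thrm:sP}.
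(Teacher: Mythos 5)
Your proposal is correct and follows essentially the same route as the paper, whose proof is just the one-line remark that the result follows from Theorem~\ref{thrm:sP} (applied twice, once in the form of its corollary with $\alpha=\emptyset$ to expand $P_\lambda$ in Schur functions, and once to expand $s_\beta P_\mu$); your write-up simply fills in the convolution identity $e^{\alpha}_{\lambda\mu}=\sum_\beta b^{\beta}_{\lambda}b^{\alpha}_{\beta\mu}$ explicitly. Your closing remark that the sets in the statement must be read in the primed sense $T'(\beta;\lambda)$, $T'(\alpha/\beta;\mu)$ to be consistent with Theorem~\ref{thrm:sP} is a correct and worthwhile observation about the paper's notation.
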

\begin{proof}
The theorem directly follows from Theorems \ref{thrm:sP} and Theorem \ref{thrm:StemPP}.
\end{proof}

\begin{theorem}
\label{thrm:PPs2}
Let $T\in T'(\lambda;\ast)$ and $U\in T'(\mu;\ast)$ be shifted tableaux in $X'$ of 
shape $\lambda$ and $\mu$.
We denote by $w_{\lambda}$ and $w_{\mu}$ the weak reading words of $T$ and $U$, respectively.
Then, we have 
\begin{eqnarray*}
e_{\lambda\mu}^{\alpha}
=\#\{ (w_\lambda,w_\mu): w_{\mu}\ast w_{\lambda} \text{ is a Yamanouchi word of content }\alpha\}.
\end{eqnarray*}
\end{theorem}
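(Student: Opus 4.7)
The plan is to deduce Theorem \ref{thrm:PPs2} by re-expressing $P_\lambda P_\mu$ as a sum of monomials indexed by concatenations of weak reading words, and then extracting the coefficient of $s_\alpha$ via a Yamanouchi-word argument. First, I would establish the identity $c(T)=c(\mathrm{wread}(T))$ for every shifted tableau $T$: this is immediate from the construction in Section \ref{sec:tw}, since each primed letter $i'$ of $\mathrm{read}(T)$ contributes an unprimed $i$ to the $w'$-half of $\mathrm{wread}(T)$, while each unprimed $i$ persists as $i$ in the $w$-half, so both contents record the same multiplicities $c_i$. Combined with $P_\lambda=\sum_{T\in T'(\lambda;\ast)}x^{c(T)}$, this yields
\[
P_\lambda P_\mu \;=\; \sum_{(T,U)\in T'(\lambda;\ast)\times T'(\mu;\ast)} x^{c(\mathrm{wread}(U)\ast\mathrm{wread}(T))}.
\]

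Next, I would apply the following standard consequence of the Robinson--Schensted--Knuth correspondence: if a symmetric function is written as $\sum_{w\in M}x^{c(w)}$ over a multiset $M$ of words that is Knuth-closed in the sense that the multiplicity of $w\in M$ with $P_{\mathrm{RSK}}(w)=P$ depends only on the shape of $P$, then the coefficient of $s_\alpha$ in its Schur expansion equals the (multiplicity-weighted) number of Yamanouchi words of content $\alpha$ in $M$. The underlying fact is that a word is Yamanouchi of content $\alpha$ if and only if its RSK insertion is the unique superstandard tableau of shape and content $\alpha$. Applied to the multiset $M=\{\mathrm{wread}(U)\ast\mathrm{wread}(T)\}_{(T,U)}$, this principle identifies $e^\alpha_{\lambda\mu}$ with the count stated in Theorem \ref{thrm:PPs2}. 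A supporting structural observation is that $\mathrm{wread}(U)\ast\mathrm{wread}(T)$ being Yamanouchi already forces the suffix $\mathrm{wread}(T)$ to be Yamanouchi, which is exactly the LRS lattice condition (i) on $\mathrm{read}(T)$ as reformulated in Section \ref{sec:tw}; this gives a useful sanity check that the right-hand count matches what Theorem \ref{thrm:PPs} would predict.

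The main obstacle is verifying the Knuth-closedness of $M$. I plan to address this by exploiting the shifted plactic relations of Theorem \ref{thrm:splactic}: since two shifted tableaux with shifted plactic-equivalent reading words produce the same mixed insertion tableau, and the $w'w$ construction defining $\mathrm{wread}$ is designed to be compatible with the shifted plactic moves, I would show that shifted plactic equivalence on $\mathrm{read}(T)$ descends to ordinary Knuth equivalence on $\mathrm{wread}(T)$; concatenation with $\mathrm{wread}(U)$ then preserves Knuth classes in a controlled way. A possibly cleaner alternative is to construct a direct bijection between the pairs $(T,U)$ of Theorem \ref{thrm:PPs2} and the pairs of LRS tableaux of Theorem \ref{thrm:PPs}, splitting each Yamanouchi concatenation $\mathrm{wread}(U)\ast\mathrm{wread}(T)$ of content $\alpha$ canonically into a pair $(T_0,U_0)$ of LRS tableaux of shapes $\beta$ and $\alpha/\beta$ for a suitable $\beta$, thereby recovering the sum over $\beta$ in Theorem \ref{thrm:PPs}. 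Either route reduces the theorem to a combinatorial compatibility check between the shifted plactic monoid, the ordinary plactic monoid, and the weak reading word construction, and that compatibility is the heart of the proof.
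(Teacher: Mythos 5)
Your reduction is set up correctly: the content identity $c(\mathrm{wread}(T))=c(T)$ does hold (each letter of $T$ contributes exactly one unprimed letter to $w'w$), so $P_{\lambda}P_{\mu}=\sum_{(T,U)}x^{c(w_{\mu}\ast w_{\lambda})}$, and the RSK extraction principle you invoke is the standard one. But everything then hinges on the hypothesis of that principle — that the multiset $M=\{\,\mathrm{wread}(U)\ast\mathrm{wread}(T)\,\}$ hits each Knuth class of shape $\alpha$ exactly $e_{\lambda\mu}^{\alpha}$ times — and this is precisely the theorem's entire combinatorial content; you name it as ``the main obstacle'' and then do not prove it. Neither of your two routes closes the gap as stated. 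For the first: even granting that each shifted plactic relation is a consequence of Knuth relations, that is a statement about reading words of words inserted by $P_{\mathrm{mix}}$, not about the weak reading word $\mathrm{wread}(T)$, which is built from the doubled word $w'w$ by discarding primed letters; and more fundamentally, ``shifted plactic classes map into Knuth classes'' is a qualitative statement, whereas Knuth-closedness is a counting statement (each RSK tableau of a given shape must be attained by the same number of pairs $(T,U)$), which does not follow from a class-to-class map. Your sanity check that the suffix condition matches the LRS lattice property is also not exact: the lattice property of an LRS word is phrased via the word $\widehat{w}w$ with $\widehat{i}=(i+1)'$ and involves primed letters, which is not literally the Yamanouchi property of $\mathrm{wread}(T)$.

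Your second route — a direct bijection with the LRS-tableau pairs of Theorem \ref{thrm:PPs} — is in fact what the paper does, but the bijection is the hard part, not a routine ``compatibility check.'' The paper proves Lemma \ref{lemma:PPs1}: an explicit bijection $\chi$ between tableaux of shape $\beta$ carrying an LRS word of content $\lambda$ inside $\gamma$ and a Yamanouchi word of content $\alpha$ on $\beta/\gamma$ (the objects counting $b_{\alpha\lambda}^{\beta}$ via Theorem \ref{thrm:sP}), and shifted tableaux $U$ of shape $\lambda$ with $\mathrm{wread}(U)\ast w_{\alpha}$ Yamanouchi. The construction goes through standardization, reverse RSK from the pair $(P_{0}(\beta),Q_{\mathrm{stand}}(\beta))$, splitting the resulting word, and mixed insertion, and the verification (claims (S3)–(S5$'$)) that the output has shape $\lambda$ and the stated Yamanouchi property occupies the bulk of the proof. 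Theorem \ref{thrm:PPs2} then follows by writing $e_{\lambda\mu}^{\alpha}=\sum_{\gamma}b_{\gamma\lambda}^{\alpha}\,e_{\emptyset\mu}^{\gamma}$ and applying the lemma twice, together with the elementary fact that for a Yamanouchi suffix $v$ of content $\gamma$, whether $u\ast v$ is Yamanouchi depends only on $\gamma$. To complete your argument you would need to either prove the Knuth-closedness of $M$ directly or supply the bijection of Lemma \ref{lemma:PPs1}; as written, the proposal records the correct target but not a proof.
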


Before we move to a proof of Theorem \ref{thrm:PPs2}, we introduce 
lemmas needed later.

Let $\alpha,\beta$ and $\gamma$ be ordinary partitions and $\lambda$ be 
a shifted partition.
From Theorems \ref{thrm:sP} and \ref{thrm:StemPP}, 
we have 
\begin{eqnarray}
\label{eq:PPs21}
b_{\alpha\lambda}^{\beta}
=
\sum_{\gamma}b_{\emptyset\lambda}^{\gamma}\cdot
a_{\gamma\alpha}^{\beta},
\end{eqnarray}
where $\gamma$ is an ordinary partition satisfying 
$\gamma\subseteq\beta$ and $|\gamma|=|\lambda|$.
Let $T$ be a tableau of shape $\beta$ such that 
the reading word inside $\gamma\subseteq\beta$ is 
an LRS word of content $\lambda$ and the reading
word for the shape $\beta/\gamma$ is a Yamanouchi 
word of content $\alpha$.
Note that contents inside $\gamma$ can be primed
and contents inside $\beta/\gamma$ do not have 
primes.
We denote by $\mathrm{Tab}(\alpha,\lambda;\beta)$ the set of 
tableaux $T$ with properties as above.

We have a unique semistandard tableau $T_{\alpha}$ whose shape and weight
are both $\alpha$. 
Then, the reading word $w_{\alpha}:=\mathrm{read}(T_{\alpha})$ is a 
Yamanouchi word, {\it i.e.}, $w_{\alpha}$ consists of $\alpha_{l}$ $l$'s, 
$\alpha_{l-1}$ $l-1$'s,\ldots, and $\alpha_{1}$ $1$'s.
We denote by $\mathrm{Tab}'(\alpha,\lambda;\beta)$ the set of 
tableaux $U$ of shape $\lambda$ such that the concatenation of the weak reading
word of $U$ and $w_{\alpha}$ is a Yamanouchi word of content $\beta$, 
{\it i.e.}, $\mathrm{wread}(U)\ast w_{\alpha}$ is Yamanouchi of content $\beta$.

\begin{lemma}
\label{lemma:PPs1}
There exists a bijection 
$\chi:\mathrm{Tab}(\alpha,\lambda;\beta)\rightarrow\mathrm{Tab}'(\alpha,\lambda;\beta)$.
\end{lemma}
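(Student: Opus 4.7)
The plan is to prove the bijection by matching both sides to the coefficient $b_{\alpha\lambda}^{\beta}$ and then constructing $\chi$ explicitly. By the definition of $\mathrm{Tab}(\alpha,\lambda;\beta)$ together with equation (\ref{eq:PPs21}),
\begin{eqnarray*}
\#\mathrm{Tab}(\alpha,\lambda;\beta)
=\sum_{\gamma}b_{\emptyset\lambda}^{\gamma}\cdot a_{\gamma\alpha}^{\beta}
=b_{\alpha\lambda}^{\beta}.
\end{eqnarray*}
On the other hand, Theorem \ref{thrm:sP} asserts $b_{\alpha\lambda}^{\beta}=\#\{T\in T'(\beta/\alpha;\lambda)\mid \mathrm{read}(T)\text{ is LRS}\}$. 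So the first step is to set up a word-level bijection from this latter set to $\mathrm{Tab}'(\alpha,\lambda;\beta)$, which I would do using the remark after the definition of LRS words (that $w$ is LRS iff $\widehat{w}w$ is Yamanouchi in the generalized sense where $i$ and $i'$ both count as $i$). Since $w_{\alpha}$ is a fixed Yamanouchi word of content $\alpha$, prepending or appending it to the LRS data should translate, under $\widehat{\cdot}$, exactly into the condition ``$\mathrm{wread}(U)\ast w_{\alpha}$ is a Yamanouchi word of content $\beta$'' for an associated $U$ of shape $\lambda$.

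To construct $\chi$, given $T\in\mathrm{Tab}(\alpha,\lambda;\beta)$ with inner shape $\gamma$, I would split $T$ into the inner LRS tableau $T^{\mathrm{in}}$ of shape $\gamma$ (content $\lambda$ in $X'$) and the outer Yamanouchi tableau $T^{\mathrm{out}}$ of shape $\beta/\gamma$ (content $\alpha$ in $X$). The tableau $T^{\mathrm{in}}$, via the Stembridge-style bijection underlying $b_{\emptyset\lambda}^{\gamma}$, determines a canonical shifted tableau $U_{0}$ of shape $\lambda$; the tableau $T^{\mathrm{out}}$, being Knuth-equivalent to $w_{\alpha}$, records how $\alpha$ sits inside $\beta$. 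I would then set $\chi(T):=U_{0}$ and verify that the condition ``$\mathrm{wread}(U_{0})\ast w_{\alpha}$ is a Yamanouchi word of content $\beta$'' is equivalent to the conjunction (i) $T^{\mathrm{in}}$ is LRS of content $\lambda$ and (ii) $T^{\mathrm{out}}$ is Yamanouchi of content $\alpha$ on the skew shape $\beta/\gamma$. For the inverse, given $U\in\mathrm{Tab}'(\alpha,\lambda;\beta)$, the content of $\mathrm{wread}(U)$ is forced to be $\beta-\alpha$, and $\gamma$ is recovered from the breakpoint in $\mathrm{wread}(U)\ast w_{\alpha}$ at which the partial letter counts first coincide with $\alpha$; from $\gamma$ and $U$ one reconstructs $T^{\mathrm{in}}$ by the inverse Stembridge bijection and $T^{\mathrm{out}}$ as the Yamanouchi filling of $\beta/\gamma$ compatible with $w_{\alpha}$.

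The main obstacle is the compatibility verification in the second step: showing that under $T^{\mathrm{in}}\mapsto U_{0}$ the weak reading word $\mathrm{wread}(U_{0})$ is, up to plactic equivalence, precisely the word obtained from $\mathrm{read}(T^{\mathrm{in}})$ via the $\widehat{\cdot}$ transform, so that concatenating with the Yamanouchi contribution from $T^{\mathrm{out}}$ (through its equivalence with $w_{\alpha}$) yields the lattice property on $\mathrm{wread}(U_{0})\ast w_{\alpha}$. A cleaner strategy may be to bypass the shape-$\lambda$ Stembridge bijection altogether and work directly at the word level: use the equivalence ``LRS iff $\widehat{w}w$ Yamanouchi'' to repackage $\mathrm{read}(T)$ into a Yamanouchi word over $X'$, and then read off $U$ from it. In either route, bijectivity should reduce to a content-counting argument combined with the lattice/LRS equivalence highlighted in the remark, making the explicit realization of the identity $b_{\alpha\lambda}^{\beta}=\sum_{\gamma}b_{\emptyset\lambda}^{\gamma}a_{\gamma\alpha}^{\beta}$ a genuine combinatorial bijection.
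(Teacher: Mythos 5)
There is a genuine gap, and it is in the construction itself rather than in the ``compatibility verification'' you flag as the main obstacle. You define $\chi(T):=U_{0}$ where $U_{0}$ is produced from the inner LRS tableau $T^{\mathrm{in}}$ alone, discarding the outer Yamanouchi filling $T^{\mathrm{out}}$ of $\beta/\gamma$. For a fixed inner shape $\gamma$ there are $a_{\gamma\alpha}^{\beta}$ choices of $T^{\mathrm{out}}$, so your map is $a_{\gamma\alpha}^{\beta}$-to-one and cannot be injective whenever $a_{\gamma\alpha}^{\beta}>1$. Moreover the image lands in the wrong set: a tableau $U_{0}$ obtained from the shape-$\gamma$ inner tableau via the $\alpha=\emptyset$ correspondence has content $\gamma$, whereas any $U\in\mathrm{Tab}'(\alpha,\lambda;\beta)$ is forced (by the requirement that $\mathrm{wread}(U)\ast w_{\alpha}$ be Yamanouchi of content $\beta$, with $w_{\alpha}$ contributing exactly $\alpha_{i}$ letters $i$) to have content $\beta-\alpha$, and $\gamma\neq\beta-\alpha$ in general. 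The proposed inverse (recovering $\gamma$ from a ``breakpoint'' in $\mathrm{wread}(U)\ast w_{\alpha}$) cannot reconstruct $T^{\mathrm{out}}$, since that data was never encoded. Separately, the opening count $\#\mathrm{Tab}(\alpha,\lambda;\beta)=b_{\alpha\lambda}^{\beta}$ is an equality of cardinalities and does not by itself produce a bijection; also note that invoking ``the Stembridge-style bijection underlying $b_{\emptyset\lambda}^{\gamma}$'' is essentially assuming the $\alpha=\emptyset$ case of this very lemma.

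The paper avoids both problems by threading \emph{all} of $T$ through a single word: it standardizes $T$ to $Q_{\mathrm{stand}}(\beta)$, applies reverse RSK to the pair $(P_{0}(\beta),Q_{\mathrm{stand}}(\beta))$ to obtain $w=w_{0}\ast w_{1}$, and sets $U:=P_{\mathrm{mix}}(w_{0})$. The suffix $w_{1}$ always inserts to the superstandard tableau $T_{\alpha}$, so the outer data is not lost --- it is absorbed into the bumping that determines $w_{0}$, and hence into $U$, which then has the correct content $\beta-\alpha$. The real technical work, which your plan does not address, is showing that $w_{0}$ decomposes into hook words $\tilde{w}_{1}\cdots\tilde{w}_{l(\lambda)}$ and that the LRS lattice property of $T^{\mathrm{in}}$ forces $\mathrm{shape}(P_{\mathrm{mix}}(w_{0}))=\lambda$ and the Yamanouchi property of $\mathrm{wread}(U)\ast w_{\alpha}$.
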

\begin{proof}
We construct a map 
$\chi:\mathrm{Tab}(\alpha,\lambda;\beta)\rightarrow\mathrm{Tab}'(\alpha,\lambda;\beta)$
in the following two steps.

\paragraph{Step 1}
We perform a standardization on $T\in\mathrm{Tab}(\alpha,\lambda;\beta)$ 
as follows.
Note that the region $\gamma\subseteq\beta$ is formed by contents whose reading 
word is an LRS word.
We enumerate boxes in $\gamma\subseteq\beta$ by $1,2\ldots,|\lambda|$ 
(see Step 1 for the construction of a bijection from $T$ to $\epsilon^{+}(T)$
in Section \ref{sec:SSIDTplus}).
Successively, we enumerate boxes in $\beta/\gamma$ by 
$|\lambda|+1,\ldots,|\beta|$ according to the same rule 
as in the region $\gamma$.
We denote by $Q_{\mathrm{stand}}(\beta)$ the obtained standard tableau
by the above operation.
We denote by $P_{0}(\beta)$ a unique semistandard tableau of shape $\beta$, 
weight $\beta$ and without primes.
By reversing the RSK algorithm, we obtain a word
$w:=\mathrm{RSK}^{-1}(P_{0}(\beta),Q_{\mathrm{stand}}(\beta))$.

\paragraph{Step 2}
We split the word $w$ into two words $w=w_{0}\ast w_{1}$ such that 
the length $w_{0}$ (resp. $w_{1}$) is $|\lambda|$ (resp. $|\alpha|$).
Then, we define a tableau $U$ of shape $\lambda$ by the mixed 
insertion $U:=P_{\mathrm{mix}}(w_{0})$.

We claim:
\begin{enumerate}[(S1)]
\setcounter{enumi}{2}
\item The word $w$ is a Yamanouchi word.
\item A word $w_{1}$ satisfies $P_{\mathrm{RSK}}(w_{1})=T_{\alpha}$.
\item A tableau $U=P_{\mathrm{mix}}(w_{0})$ is of shape $\lambda$ and 
a concatenation $\mathrm{wread}(U)\ast w_{\alpha}$ is a Yamanouchi word.
\end{enumerate}

Reversing the RSK insertion means that we insert an element $x$ into 
the upper row and bump out the largest element which is 
less than $x$.
From the definition of $P_{0}(\beta)$, the contents of the $i$-th row 
of $P_{0}(\beta)$ are all integer $i$.
Therefore, if we bump out the $m$-th $x$, $1\le x\le l(\beta)$, as an output, 
we have already bumped out at least $m$ $y$'s for all $1\le y\le x-1$.
This implies that the word $w$ is Yamanouchi, that is, 
the statement (S3) holds true.

Given a partition $\alpha$ of length $l$, we consider a sequence of partitions 
$\emptyset=\hat{\alpha}_{0}\subset\hat{\alpha}_{1}\subset\hat{\alpha}_{2}
\ldots\subset\hat{\alpha}_{l}=\alpha$ 
given by $\hat{\alpha}_{i}:=(\alpha_{l-i+1},\alpha_{l-i+2},\ldots,\alpha_{l})$.
Note that the skew shape $\hat{\alpha}_{i}/\hat{\alpha}_{i-1}$ is a horizontal 
strip of length $\alpha_{i}$.
We enumerate boxes in $\alpha$ by $|\alpha|,|\alpha|-1,\ldots,1$ 
in the following way.
For $1\le i\le l$, we put integers from $|\alpha|-\sum_{j=1}^{i-1}\alpha_{i}$ 
to $|\alpha|-\sum_{j=1}^{i}\alpha_{i}+1$ on the region 
$\hat{\alpha}_{i}/\hat{\alpha}_{i-1}$ from left to right.
We denote by $T_{\mathrm{rev}}$ a tableau of shape $\alpha$ constructed above. 
The tableau $T_{\mathrm{rev}}$ is standard with respect to the reversed order 
of letters.
A word $u:=u_{1}u_{2}\ldots u_{|\alpha|}$ is obtained from $T_{\mathrm{rev}}$ by 
setting $u_{i}=j$ where $j$ is the number of row of $i$ in $T_{\mathrm{rev}}$.
For example, when $\alpha=(4,2,1)$,
$T_{\mathrm{rev}}$ is given by 
\begin{eqnarray*}
T_{\mathrm{rev}}=
\ 
\begin{matrix}
7 & 5 & 2 & 1 \\
6 & 3 \\
4 \\
\end{matrix},
\end{eqnarray*}
and $u=1123121$.
By construction, the word $u$ is Yamanouchi and $P_{\mathrm{RSK}}(u)=T_{\alpha}$.
Further, the recording tableau $Q_{\mathrm{RSK}}(u)$ is a standard tableau 
numbered from $1$ to $|\alpha|$ starting with the first row and from left to right
in a row.
It is enough to show that $w_{1}=u$.
Since the reading word for the shape $\beta/\gamma$ is Yamanouchi and the 
contents of the $i$-th row in $P_{0}(\beta)$ are $i$'s, 
it is obvious that we bump out the word $u$.
This implies that the statement (S4) is true.

Let $(P_{\mathrm{RSK}}(w_{0}),Q_{\mathrm{RSK}}(w_{0}))$ be a pair of 
a semistandard tableau of content $\beta/\alpha$ and a standard tableau of 
shape $\gamma$ for the word $w_0$.
Let $\gamma'$ and $\gamma''$ be a partition satisfying 
$\gamma',\gamma''\subseteq\gamma$.
Suppose that the shape $\gamma/\gamma'$ is a horizontal strip and we enumerate 
the boxes in $\gamma/\gamma'$ by $1,2,\ldots,l$ from left to right, where 
$l$ is the number of boxes in the strip.
Starting from the box with $l$, we perform the reversed RSK insertion 
on $P_{\mathrm{RSK}}(w_0)$.
Then, it is obvious that we obtain a word which is weakly increasing from
left to right.
Similarly, suppose that the shape $\gamma/\gamma^{\prime\prime}$ is 
a vertical strip and we enumerate the boxes in $\gamma/\gamma^{\prime\prime}$ 
by $1,2,\ldots,l$ from top to bottom.
By reversing the RSK insertion, we obtain a word which is strictly decreasing.
Therefore, if we write $w_{0}=\tilde{w}_{1}\tilde{w}_{2}\ldots\tilde{w}_{l(\lambda)}$ where 
the word $\tilde{w}_{i}$ is of length $\lambda_{i}$, 
the word $\tilde{w}_{i}$, $1\le i\le l(\lambda)$, is a hook word.
The word $\tilde{w}_{i}$ is written as a concatenation of two words, 
$\tilde{w}_{i}=(\tilde{w}_{i}\downarrow)\ast (\tilde{w}_{i}\uparrow)$.
Recall that a tableau $T$ in $\mathrm{Tab}(\alpha,\lambda;\beta)$ satisfies the lattice 
property, {\it i.e.}, $\mathrm{read}(T)$ inside $\gamma$ is an LRS word.
The length of the word $(\tilde{w}_{i}\downarrow)$ (resp. $(\tilde{w}_{i}\uparrow)$) 
is the number of $i'$'s (resp. $i$'s) in $T$ plus (resp. minus) one.
The lattice property also implies the following: 
1) the rightmost $i$ in $T$ is in the same column of or left to the rightmost $i-1$, 
2) the rightmost $i$ in $T$ is placed in a lower row than that of the 
rightmost $i-1$, and 
3) the number of $i$ in $T$ is greater than or equal to that of $i+1$ in $T$.
By a similar argument to the case of a vertical strip, 1) and 2) imply that the last 
element $\tilde{w}_{i,\lambda_{i}}$ of $\tilde{w}_{i}$ is strictly decreasing with 
respect to $1\le i\le l(\lambda)$.
Also, 3) implies that the length of $(\tilde{w}_{i}\uparrow)$ is weakly decreasing
with respect to $1\le i\le l(\lambda)$.
Denote by $\tilde{l}_{i}$ the length of $(\tilde{w}_{i}\uparrow)$ for $1\le i\le l(\lambda)$.
We delete the rightmost $i$'s for $1\le i$ and obtain a new tableau $T'$.
We apply the same argument as above to $T'$.
Then, the $(\tilde{l}_{i}-j)$-th element of $(\tilde{w}_{i}\uparrow)$ is greater than 
$(\tilde{l}_{i+1}-j)$-th element of $(\tilde{w}_{i+1}\uparrow)$, {\it i.e.}, 
$(\tilde{w}_{i}\uparrow)_{\tilde{l}_{i}-j}>(\tilde{w}_{i+1}\uparrow)_{\tilde{l}_{i+1}-j}$
for $1\le j\le \tilde{l}_{i+1}$.
Let $j$ be the rightmost $(i+1)'$ or the leftmost $(i+1)$ if there is no $(i+1)'$ in $T$.
Since the rightmost $i$ and $j$ form a vertical strip of length two,  
$\tilde{w}_{i,\lambda_{i}}$ is greater than $\tilde{w}_{i+1,1}$.

We have a stronger constraint on the words $\tilde{w}_{i}$ and $\tilde{w}_{i+1}$.
Let $p$ be the position of the letter $(i+1)'$ in a tableau word $\mathrm{read}(T)$. 
Then, the lattice property for an LRS word implies the following: 
4) in the word $\mathrm{read}(T)$, the number of $i$ right to $p$ is strictly 
greater than the one of $i+1$ right to $p$.
Let $m_{j}$ be the number of elements in $(\tilde{w}_{i+1}\uparrow)$ which is 
equal to or greater than $(\tilde{w}_{i+1}\downarrow)_{j}$
for $1\le j\le l((\tilde{w}_{i+1}\downarrow))$.
The constraint 4) is rephrased:
the number of elements in $(\tilde{w}_{i}\uparrow)$ which is strictly greater than 
$(\tilde{w}_{i+1}\downarrow)_{j}$ is equal to or greater than $m_{j}$ plus one.
From these observations, we have 
that the shape of $P_{\mathrm{mix}}(w_{0})$ is $\lambda$.
To show that the statement (S5) is true, it is enough to 
show the following statement:
\begin{enumerate}
\item[(S$5'$)]
The concatenation of two words 
$\mathrm{wread}(P_{\mathrm{mix}}(x_{i}))$ and $y_{i}$ for 
$1\le i\le l(\lambda)$ is Yamanouchi 
where $x_{i}:=\tilde{w}_{1}\ldots\tilde{w}_{i}$ and 
$y_{i}:=\tilde{w}_{i+1}\ldots\tilde{w}_{l(\lambda)}w_{\alpha}$.
\end{enumerate}
We prove the above statement by induction.
When $i=1$, since the word $\tilde{w}_{1}$ is a hook word,
we have $\mathrm{wread}(P_{\mathrm{mix}}(\tilde{w}_{1}))=\tilde{w}_{1}$.
From (S3), we have 
$\mathrm{wread}(P_{\mathrm{mix}}(\tilde{w}_{1}))\ast y_{1}=\tilde{w}_{1}\ast y_{1}=w$
is Yamanouchi.
Suppose that the statement (S$5'$) is true for some $i$.
Note that the shape of $P_{\mathrm{mix}}(x_{i})$ is a shifted tableau 
$(\lambda_1,\ldots,\lambda_{i})$.
We insert the word $\tilde{w}_{i+1}$ into $P_{\mathrm{mix}}(x_{i})$ by the mixed 
insertion.
When we insert $x$ into a shifted tableau, we first bump out an smallest element $y$ which 
is strictly greater than $x$. Then, we insert $y$ into the next row (resp. column) if $y$ is
unprimed (resp. primed).
Therefore, in the weak reading word of a new tableau, 
the bumped element $y$ (or $y'$) appears left to the element $x$.
Suppose that a word $v$ is Yamanouchi and $v:=v_{2}\ast y\ast v_{1}\ast x\ast v_{0}$.
Note that $v_{0}$ is also Yamanouchi.
Then, we have that the concatenation of two words $\mathrm{wread}(P_{\mathrm{mix}}(v_{2}\ast y\ast v_{1}\ast x))$ 
and $\mathrm{read}(P_{\mathrm{RSK}}(v_{0}))$ stays Yamanouchi.
This implies that (S$5'$) is true.

The inverse of $\chi$, 
$\chi^{-1}:\mathrm{Tab}'(\alpha,\lambda;\beta)\rightarrow\mathrm{Tab}(\alpha,\lambda;\beta)$,
$U\mapsto T$, 
is given by the following two steps.

\paragraph{Step A}
Let $Q_{0}(\lambda)$ (resp. $Q_{0}(\alpha)$) be a standard tableau of 
shape $\lambda$ (resp. $\alpha$) by enumerating boxes 
in $\lambda$ (resp. $\alpha$) by $1,2\ldots,|\lambda|$ (resp. $1,2\ldots,|\alpha|$) 
from left to right in a row starting from the top row to bottom.
By reversing the insertions, we define 
$w_{0}:=\mathrm{mixRSK}^{-1}(U,Q_{0}(\lambda))$ and 
$w_{1}:=\mathrm{RSK}^{-1}(U_{\alpha},Q_{0}(\alpha))$.
We obtain $w$ by a concatenation of $w_{0}$ and $w_{1}$, that is, 
$w:=w_{0}\ast w_{1}$.

\paragraph{Step B}
Given a word $w$, we obtain a standard tableau $Q_{\mathrm{stand}}(\beta)$ of shape $\beta$
by the RSK insertion, $Q_{\mathrm{stand}}(\beta):=Q_{\mathrm{RSK}}(w)$.
We perform a destandardization on $Q_{\mathrm{stand}}(\beta)$ with respect to 
$\lambda$ and $\alpha$.
Here, destandardization is a reversed procedure of standardization. 
The boxes with contents from $1$ to $|\lambda|$ form a ordinary shape $\gamma\subseteq\beta$.
If we replace the contents from $\lambda_{i-1}+1$ to $\sum_{k=1}^{i}\lambda_{k}$ by 
$i'$ and $i$ for $1\le i\le l(\lambda)$, the reading word for the shape $\gamma$ is 
an LRS word.
Note that the destandardization is unique since the first occurrence of a letter 
$i'$ or $i$ is $i$ in an LRS word (see Section \ref{sec:tw}).
Similarly, we perform destandardization in the region $\beta/\gamma$ with 
respect to the content $\alpha$.
Then, we obtain a tableau $T\in\mathrm{Tab}(\alpha,\lambda;\beta)$.

By summarizing observations above, the map $\chi$ is a well-defined bijection. 
\end{proof}

\begin{example}
The product $s_{(3,2)}$ and $P_{(4,2,1)}$ contains $9s_{(5,4,2,1)}$, 
{\it i.e.}, $b_{(3,2)(4,2,1)}^{(5,4,2,1)}=9$.
One of them is given by the left picture in Figure \ref{fig:sPs1}.
Here, the integers $1,2$ and $3$ form an LRS word of content $(4,2,1)$
and the integers $4$ and $5$ form a Yamanouchi word of content $(3,2)$.
The right picture in Figure \ref{fig:sPs1} is the corresponding standard 
tableau $Q_{\mathrm{stand}}(\beta)$.
Then, reversing the RSK insertion, we obtain a word 
$w:=w_{0}\ast w_{1}=4233121\ast12211$.
The tableau $U$ and $T_{\alpha}$ are given by
\begin{eqnarray*}
U=P_{\mathrm{mix}}(w_{0})
=
\begin{matrix}
1 & 1 & 2' & 4' \\
  & 2 & 3' \\
  &   & 3 \\
\end{matrix}, 
\qquad\qquad
T_{\alpha}=P_{\mathrm{RSK}}(w_{1})=
\begin{matrix}
1 & 1 & 1 \\
2 & 2 \\
\end{matrix}.
\end{eqnarray*}
It is easy to check that $\mathrm{wread}(U)\ast w_{\alpha}=4233211\ast22111$
is Yamanouchi.
\begin{figure} 
\begin{eqnarray*}
\begin{matrix}
1' & 1 & 1 & 4 & 4 \\
1 & 2 & 4 & 5 \\
2 & 5 \\
3
\end{matrix}
\qquad\qquad
\begin{matrix}
1 & 3 & 4 & 9 & 10 \\
2 & 6 & 8 & 12 \\
5 & 11 \\
7
\end{matrix}
\end{eqnarray*}
\caption{A configuration for the product of $s_{(3,2)}$ and $P_{(4,2,1)}$ and its 
standardization}
\label{fig:sPs1}
\end{figure}
\end{example}

By exchanging roles of $\alpha$ and $\lambda$, we have another expression of 
Eqn.(\ref{eq:PPs21}).
Let $T$ be a shifted tableau of shape $\beta/\alpha$ and its reading word 
$\mathrm{read}(T)$ be an LRS word of content $\lambda$.
Let $U_{\alpha}$ be a tableau of ordinary shape $\alpha$ and 
$U$ be a shifted tableau of shape $\lambda$ and its weak reading word 
$\mathrm{wread}(U)$ satisfy that $\mathrm{read}(U_{\alpha})\ast\mathrm{wread}(U)$ 
is a Yamanouchi word of content $\beta$.
We construct a bijection $\chi'$ between a shifted tableau $T$ and a pair of 
$(U_{\alpha},U)$ in the following two steps.

\paragraph{Step 1}
Note that $\beta=\mathrm{shape}(T)\cup\alpha$ and $\alpha\subseteq\beta$.
We enumerate boxes in $\alpha\subseteq\beta$ by $1,2\ldots,|\alpha|$ from 
left to right in a row starting from the top row to bottom.
Then, we perform a standardization on $T$ by 
$|\alpha|+1, |\alpha|+2,\ldots,|\beta|$.
By reversing the RSK algorithm, we obtain a word $\tilde{w}$, {\it i.e.},  
$\tilde{w}:=\mathrm{RSK}^{-1}(P_{0}(\beta),Q_{\mathrm{stand}}(\beta))$.

\paragraph{Step 2}
We divide the word $\tilde{w}$ into a concatenation of two words 
$\tilde{w}=\tilde{w}_{0}\ast \tilde{w}_{1}$ such that the length 
of $\tilde{w}_0$ (resp. $\tilde{w}_{1}$) is $|\alpha|$ (resp. $|\lambda|$).
Finally, we define a shifted tableau $U$ of shape $\lambda$ by 
$U:=P_{\mathrm{mix}}(\tilde{w}_{1})$ and a tableau $U_{\alpha}$ of 
shape $\alpha$ by $U_{\alpha}:=P_{\mathrm{RSK}}(\tilde{w}_{0})$.

By summarizing the above discussion and a similar argument to 
Lemma \ref{lemma:PPs1}, we have 
\begin{lemma}
The map $\chi':T\mapsto(U_{\alpha},U)$ is a bijection.
\end{lemma}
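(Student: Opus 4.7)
The plan is to follow the template of the proof of Lemma~\ref{lemma:PPs1} with the roles of the LRS region and the Yamanouchi region interchanged: the skew shape $\beta/\alpha$ now carries the LRS structure (of content $\lambda$), and the Young diagram $\alpha$ now carries the ordinary semistandard content that becomes $U_{\alpha}$. I would first verify that $\chi'$ lands in the target set, then describe the explicit inverse, and conclude by reversibility of RSK and mixed RSK.

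For the forward direction, three properties must be checked. First, the word $\tilde{w}$ produced by reverse RSK from $(P_{0}(\beta),Q_{\mathrm{stand}}(\beta))$ is a Yamanouchi word of content $\beta$; this is immediate from the fact that $P_{0}(\beta)$ contains only the letter $i$ in its $i$-th row, so each reverse bump respects the lattice condition, exactly as in claim (S3) of the proof of Lemma~\ref{lemma:PPs1}. Second, $U_{\alpha}=P_{\mathrm{RSK}}(\tilde{w}_{0})$ has shape $\alpha$; this follows from the standard RSK correspondence because the recording tableau of $\tilde{w}_{0}$ is $Q_{\mathrm{stand}}(\beta)$ restricted to labels $1,\ldots,|\alpha|$, which by the enumeration rule fills exactly the Young diagram $\alpha$. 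Third, $U=P_{\mathrm{mix}}(\tilde{w}_{1})$ has shape $\lambda$, and $\mathrm{read}(U_{\alpha})\ast\mathrm{wread}(U)$ is Yamanouchi of content $\beta$. This is the analogue of claims (S5)--(S$5'$) in the proof of Lemma~\ref{lemma:PPs1}: decompose $\tilde{w}_{1}=\tilde{w}_{1}^{(1)}\ast\cdots\ast\tilde{w}_{1}^{(l(\lambda))}$ with $|\tilde{w}_{1}^{(i)}|=\lambda_{i}$, show each piece is a hook word, and use the LRS lattice property of $\mathrm{read}(T)$ to derive the inequalities between last letters of successive hook words and between their increasing parts needed for mixed insertion to place the $i$-th piece into the $i$-th row. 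The Yamanouchi property of the concatenation then follows by an induction on $i$ identical to the proof of (S$5'$), where inserting a letter into $P_{\mathrm{mix}}$ moves bumped letters leftward in the weak reading word and therefore preserves the lattice condition when combined with $\mathrm{read}(U_{\alpha})$ on the right.

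For the inverse direction, given $(U_{\alpha},U)$ I would set $\tilde{w}_{0}:=\mathrm{RSK}^{-1}(U_{\alpha},Q_{0}(\alpha))$, $\tilde{w}_{1}:=\mathrm{mixRSK}^{-1}(U,Q_{0}(\lambda))$, $\tilde{w}:=\tilde{w}_{0}\ast\tilde{w}_{1}$, and $Q_{\mathrm{stand}}(\beta):=Q_{\mathrm{RSK}}(\tilde{w})$. Destandardizing $Q_{\mathrm{stand}}(\beta)$ according to the natural enumeration on $\alpha$ and the LRS convention on $\beta/\alpha$ yields a shifted tableau $T$ on $\beta/\alpha$ whose reading word is LRS of content $\lambda$; the destandardization on $\beta/\alpha$ is unique because the first occurrence of $i$ or $i'$ in an LRS word is always $i$, as recalled in Section~\ref{sec:tw}. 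That the two maps are mutual inverses then follows from the invertibility of RSK and mixed RSK, together with the fact that by construction $Q_{\mathrm{RSK}}(\tilde{w}_{0})=Q_{0}(\alpha)$ and $Q_{\mathrm{mix}}(\tilde{w}_{1})=Q_{0}(\lambda)$.

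The main obstacle is the hook-word analysis of $\tilde{w}_{1}$ in the third point above. Because $\beta/\alpha$ is a genuine skew shape rather than a straight Young diagram, one must carefully track how the reverse RSK trajectory initiated from boxes of $\beta/\alpha$ (which may bump upward through rows already occupied by $\alpha$) still preserves the hook-word decomposition induced by the LRS reading of $T$, so that the last letters of the pieces $\tilde{w}_{1}^{(i)}$ form a strictly decreasing sequence and the strip inequalities derived from the lattice property survive. The argument is not essentially harder than in Lemma~\ref{lemma:PPs1}, but the skew-shape bookkeeping is what makes it non-trivial.
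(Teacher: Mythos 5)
Your proposal is correct and follows essentially the same route as the paper, which itself only asserts the lemma ``by summarizing the above discussion and a similar argument to Lemma~\ref{lemma:PPs1}''; you have simply made explicit the three verifications (Yamanouchi-ness of $\tilde{w}$, the shape of $U_{\alpha}$ from the restricted recording tableau, and the hook-word analysis giving the shape of $U$ and the Yamanouchi property of $\mathrm{read}(U_{\alpha})\ast\mathrm{wread}(U)$) together with the inverse via $\mathrm{RSK}^{-1}$ and $\mathrm{mixRSK}^{-1}$. Your identification of the skew-shape bookkeeping in the reverse-RSK trajectories as the only genuinely new technical point relative to Lemma~\ref{lemma:PPs1} is accurate.
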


\begin{proof}[Proof of Theorem \ref{thrm:PPs2}]
If we set $\alpha=\emptyset$ in Lemma \ref{lemma:PPs1}, we 
obtain $e_{\emptyset\mu}^{\alpha}=b_{\emptyset\mu}^{\alpha}$, 
which is equivalent 
to $e_{\emptyset\mu}^{\alpha}=|\mathrm{Tab}^{\prime}(\emptyset,\mu;\alpha)|$.
From the definition of $\mathrm{Tab}^{\prime}(\emptyset,\mu;\alpha)$, a shifted 
tableau of shape $\mu$ has the content $\alpha$ and its weak reading word 
is a Yamanouchi word of content $\alpha$.
We have 
\begin{eqnarray*}
e_{\lambda\mu}^{\alpha}=\sum_{\gamma}b_{\gamma\lambda}^{\alpha}e_{\emptyset\mu}^{\gamma}.
\end{eqnarray*}
By applying Lemma \ref{lemma:PPs1} to $b_{\gamma\lambda}^{\alpha}$, we have 
\begin{eqnarray*}
e_{\lambda\mu}^{\alpha}=\sum_{\gamma}
|\mathrm{Tab}^{\prime}(\gamma,\lambda;\alpha)|\cdot|\mathrm{Tab}^{\prime}(\emptyset,\mu;\gamma)|,
\end{eqnarray*}
which implies Theorem \ref{thrm:PPs2} is true.
\end{proof}

\begin{theorem}
\label{thrm:PSP}
We have
\begin{eqnarray*}
f_{\lambda\alpha}^{\mu}=\#\{T\in T''(\mu/\lambda;\alpha)\ \vert\  
\mathrm{wread}(T) \text{ is a Yamanouchi word}\}.
\end{eqnarray*}
\end{theorem}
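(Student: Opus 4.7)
The plan is to combine an algebraic reduction to the case $\lambda=\emptyset$ with a bijective argument in the spirit of Lemma~\ref{lemma:PPs1}. Starting from the defining identity $P_\lambda\hat S_\alpha=\sum_\mu f_{\lambda\alpha}^\mu P_\mu$ and the orthonormality $[P_\lambda,Q_\mu]=\delta_{\lambda\mu}$, I would use the adjoint property of multiplication by $P_\lambda$, namely $[P_\lambda f,Q_\mu]=[f,Q_{\mu/\lambda}]$, together with the expansion $Q_{\mu/\lambda}=\sum_\nu d_{\lambda\nu}^\mu Q_\nu$, to obtain
\[
f_{\lambda\alpha}^\mu \;=\; [\hat S_\alpha,\,Q_{\mu/\lambda}] \;=\; \sum_\nu d_{\lambda\nu}^\mu\,f_{\emptyset\alpha}^\nu.
\]
By Theorem~\ref{thrm:StemPP}, the first factor $d_{\lambda\nu}^\mu$ is the number of shifted skew tableaux $U\in T'(\mu/\lambda;\nu)$ whose reading word is an LRS word, so only the base case $f_{\emptyset\alpha}^\nu$ remains to be interpreted combinatorially.

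For that base case, I would separately prove that $f_{\emptyset\alpha}^\nu$ equals the number of $V\in T''(\nu;\alpha)$ with Yamanouchi weak reading word. Using the determinantal expansion
\[
\hat S_\alpha \;=\; \sum_{\pi\in\mathcal{S}_l}\mathrm{sgn}(\pi)\,q_{\pi\ast\alpha}
\]
from (\ref{eqn:Sdet}) together with $q_n=Q_{(n)}$ and successive Pieri rules, one can rewrite $f_{\emptyset\alpha}^\nu=\sum_{(\pi,V)}\mathrm{sgn}(\pi)$, where $V$ ranges over fillings of the shifted shape $\nu$ with content $\pi\ast\alpha$. A Gasharov-style sign-reversing involution---modelled on the one constructed for Theorem~\ref{thrm:sS}, but now operating on shifted tableaux with the diagonal-priming freedom of $T''$ and using the shifted plactic relations---then collapses all contributions with $\pi\neq\mathrm{Id}$, leaving precisely the $V\in T''(\nu;\alpha)$ whose weak reading word is Yamanouchi.

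With both factors interpreted, the recursion becomes a sum over pairs $(U,V)$, and the remaining task is to construct a weight-preserving bijection
\[
\Psi:\bigsqcup_\nu\bigl\{(U,V)\bigr\}\longrightarrow \bigl\{T\in T''(\mu/\lambda;\alpha):\mathrm{wread}(T)\text{ Yam}\bigr\}.
\]
Following Lemma~\ref{lemma:PPs1}, $\Psi$ composes $U$ and $V$ by using the LRS labeling of $U$ to partition the boxes of $\mu/\lambda$ into classes indexed by the rows of $\nu$ and then substituting the $\alpha$-content of $V$ along each class; the inverse is implemented via standardization of $T$ followed by reverse mixed insertion along $\mu/\lambda$ and destandardization, exactly mirroring Steps~1 and 2 of the proof of Lemma~\ref{lemma:PPs1}. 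The main obstacle is the combinatorial well-definedness of $\Psi$ on a \emph{skew shifted} shape: one must verify that $\Psi(U,V)\in T''(\mu/\lambda;\alpha)$---with consistent placement of primed entries on the diagonal boxes that appear in $\mu/\lambda$---and that $\mathrm{wread}(\Psi(U,V))$ is Yamanouchi. This requires a skew-shifted analogue of Lemma~\ref{lemma:epplus2} together with careful bookkeeping of the shifted plactic relations of Theorem~\ref{thrm:splactic}, controlling the interplay between the hook-word structure arising from $U$ and the Yamanouchi structure transported from $V$ through the mixed insertion.
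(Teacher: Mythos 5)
Your opening moves coincide with the paper's: the identity $f_{\lambda\alpha}^{\mu}=[\hat S_{\alpha},Q_{\mu/\lambda}]$, the determinantal expansion of $\hat S_{\alpha}$ into signed $q_{\pi\ast\alpha}$'s, and a Gasharov-style sign-reversing involution. But you then diverge in an essential way: the paper runs the involution \emph{directly} on the skew shifted shape $\mu/\lambda$, so that the signed sum $\sum_{(\pi,T)\in\mathcal{M}}\mathrm{sgn}(\pi)$ over $T\in T''(\mu/\lambda;\pi\ast\alpha)$ collapses in one step to $|\mathcal{M}_0|$; the only new ingredient beyond the proof of Theorem~\ref{thrm:sS} is Stembridge's two-to-one map $\vartheta$ for handling non-detached strips. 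You instead insert the expansion $Q_{\mu/\lambda}=\sum_\nu d_{\lambda\nu}^{\mu}Q_\nu$, reduce to the straight-shape case $f_{\emptyset\alpha}^{\nu}$, and then must recombine via a bijection $\Psi$ from pairs $(U,V)$ to tableaux on $\mu/\lambda$.

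The genuine gap is $\Psi$. You correctly identify its well-definedness as ``the main obstacle,'' but you do not construct it: you describe what it should do and list the properties that would have to be verified (compatibility with the skew shifted shape, consistent priming on the diagonal boxes of $\mu/\lambda$, preservation of the Yamanouchi property of the weak reading word). That verification is not a routine transcription of Lemma~\ref{lemma:PPs1}: that lemma composes an LRS tableau on an \emph{ordinary} shape with a Yamanouchi word via RSK and mixed insertion, whereas here both the target shape $\mu/\lambda$ and the intermediate shape $\nu$ are shifted, the relevant word is the weak reading word $w'w$ rather than the reading word, and the diagonal boxes of $\mu/\lambda$ carry the extra $T''$ priming freedom that is responsible for the powers of $2$ in the Pieri rule. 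Getting the multiplicities to match (so that the count is exactly $\sum_\nu d_{\lambda\nu}^{\mu}f_{\emptyset\alpha}^{\nu}$ with no spurious factors of $2$ from diagonal primes) is precisely where such an argument can fail, and nothing in your sketch controls it. Note also that your base case is not actually easier than the general case: a straight shifted shape $\nu$ with $l(\nu)\ge 2$ already produces non-detached two-letter strips (consecutive diagonal boxes can carry $i$ and $i+1$), so the Stembridge two-to-one construction is needed there as well. Since the direct involution on $\mu/\lambda$ costs no more than your base case and dispenses with $\Psi$ entirely, the detour through straight shapes buys nothing and leaves the hardest step unproved.
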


\begin{proof}
Since $\hat{S}_{\alpha}$ has a determinant expression Eqn.(\ref{eqn:Sdet}), 
the coefficient $f_{\lambda\alpha}^{\mu}$ is rewritten as 
\begin{eqnarray*}
f_{\lambda\alpha}^{\mu}&=&\left[P_{\lambda}\hat{S}_{\alpha},Q_{\mu}\right] \\
&=&\left[\hat{S}_{\alpha},Q_{\mu/\lambda}\right] \\
&=&\sum_{\pi\in\mathcal{S}_{l}}\mathrm{sgn}(\pi)
\left[q_{\pi\ast\alpha},Q_{\mu/\lambda}\right].
\end{eqnarray*}
When $\beta=(n)$, by applying Theorem \ref{thrm:StemPP}, we have 
\begin{eqnarray*}
\left[q_{(n)},Q_{\mu/\lambda}\right]
=\#\{T\in T^{\prime\prime}(\mu/\lambda)\ \vert\ \mathrm{wread}(T)
\text{ is a Yamanouchi word of content } (n)\}.
\end{eqnarray*}
Thus, $f_{\lambda\alpha}^{\mu}$ is given by 
\begin{eqnarray*}
f_{\lambda\alpha}^{\mu}=\sum_{(\pi,T)\in\mathcal{M}}
\mathrm{sgn}(\pi),
\end{eqnarray*}
where 
\begin{eqnarray*}
\mathcal{M}
:=\{(\pi,T) \vert 
\pi\in\mathcal{S}_{l}, T\in T^{\prime\prime}(\mu/\lambda;\pi\ast\alpha)\}.
\end{eqnarray*}
We define the subset $\mathcal{M}_{0}$ of $\mathcal{M}$ by 
\begin{eqnarray*}
\mathcal{M}_{0}:=
\{(\pi,T)\in\mathcal{M} \vert \mathrm{wread}(T) \text{ is a Yamanouchi 
word of content } \alpha\}.
\end{eqnarray*}
We want to show that $f_{\lambda\alpha}^{\mu}=|\mathcal{M}_{0}|$. 
By replacing $\gamma/\alpha$ and $\mathcal{L}$ in the proof of Theorem
\ref{thrm:sS} by $\mu/\lambda$ and $\mathcal{M}$, 
we have a proof of Theorem.
The difference between Theorem \ref{thrm:sS} and Theorem \ref{thrm:PSP} 
is that we may have a skew tableau which is not detached.
We can construct an involution for non-detached shape by the following 
procedure developed by Stembridge (see Section 6 in \cite{Ste90}).
Suppose that a skew shape $T$ is a tableau formed by two letters 
and $T$ is not detached.
Without loss of generality, a tableau $T$ has $k-1$ main diagonals of 
length two and the $k$-th main diagonal is of length one.
The entries on the first main diagonal can be primed or unprimed.
We delete the first $k-1$ diagonals and let $S$ be a new tableau
and $a$ be the unique entry on the main diagonal in $S$.
After deletion of the diagonals, the entry $a$ can be either primed 
or unprimed.
Then, we choose one of the entries in the first main diagonal 
of $T$ and let $b$ be this entry.
We set $a$ primed (resp. unprimed) if $b$ is primed (resp. unprimed).
We have a two-to-one map $\vartheta:T\mapsto S$. 
One can perform involutions $\psi,\omega_{1,2}$ and $\omega_{1',2'}$
on $S$.
We have an one-to-two map by reversing the map $\vartheta$. 
Thus, we have a two-to-two involution on $T$.  
This completes the proof.
\end{proof}

\begin{theorem}
\label{thrm:SSs1}
Let $A\in T''(\alpha;\ast)$ and $B\in T''(\beta;\ast)$ be a shifted tableaux.
We denote $w(A):=\mathrm{wread}(A)$ and $w(B):=\mathrm{wread}(B)$. 
We have 
\begin{eqnarray*}
g_{\alpha\beta}^{\gamma}
=\#\{(w(A),w(B))\ \vert\  w(B)\ast w(A) \text{ is a Ymanouchi word of content }\gamma\}.
\end{eqnarray*}
\end{theorem}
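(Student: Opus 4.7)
The plan is to reduce the statement to Theorem \ref{thrm:sS} via a Schur-basis expansion of $\hat{S}_{\alpha}$ and then to construct a bijection in the spirit of the map $\chi$ of Lemma \ref{lemma:PPs1}. First I would use the $\alpha=\emptyset$ specialization of $s_{\alpha}\hat{S}_{\beta}=\sum_{\gamma} c_{\alpha\beta}^{\gamma}s_{\gamma}$ together with Theorem \ref{thrm:sS} to obtain the Schur expansion $\hat{S}_{\alpha}=\sum_{\eta}c_{\emptyset\alpha}^{\eta}s_{\eta}$, then multiply by $\hat{S}_{\beta}$ and expand each $s_{\eta}\hat{S}_{\beta}$ in the Schur basis, yielding the key decomposition $g_{\alpha\beta}^{\gamma}=\sum_{\eta}c_{\emptyset\alpha}^{\eta}\,c_{\eta\beta}^{\gamma}$.

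Next I would interpret each factor combinatorially via Theorem \ref{thrm:sS}: $c_{\emptyset\alpha}^{\eta}$ counts shifted-type tableaux $A_{0}\in T''(\eta;\alpha)$ whose weak reading word is Yamanouchi, and $c_{\eta\beta}^{\gamma}$ counts $B_{0}\in T''(\gamma/\eta;\beta)$ whose weak reading word is Yamanouchi. Placing $A_{0}$ on the sub-shape $\eta\subseteq\gamma$ and $B_{0}$ on $\gamma/\eta$, the sum counts single fillings of $\gamma$ split into two Yamanouchi regions, one of content $\alpha$ and the other of content $\beta$.

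The final step is to biject these two-region fillings with pairs $(A,B)\in T''(\alpha;*)\times T''(\beta;*)$ such that $w(B)\ast w(A)$ is Yamanouchi of content $\gamma$. The construction should parallel $\chi$ of Lemma \ref{lemma:PPs1}: standardize the combined $\gamma$-tableau by assigning labels $1,\ldots,|\alpha|$ to $A_{0}$'s region and $|\alpha|+1,\ldots,|\gamma|$ to $B_{0}$'s region (via the standardization rule of Section \ref{sec:SSIDTplus}); apply $\mathrm{RSK}^{-1}$ to the pair $(P_{0}(\gamma),Q_{\mathrm{stand}}(\gamma))$ to obtain a word $w$; split it as $w=w_{A}\ast w_{B}$ by the label ranges; and destandardize each piece to produce $A$ of shape $\alpha$ and $B$ of shape $\beta$.

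The main obstacle will be verifying this construction: that the recovered $A,B$ have the prescribed ordinary shapes, that $w(B)\ast w(A)$ is indeed Yamanouchi of content $\gamma$, and that the inverse map (via forward RSK on $w_{A}\ast w_{B}$ followed by region-splitting and destandardization) recovers $(A_{0},B_{0})$. The analysis will mirror the Yamanouchi-propagation arguments and the invertibility proof of Lemma \ref{lemma:PPs1} (in particular claims (S3)--(S5) and Steps A and B), but in a symmetric setting where both regions of the $\gamma$-tableau carry a Yamanouchi condition rather than the Yamanouchi/LRS pair of Lemma \ref{lemma:PPs1}; a symmetric form of the propagation argument will therefore be required on each half of the decomposition.
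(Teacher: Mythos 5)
Your algebraic skeleton is exactly the paper's: the identity $g_{\alpha\beta}^{\gamma}=\sum_{\eta}c_{\emptyset\alpha}^{\eta}\,c_{\eta\beta}^{\gamma}$ is precisely the decomposition used in the paper's proof (there written as $g_{\alpha\beta}^{\gamma}=\sum_{\zeta}G_{\emptyset\alpha}^{\zeta}G_{\zeta\beta}^{\gamma}$, with $G=c$). The divergence is in how each factor is modelled combinatorially, and this is where your plan acquires a genuine gap. The paper does not use the Theorem~\ref{thrm:sS} model (tableaux of skew shape $\gamma/\eta$ with content $\beta$) for the factors. Instead it first proves Lemma~\ref{lemma:SSs1}: writing $G_{\eta\beta}^{\gamma}=\langle s_{\beta},\hat{S}_{\gamma/\eta}\rangle$ and expanding $\hat{S}_{\gamma/\eta}$ by its Jacobi--Trudi determinant, the same Gasharov/Stembridge sign-reversing involution that proves Theorem~\ref{thrm:sS} yields the \emph{dual} count: $G_{\eta\beta}^{\gamma}$ is the number of tableaux $T$ of straight shape $\beta$ such that $\mathrm{wread}(T)\ast w_{\eta}$ is Yamanouchi of content $\gamma$. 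With the factors already expressed on the shapes $\alpha$ and $\beta$ of the multiplicands, the theorem follows in two lines from the observation that a suffix of a Yamanouchi word is Yamanouchi (and that only the content of the suffix matters for the prefix condition). No shape-conversion bijection is needed anywhere.

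By contrast, your route makes the explicit bijection from pairs $(A_{0},B_{0})$ of shapes $\eta$ and $\gamma/\eta$ to pairs $(A,B)$ of shapes $\alpha$ and $\beta$ carry the entire weight of the proof, and you leave it unconstructed. This is not a routine adaptation of the map $\chi$ of Lemma~\ref{lemma:PPs1}: that map and its verification (claims (S3)--(S5)) are tailored to \emph{ordinary reading words satisfying the LRS condition} on $T'$-type tableaux, whereas here both regions carry a \emph{Yamanouchi condition on weak reading words} of $T''$-type tableaux, whose diagonal entries may be primed and whose weak reading word interleaves the reversed primed letters with the unprimed ones. You would need to show that standardization followed by $\mathrm{RSK}^{-1}$, splitting, and destandardization produces tableaux of the exact shapes $\alpha$ and $\beta$, remains within $T''$, and transforms the two regional Yamanouchi conditions into the single condition that $w(B)\ast w(A)$ is Yamanouchi of content $\gamma$ --- none of which follows from the cited template. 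The cleaner repair is to prove the dual count for $c_{\eta\beta}^{\gamma}$ directly by rerunning the involution of Theorem~\ref{thrm:sS} on $\langle s_{\beta},\hat{S}_{\gamma/\eta}\rangle$, which is what the paper does.
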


Let $\alpha,\beta$ and $\gamma$ be ordinary partitions. 
We define $w_{\alpha}$ as the reading word $\mathrm{read}(T_{\alpha})$ where 
$T_{\alpha}$ is a unique tableau whose shape and weight are $\alpha$.
\begin{lemma}
\label{lemma:SSs1}
Let $G_{\alpha\beta}^{\gamma}$ be an expansion coefficient of $s_{\alpha}\hat{S}_{\beta}$
in terms of $s_{\gamma}$, {\it i.e.}, 
$s_{\alpha}\hat{S}_{\beta}=\sum_{\gamma}G_{\alpha\beta}^{\gamma}s_{\gamma}$.
Then, we have 
\begin{eqnarray}
\label{eq:sSs1}
G_{\alpha\beta}^{\gamma}
=\#\{T\in T''(\beta;\gamma/\alpha)\ \vert \ \mathrm{wread}(T)\ast w_{\alpha} 
\text{ is a Yamanouchi word of content }\gamma\}.
\end{eqnarray}
\end{lemma}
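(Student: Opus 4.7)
Since the coefficients $G_{\alpha\beta}^{\gamma}$ and $c_{\alpha\beta}^{\gamma}$ are both defined as the coefficient of $s_\gamma$ in $s_\alpha \hat{S}_\beta$, we have $G_{\alpha\beta}^{\gamma}=c_{\alpha\beta}^{\gamma}$, and Theorem \ref{thrm:sS} gives
$$G_{\alpha\beta}^{\gamma}=\#\{T\in T''(\gamma/\alpha;\beta): \mathrm{wread}(T) \text{ is a Yamanouchi word}\}.$$
Thus, the plan is to establish a bijection
$$\Phi:\{T\in T''(\gamma/\alpha;\beta): \mathrm{wread}(T) \text{ Yam}\}\;\longrightarrow\;\{T'\in T''(\beta;\gamma/\alpha): \mathrm{wread}(T')\ast w_{\alpha} \text{ Yam of content }\gamma\},$$
modeled on the map $\chi$ in the proof of Lemma \ref{lemma:PPs1}.

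Given $T$ in the domain, I would first standardize $T$ using the procedure of Section \ref{sec:SSIDTplus} (primed letters enumerated top-to-bottom, unprimed ones left-to-right), assigning labels $|\alpha|{+}1,\ldots,|\gamma|$ to its boxes. I then fill the inner shape $\alpha\subseteq\gamma$ by the canonical row-standardization of $T_\alpha$, producing a standard tableau $Q_{\mathrm{stand}}(\gamma)$ of ordinary shape $\gamma$. Reverse RSK against the unique semistandard tableau $P_0(\gamma)$ of shape and content $\gamma$ yields a word $w=\mathrm{RSK}^{-1}(P_0(\gamma),Q_{\mathrm{stand}}(\gamma))$, which I split as $w=w_0\ast w_1$ with $|w_0|=|\beta|$ and $|w_1|=|\alpha|$. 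Defining $T':=P_{\mathrm{mix}}(w_0)$ and destandardizing by the original content of $T$ gives the desired image $\Phi(T)$.

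The three claims requiring verification are: (a) $w_1=w_\alpha$, which is immediate because the inner-$\alpha$ boxes are canonically standardized and reverse RSK reproduces the reading word of $T_\alpha$; (b) $P_{\mathrm{mix}}(w_0)$ has shape $\beta$, which follows by translating the Yamanouchi property of $\mathrm{wread}(T)$ through standardization into a hook-word/lattice structure on $w_0$ governing the bumping, mirroring the analysis of $\tilde{w}_i=(\tilde{w}_i{\downarrow})\ast(\tilde{w}_i{\uparrow})$ in the proof of Lemma \ref{lemma:PPs1}; and (c) $\mathrm{wread}(T')\ast w_\alpha$ is Yamanouchi of content $\gamma$, which is equivalent to $w$ itself being Yamanouchi. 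Claim (c) holds because $P_0(\gamma)$ has row $i$ filled entirely with the letter $i$, so each reverse-RSK bumping emits exactly the letter labelling the row from which it was bumped, forcing the lattice property on $w$ from right to left.

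The inverse map is obtained symmetrically: starting from $T'$, apply $\mathrm{mixRSK}^{-1}(T',Q_0(\beta))$ and the standard recording on $T_\alpha$, concatenate to form $w=w_0\ast w_1$, and RSK-insert $w$ to recover $Q_{\mathrm{stand}}(\gamma)$, whose destandardization is well defined because the first occurrence of any letter $i$ or $i'$ in a Yamanouchi-type word is always $i$ (Section \ref{sec:tw}). The main obstacle I anticipate is the treatment of primed letters in $T\in T''$, including those allowed on the main diagonal, which goes beyond Lemma \ref{lemma:PPs1}: one must verify that standardization/destandardization and the reverse-insertion steps preserve primed/unprimed statuses correctly and that the various Yamanouchi conditions remain compatible throughout, with case analysis parallel to Lemmas \ref{lemma:epplus1}--\ref{lemma:epplus2}.
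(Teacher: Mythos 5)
Your reduction to Theorem \ref{thrm:sS} via $G_{\alpha\beta}^{\gamma}=c_{\alpha\beta}^{\gamma}$ is correct, and in principle a shape--content-swapping bijection between $\{T\in T''(\gamma/\alpha;\beta)\}$ and $\{T'\in T''(\beta;\gamma/\alpha)\}$ would finish the proof. The fatal problem is your claim (b): you set $T':=P_{\mathrm{mix}}(w_{0})$ and assert it has shape $\beta$. The mixed insertion always produces a shifted tableau whose shape is a \emph{strict} partition, whereas $\beta$ here is an arbitrary ordinary partition (take $\beta=(2,2)$), and the elements of $T''(\beta;\gamma/\alpha)$ are marked fillings of the ordinary diagram $\beta$. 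So the proposed map cannot land in the target set. The analogy with Lemma \ref{lemma:PPs1} breaks down at exactly this point: there the inner region of $T$ carries an LRS word whose content is a strict partition $\lambda$, and it is the LRS/hook-word structure of $w_{0}$ that forces $P_{\mathrm{mix}}(w_{0})$ to have shifted shape $\lambda$; here the skew region carries content $\beta$ subject only to an ordinary Yamanouchi condition on $\mathrm{wread}(T)$, which gives no such hook-word control and is in any case incompatible with the output of $P_{\mathrm{mix}}$. To push a bijective argument through you would need an insertion algorithm adapted to $\hat{S}$-tableaux (marked fillings of ordinary shapes with primes allowed on the diagonal), which the paper does not supply.

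The paper avoids any such bijection. It uses the adjunction $\langle s_{\alpha}\hat{S}_{\beta},s_{\gamma}\rangle=\langle \hat{S}_{\beta},s_{\gamma/\alpha}\rangle$ together with Eqn.~(\ref{ex:sinPSinP}) to rewrite $G_{\alpha\beta}^{\gamma}=[\hat{S}_{\beta},\hat{S}_{\gamma/\alpha}]=\langle s_{\beta},\hat{S}_{\gamma/\alpha}\rangle$, expands $\hat{S}_{\gamma/\alpha}$ by its determinant formula, and reruns the sign-reversing involution from the proof of Theorem \ref{thrm:sS} with the roles of shape and content exchanged; the concatenation with $w_{\alpha}$ in the statement is precisely how the skew shape $\gamma/\alpha$ is recorded on the content side. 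If you wish to keep your route, you must either construct the missing $\hat{S}$-insertion or establish the equality of the two counts by this algebraic detour --- at which point the explicit bijection no longer does any work.
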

\begin{proof}
We have 
\begin{eqnarray*}
G_{\alpha\beta}^{\gamma}&=&\langle s_{\alpha}\hat{S}_{\beta},s_{\gamma}\rangle \\
&=&\langle \hat{S}_{\beta},s_{\gamma/\alpha}\rangle  \\
&=&\sum_{\mu}f_{\emptyset\beta}^{\mu}\langle P_{\mu},s_{\gamma/\alpha}\rangle \\
&=&\sum_{\mu}f_{\emptyset\beta}^{\mu}\left[ P_{\mu},\hat{S}_{\gamma/\alpha}\right] \\
&=&\left[\hat{S}_{\beta},\hat{S}_{\gamma/\alpha}\right]  \\
&=&\langle s_{\beta},\hat{S}_{\gamma/\alpha}\rangle.
\end{eqnarray*}
The skew $\hat{S}_{\gamma/\alpha}$ has a determinant expression:
\begin{eqnarray*}
\hat{S}_{\gamma/\alpha}=\det[q_{\gamma_{i}-\alpha_{j}-i+j}].
\end{eqnarray*}
By a similar argument to a proof of Theorem \ref{thrm:sS}, we obtain 
Eqn.(\ref{eq:sSs1}).

\end{proof}

\begin{proof}[Proof of Theorem \ref{thrm:SSs1}]
We want to expand the product $\hat{S}_{\alpha}\hat{S}_{\beta}$ in 
terms of Schur functions $s_{\gamma}$.
First, by setting $\alpha=\emptyset$ in Lemma \ref{lemma:SSs1},
we obtain $\hat{S}_{\alpha}=\sum_{\zeta}G_{\emptyset\alpha}^{\zeta}s_{\zeta}$.
Given $\zeta$, we obtain  from Lemma \ref{lemma:SSs1} that 
$s_{\zeta}\hat{S}_{\beta}=\sum_{\gamma}G_{\zeta\beta}^{\gamma}s_{\gamma}$.
Thus, the coefficient $g_{\alpha\beta}^{\gamma}$ is given by 
\begin{eqnarray}
\label{eqn:SSs1}
g_{\alpha\beta}^{\gamma}
=
\sum_{\zeta}G_{\emptyset\alpha}^{\zeta}
G_{\zeta\beta}^{\gamma},
\end{eqnarray}
where the sum is taken over all ordinary partitions $\zeta$'s.
Let $w$ be a Yamanouchi word of content $\gamma$. 
If we write $w$ as a concatenation of two word $w_{0}\ast w_{1}$,
then the word $w_{1}$ is again a Yamanouchi word of content $\zeta$ 
for some $\zeta\subseteq\gamma$.
Therefore, Eqn.(\ref{eqn:SSs1}) and Lemma \ref{lemma:SSs1} imply 
Theorem \ref{thrm:SSs1}.
\end{proof}

\begin{theorem}
\label{thrm:SSP}
Let $A\in T''(\alpha;\ast)$ and $B\in T''(\beta;\ast)$ be a shifted tableaux.
We have 
\begin{eqnarray*}
h_{\alpha\beta}^{\lambda}
=2^{l(\lambda)}\cdot\#\{(A,B)\ \vert \ \mathrm{read}(A)*\mathrm{read}(B) 
\text{ is an LRS word of content }\lambda\}.
\end{eqnarray*}
\end{theorem}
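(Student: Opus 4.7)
The plan is to compute $h_{\alpha\beta}^\lambda$ by iterating Theorem~\ref{thrm:PSP}. Since $\hat{S}_\beta = P_\emptyset\,\hat{S}_\beta$, applying Theorem~\ref{thrm:PSP} with $\lambda=\emptyset$ gives $\hat{S}_\beta = \sum_\mu f_{\emptyset\beta}^\mu P_\mu$, and a second application yields $\hat{S}_\alpha P_\mu = \sum_\nu f_{\mu\alpha}^\nu P_\nu$. Combining these,
\[
\hat{S}_\alpha\hat{S}_\beta = \sum_{\mu,\nu} f_{\emptyset\beta}^\mu\, f_{\mu\alpha}^\nu\, P_\nu,
\qquad\text{so}\qquad
h_{\alpha\beta}^\lambda = \sum_\mu f_{\emptyset\beta}^\mu\, f_{\mu\alpha}^\lambda.
\]
By Theorem~\ref{thrm:PSP} each factor has a tableau interpretation, so $h_{\alpha\beta}^\lambda$ counts pairs $(T_1,T_2)$ where $T_1\in T''(\mu;\beta)$ and $T_2\in T''(\lambda/\mu;\alpha)$ both have weak Yamanouchi reading words, summed over strict $\mu\subseteq\lambda$.

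The remaining task is to show this count equals $2^{l(\lambda)}\cdot \#\{(A,B)\}$ as in the theorem. The factor $2^{l(\lambda)}$ is accounted for by the $l(\lambda)$ diagonal cells of $\lambda$: these are partitioned between $T_1$ (inside $\mu$) and $T_2$ (inside $\lambda/\mu$), and in $T''$ each is independently primed or unprimed, whereas the LRS condition on $\mathrm{read}(A)\ast\mathrm{read}(B)$ forces the first occurrence of each letter $i\in\{1,\ldots,l(\lambda)\}$ to be unprimed. I would construct an explicit map sending $(T_1,T_2)$ to a triple $(A,B,\varepsilon)$, where $\varepsilon\in\{\text{pr},\text{un}\}^{l(\lambda)}$ records the diagonal-prime data and $(A,B)$ is read off by matching each cell of $\alpha$ and each cell of $\beta$ to a cell of $\lambda$ via the content labels of $T_2$ and $T_1$ respectively; the entries of $A,B$ are the row-indices of $\lambda$, with non-diagonal primes inherited from $T_2, T_1$.

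The key combinatorial input is the correspondence between LRS words and Yamanouchi words via the operation $w\mapsto \widehat{w}\,w$ recalled in Section~\ref{sec:tw}. Under this correspondence, the weak Yamanouchi condition on the $T_i$ translates into the lattice property for $\mathrm{read}(A)\ast\mathrm{read}(B)$, while the first-occurrence-unprimed condition is precisely the piece of data encoded by $\varepsilon$ and absorbed into the factor $2^{l(\lambda)}$. For the inverse direction, I would apply a mixed-insertion-type algorithm to $\mathrm{read}(A)\ast\mathrm{read}(B)$ to reconstruct the inner shape $\mu$ and the pair $(T_1,T_2)$, with the recording data specifying which letters originated from $A$ versus $B$.

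The main obstacle is the explicit construction and verification of this bijection—specifically, confirming that the resulting $A$ and $B$ satisfy the semistandard conditions of $T''(\alpha;*)$ and $T''(\beta;*)$, and that the non-diagonal primes in $T_1,T_2$ translate correctly (under the shifted plactic relations of Theorem~\ref{thrm:splactic}) to primed entries of $A,B$ for which $\mathrm{read}(A)\ast\mathrm{read}(B)$ is an LRS word of content $\lambda$. This parallels the bijection $\chi$ in Lemma~\ref{lemma:PPs1} but is more intricate because both factors $\hat{S}_\alpha,\hat{S}_\beta$ are ``big'' Schur functions rather than one being a $P$-function.
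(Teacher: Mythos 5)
Your algebraic reduction is sound: iterating Theorem \ref{thrm:PSP} does give $h_{\alpha\beta}^{\lambda}=\sum_{\mu}f_{\emptyset\beta}^{\mu}f_{\mu\alpha}^{\lambda}$, so $h_{\alpha\beta}^{\lambda}$ correctly counts pairs $(T_{1},T_{2})$ with $T_{1}\in T''(\mu;\beta)$, $T_{2}\in T''(\lambda/\mu;\alpha)$ and weak Yamanouchi reading words. This is a genuinely different decomposition from the paper's, which instead computes $h_{\alpha\beta}^{\lambda}=[\hat{S}_{\alpha}\hat{S}_{\beta},Q_{\lambda}]=\sum_{\gamma}a_{\alpha\beta}^{\gamma}[\hat{S}_{\gamma},Q_{\lambda}]=2^{l(\lambda)}\sum_{\gamma}a_{\alpha\beta}^{\gamma}b_{\lambda}^{\gamma}$. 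The crucial advantage of the paper's route is that the factor $2^{l(\lambda)}$ appears for free from the pairing with $Q_{\lambda}=2^{l(\lambda)}P_{\lambda}$, so the combinatorial work (Lemma \ref{lemma:SSP1}) only has to exhibit a genuine one-to-one correspondence between $\bigcup_{\gamma}\mathrm{Tab}^{\gamma}(\lambda)\times\mathrm{Tab}^{\gamma}(\alpha,\beta)$ and the set of pairs $(A,B)$, built from standardization, reverse RSK, splitting the word, RSK, and destandardization.

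The genuine gap in your version is that the bijection carrying all the content is only described in outline, and the one concrete mechanism you do commit to --- that the $2^{l(\lambda)}$ arises because the $l(\lambda)$ diagonal cells of $\lambda$, distributed between $T_{1}$ and $T_{2}$, may be independently primed or unprimed without affecting anything else --- is unjustified and in fact delicate. By the definition of $\mathrm{wread}$ in Section \ref{sec:tw}, toggling the prime on a diagonal entry moves that letter from the $w$-part to the $w'$-part of $w'w$, i.e.\ it changes the position at which the letter occurs in the weak reading word, so it can a priori change whether $\mathrm{wread}(T_{i})$ is Yamanouchi. You would therefore need to prove both that the set of admissible pairs $(T_{1},T_{2})$ is closed under all $2^{l(\lambda)}$ diagonal re-primings and that the associated pair $(A,B)$ is independent of them; neither is established. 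Likewise, the assertions that the constructed $A$ and $B$ are semistandard shifted tableaux in $T''(\alpha;\ast)\times T''(\beta;\ast)$ and that $\mathrm{read}(A)\ast\mathrm{read}(B)$ is an LRS word of content $\lambda$ are exactly the analogues of the two well-definedness lemmas surrounding the map $\theta$ in the paper, and you have deferred them entirely. As written, the proposal establishes $h_{\alpha\beta}^{\lambda}=\#\{(T_{1},T_{2})\}$ but not the stated equality with $2^{l(\lambda)}\cdot\#\{(A,B)\}$.
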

Let $\mathrm{Tab}^{\gamma}(\lambda)$ be the set of semistandard shifted tableaux
of shape $\gamma$ such that alphabets from $1$ to $l(\lambda)$ form an LRS word 
of content $\lambda$.
Similarly, let $\mathrm{Tab}^{\gamma}(\alpha,\beta)$ be the set of 
semistandard shifted tableaux  of shape $\gamma$ without primed letters such that 
alphabets from $1$ to $l(\alpha)$ (resp. from $l(\alpha)+1$ to 
$l(\alpha)+l(\beta)$) form a Yamanouchi word of content $\alpha$ (resp. $\beta$).
Let $\mathrm{Tab}^{\prime\prime}(\alpha,\beta;\gamma)$ be the set of pairs of 
shifted tableaux:
\begin{eqnarray*}
\mathrm{Tab}^{\prime\prime}(\alpha,\beta;\lambda)
:=\{(A,B)\in T^{\prime\prime}(\alpha;\ast)\times T^{\prime\prime}(\beta;\ast)\ \vert \  
\mathrm{read}(A;B) \text{ is an LRS word of content }\lambda\}.
\end{eqnarray*}
where $\mathrm{read}(A;B):=\mathrm{read}(A)\ast\mathrm{read}(B)$.

We will construct a bijection between 
a pair of tableaux $(T,U)$ where $T\in\mathrm{Tab}^{\gamma}(\lambda)$ and 
$U\in\mathrm{Tab}^{\gamma}(\alpha,\beta)$ and a pair of tableaux $(A,B)$ where 
the concatenation of two words $\mathrm{read}(A)\ast \mathrm{read}(B)$ is 
an LRS word.
Namely, we define a map 
$\theta:\mathrm{Tab}^{\gamma}(\lambda)\times\mathrm{Tab}^{\gamma}(\alpha,\beta)
\rightarrow \mathrm{Tab}^{\prime\prime}(\alpha,\beta;\lambda), (T,U)\mapsto(A,B)$,  
as the following two steps.

\paragraph{Step 1}
We perform a standardization on $T$ and $U$, and obtain a pair of 
standard tableaux of shape $\gamma$, $(\mathrm{stand}(T),\mathrm{stand}(U))$.
Then, by reversing the RSK insertion, we obtain a word 
$w:=\mathrm{RSK}^{-1}(\mathrm{stand}(T),\mathrm{stand}(U))$.

\paragraph{Step 2}
We split the word $w$ into a concatenation of two words $w=w_{1}\ast w_{2}$
where $w_{1}$ (resp. $w_{2}$) is a word of length $|\alpha|$ (resp. $|\beta|$).
By the RSK insertion, we obtain a pair of standard tableaux of shape $\alpha$ 
and $\beta$, denote by 
$(\mathrm{stand}(A),\mathrm{stand}(B))=(P_{\mathrm{RSK}}(w_{1}),P_{\mathrm{RSK}}(w_{2}))$.
Finally, we perform a destandardization on the pair $(\mathrm{stand}(A),\mathrm{stand}(B))$,
and we obtain the pair $(A,B)$ of semistandard tableaux of shape $\alpha$ and $\beta$.
Here, destandardization means that if the box $b$ has a content $j$ in $\mathrm{stand}(T)$ 
and $b$ in $T$ has a letter $x=i$ or $i'$, we put $x$ on a box in $A$ or $B$ which has 
a content $j$ in $\mathrm{stand}(A)$ or $\mathrm{stand}(B)$. 

\begin{lemma}
The map $\theta$ satisfies
\begin{enumerate}
\item Two tableaux $\mathrm{stand}(A)$ and $\mathrm{stand}(B)$ are of shape $\alpha$ and $\beta$.
\item Two tableaux $A$ and $B$ are well-defined. 
In other words, $A$ and $B$ are semistandard shifted tableaux with respect to the 
marked alphabet $X'$.
\item The word $\mathrm{read}(A)\ast\mathrm{read}(B)$ is an LRS word of content $\lambda$.
\end{enumerate}
\end{lemma}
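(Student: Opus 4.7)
The plan is to verify the three assertions in order by tracing the construction step-by-step, using as key inputs the classical RSK identities $P_{\mathrm{RSK}}(\mathrm{read}(V))=V$ for any semistandard tableau $V$ and the fact that a word is Yamanouchi of content $\alpha$ iff its RSK insertion tableau is the superstandard tableau $T_\alpha$ of shape $\alpha$. Combined with the Knuth-invariance of the shape of $P_{\mathrm{RSK}}(w)$ (so that this shape can be computed via jeu de taquin rectification of a suitable skew tableau), these facts should pin down the shapes of $\mathrm{stand}(A)$ and $\mathrm{stand}(B)$. Assertions (1) and (2) will then be structural and essentially bookkeeping, while assertion (3) carries the real combinatorial content.

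For (1), I will argue as follows. Inside $U\in\mathrm{Tab}^{\gamma}(\alpha,\beta)$, the cells filled with letters $1,\ldots,l(\alpha)$ form a sub-semistandard-tableau of some shape $\alpha'\subseteq\gamma$ whose reading word is Yamanouchi of content $\alpha$. Since $P_{\mathrm{RSK}}(\mathrm{read}(\cdot))$ recovers the tableau itself, this sub-tableau must equal the superstandard tableau $T_\alpha$ of shape $\alpha$; hence $\alpha'=\alpha$. Consequently, the cells of $\mathrm{stand}(U)$ labeled $1,\ldots,|\alpha|$ occupy the Young sub-diagram $\alpha\subseteq\gamma$, and reverse RSK restricted to those labels produces the prefix $w_1$ with $\mathrm{stand}(A)=P_{\mathrm{RSK}}(w_1)$ of shape $\alpha$. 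For $\mathrm{stand}(B)$, the skew standard tableau on labels $|\alpha|+1,\ldots,|\alpha|+|\beta|$ inside $\mathrm{stand}(U)$ has underlying skew shape $\gamma/\alpha$; applying the same Yamanouchi argument to the second piece of $U$ shows that its jeu de taquin rectification is the standardization of $T_\beta$, of shape $\beta$. Since $P_{\mathrm{RSK}}(w_2)$ equals this rectification by the Knuth-invariance of RSK shape, claim (1) follows.

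Claim (2) is then a bookkeeping check: destandardization inverts standardization, and transferring letters of $T$ through the common labels of $\mathrm{stand}(T)$ and $\mathrm{stand}(A)\sqcup\mathrm{stand}(B)$ produces valid semistandard shifted tableaux once the prime decorations are tracked carefully, since each letter $x$ of $T$ is carried rigidly to its corresponding box. Claim (3) is where the main work lies and where I expect the chief obstacle. The content of $\mathrm{read}(A)\ast\mathrm{read}(B)$ is $\lambda$ by multiset preservation, but the LRS property of Stembridge — the lattice condition combined with the requirement that the first occurrence of each letter is unprimed — must be derived from the LRS property of $\mathrm{read}(T)$, which lives on the larger shape $\gamma$. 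My strategy will be to decompose $w=w_1\ast w_2$ according to whether each letter of $\mathrm{stand}(T)$ originated in the $\alpha$-part or the $\gamma/\alpha$-part of $\mathrm{stand}(U)$, and then track how the LRS lattice and first-occurrence conditions on $\mathrm{read}(T)$ split compatibly across that cut. The hardest step will be showing that the primed/unprimed pattern and the Stembridge lattice condition descend correctly to $\mathrm{read}(A)\ast\mathrm{read}(B)$ after the full standardize--reverse RSK--split--forward RSK--destandardize procedure, which is precisely where the detailed interaction between RSK bumping and the LRS prime conventions must be controlled.
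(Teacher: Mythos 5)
Your argument for claim (1) is correct and takes a genuinely different route from the paper: you pin down the shape of $\mathrm{stand}(A)=P_{\mathrm{RSK}}(w_1)$ via the fact that the recording tableau of a prefix is the restriction of the full recording tableau to the labels $1,\ldots,|\alpha|$ (which occupy the subdiagram $\alpha$ because the letters $1,\ldots,l(\alpha)$ of $U$ form the superstandard tableau $T_\alpha$), and the shape of $\mathrm{stand}(B)$ via jeu-de-taquin rectification of the skew piece of $\mathrm{stand}(U)$ on $\gamma/\alpha$. The paper instead extracts explicit inequalities $\tilde{w}_{i-1,\alpha_{i-1}-j+1}>\tilde{w}_{i,\alpha_i-j+1}$ on the weakly increasing segments of $w_1$ produced by reverse insertion and reads off the shape from those. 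Your version is shorter and leans on standard RSK/jdt facts; the paper's version exposes the strip structure of $w_1$ that it reuses in the rest of the argument.

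Claims (2) and (3), however, are not proved in your proposal. For (2), "bookkeeping" understates the content: destandardization of $P_{\mathrm{RSK}}(w_1)$ and $P_{\mathrm{RSK}}(w_2)$ yields legal semistandard \emph{shifted} tableaux only if, for each $i$, the cells destined to receive the unprimed letter $i$ form a horizontal strip and those destined to receive $i'$ form a vertical strip. The paper checks exactly this, using that the labels corresponding to $i'$ occupy a vertical strip of $\mathrm{stand}(T)=P_{\mathrm{RSK}}(w)$, hence occur as a decreasing subsequence of $w$, hence again form vertical strips after inserting $w_1$ and $w_2$ separately; an analogous statement handles the unprimed letters. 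For (3) you explicitly defer the decisive step ("the hardest step will be showing that\ldots"), and that step is the lemma. The missing idea is the paper's induction on the letters $v_1,\ldots,v_l$ of $w_1$: the base case is that the destandardization of $w$ (and of its suffix $w_2$) is an LRS word, and the inductive step is that when $v_i$ is inserted into $P_{\mathrm{RSK}}(v_1\cdots v_{i-1})$ every bumped element lands in a lower row and therefore appears to the left of $v_i$ in the new reading word, while the entries to the right of $v_i$ in its row are strictly larger; this is what guarantees that the Stembridge lattice property of the destandardized word $\mathrm{read}(P_{\mathrm{RSK}}(v_1\cdots v_i))\ast v_{i+1}\cdots v_l\ast w_2$ survives each insertion. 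Without this (or an equivalent) argument, claim (3) --- and with it the bijectivity statement the theorem rests on --- remains open in your write-up.
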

\begin{proof}
For (1), we will show that $\mathrm{stand}(A)$ is of shape $\alpha$ since the case for 
$\mathrm{stand}(B)$ can be shown in a similar way.
Note that the letters from $\sum_{j=1}^{i-1}\alpha_{j}+1$ to $\sum_{j=1}^{i}\alpha_{j}$
in a tableau $U$ form a horizontal strip.
By reversing the RKS insertion for a horizontal strip, 
the words $w_{1}$ is written as $w_{1}=\tilde{w}_{1}\ldots\tilde{w}_{l(\alpha)}$ 
where $\tilde{w}_{i}$ is a weakly increasing 
sequence of length $\alpha_{i}$.
In $\mathrm{stand}(U)$, the position of the letter $\sum_{j=1}^{i-1}\alpha_{j}$ is upper than that of 
the letter $\sum_{j=1}^{i}\alpha_{j}$, since the alphabets from $1$ to $l(\alpha)$ form 
a Yamanouchi word of content $\alpha$.
By reversing the RSK insertion, this condition is rephrased as the following condition:
$\tilde{w}_{i-1,\alpha_{i-1}}>\tilde{w}_{i,\alpha_{i}}$ for $2\le i\le l(\alpha)$.
We delete the boxes with letters  $\sum_{j=1}^{i-1}\alpha_{j}$ and $\sum_{j=1}^{i}\alpha_{j}$ from 
$\mathrm{stand}(U)$. 
This means that we delete the right most boxes with integer $i-1$ and $i$ in $U$.
In the obtained tableau, letters $i-1$ and $i$ form a Yamanouchi word of content 
$(\alpha_{i}-1,\alpha_{i-1}-1)$.
By applying the same argument, we obtain the condition 
$\tilde{w}_{i-1,\alpha_{i-1}-j+1}>\tilde{w}_{i,\alpha_{i}-j+1}$ 
for $2\le j\le \alpha_{i}$.
These observations imply that $\mathrm{stand}(A)=P_{\mathrm{RSK}}(w_{1})$ is of 
shape $\alpha$.

For (2), since a tableau $\mathrm{stand}(A)$ is obtained by the RSK insertion, it is obvious 
that unprimed letters in the tableau $A$ satisfy the semistandard property, {\it i.e.},
unprimed letters appear at most once in a column of $A$.
Boxes corresponding to a primed letter $i'$ in $T$ form a vertical strip, 
which means that contents corresponding to $i'$ in $\mathrm{stand}(T)$ 
appear as a decreasing sequence in the word $w$.
This property holds true after the RSK insertion of $w_{1}$ or $w_{2}$.
In other words, boxes for a primed letter $i'$ form a vertical strip 
in both $A$ and $B$.
These imply that the tableau $A$ is a semistandard shifted tableau 
and so is $B$.

For (3), note that the destandardization of the word $w=w_{1}\ast w_{2}$ 
is an LRS word of content $\lambda$.
The destandardization of the partial word $w_{2}$ is also an LRS word.
Thus, the tableau word $\mathrm{read}(B)$ is an LRS word.
A tableau $A$ is given by a destandardization of $P_{\mathrm{RSK}}(w_{1})$. 
Suppose that $w_{1}$ is written as $w_{1}:=v_{1}\ldots v_{l}$ where $l=l(\alpha)$.
We show the statement (3) by induction.
The word $\mathrm{read}(P_{\mathrm{RSK}}(v_{1}))=v_{1}$ and a destandardization 
of the concatenation words $v_{1}\ast v_{2}\ldots v_{l}\ast w_{2}$ is an LRS word.
We insert $v_{i}$ into $P_{\mathrm{RSK}}(v_{1}\ldots v_{i-1})$. 
By induction assumption, the destandardization of 
$\mathrm{read}(P_{\mathrm{RSK}}(v_{1}\ldots v_{i-1}))\ast v_{i}\ldots v_{l}\ast w_{2}$
is an LRS word.
We first bump out smallest $x$ which is greater than $v_{i}$, and bumped element 
$x$ moves to the next row. 
Thus, in the word $\mathrm{read}(P_{\mathrm{RSK}}(v_{1}\ldots v_{i}))$, $x$ is 
left to $v_{i}$.
Since a letter $v_{i}$ stays in the first row of $P_{\mathrm{RSK}}(v_{1}\ldots v_{i})$ 
and the elements right to $v_{i}$ are strictly greater than $v_{i}$,
the lattice property holds after the insertion and destandardization.
In general, elements bumped out by the RSK insertion are inserted into the next 
row in $P_{\mathrm{RSK}}(v_{1}\ldots v_{i})$. 
After destandardization, the lattice property holds true.
As a result, the concatenation of three words 
$\mathrm{read}(P_{\mathrm{RSK}}(v_{1}\ldots v_{i}))$, $v_{i+1}\ldots v_{l}$ 
and $w_{2}$ is an LRS word.
By induction, we obtain that $\mathrm{read}(P_{\mathrm{RSK}}(w_{1}))\ast w_{2}$ is an 
LRS word.
Combining these observations with the fact that $\mathrm{read}(B)$ is an LRS
word, we obtain that $\mathrm{read}(A)\ast \mathrm{read}(B)$ is an LRS word of 
content $\lambda$.
\end{proof}

The inverse 
$\theta^{-1}:\mathrm{Tab}^{\prime\prime}(\alpha,\beta;\gamma)\rightarrow
\mathrm{Tab}^{\gamma}(\lambda)\times\mathrm{Tab}^{\gamma}(\alpha,\beta),
(A,B)\mapsto(T,U)$,
can be given by the following two steps.
\paragraph{Step A}
We perform a standardization on $(A,B)$ and obtain a pair of standard 
tableaux of shape $\alpha$ and $\beta$.
We enumerate the boxes in tableaux $A$ and $B$ with the letter $1'$ 
by $1,\ldots,\#(1')$ where $\#(1')$ is the total number of $1'$ in 
tableaux $A$ and $B$.
We start from the top row of the tableau $B$ to the bottom row of $B$, 
moving to the tableau $A$, and continue to enumerate boxes from the 
top row to the bottom row of $A$.
Then, we continue to enumerate the boxes with letter $1$ by 
$\#(1')+1,\ldots,\#(1')+\#(1)$ starting from the leftmost column of a tableau
$A$ to the rightmost column of $A$, moving to the tableau $B$, and successively
enumerate from the leftmost column to the rightmost column in $B$.
We similarly enumerate boxes with $i'$ or $i$ as above.
Note that the order of scanning boxes in tableaux $A$ and $B$ for $i$ is reversed 
compared to the case of $i'$. 
We denote by $\mathrm{stand}(A)$ a tableau on which standardization is performed.
Let $Q_0(\alpha)$ be a standard tableau of shape $\alpha$ obtained by enumerating 
boxes in $\alpha$ by $1,2,\ldots,|\alpha|$ from left to right in a row starting 
from the top row to bottom.
By reversing the RSK insertion, we obtain two words $w_{1}$ and $w_{2}$ as 
$w_{1}:=\mathrm{RSK}^{-1}(\mathrm{stand}(A),Q_{0}(\alpha))$ and 
$w_{2}:=\mathrm{RSK}^{-1}(\mathrm{stand}(B),Q_{0}(\beta))$.
The word $w$ is defined as a concatenation of $w_{1}$ and $w_{2}$, 
{\it i.e.}, $w:=w_{1}\ast w_{2}$.

\paragraph{Step B}
Given the word $w$, we obtain a pair of standard tableaux of shape $\gamma$ 
by the RSK insertion. 
we define $\mathrm{stand}(T):=P_{\mathrm{RSK}}(w)$ and 
$\mathrm{stand}(U):=Q_{\mathrm{RSK}}(w)$. 
Then, we perform a destandardization on $\mathrm{stand}(T)$ and $\mathrm{stand}(U)$
such that $T$ is formed by an LRS word of content $\lambda$ and $U$ is formed by
two Yamanouchi words of content $\alpha$ and $\beta$.
The pair of tableaux obtained by the above procedure is $(T,U)$.

\begin{lemma}
The map $\theta^{-1}$ satisfies
\begin{enumerate}
\item A tableau $T$ is well-defined, {\it i.e.}, $T$ is a semistandard 
shifted tableau of shape $\gamma$ of content $\lambda$. 
\item A tableau $U$ is well-defined, {\it i.e.}, $U$ is a semistandard 
tableau of shape $\gamma$ which is formed by a tableau $\alpha$ of content $\alpha$ and 
and a skew tableau $\gamma/\alpha$ of content $\beta$. 
\end{enumerate}
\end{lemma}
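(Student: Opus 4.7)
The plan is to verify the two well-definedness claims by mirroring the proof of the forward map $\theta$, tracking how structure is preserved under standardization, reverse RSK, forward RSK, and destandardization. Three facts need to be established: that $\mathrm{stand}(T)$ and $\mathrm{stand}(U)$ share a common strict shape $\gamma$; that $T$ is semistandard shifted with $\mathrm{read}(T)$ an LRS word of content $\lambda$; and that $U$, after destandardization, decomposes into Yamanouchi pieces of contents $\alpha$ and $\beta$.

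First I would verify that the standardization scheme of Step A produces genuine standard shifted tableaux $\mathrm{stand}(A)$ and $\mathrm{stand}(B)$: the enumeration of primed-letter classes top-to-bottom and unprimed-letter classes left-to-right is exactly what makes the result increasing along rows and columns with primes vertically packaged. Then $w_{1}, w_{2}$ are well-defined via reverse RSK against the canonical recording tableaux $Q_{0}(\alpha), Q_{0}(\beta)$, and $w=w_{1} \ast w_{2}$ yields a pair $(\mathrm{stand}(T),\mathrm{stand}(U))$ of a common shape. To see that this common shape is a \emph{strict} partition $\gamma$, I would use the LRS hypothesis on $\mathrm{read}(A)\ast\mathrm{read}(B)$, whose lattice property translates, under the reverse-RSK to $w$, into the condition that no two consecutive rows of the resulting tableau have equal length.

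For $T$, I would then verify the shifted semistandard conditions directly from the input structure: the primed classes $\{i'\}$ occupy a vertical strip of $A \cup B$, hence form strictly decreasing subsequences of $w$, hence occupy a vertical strip of $\mathrm{stand}(T)$; upon destandardization these become primed letters with at most one $i'$ per row and none on the diagonal. The unprimed classes become horizontal strips by a dual argument, yielding at most one $i$ per column. The LRS property of $\mathrm{read}(T)$ follows because destandardization is arranged precisely to restore the lattice property and first-occurrence condition from those same properties of $\mathrm{read}(A)\ast\mathrm{read}(B)$. For $U$, the common RSK shape is already $\gamma$, and the Yamanouchi property of each destandardized region follows from the argument of (S3) in the proof of Lemma \ref{lemma:PPs1} applied separately to $w_{1}$ (content $\alpha$) and $w_{2}$ (content $\beta$), together with the horizontal-strip structure of the recording tableau produced under RSK of a weakly increasing/decreasing hook-word decomposition.

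The main obstacle will be checking that destandardization never places a primed letter on the main diagonal of $T$, and that the RSK shape $\gamma$ really is strict rather than having two equal-length rows. Both issues must be excluded using the LRS hypothesis on $\mathrm{read}(A)\ast\mathrm{read}(B)$: the ``first occurrence of a letter $i$ or $i'$ is $i$'' clause prevents diagonal primes after replacement, and the lattice clause controls row-length equalities. A careful case analysis for shifted boxes where primed and unprimed instances of the same letter collide on or near the main diagonal will be the most delicate piece.
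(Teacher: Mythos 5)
Your overall architecture (track the structure through standardization, reverse RSK, forward RSK, destandardization; show vertical/horizontal strip behaviour for $T$ and recover the shape $\alpha$ and the two Yamanouchi regions for $U$) matches the paper's proof in outline: the paper's argument for (1) is precisely that the positions of the letters in the $i$-th class form a hook word in $w$, whose decreasing part inserts as a vertical strip and whose increasing part inserts as a horizontal strip, and its argument for (2) rests on the interlacing inequalities $u_{i,\alpha_i-j}>u_{i+1,\alpha_{i+1}-j}$ forced by reverse RSK, which pin the shape of $P_{\mathrm{RSK}}(w_1)$ to $\alpha$ and give the content-$\beta$ region $\gamma/\alpha$.

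However, the paragraph you single out as ``the most delicate piece'' is aimed at two statements that are not part of what must be proved --- and are in fact false in this setting. First, $\gamma$ is an \emph{ordinary} partition, not a strict one: it arises from the Littlewood--Richardson expansion $a_{\alpha\beta}^{\gamma}b_{\lambda}^{\gamma}$, and the paper's own worked example takes $\gamma=(4,4,2,2,1)$, which has repeated parts. Any attempt to deduce ``no two consecutive rows of equal length'' from the LRS hypothesis is doomed, since the claim is simply not true. Second, the tableaux $T$ here belong to the class $T''$, in which diagonal entries are explicitly allowed to be primed (the example $T$ has $1'$ in position $(1,1)$ and $2'$ in position $(2,2)$); there is nothing to exclude about primed letters on the main diagonal, and a ``careful case analysis'' to rule them out would again be chasing a false statement. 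Had you pursued these two points you would have stalled. The genuinely nontrivial content you should instead be verifying is (i) that the hook-word structure of each letter class in $w$ survives the passage through $\mathrm{RSK}^{-1}(\mathrm{stand}(A),Q_0(\alpha))$ and $\mathrm{RSK}^{-1}(\mathrm{stand}(B),Q_0(\beta))$ followed by concatenation, which is where the LRS hypothesis on $\mathrm{read}(A)\ast\mathrm{read}(B)$ actually enters, and (ii) the interlacing inequalities for the increasing runs of $w_1$ and $w_2$, which you gesture at only indirectly via a citation of (S3) from a different lemma whose argument (bumping out of $P_0(\beta)$) does not transfer verbatim.
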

\begin{proof}
For (1), we observe that the positions of letters from 
$\sum_{k=1}^{i-1}\lambda_{k}+1$ to $\sum_{k=1}^{i}\lambda_{k}$ form 
a hook word of length $\lambda_{i}$ in the word $w$.
Let $\mathrm{pos}(j)$ be the position (from left end) of $\sum_{k=1}^{i-1}\lambda_{k}+i$ 
for $1\le i\le\lambda_{i}$ in $w$. 
Since $\mathrm{pos}(j)$'s form a hook word, there exists a unique $i_{0}$
such that 
$\mathrm{pos}(1)>\mathrm{pos}(2)>\cdots>\mathrm{pos}(i_{0})<\mathrm{pos}(i_{0}+1)<\cdots
<\mathrm{pos}(\lambda_{i})$.
Recall that a pair $(A,B)$ satisfies that the concatenation of two 
words $\mathrm{read}(A)\ast\mathrm{read}(B)$ is an LRS word of 
content $\lambda$. 
Two words $w_{1}$ and $w_{2}$ are given by the reversed RSK insertion, 
and destandardization of the word $w$ is an LRS word of content $\lambda$.
Since $\mathrm{pos}(j)$ with $1\le j\le i_{0}$ form a decreasing sequence 
in $w$, the letters from $\sum_{k=1}^{i-1}\lambda_{k}+1$ to 
$\sum_{k=1}^{i-1}\lambda_{k}+i_{0}$ form a vertical strip by the RSK insertion.
Similarly, since $\mathrm{pos}(j)$ with $i_{0}\le j\le \lambda_{i}$ form
a increasing sequence in $w$, the letters from $\sum_{j=1}^{i-1}\lambda_{j}+i_{0}$
to $\sum_{k=1}^{i}\lambda_{k}$ form a horizontal strip by the RSK insertion.
From these observations, a tableau $T$ is a semistandard shifted tableau
of content $\lambda$.
The shape $\gamma$ is nothing but the shape of $P_{\mathrm{RSK}}(w)$.

For (2), observe that $w=w_{1}\ast w_{2}$ where the length of 
$w_{1}$ (resp. $w_{2}$) is $|\alpha|$ (resp. $|\beta|$). 
The word $w_{1}$ is written as $w_{1}=u_{1}\ldots u_{l(\alpha)}$ where 
$u_{i}$ is a strictly increasing sequence of length $\alpha_{i}$ 
for $1\le i\le l(\alpha)$. 
Since the word $w_{1}$ is given by the reversed RSK insertion, 
we have $u_{i,\alpha_{i}}>u_{i+1,\alpha_{i+1}}$.
More generally, we have $u_{i,\alpha_{i}-j}>u_{i+1,\alpha_{i+1}-j}$ 
for $0\le j\le \alpha_{i+1}$.
From these conditions, the shape of insertion tableau $P_{\mathrm{RSK}}(w_{1})$ 
is $\alpha$. 
The tableau $U$ is obtained by destandardization of the recording 
tableau $Q_{\mathrm{RSK}}(w_{1})$ and its shape is the same as the one
of $P_{\mathrm{RSK}}(w_{1})$. 
We insert $w_{2}$ into $P_{\mathrm{RSK}}(w_{1})$. 
Let $\gamma$ be the shape of $P_{\mathrm{RSK}}(w)$. 
The word $w_{2}$ is also rewritten as a concatenation of increasing 
sequences. 
By a similar argument to the case of (1), $U$ has the region $\gamma/\alpha$
with content $\beta$.
This completes the proof.
\end{proof}

Summarizing the above observations, we have
\begin{lemma}
\label{lemma:SSP1}
There exists a bijection $\theta:\mathrm{Tab}^{\gamma}(\lambda)\times\mathrm{Tab}^{\gamma}(\alpha,\beta)
\rightarrow \mathrm{Tab}^{\prime\prime}(\alpha,\beta;\lambda)$.
\end{lemma}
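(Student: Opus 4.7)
The plan is to assemble Lemma \ref{lemma:SSP1} directly from the two preceding lemmas and the bijectivity of the RSK correspondence. The preceding lemmas already verify the delicate ``well-definedness'' half: the first lemma checks that $\theta(T,U)$ actually lies in $\mathrm{Tab}^{\prime\prime}(\alpha,\beta;\lambda)$, and the second lemma checks that $\theta^{-1}(A,B)$ lies in $\mathrm{Tab}^{\gamma}(\lambda)\times\mathrm{Tab}^{\gamma}(\alpha,\beta)$ for the appropriate $\gamma$. What remains is to verify that $\theta$ and $\theta^{-1}$ are mutually inverse maps and that they take matching values of $\gamma$ to one another. Once this is done, the bijection is established.

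First, I would observe that both $\theta$ and $\theta^{-1}$ are built out of three elementary ingredients: (i) standardization/destandardization of a semistandard shifted tableau, (ii) the RSK correspondence $w\leftrightarrow(P_{\mathrm{RSK}}(w),Q_{\mathrm{RSK}}(w))$ between words and pairs of tableaux of the same shape, and (iii) concatenation/splitting of a word at the position $|\alpha|$ (equivalently $|\beta|$). Each of these is individually reversible: standardization is determined by a fixed scanning rule (from top to bottom for primed letters, from left to right for unprimed letters), and destandardization is its inverse given the content; RSK is a classical bijection; and concatenation/splitting is trivially invertible provided the lengths are fixed.

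Next I would trace the composition $\theta^{-1}\circ\theta$. Starting from $(T,U)\in\mathrm{Tab}^{\gamma}(\lambda)\times\mathrm{Tab}^{\gamma}(\alpha,\beta)$, Step~1 produces $(\mathrm{stand}(T),\mathrm{stand}(U))$ and then the unique word $w=\mathrm{RSK}^{-1}(\mathrm{stand}(T),\mathrm{stand}(U))$; Step~2 splits $w=w_{1}\ast w_{2}$ at position $|\alpha|$ and forms $(\mathrm{stand}(A),\mathrm{stand}(B))=(P_{\mathrm{RSK}}(w_{1}),P_{\mathrm{RSK}}(w_{2}))$, from which $(A,B)$ is obtained by destandardization. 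Applying $\theta^{-1}$ now standardizes $(A,B)$ back to $(\mathrm{stand}(A),\mathrm{stand}(B))$ (the crucial point being that the scanning convention chosen in Step~A of $\theta^{-1}$ reproduces precisely the numbering used inside $\mathrm{stand}(T),\mathrm{stand}(U)$ via the RSK bumping), reverses RSK with the canonical recording tableaux $Q_{0}(\alpha),Q_{0}(\beta)$ to recover $w_{1}$ and $w_{2}$, concatenates to get $w$, and finally applies RSK and destandardizes to return $(T,U)$. The composition in the other direction is analogous.

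The main obstacle is verifying compatibility of the two standardization conventions, namely that Step~A's rule (enumerate $i'$-boxes top-to-bottom by running through $A$ then $B$, and $i$-boxes left-to-right by running through $A$ then $B$) really is the inverse of the standardization implicit in Step~1 combined with the RSK splitting at position $|\alpha|$. I would prove this by induction on letters, showing that the boxes containing a fixed letter $i$ (or $i'$) form a horizontal (resp.\ vertical) strip in each of $T,U,A,B$, and that RSK sends such strips to weakly increasing (resp.\ strictly decreasing) runs in $w$; the ordering within a run then pins down the enumeration uniquely. Once this compatibility is confirmed, the equalities $\theta^{-1}\circ\theta=\mathrm{id}$ and $\theta\circ\theta^{-1}=\mathrm{id}$ follow, and Lemma \ref{lemma:SSP1} is proved.
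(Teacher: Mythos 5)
Your proposal is correct and follows essentially the same route as the paper: the map $\theta$ is built from standardization, (reverse) RSK, and splitting at position $|\alpha|$, its well-definedness and that of $\theta^{-1}$ are delegated to the two preceding lemmas, and the bijection is concluded by noting each ingredient is reversible. The only difference is that you make explicit the check that $\theta^{-1}\circ\theta=\mathrm{id}$ (in particular the compatibility of the joint scanning convention in Step~A with the standardization of Step~1), which the paper compresses into ``summarizing the above observations''; this is a welcome clarification rather than a new argument.
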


\begin{example}
The product $\hat{S}_{(4,2,1)}$ and $\hat{S}_{(3,2,1)}$ contains 
$464P_{(5,4,3,1)}$. Thus, we have $464/2^{4}=29$ pairs which satisfy 
the lattice property.
Suppose $\gamma=(4,4,2,2,1)$.
We have two pairs of 
$(T,U)\in \mathrm{Tab}^{\gamma}(\lambda)\times\mathrm{Tab}^{\gamma}(\alpha,\beta)$.
One of them is the left picture of Figure \ref{fig:SSs1}.
\begin{figure}[ht]
\begin{eqnarray*}
\left(
\begin{matrix}
1' & 1 & 1 & 1 \\
1 & 2' & 2 & 2 \\
2 & 3' \\
3 & 3 \\
4
\end{matrix}
\quad,\quad  
\begin{matrix}
1 & 1 & 1 & 1 \\
2 & 2 & 4 & 4 \\
3 & 5 \\
4 & 6 \\
5
\end{matrix}
\right)
\qquad\qquad
\left(
\begin{matrix}
1 & 3 & 4 & 5 \\
2 & 6 & 8 & 9 \\
7 & 10 \\
11 & 12 \\
13
\end{matrix}\quad,\quad  
\begin{matrix}
1 & 2 & 3 & 4 \\
5 & 6 & 9 & 10 \\
7 & 12 \\
8 & 13 \\
11
\end{matrix}
\right)
\end{eqnarray*}
\caption{A pair $(T,U)$ and its standardization}
\label{fig:SSs1}
\end{figure}
We have an LRS word of content $(5,4,3,1)$ in $T$ and two Yamanouchi 
words of contents $(4,2,1)$ and $(3,2,1)$ in $U$.  
The right picture of Figure \ref{fig:SSs1} is its standardization.
By reversing the RSK insertion, 
we have the word 
$w=\mathrm{RSK}^{-1}(\mathrm{stand}(T),\mathrm{stand}(U))=(2,7,11,13,1,12,8,3,6,10,4,9,5)$,
which is a concatenation of two words $w_{1}\ast w_{2}=(2,7,11,13,1,12,8)\ast(3,6,10,4,9,5)$.
We obtain a pair of standard tableaux 
$(\mathrm{stand}(A),\mathrm{stand}(B)):=(P_{\mathrm{RSK}}(w_{1}),P_{\mathrm{RSK}}(w_{2}))$.
By destandardization, we have
\begin{eqnarray*}
(A,B)=\theta(T,U)=
\left(
\begin{matrix}
1' & 2 & 2 & 3 \\
1 & 3 \\
4
\end{matrix}\quad,\quad
\begin{matrix}
1 & 1 & 1 \\
2' & 2 \\
3'
\end{matrix}
\right).
\end{eqnarray*}
The tableau word $\mathrm{read}(A)\ast\mathrm{read}(B)$ is an LRS word 
of content $(5,4,3,1)$.

\end{example}

\begin{proof}[Proof of Theorem \ref{thrm:SSP}]
The coefficient $h_{\alpha\beta}^{\lambda}$ can be expressed as 
\begin{eqnarray}
\label{eqn:SSP1}
\begin{aligned}
h_{\alpha\beta}^{\lambda}&=\left[\hat{S}_{\alpha}\hat{S}_{\beta},Q_{\lambda}\right] \\
&=\sum_{\gamma}a_{\alpha\beta}^{\gamma}\left[S_{\gamma},Q_{\lambda}\right] \\
&=\sum_{\gamma}2^{l(\lambda)}a_{\alpha\beta}^{\gamma}b_{\lambda}^{\gamma}.
\end{aligned}
\end{eqnarray}
The number $a_{\alpha\beta}^{\gamma}b_{\lambda}^{\gamma}$ in Eqn.(\ref{eqn:SSP1}) 
is equal to 
$|\mathrm{Tab}^{\gamma}(\lambda)|\cdot|\mathrm{Tab}^{\gamma}(\alpha,\beta)|$.
From Lemma \ref{lemma:SSP1}, we have a bijection between 
$\mathrm{Tab}^{\gamma}(\lambda)\times\mathrm{Tab}^{\gamma}(\alpha,\beta)$ and 
$\mathrm{Tab}^{\prime\prime}(\alpha,\beta;\lambda)$.
Thus, 
$\sum_{\gamma}a_{\alpha\beta}^{\gamma}b_{\lambda}^{\gamma}
=|\mathrm{Tab}^{\prime\prime}(\alpha,\beta;\lambda)|$,
which implies Theorem is true.

\end{proof}

\subsection{Littlewood--Richardson--Stembridge coefficients revisited}
We will show three different expressions of coefficient $d_{\lambda\mu}^{\nu}$. 

Let $\lambda,\mu$ and $\nu$ be strict partitions.
We consider a standard shifted tableau $S$ of shape $\nu/\mu$.
Let $b_{1},\ldots,b_{\lambda_{i}}$ be boxes in $S$ labeled 
by an integer in the interval 
$[\sum_{j=1}^{i-1}\lambda_{j}+1,\sum_{j=1}^{i}\lambda_{j}]$.
If a box $b_{j}$ is in the $r$-th row in $\nu$, 
we denote $\mathrm{ht}(b_{j}):=r$.

The proposition below is a shifted analogue of the 
Remmel--Whitney algorithm~\cite{RemWhi84}.
It directly follows from Lemma 8.4 and Lemma 8.6 in \cite{Ste89}.

\begin{prop}
The number $d_{\lambda\mu}^{\nu}$ is the number of standard shifted tableaux 
of $S$ of shape $\nu/\mu$ satisfying the following:
\begin{enumerate}
\item 
$\mathrm{ht}(b_{1})<\mathrm{ht}(b_{2})<\ldots<\mathrm{ht}(b_{r})
\ge\mathrm{ht}(b_{r+1})\ge\ldots \ge\mathrm{ht}(b_{\lambda_{i}})$
for $1\le i\le l(\lambda)$.
\item the shape of $P_{\mathrm{shift}}(\mathrm{read}(S))$ is $\lambda$. 
\end{enumerate}
\end{prop}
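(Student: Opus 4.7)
The plan is to deduce this as a corollary of Theorem \ref{thrm:StemPP} via a standardization bijection. By Theorem \ref{thrm:StemPP}, $d_{\lambda\mu}^{\nu}$ counts semistandard shifted tableaux $T \in T'(\nu/\mu;\lambda)$ whose reading word is an LRS word. I will exhibit a bijection between such $T$ and standard shifted tableaux $S$ of shape $\nu/\mu$ satisfying conditions (1) and (2), from which the proposition follows.

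The forward map $T \mapsto S$ is a block-by-block standardization. Within the group of entries equal to $i$ or $i'$, I relabel the primed entries $i'$ (which form a vertical strip under the LRS hypothesis, since at most one $i'$ appears per row) by consecutive integers $\sum_{j<i}\lambda_j + 1, \sum_{j<i}\lambda_j + 2, \ldots$ in top-to-bottom order, and then relabel the unprimed entries $i$ (which form a horizontal strip) by the remaining consecutive integers in left-to-right order, up to $\sum_{j \le i}\lambda_j$. Condition (1) is then immediate from the geometry of these strips: along the primed vertical portion the heights of successive boxes $b_1, b_2, \ldots$ strictly increase; along the unprimed horizontal portion the heights weakly decrease from left to right; and the LRS stipulation that the first occurrence of $i$ or $i'$ be unprimed forces the transition between the two portions to be the strict peak. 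This reproduces exactly the hook-word pattern required in (1).

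Condition (2) is the substantive step. Its content is that the lattice portion of the LRS definition applied to $\mathrm{read}(T)$ corresponds, under the standardization above, to the statement that $P_{\mathrm{shift}}(\mathrm{read}(S))$ has shape $\lambda$. The bridge is the compatibility of shifted insertion with standardization, together with the earlier proposition characterizing $\mathrm{shape}(P_{\mathrm{shift}}(w))=\lambda$ by the Yamanouchi property for content $\lambda$: if one destandardizes $\mathrm{read}(S)$ by collapsing each block back to a single letter, one recovers (up to shifted Knuth equivalence, Theorem \ref{thrm:splactic}) the word $\mathrm{read}(T)$, and $P_{\mathrm{shift}}$ is invariant under that equivalence, so the lattice property becomes the shape-$\lambda$ condition. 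This compatibility is essentially Lemma 8.6 in \cite{Ste89}.

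The inverse $S \mapsto T$ is destandardization: condition (1) furnishes, within each block of $\lambda_i$ consecutive integers, a unique hook decomposition into a strict-height vertical part and a weak-height horizontal part; I prime the vertical part, leave the horizontal part unprimed, and collapse each block to the single letter $i$. Conditions (1) and (2) together ensure the output is a semistandard shifted tableau whose reading word is LRS. The main obstacle is condition (2)—precisely the transfer of the lattice property across standardization—since it requires tracking the shifted RSK through the block structure; the remaining verifications are routine bookkeeping about vertical and horizontal strips.
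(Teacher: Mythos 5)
The paper does not actually prove this proposition: it says only that it ``directly follows from Lemma 8.4 and Lemma 8.6 in \cite{Ste89}'', and your standardization bijection is precisely the content of those two lemmas, so your overall route is the intended one. Your treatment of condition (1) is essentially sound: the primed letters of a fixed value form a vertical strip, the unprimed ones a horizontal strip, and the LRS requirement that the first occurrence in reading order be unprimed is what pins down the peak and makes the destandardization unique.

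The gap is in your justification of condition (2). You assert that $P_{\mathrm{shift}}$ is invariant under the shifted plactic relations of Theorem \ref{thrm:splactic}; that theorem characterizes the \emph{mixed} insertion tableau $P_{\mathrm{mix}}$, not $P_{\mathrm{shift}}$ (the two are exchanged only by passing to the inverse word, via Haiman's duality), so the invariance you invoke is not available in the form you use it. Moreover, collapsing each block of $\mathrm{read}(S)$ to a single letter yields the unprimed projection of $\mathrm{read}(T)$, and that word being Yamanouchi of content $\lambda$ --- the hypothesis of the earlier proposition giving $\mathrm{shape}(P_{\mathrm{shift}}(w))=\lambda$ --- is not equivalent to $\mathrm{read}(T)$ being an LRS word: the LRS lattice property also constrains the primed letters read in the forward direction (the quantities $m_{i}(n+j)$ for $j>0$), and your collapse discards exactly that information. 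What is actually needed is the compatibility of standardization with the shifted Schensted correspondence, which is the substance of Stembridge's Lemma 8.6; your sketch acknowledges this is the hard step but does not supply it, so condition (2) remains unproved as written.
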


\begin{theorem}
We have
\label{thrm:PPP1}
\begin{eqnarray*}
d_{\lambda\mu}^{\nu}
=\sum_{\alpha}b_{\lambda}^{\alpha}\cdot 
\#\{T\in T(\nu/\mu;\alpha)\ \vert\  T \text{ is a Yamanouchi word}\}.
\end{eqnarray*}
\end{theorem}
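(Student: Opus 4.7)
The strategy is to prove the identity combinatorially by combining Stembridge's characterization of $d_{\lambda\mu}^{\nu}$ (Theorem \ref{thrm:StemPP}) with the Corollary to Theorem \ref{thrm:sP} giving $b_{\lambda}^{\alpha}$ in terms of LRS shifted tableaux. Together, these recast both sides of Theorem \ref{thrm:PPP1} as tableau counts: the LHS counts shifted skew tableaux $T \in T'(\nu/\mu; \lambda)$ whose reading word is LRS, while the RHS counts pairs $(A,B)$ where $A \in T'(\alpha; \lambda)$ is an LRS shifted tableau of ordinary shape $\alpha$ content $\lambda$ and $B \in T(\nu/\mu; \alpha)$ is an unprimed tableau whose reading word is Yamanouchi. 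The problem therefore reduces to constructing an explicit bijection between these two sets, indexed by the intermediate partition $\alpha$.

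The plan is to build this bijection $\Phi$ along the lines of $\chi$ in Lemma \ref{lemma:PPs1} and the map used in Theorem \ref{thrm:PPs2}. Given an LRS tableau $T$, I would first standardize it using the natural order on the marked alphabet that respects the LRS structure (primes before unprimed within each label), and then apply reverse mixed insertion against a canonical recording tableau of shape $\nu/\mu$ to extract a word $w$. The partition $\alpha$ is read off from the unprimed skeleton of $T$ by setting $\alpha_i$ equal to the number of unprimed $i$'s; this fixes a decomposition $w = w_{0} \ast w_{1}$ with $|w_{1}| = |\alpha|$. Applying forward Kra\'skiewicz insertion to $w_{0}$ and destandardizing produces $A$, while applying an RSK-type insertion to $w_{1}$ with a canonical recording tableau of shape $\alpha$ followed by destandardization produces $B$. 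The inverse map standardizes both $A$ and $B$, reverse-inserts each to recover $w_{0}$ and $w_{1}$, concatenates them, and forward mixed-inserts the concatenation to rebuild the standardized $T$ before destandardizing against the content $\lambda$.

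The main obstacle is the verification that the pair $(A,B)$ produced by $\Phi$ has the required structure: that $B$ has no primed entries, that $\mathrm{read}(A)$ is an LRS word of content $\lambda$, and that $\mathrm{read}(B)$ is a Yamanouchi word of content $\alpha$. This requires tracking how the lattice condition defining the LRS property on $T$ controls the migration of primed letters under the insertions, and in particular showing that every prime of $T$ is absorbed into $A$ under the chosen splitting of $w$. The argument should closely parallel the technical content of Lemma \ref{lemma:PPs1}, adapted to the shifted-skew setting of $\nu/\mu$ rather than an ordinary outer shape; the crucial invariant to establish is that the splitting point of $w$ coincides with the content boundary in the standardization of $T$, which should follow from the way the recording tableau of the reverse mixed insertion is structured by the content decomposition of $T$ into primed and unprimed portions.
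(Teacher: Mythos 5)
Your opening reduction is exactly the paper's: by Theorem \ref{thrm:StemPP} the left side is $|\mathrm{Tab}^{\nu/\mu}(\lambda)|$, by the corollary to Theorem \ref{thrm:sP} the factor $b_{\lambda}^{\alpha}$ is $|\mathrm{Tab}^{\alpha}(\lambda)|$, and the theorem reduces to a bijection $\mathrm{Tab}^{\nu/\mu}(\lambda)\to\bigsqcup_{\alpha}\mathrm{Tab}^{\alpha}(\lambda)\times\mathrm{Tab}^{\nu/\mu}(\alpha)$. That is Lemma \ref{lemma:PPP1}. The problem is that the bijection you then describe is not the right mechanism and, as stated, cannot work. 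All three tableaux $T$, $A$, $B$ have the same number of cells: $|T|=|\nu/\mu|=|\lambda|$, $|A|=|\alpha|=|\lambda|$ (shape $\alpha$, content $\lambda$), $|B|=|\nu/\mu|$. So a word $w$ extracted from $T$ has length $|\lambda|=|\alpha|$, and your splitting $w=w_{0}\ast w_{1}$ with $|w_{1}|=|\alpha|$ forces $w_{0}=\emptyset$; you cannot manufacture two tableaux of size $|\lambda|$ by cutting a word of length $|\lambda|$ into two insertable pieces. You have pattern-matched on Lemma \ref{lemma:PPs1} and the map $\theta$, where the input genuinely is a concatenation of two words coming from two factors of a product. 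Here the correct device is different: the two outputs are the \emph{insertion} and \emph{recording} tableaux of a single RSK run. The paper standardizes $T$, takes $w=\mathrm{read}(\mathrm{stand}(T))$, and applies ordinary RSK; $\alpha$ is the common shape of $(P_{\mathrm{RSK}}(w),Q_{\mathrm{RSK}}(w))$, $A$ is the destandardization of $P_{\mathrm{RSK}}(w)$ with respect to $\lambda$, and $B$ is obtained from $Q_{\mathrm{RSK}}(w)$ by reverse-inserting against $P_{0}(\alpha)$ and laying the resulting word into $\nu/\mu$.

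Two further points in your construction are wrong independently of the size mismatch. First, $\alpha$ is not read off from the "unprimed skeleton" of $T$ (the counts of unprimed $i$'s need not even form a partition); it is the RSK shape of the reading word, and this is precisely where the sum over $\alpha$ comes from. Second, Kra\'skiewicz insertion is the wrong tool for producing $A$: it outputs a semistandard decomposition tableau whose shape is a strict partition and whose rows are hook words, whereas $A\in T'(\alpha;\lambda)$ must be a shifted semistandard filling of the \emph{ordinary} shape $\alpha$ whose reading word is an LRS word of content $\lambda$. (Your proposed reverse mixed insertion "against a canonical recording tableau of shape $\nu/\mu$" is also undefined, since mixed insertion produces straight shifted shapes; the paper needs no such step, only the reading word of the standardization.) The genuinely hard content of the paper's proof, which your plan does not reach, is the verification that the word $w'=\mathrm{RSK}^{-1}(P_{0}(\alpha),Q_{\mathrm{RSK}}(w))$ is compatible with the skew shifted shape $\nu/\mu$ and destandardizes to a Yamanouchi filling (statements (S6) and (S7)), which requires tracking horizontal and vertical strips through the recording tableau and treating the diagonal boxes of $\nu/\mu$ separately.
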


\begin{remark}
Since $b_{\lambda}^{\alpha}=e^{\alpha}_{\lambda}$, we have 
\begin{eqnarray*}
d_{\lambda\mu}^{\nu}=\sum_{\alpha}e^{\alpha}_{\lambda}\cdot 
\#\{T\in T(\nu/\mu;\alpha) \ \vert \ T \text{ is a Yamanouchi word}\}
\end{eqnarray*}
from Theorem \ref{thrm:PPP1}.
From Theorem \ref{thrm:PPs2}, $e^{\alpha}_{\lambda}$ is expressed in terms of 
Yamanouchi words.
Thus, we have a description of $d_{\lambda\mu}^{\nu}$ in terms of Yamanouchi words
rather than LRS words.
The coefficient $e^{\alpha}_{\lambda}$ is expressed in terms of weak reading words.
Thus, the complexity of an LRS word is replaced with a weak reading word.
\end{remark}

For a proof of Theorem \ref{thrm:PPP1}, we introduce the following lemma.
Let $\alpha$ be an ordinary partition and $\lambda,\mu$ and $\nu$ be 
strict partitions.
Let $\mathrm{Tab}^{\alpha}(\lambda)$ (resp. $\mathrm{Tab}^{\nu/\mu}(\lambda)$)
be the set of shifted tableaux of shape $\alpha$ (resp. $\nu/\mu$) which have 
an LRS word of content $\lambda$, and  
$\mathrm{Tab}^{\nu/\mu}(\alpha)$ be the set of skew tableaux
of shape $\nu/\mu$ which have a Yamanouchi word of content $\alpha$.

\begin{lemma}
\label{lemma:PPP1}
There exists a bijection $\zeta:\mathrm{Tab}^{\nu/\mu}(\lambda)\rightarrow
\mathrm{Tab}^{\alpha}(\lambda)\times\mathrm{Tab}^{\nu/\mu}(\alpha)$.
\end{lemma}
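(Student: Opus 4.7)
The plan is to construct $\zeta$ by an inverse-insertion and splitting strategy, analogous to the approach used in the proof of Lemma \ref{lemma:PPs1}. Given $T \in \mathrm{Tab}^{\nu/\mu}(\lambda)$, I will produce a pair $(T',U)$ with $T' \in \mathrm{Tab}^{\alpha}(\lambda)$ and $U \in \mathrm{Tab}^{\nu/\mu}(\alpha)$, where the ordinary partition $\alpha$ is determined by $T$ (and equals both the shape of $T'$ and the content of $U$). The implicit understanding is that $\alpha$ ranges over all partitions with $|\alpha|=|\lambda|$, and the bijection matches $T$ with a pair living in the component indexed by this specific $\alpha$.

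First I would standardize $T$ using the procedure of Section \ref{sec:SSIDTplus}, obtaining a standard shifted skew tableau $\widetilde{T}$ of shape $\nu/\mu$ with entries $1,\dots,|\lambda|$. The LRS condition on $\mathrm{read}(T)$ imposes, after standardization, a specific hook-word factorization $\mathrm{read}(\widetilde{T})=\widetilde{w}_1 \ast \cdots \ast \widetilde{w}_{l(\lambda)}$, where $\widetilde{w}_i$ encodes the boxes of $T$ originally filled with $i$ or $i'$, with the decreasing part of $\widetilde{w}_i$ corresponding to the primed entries and the increasing part to the unprimed entries.

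Next I would fix a canonical standard shifted skew recording tableau $Q_0$ of shape $\nu/\mu$ (for instance, obtained by filling the boxes with $1,2,\ldots,|\lambda|$ column-by-column from top to bottom), and apply reverse mixed insertion to the pair $(\widetilde{T},Q_0)$ to extract a word $w$ of length $|\lambda|$. I would then split $w$ as a concatenation $w=w'\ast w''$ according to the hook-word structure inherited from $T$: the part $w'$ records the LRS content and becomes the reading word of $T'$ after applying SK insertion and destandardizing with respect to $\lambda$; the part $w''$ records positions relative to the skew shape $\nu/\mu$ and becomes the reading word of $U$ after applying RSK insertion and destandardizing with respect to $\alpha$. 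The shape $\alpha$ emerges as the common shape of $P_{\mathrm{RSK}}(w'')$ and as the content of $U$. The inverse map $\zeta^{-1}$ is defined by reversing each of these steps, which are all deterministic.

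The hard part will be verifying that the decomposition of $w$ produces a pair with the correct structural properties: the reading word of $T'$ must itself be an LRS word of content $\lambda$ (preserving both the lattice property and the first-occurrence-unprimed condition), while the reading word of $U$ must be Yamanouchi of content $\alpha$ and use no primed letters. This requires a careful analysis using the shifted plactic relations of Theorem \ref{thrm:splactic}, together with the hook-word analysis used in the proofs of Theorems \ref{thrm:epminus} and \ref{thrm:epplus}, to show that all primed entries of $T$ migrate to $T'$ and that the lattice property of $\mathrm{read}(T)$ splits cleanly into the two required conditions on $\mathrm{read}(T')$ and $\mathrm{read}(U)$.
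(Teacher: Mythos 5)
There is a genuine gap, and it is structural rather than a matter of missing detail. Your plan extracts a single word $w$ of length $|\lambda|$ from $T$ and then splits it as $w=w'\ast w''$, sending $w'$ to $T'$ and $w''$ to $U$. But $T'$ must have $|\alpha|=|\lambda|$ boxes and $U$ must fill the skew shape $\nu/\mu$, which also has $|\lambda|$ boxes; you need $2|\lambda|$ letters' worth of data and a split of a length-$|\lambda|$ word cannot supply it. The splitting strategy works in Lemma \ref{lemma:PPs1} only because there the ambient tableau has $|\lambda|+|\alpha|$ boxes and the two pieces have exactly the sizes of the two targets; here the sizes forbid it. The paper's proof instead exploits the fact that RSK applied to a \emph{single} word of length $n$ produces a \emph{pair} of tableaux each with $n$ boxes: it sets $w:=\mathrm{read}(\mathrm{stand}(T))$, forms $(P_{\mathrm{RSK}}(w),Q_{\mathrm{RSK}}(w))$ of common shape $\alpha$, obtains $A$ by destandardizing the insertion tableau $P_{\mathrm{RSK}}(w)$ with respect to $\lambda$, and obtains $B$ from the recording tableau $Q_{\mathrm{RSK}}(w)$ by reverse RSK against the canonical $P_{0}(\alpha)$ followed by destandardization with respect to $\alpha$. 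Bijectivity then comes from RSK being a bijection, not from a concatenation split.

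A second, independent problem is the step ``apply reverse mixed insertion to the pair $(\widetilde{T},Q_{0})$'': $\widetilde{T}$ has skew shape $\nu/\mu$, whereas mixed insertion tableaux always have straight shifted shapes, so there is no reverse mixed insertion starting from a skew-shaped insertion tableau. If you want to extract a word from the standardized $T$, the right move is simply to take its reading word (as the paper does); the nontrivial work is then the verification of claims (S6) and (S7), namely that the word $w'=\mathrm{RSK}^{-1}(P_{0}(\alpha),Q_{\mathrm{RSK}}(w))$ is compatible with the skew shape $\nu/\mu$ (row and column standardness, including the behaviour on the diagonal part of $\nu/\mu$) and that its destandardization is Yamanouchi of content $\alpha$.
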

\begin{proof}
We construct a map  
$\zeta:\mathrm{Tab}^{\nu/\mu}(\lambda)\rightarrow
\mathrm{Tab}^{\alpha}(\lambda)\times\mathrm{Tab}^{\nu/\mu}(\alpha),
T\mapsto(A,B)$, as follows.
Since $\mathrm{read}(T), T\in\mathrm{Tab}^{\nu/\mu}(\lambda)$, is an LRS 
word of $\lambda$, we first perform a standardization of $T$ and define a word 
$w:=\mathrm{read}(\mathrm{stand}(T))$.
By the RSK insertion, we obtain a pair of standard tableaux 
$(P_{\mathrm{RSK}}(w),Q_{\mathrm{RSK}}(w))$ of shape $\alpha$.
A tableau $A\in\mathrm{Tab}^{\alpha}(\lambda)$ is obtained by destandardization 
of $P_{\mathrm{RSK}}(w)$ with respect to the content $\lambda$.
We construct a tableau $B\in\mathrm{Tab}^{\nu/\mu}(\alpha)$ from $Q_{\mathrm{RSK}}(w)$
as follows.
Let $P_{0}(\alpha)$ be a standard tableau enumerated by $1,\ldots,|\alpha|$
from left to right in row starting from the top row to bottom.
By reversing the RSK insertion, we obtain a word 
$w':=\mathrm{RSK}^{-1}(P_{0}(\alpha),Q_{\mathrm{RSK}}(w))$.
A standard tableau $\mathrm{stand}(B)$ of shape $\nu/\mu$ is given from $w'$ such that 
the reading word $\mathrm{stand}(B)$ is $w'$.
Finally, we perform a destandardization on $\mathrm{stand}(B)$ 
with respect to the content $\alpha$.

We claim: 
\begin{enumerate}
\item[(S6)] A standard tableau $\mathrm{stand}(B)$ is well-defined. 
In other words, the word $w'$ is compatible with the skew shape $\nu/\mu$.
\item[(S7)] A semistandard tableau $B$ is compatible with the skew shape $\nu/\mu$ and 
the word $\mathrm{read}(B)$ is a Yamanouchi word of content $\alpha$.
\end{enumerate}

We show that $\mathrm{stand}(B)$ is a well-defined standard tableau.
Let $\xi\in\mathbb{Z}^{l(\nu)}$ be a sequence of non-negative integers of length $l(\nu)$ 
defined by $\xi:=\nu-\mu$ where we set $\mu_{i}=0$ for $l(\mu)<i\le l(\nu)$.
In the skew shape $\nu/\mu$, we have $\xi_{i}$ boxes in the $i$-th row.
Given $T\in\mathrm{Tab}^{\nu/\mu}(\lambda)$, the $\xi_{i}$ letters in $i$-th row
are strictly increasing in $\mathrm{stand}(T)$.
By the RSK insertion of $w$, the letters from $\sum_{j=1}^{i-1}\xi_{j}+1$ to 
$\sum_{j=1}^{i}\xi_{i}$ form a horizontal strip in the recording tableau 
$Q_{\mathrm{RSK}}(w)$.
We obtain $w'$ by reversing the RSK insertion starting from the pair 
$(P_{0}(\alpha),Q_{\mathrm{RSK}}(w))$.
The above property of $Q_{\mathrm{RSK}}(w)$ implies that 
$w'$ is written as a concatenation of words 
$w':=w'_{l(\nu)}\ldots w'_{2}w'_{1}$ where $w'_{i}$ is 
an increasing sequence of length $\xi_{i}$ for $1\le i\le l(\nu)$.
This means that $\mathrm{stand}(B)$ is standard in a row.

Suppose that the skew shape $\nu/\mu$ has $k$ boxes in the main 
diagonal.
We enumerate rows of $\nu/\mu$ by $1,2\ldots,l(\nu)$ from bottom to top.
The $k$ rows from bottom form a shifted tableau 
$\tau:=(\tau_{1},\tau_{2},\ldots,\tau_{k})$ and 
rows from $k$-th to $l(\nu)$-th form a ordinary skew shape 
$\beta/\gamma$.
Note that both $\tau$ and $\beta/\gamma$ contain the $k$-th 
row of $\nu/\mu$.
We consider a sequence of ordinary partitions obtained from 
a shifted partition $\tau$ by $\emptyset=\hat{\tau}_{0}\subseteq\hat{\tau}_{1}
\subseteq\ldots\subseteq\hat{\tau}_{k}=\tau$ 
where $\hat{\tau}_{i}:=(\tau_{k-i+1},\tau_{k-i+2},\ldots,\tau_{k})$.
The skew shape $\tau_{i}/\tau_{i-1}$ forms a horizontal strip of 
length $\tau_{k-i+1}$.
Then, we put letters from $\sum_{l=1}^{i-1}\tau_{k-l+1}+1$ to 
$\sum_{l=1}^{i}\tau_{k-l+1}$ on the boxes of $\tau_{i}/\tau_{i-1}$ 
from left to right.
We obtain a standard tableau of ordinary shape $\tau$ and denote 
it by $Q(\tau)$.
By the RSK insertion of $w$, the recording tableau $Q_{\mathrm{RSK}}(w)$
contains the tableau $Q(\tau)$. 
We enumerate boxes in $\nu/\mu$ by $1,2,\ldots,|\nu/\mu|$ from 
left to right in a row starting from the bottom rows to top.
When the boxes labelled by letters in $[i,j]$ forms a row in $\nu/\mu$, 
the letters $[i,j]$ form a horizontal strip in $Q_{\mathrm{RSK}}(w)$.
Suppose that the box $b_{j}$ labelled as $j$ is in the same column as the box $b_{i}$ 
labelled as $i$ with $i<j$ in $\nu/\mu$.  
When the boxes $b_{i}$ and $b_{j}$ are in $\beta/\gamma$, the letter $j$ 
is below the letter $i$ in $Q_{\mathrm{RSK}}(w)$.
Similarly, when the boxes $b_{i}$ and $b_{j}$ are in $\tau$,
the letter $j$ is below the letter $i$ or next to each other in
the same row in $Q_{\mathrm{RSK}}(w)$.
Since 
$(P_{\mathrm{RSK}}(\rho),Q_{\mathrm{RSK}}(\rho))
=(Q_{\mathrm{RSK}}(\rho^{-1}),P_{\mathrm{RSK}}(\rho^{-1}))$ for 
a permutation $\rho$,
we consider the inverse of $w'^{-1}$ where 
$w'^{-1}=\mathrm{RSK}^{-1}(Q_{\mathrm{RSK}}(w),P_{0}(\alpha))$.
By reversing the RSK insertion, 
if letters $i_{1}<i_{2}<\ldots<i_{r}$ are in the same column 
in $\nu/\mu$, 
$i_{r},i_{r-1},\ldots,i_{1}$ appear in $w'^{-1}$ as a decreasing 
sequence.
By considering the inverse of $w'^{-1}$, the letters in boxes 
labelled $i_{1},i_{2},\ldots,i_{r}$ are decreasing from bottom 
to top.
This implies that $\mathrm{stand}(B)$ is column standard.
Since $\mathrm{stand}(B)$ is standard in both rows and columns, 
the statement (S6) is true.

Let $\tilde{P}_{0}(\alpha)$ be a unique semistandard tableau of shape 
$\alpha$ and content $\alpha$.
The statement (S7) is equivalent to show that the word 
$\tilde{w}:=\mathrm{RSK}^{-1}(\tilde{P}_{0}(\alpha),Q_{\mathrm{RSK}}(w))$
fits to the skew shape $\nu/\mu$ and is Yamanouchi word 
of content $\alpha$.
Since $P_{0}(\alpha)$ is the standardization of $\tilde{P}_{0}(\alpha)$,
the word $\tilde{w}$ is compatible with the shape $\nu/\mu$ 
by a similar argument to the case of (S6). 
Since the contents of $i$-th row of $\tilde{P}_{0}(\alpha)$ are all $i$'s, 
it is obvious that $\tilde{w}$ is a Yamanouchi.
These observations imply that (S7) is true.

From the construction, it is obvious that the map $\zeta$ has an inverse 
and well-defined by a similar argument to a proof of (S6) and (S7).
Thus, $\zeta$ is a bijection.

\end{proof}

\begin{example}
Let $\lambda:=(6,4,2), \mu:=(5,2)$ and $\nu:=(7,5,4,2,1)$. 
\begin{figure}[ht]
\begin{eqnarray*}
T=
\begin{matrix}
* & * & * & * & * & 1' & 1\\
  & * & * & 1 & 1 & 1 \\
  &   & 1 & 2' & 2 & 2 \\
  &   &   & 2 & 3' \\
  &   &   &   & 3
\end{matrix}\qquad\qquad
\begin{matrix}
* & * & * & * & * & 1 & 6\\
  & * & * & 3 & 4 & 5 \\
  &   & 2 & 7 & 9 & 10 \\
  &   &   & 8 & 11 \\
  &   &   &   & 12
\end{matrix}
\end{eqnarray*}
\caption{A tableau $T$ and its standardization}
\label{fig:PPP1}
\end{figure}
The product $P_{\lambda}P_{\mu}$ contains $10P_{\nu}$.
A tableau $T$ in $\mathrm{Tab}^{\nu/\mu}(\lambda)$ is given by 
the left picture of Figure \ref{fig:PPP1} and its standardization 
is in the right picture.
The reading word  of $\mathrm{stand}(T)$ is 
$w=\mathrm{read}(\mathrm{stand}(T))=(12,8,11,2,7,9,10,3,4,5,1,6)$.
By the RSK insertion, 
we have
\begin{eqnarray*}
(P_{\mathrm{RSK}}(w),Q_{\mathrm{RSK}}(w))
=\left(
\begin{matrix}
1 & 3 & 4 & 5 & 6 \\
2 & 9 & 10 \\
7 & 11 \\
8 \\
12
\end{matrix}\quad,\quad 
\begin{matrix}
1 & 3 & 6 & 7 & 12 \\
2 & 5 & 10 \\
4 & 9 \\
8 \\
11
\end{matrix}
\right).
\end{eqnarray*}
By reversing the RSK insertion, we have 
a word 
\begin{eqnarray*}
w'=\mathrm{RSK}^{-1}(P_{0}(\alpha),Q_{\mathrm{RSK}}(w))
=(12,9,11,1,6,7,
10,2,3,8,4,5).
\end{eqnarray*}
A pair of standard tableaux $(\mathrm{stand}(A),\mathrm{stand}(B))$ 
is given by 
\begin{eqnarray*}
(\mathrm{stand}(A),\mathrm{stand}(B))=
\left(
\begin{matrix}
1 & 3 & 4 & 5 & 6 \\
2 & 9 & 10 \\
7 & 11 \\
8 \\
12
\end{matrix}\quad,\quad
\begin{matrix}
* & * & * & * & * & 4 & 5\\
  & * & * & 2 & 3 & 8 \\
  &   & 1 & 6 & 7 & 10 \\
  &   &   & 9 & 11 \\
  &   &   &   & 12
\end{matrix}
\right),
\end{eqnarray*}
which yields a pair of tableaux $(A,B)$:
\begin{eqnarray*}
(A,B)=\left(
\begin{matrix}
1' & 1 & 1 & 1 & 1 \\
1 & 2 & 2 \\
2' & 3' \\
2 \\
3
\end{matrix}\quad,\quad
\begin{matrix}
* & * & * & * & * & 1 & 1\\
  & * & * & 1 & 1 & 2 \\
  &   & 1 & 2 & 2 & 3 \\
  &   &   & 3 & 4 \\
  &   &   &   & 5
\end{matrix}
\right)
\in\mathrm{Tab}^{\alpha}(\lambda)\times\mathrm{Tab}^{\nu/\mu}(\alpha).
\end{eqnarray*} 
\end{example}

\begin{proof}[Proof of Theorem \ref{thrm:PPP1}]
From Theorem \ref{thrm:StemPP}, we have 
$d_{\lambda\mu}^{\nu}=|\mathrm{Tab}^{\nu/\mu}(\lambda)|$. 
From Lemma \ref{lemma:PPP1}, we have 
$|\mathrm{Tab}^{\nu/\mu}(\lambda)|
=\sum_{\alpha}|\mathrm{Tab}^{\alpha}(\lambda)|\cdot|\mathrm{Tab}^{\nu/\mu}(\alpha)|$,
which implies Theorem \ref{thrm:PPP1}.
\end{proof}

The coefficient $d_{\lambda\mu}^{\nu}$ can be expressed in terms of 
semistandard increasing decomposition tableaux 
$\epsilon^{+}(\lambda)$ and $\epsilon^{+}(\mu)$ as follows.

Let $T$ and $U$ be a semistandard increasing decomposition tableaux of 
shape $\epsilon^{+}(\lambda)$ and $\epsilon^{+}(\mu)$. 
We denote by $w_{\lambda}:=\mathrm{read}(\epsilon^{+}(T))$, 
$w_{\mu}:=\mathrm{read}(\epsilon^{+}(U))$.
Let $\mathrm{Word}(\mu,\lambda)$ be the set 
\begin{eqnarray*}
\mathrm{Word}(\mu,\lambda):=
\left\{
(\epsilon^{+}(T),\epsilon^{+}(U))\ \vert \ 
w_{\mu}\ast w_{\lambda} \text{ is a shifted Yamanouchi word of content } \nu 
\right\}.
\end{eqnarray*}

\begin{theorem}
\label{thrm:PPPword}
We have 
\begin{eqnarray*}
d_{\lambda\mu}^{\nu}=|\mathrm{Word}(\mu,\lambda)|.
\end{eqnarray*}
\end{theorem}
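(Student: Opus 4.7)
The plan is to construct an explicit bijection between $\mathrm{Word}(\mu,\lambda)$ and the set $\{S \in T'(\nu/\mu;\lambda) \mid \mathrm{read}(S) \text{ is an LRS word}\}$, whose cardinality is $d_{\lambda\mu}^{\nu}$ by Stembridge's theorem (Theorem \ref{thrm:StemPP}). This would be the shifted-$P$ analogue of the bijection implicit in Theorem \ref{thrm:PPs2}, promoted from the $s$-expansion coefficient $e_{\lambda\mu}^{\alpha}$ (with weak reading words and ordinary Yamanouchi words) to the $P$-expansion coefficient $d_{\lambda\mu}^{\nu}$ (with SSIDT reading words and shifted Yamanouchi words).

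For the forward map, given $(\epsilon^{+}(T),\epsilon^{+}(U)) \in \mathrm{Word}(\mu,\lambda)$, I would first form $w := w_{\mu}\ast w_{\lambda}$. By the hypothesis together with Equation (\ref{eqn:shapeYam2}), $P_{\mathrm{mix}}(w)$ is forced to be the canonical shape-$\nu$, content-$\nu$ tableau $V_{\nu}$ (with $i$'s along row $i$), while $Q := Q_{\mathrm{mix}}(w)$ is a standard shifted tableau of shape $\nu$. Since $P_{\mathrm{mix}}(w_{\mu}) = U$ has shape $\mu$, the entries $1,\ldots,|\mu|$ of $Q$ occupy a shape-$\mu$ sub-tableau of $\nu$, and the remaining entries form a standard skew shifted tableau $Q'$ of shape $\nu/\mu$. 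Next, because $\epsilon^{+}(T)$ is compatible with the shape $\epsilon^{+}(\lambda)$, the reading word $w_{\lambda}$ decomposes canonically as $u_{l}\ast u_{l-1}\ast\cdots\ast u_{1}$ with $l = l(\lambda)$, each block $u_{i}$ being weakly increasing of length $\lambda_{i}$. Transfer this block labeling to $Q'$ by labeling the box that records the $j$-th letter of $w_{\lambda}$ with the block index $i$, and then destandardize via the convention of Step 1 of Section \ref{sec:SSIDTplus}; this produces a shifted skew tableau $S$ of shape $\nu/\mu$ with content $\lambda$.

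The central claim is that $\mathrm{read}(S)$ is an LRS word, and the inverse map reverses the construction: standardize $S$, glue with $Q_{\mathrm{mix}}(w_{\mu})$ to rebuild $Q$, apply inverse mixed insertion to $(V_{\nu},Q)$, and split the resulting word by length, identifying each half as a unique SSIDT reading word via Theorem \ref{thrm:epplus}. The main obstacle is the LRS/shifted-Yamanouchi equivalence: I need to show that the SSIDT block structure of $w_{\lambda}$ interacts correctly with the lattice property of prefixes of $w$ so that, for each $i$, the boxes in $Q'$ corresponding to block $u_{l-i+1}$ occupy a vertical strip followed by a horizontal strip (making the destandardization well-defined with content exactly $\lambda$) and so that conditions (i)--(iii) defining a shifted Yamanouchi word translate into the LRS lattice and first-occurrence conditions on $\mathrm{read}(S)$. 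I expect this to follow from an inductive analysis on $l(\lambda)$, tracking the evolution of $P_{\mathrm{mix}}$ and $Q_{\mathrm{mix}}$ during the block-by-block insertion of $w_{\lambda}$ into $U$, in the same spirit as the proof of Theorem \ref{thrm:epplus}.
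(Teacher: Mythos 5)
Your overall strategy coincides with the paper's: Theorem \ref{thrm:PPPword} is reduced to Theorem \ref{thrm:StemPP} by exhibiting a bijection between $\mathrm{Word}(\mu,\lambda)$ and the set of skew shifted tableaux of shape $\nu/\mu$ whose reading word is an LRS word of content $\lambda$; this is exactly the paper's Lemma \ref{lemma:PPPword}, which you describe in the direction opposite to the paper's map $\kappa$ (the paper goes from the skew tableau $T$ to the pair of words via inverse permutations, the canonical recording tableau $Q_{0}(\lambda)$, and two reverse mixed insertions, whereas you insert $w_{\mu}\ast w_{\lambda}$ and read off the recording tableau restricted to $\nu/\mu$). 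As a plan this is sound, and your identification of $P_{\mathrm{mix}}(w)$ with the canonical content-$\nu$ tableau $P_{0}(\nu)$ is correct.

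The gap is that the entire technical content of the argument is deferred. What you call ``the main obstacle'' --- that the block decomposition $w_{\lambda}=u_{l}\ast\cdots\ast u_{1}$ forced by the shape $\epsilon^{+}(\lambda)$ transfers to a vertical-strip-then-horizontal-strip pattern in $Q'$ (so that destandardization is well defined with content exactly $\lambda$), and that the shifted Yamanouchi conditions on $w$ translate into the LRS lattice and first-occurrence conditions on $\mathrm{read}(S)$ --- is precisely the substance of the paper's statements (S8) and (S9), whose proof occupies the bulk of Lemma \ref{lemma:PPPword} and requires, among other things, a case-by-case analysis showing that no shifted plactic relation (\ref{eqn:pl1})--(\ref{eqn:pl8}) can reorder the letters that encode column adjacency in $\nu/\mu$. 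Asserting that you ``expect this to follow from an inductive analysis'' does not establish it, and without it neither direction of the bijection is known to land in the claimed target set. A second concrete defect is that your inverse map is circular as stated: you propose to glue $\mathrm{stand}(S)$ onto $Q_{\mathrm{mix}}(w_{\mu})$ to rebuild $Q$, but $w_{\mu}$ is the object being reconstructed. For the skew part $Q'$ alone to determine the pair, one must prove that the restriction of $Q_{\mathrm{mix}}(w_{\mu}\ast w_{\lambda})$ to $\mu$ is the \emph{fixed} canonical tableau $Q_{0}(\mu)$ for every $(\epsilon^{+}(T),\epsilon^{+}(U))\in\mathrm{Word}(\mu,\lambda)$; the paper sidesteps this by building $Q_{0}(\mu)$ into the construction of the recording tableau $U$ in Step 3 of Lemma \ref{lemma:PPPword}, but in your formulation it is an unproved (and necessary) claim.
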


We introduce a lemma used in a proof of Theorem \ref{thrm:PPPword}.
Recall that $\mathrm{Tab}^{\nu/\mu}(\lambda)$ is the set of 
shifted tableaux of skew shape $\nu/\mu$ which have 
an LRS word of content $\lambda$.
\begin{lemma}
\label{lemma:PPPword}
There exists a bijection 
$\kappa:\mathrm{Tab}^{\nu/\mu}(\lambda)\rightarrow 
\mathrm{Word}(\mu,\lambda), T\mapsto(w_{1},w_{2})$.
\end{lemma}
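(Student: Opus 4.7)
The plan is to build $\kappa$ using the mixed insertion correspondence together with the SSIDT construction of Section~\ref{sec:SSIDTplus}. The overall strategy is to exhibit, for a given $T\in\mathrm{Tab}^{\nu/\mu}(\lambda)$, a shifted tableau $T_1$ of shape $\lambda$ recovered from the LRS word $\mathrm{read}(T)$, together with a shifted tableau $T_2$ of shape $\mu$ that realizes the ``background'' onto which $T$ is inserted, and then to apply $\epsilon^{+}$ to each to produce the SSIDT pair.

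Concretely, given $T\in\mathrm{Tab}^{\nu/\mu}(\lambda)$, I would define $T_1$ to be the shape-$\lambda$ shifted tableau associated to the LRS word $\mathrm{read}(T)$ (via the Kra\'skiewicz insertion, using the LRS lattice property to force the output shape to be $\lambda$), and let $w_\lambda:=\mathrm{read}(\epsilon^{+}(T_1))$ so that Theorem~\ref{thrm:epplus} gives $P_{\mathrm{mix}}(w_\lambda)=T_1$. Then $T_2$ is obtained by reverse mixed insertion from the full shape-$\nu$ tableau $P_{\mathrm{mix}}(w_\mu\ast w_\lambda)$, where the recording positions in $\nu/\mu$ are read off from $T$; equivalently, $T_2$ is the unique shifted tableau of shape $\mu$ such that inserting $w_\lambda$ letter-by-letter into $T_2$ adds new boxes in the positions specified by the LRS enumeration in $T$. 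Setting $w_\mu:=\mathrm{read}(\epsilon^{+}(T_2))$ yields the candidate pair $\kappa(T):=(\epsilon^{+}(T_1),\epsilon^{+}(T_2))$.

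The key claim to verify is that $w_\mu\ast w_\lambda$ is a shifted Yamanouchi word of content $\nu$. The three defining conditions of a shifted Yamanouchi word translate directly from the LRS lattice property of $\mathrm{read}(T)$ combined with the anti-diagonal structure of $\epsilon^{+}(\lambda)$ and $\epsilon^{+}(\mu)$: the Yamanouchi property of $w_\mu\ast w_\lambda$ corresponds to the first half of the LRS lattice condition; the existence of some $i-1$ to the left of the leftmost $i$ comes from the second LRS condition (first occurrence of $i$ or $i'$ is unprimed); and the strict-increase condition $\mathrm{seq}(r,r)<\mathrm{seq}(r+1,r+1)$ is extracted by tracking the rightmost letter of each anti-diagonal of the two SSIDTs. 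The inverse map is read off symmetrically: given $(\epsilon^{+}(T_1),\epsilon^{+}(T_2))\in\mathrm{Word}(\mu,\lambda)$, the shifted Yamanouchi property of $w_\mu\ast w_\lambda$ guarantees $\mathrm{shape}(P_{\mathrm{mix}}(w_\mu\ast w_\lambda))=\nu$ via Eqn.~\eqref{eqn:shapeYam2}, and inserting $w_\lambda$ into $T_2=P_{\mathrm{mix}}(w_\mu)$ records a skew filling on $\nu/\mu$ whose reading word is forced to be LRS of content $\lambda$ by reversing the translation above.

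The main obstacle will be the key claim, specifically the bookkeeping at the concatenation point $w_\mu\ast w_\lambda$: the LRS conditions on $\mathrm{read}(T)$ are stated for a single mixed-primed word over $X'$, whereas the shifted Yamanouchi condition is for a concatenated unprimed word; reconciling these requires carefully pairing primed letters in $\mathrm{read}(T)$ with vertical-strip occurrences in $\epsilon^{+}(T_1)$, and unprimed ones with horizontal-strip occurrences, in the spirit of Lemma~\ref{lemma:epplus2} and Proposition~\ref{prop:epplus}. Once this pairing is established, verifying $\mathrm{seq}(r,r)<\mathrm{seq}(r+1,r+1)$ and bijectivity becomes a matter of unwinding the definitions, and the lemma follows.
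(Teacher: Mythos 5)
Your proposal has a genuine gap, and the core of it is that you never invert the word. In the paper's construction the LRS condition on $\mathrm{read}(T)$ (a word of \emph{content} $\lambda$ living on the shape $\nu/\mu$) and the shifted Yamanouchi condition on $w_{\mu}\ast w_{\lambda}$ (a word of \emph{content} $\nu$ compatible with the shapes $\epsilon^{+}(\mu)$ and $\epsilon^{+}(\lambda)$) sit on opposite sides of the insertion correspondence: one is about values, the other about positions. The paper bridges them by standardizing $T$, passing to the \emph{inverse permutation} $v_{1}:=(\mathrm{read}(\mathrm{stand}(T)))^{-1}$, and later inverting again to get $v_{2}$; this is exactly the $(P_{\mathrm{mix}}(w),Q_{\mathrm{mix}}(w))=(Q_{\mathrm{shift}}(w^{-1}),P_{\mathrm{shift}}(w^{-1}))$ duality at work. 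Your claim that the three defining conditions of a shifted Yamanouchi word ``translate directly'' from the LRS lattice property is not available without this inversion, and your first step is already ill-posed: you propose to apply Kra\'skiewicz insertion to $\mathrm{read}(T)$, but that is a word over the marked alphabet $X'$ while SK insertion is defined for words in $X$.

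The definition of $T_{2}$ is also circular as written: you define $T_{2}$ by reverse mixed insertion from $P_{\mathrm{mix}}(w_{\mu}\ast w_{\lambda})$, but $w_{\mu}$ is itself $\mathrm{read}(\epsilon^{+}(T_{2}))$. The ``equivalently'' clause replaces this with an existence-and-uniqueness claim (a unique shape-$\mu$ tableau into which inserting $w_{\lambda}$ records the skew positions of $T$) that is nowhere established and leaves the content of $T_{2}$ undetermined. In the paper this is handled by building explicit recording tableaux $Q_{0}(\mu)$ and $Q_{0}(\lambda)$ from $\epsilon^{+}_{0}$ by the column-sliding construction, assembling them with $U'$ into a single recording tableau $U$ of shape $\nu$, and reverse-inserting from the pair $(P_{0}(\nu),U)$; the compatibility of the resulting $w_{1}$ and $w_{2}$ with $\epsilon^{+}(\mu)$ and $\epsilon^{+}(\lambda)$ (claim (S9)) depends on exactly this choice, and the compatibility of $v_{2}$ with $\nu/\mu$ (claim (S8)) requires a careful case analysis with the shifted plactic relations showing that no relation can reorder the relevant pairs of letters. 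These are the substantive steps of the proof, and ``unwinding the definitions'' does not supply them.
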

\begin{proof}
A map $\kappa:\mathrm{Tab}^{\nu/\mu}(\lambda)\rightarrow\mathrm{Word}(\mu,\lambda)$ 
is given by the following three steps.
\paragraph{Step 1}
We perform a standardization on a tableau $T$.
Then, we obtain a word $v_{1}$ as the inverse permutation of 
the reading word of $\mathrm{stand}(T)$, {\it i.e.},
$v_{1}:=(\mathrm{read}(\mathrm{stand}(T)))^{-1}$.

\paragraph{Step 2}
Let $\epsilon_{0}^{+}(\lambda)$ be a tableau of shape $\epsilon^{+}(\lambda)$ 
such that the boxes are enumerated by $1,2,\ldots,|\lambda|$ from left to 
right in row starting from the bottom row to top.
We slide down the columns of $\epsilon_{0}^{+}(\lambda)$ such that 
the lowest box in a column is placed in the same hight as the unique 
box in the leftmost column of $\epsilon_{0}^{+}(\lambda)$. 
We denote by $Q$ the obtained tableau.
By turning $Q$ upside down and moving its rows to fit the shifted 
tableau $\lambda$,
we obtain a standard tableau $Q_{0}(\lambda)$.
For example, $Q_{0}(\lambda)$ for $\lambda=(6,3,1)$ is given by
\begin{eqnarray*}
\epsilon^{+}_{0}(\lambda)=
\begin{matrix}
 &  &  &  &  & 10 \\
 &  &  &  & 8 & 9 \\
 &  &  & 5 & 6 & 7 \\
 &  & 3 & 4 \\
 & 2 \\
1 
\end{matrix}
\quad\rightarrow\quad 
\begin{matrix}
  & & & & & 10 \\
 & & & 5 & 8 & 9 \\
1 & 2 & 3 & 4 & 6 & 7
\end{matrix}
\quad\rightarrow\quad
Q_{0}(\lambda)=
\begin{matrix}
1 & 2 & 3 & 4 & 6 & 7 \\
  & 5 & 8 & 9 \\
  &   & 10
\end{matrix}.
\end{eqnarray*}
From $v_{1}$ and $Q_{0}(\lambda)$, we have a pair of 
shifted tableaux $(P_{\mathrm{mix}}(v_{1}),Q_{0}(\lambda))$.
The tableau $P_{\mathrm{mix}}(v_{1})$ may have primed letters 
but $Q_{0}(\lambda)$ does not.
By reversing the mixed insertion, we obtain a word 
$v'_{2}:=\mathrm{mixRSK}^{-1}(P_{\mathrm{mix}}(v_{1}),Q_{0}(\lambda))$.
We define $v_{2}$ as the inverse permutation of $v'_{2}$.

\paragraph{Step 3}
We put letters in a tableau $U'$ of shape $\nu/\mu$ such that the reading 
word $\mathrm{read}(U')$ is equal to $v_{2}$.
We construct a tableau $U$ of shape $\nu$ from $U'$ by enumerating 
boxes in $\mu\subseteq\nu$ by $1,2,\ldots,|\mu|$ as $Q_{0}(\mu)$ 
and boxes in $\nu/\mu$ by 
$|\mu|+1,\ldots,|\nu|$ according to the letters in $U'$, {\it i.e.},
a letter in the region $\nu/\mu$ in $U$ is a letter in $U'$ plus $|\mu|$.
Let $P_{0}(\nu)$ be a unique tableau of shape $\nu$ and of content $\nu$.
Then, by reversing the mixed insertion, we obtain a word 
$w:=\mathrm{mixRSK}^{-1}(P_{0}(\nu),U)$.
The word is written as a concatenation of two words $w=w_{1}\ast w_{2}$ 
where $w_{1}$ (resp. $w_{2}$) is of length $|\mu|$ (resp. $|\lambda|$).
This gives a map $\kappa:T\mapsto(w_{1},w_{2})$.

We claim the following with respect to the map $\kappa$.
\begin{enumerate}
\item[(S8)] The word $v_{2}$ is compatible with the shape $\nu/\mu$.
\item[(S9)] The words $w_{1}$ and $w_{2}$ are compatible with the shapes 
$\epsilon^{+}(\mu)$ and $\epsilon^{+}(\lambda)$.
\end{enumerate}

For (S8), let $\xi\in\mathbb{Z}^{l(\nu)}$ be a sequence of non-negative integers 
defined by $\xi:=\nu-\mu$ where $\mu_{i}=0$ for $l(\mu)<i\le l(\nu)$.
We enumerate the boxes in $\nu/\mu$ by $1,2\ldots,|\lambda|$ from left 
to right in a row starting from the bottom row to top. 
We denote by $b_{i}$ the box in $\nu/\mu$ labelled $i$. 
The $i$-th row (from top) of the shape $\nu/\mu$ has $\xi_{i}$ boxes.
Since $\mathrm{stand}(T)$ is a standard tableau,
the word $v_{1}$ has the following properties: 1) the letters from 
$\sum_{j=i+1}^{l(\nu)}\xi_{j}+1$ to $\sum_{j=i}^{l(\nu)}\xi_{j}$ 
for $1\le i\le l(\nu)$ form an increasing sequence in $v_{1}$, and 
2) if the boxes $b_{i}$ and $b_{j}$ ($i<j$) are in the same column 
in $\nu/\mu$, the letter $j$ is left to the letter $i$ in $v_{1}$.
To show that $v_{2}$ is compatible with the shape $\nu/\mu$, 
it is enough to show that $v'_{2}$ satisfies the properties 
1) and 2) (replace $v_{1}$ with $v'_{2}$). 
We first show that $v'_{2}$ satisfies the property 2).
By definition, two words $v_{1}$ and $v_{2}$ have the same 
mixed insertion tableau.
From Theorem \ref{thrm:splactic}, we have $v_{1}\sim v_{2}$.
Suppose that the box $b_{i+1}$ is just above the box
$b_{i}$ in $\nu/\mu$.
The letter $i+1$ is left to $i$ in the word $v_{1}$.
In the plactic relations, there is no relation which 
exchanges the letters $i$ and $i+1$. 
Therefore, the letter $i+1$ is left to the letter $i$ 
even in the word $v'_{2}$. 
Suppose that the boxes $b_{i+l}, b_{i+l+1},\ldots,b_{i+2l-1}$
are just above the boxes $b_{i},b_{i+1},\ldots,b_{i+l-1}$ 
in $\nu/\mu$.
The letter $i+j+l$ is left to the letter $i+j$ for $0\le j\le l-1$
in the word $v_{1}$.
If the letter $i+j$ is left to the letter $i+j+l$ in $v'_{2}$, 
we have to exchange $i+j+l$ and $i+j$ in $v_{1}$ by a plactic
relation.
Since a plactic relation is applied to a word of length four, it is enough 
to consider a word $\tilde{w}$ of length four including $i+j+l$ and $i+j$ 
such that $i+j$ and $i+j+l$ are next to each other in $\tilde{w}$.
We have four cases:
\begin{enumerate}[(a)]
\item $j<l-1$ and $\tilde{w}$ contains $i+j+1$,
\item $j<l-1$ and $\tilde{w}$ does not contain $i+j+1$, 
\item $j=l-1$ and $\tilde{w}$ contains $i+j+l-1$,
\item $j=l-1$ and $\tilde{w}$ does not contain $i+j+l-1$.
\end{enumerate}
For case (a), since $i+j+l+1$ is right to $i+j+l$ and left to $i+j+1$,
$\tilde{w}=(i+j+l)(i+j)(i+j+l+1)(i+j+1)$. The word $\tilde{w}$ is 
locally equivalent to a word $cadb$ with $a<b<c<d$. 
None of the plactic relations exchanges $a$ and $c$. 
For case (b), we have two cases: (i) $\widetilde{w}$ contains 
$i+j+l-1$ and (ii) $\widetilde{w}$ does not contain $i+j+l-1$.
For case (b-i), since the letter $i+j-1$ is right to $i+j+l-1$ and 
left to $i+j$, $\tilde{w}=(i+j+l-1)(i+j-1)(i+j+l)(i+j)$.
Locally, $\tilde{w}$ is equivalent to $cadb$ with $a<b<c<d$. 
There is no plactic relation which exchanges $b$ and $d$.
For case (b-ii), observe that the word $\tilde{w}$ neither contains 
$i+j+1$ nor $i+j+l-1$. If $\tilde{w}$ is formed by four letters 
$a,b,c$ and $d$ with $a<b<c<d$, the letters $i+j$ and $i+j+l$ 
form a partial word $ab$, $bc$ or $cd$ in $\tilde{w}$.
However, there is no plactic relation which exchanges $ab$, $bc$ or 
$cd$.
By a similar argument, one can show that there is no plactic relation 
which exchanges $i+j$ and $i+j+l$ for cases (c) and (d).
Summarizing above observations, the letter $i+j+l$ is left to 
the letter $i+j$ in the word $v'_{2}$.  
This implies that $v'_{2}$ satisfies the property 2).
By a similar argument, one can show that $v'_{2}$ satisfies 
the property 1). Thus, the statement (S8) is true.

For the statement (S9), observe that if two letters $i$ and $i+1$ 
are in the same row in $\epsilon^{+}_{0}(\lambda)$, 
$i$ and $i+1$ form a horizontal strip in $Q_{0}(\lambda)$.
Similarly, if the letters $i_{1}<i_{2}<\ldots<i_{r}$ are in the same column 
in $\epsilon^{+}_{0}(\lambda)$, these letters form a vertical strip in
$Q_{0}(\lambda)$.
Since the reading word of the tableau $U'$ is $v_{2}$, these properties 
are hold by $U'$.
The tableau $U$ is divided into two regions $\mu$ and $\nu/\mu$.
Both regions have the same properties as $Q_{0}(\mu)$ and $U'$
(or equivalently $Q_{0}(\lambda)$). 
Finally, since the word $w$ is obtained by reversing the mixed insertion with 
the recording tableau $U$, the word $w_{1}$ and $w_{2}$ fit to the 
shapes $\epsilon^{+}(\mu)$ and $\epsilon^{+}(\lambda)$.
Thus, the statement (S9) is true.

From the construction, the map $\kappa$ has the inverse. 
By a similar argument discussed above, it is easy to show that 
the map $\kappa^{-1}$ is well-defined.
This completes the proof.

\end{proof}

\begin{example}
Let $\lambda=(5,3)$, $\mu=(3,1)$ and $\nu=(6,4,2)$.
The product of $P_{\lambda}P_{\mu}$ contains $4P_{\nu}$.
An example of $T$ in $\mathrm{Tab}^{\nu/\mu}(\lambda)$ 
is given by 
\begin{eqnarray*}
T=\begin{matrix}
\ast & \ast & \ast & 1' & 1 & 1 \\
     & \ast & 1 & 1 & 2' \\
     &      & 2 & 2
\end{matrix}.
\end{eqnarray*}
Since the word $\mathrm{read}(\mathrm{stand}(T))$ is $78236145$,
we have $v_{1}=63478512$.
Thus, we have a pair of tableaux 
\begin{eqnarray*}
(P_{\mathrm{mix}}(v_1),Q_{0}(\lambda))
:=\left(
\begin{matrix}
1 & 2 & 3' & 6' & 8 \\
  & 4 & 5 & 7' 
\end{matrix}\quad,\quad
\begin{matrix}
1 & 2 & 3 & 5 & 7 \\
  & 4 & 6 & 8
\end{matrix}
\right).
\end{eqnarray*}
By reversing the mixed insertion, we obtain words $v'_{2}$ and $v_{2}$ 
as $v'_{2}=63745182$ and $v_{2}=68245137$.
The tableau $U'$ and $U$ is given from $v_{2}$ by
\begin{eqnarray*}
U'=\begin{matrix}
* & * & * & 1 & 3 & 7 \\
  & * & 2 & 4 & 5 \\
  &   & 6 & 8
\end{matrix},\qquad\qquad
U=\begin{matrix}
1 & 2 & 3 & 5 & 7 & 11 \\
  & 4 & 6 & 8 & 9 \\
  &   & 10 & 12
\end{matrix}.
\end{eqnarray*}
The word $w=w_{1}\ast w_{2}$ is obtained from $P_{0}(\lambda)$ and $U$:
\begin{eqnarray*}
w=\mathrm{mixRSK}^{-1}\left(
\begin{matrix}
1 & 1 & 1 & 1 & 1 & 1 \\
  & 2 & 2 & 2 & 2 \\
  &   & 3 & 3
\end{matrix}\quad,\quad
\begin{matrix}
1 & 2 & 3 & 5 & 7 & 11 \\
  & 4 & 6 & 8 & 9 \\
  &   & 10 & 12
\end{matrix}
\right)
=1221\ast 31312121.
\end{eqnarray*}
Note that $w$ is a shifted Yamanouchi word and the word $w_1$ (resp. $w_2$)  
is compatible with the shape $\epsilon^{+}(\mu)$ (resp. $\epsilon^{+}(\lambda)$). 
\end{example}

\begin{proof}[Proof of Theorem \ref{thrm:PPPword}]
From Theorem \ref{thrm:StemPP}, we have 
$d_{\lambda\mu}^{\nu}=|\mathrm{Tab}^{\nu/\mu}(\lambda)|$. 
From Lemma \ref{lemma:PPPword}, we have 
$|\mathrm{Tab}^{\nu/\mu}(\lambda)|=|\mathrm{Word}(\lambda,\mu)|$,
which implies Theorem is true.
\end{proof}

\section{\texorpdfstring{$\hat{S}$}{S}-functions and 
products of \texorpdfstring{$P$}{P}-functions}

\subsection{\texorpdfstring{$\hat{S}$}{S}-function in 
terms of products of \texorpdfstring{$P$}{P}-functions}
\label{sec:SinPP}

Let $\alpha$ be an ordinary partition and $\lambda,\mu$ be strict partitions 
such that $\alpha=\lambda\otimes\mu$.
We regard a strict partition as a set of positive integers.
Let $A=(A_1,A_2,\ldots):=(\lambda\cup\mu)\setminus(\lambda\cap\mu)$ be a decreasing integer
sequence.
We assign a sign to an element of $A$: $\mathrm{sign}(A_{i}):=(-1)^{i-1}$.

We define a pair of strict partitions $(\lambda',\mu')$ from $(\lambda,\mu)$ as follows.
As sets of positive integers, $\lambda'$ and $\mu'$ satisfy 
$\lambda'\subseteq\lambda$, $\mu'\supseteq\mu$, 
$\lambda'\cup\mu'=\lambda\cup\mu$ and $\lambda'\cap\mu'=\lambda\cap\mu$.
We denote by $S_{1}(\lambda,\mu)$ the set of pairs of strict partitions $(\lambda',\mu')$.
We define the sets 
\begin{eqnarray*}
S^{+}_{1}(\lambda,\mu)
&:=&\{(\lambda',\mu')\in S_{1}(\lambda,\mu) \ |\  |\mu'\setminus\mu|\equiv0\pmod{2}\}, \\
S^{-}_{1}(\lambda,\mu)&:=&S_{1}(\lambda,\mu)\setminus S^{+}_{1}(\lambda,\mu).
\end{eqnarray*}
Similarly, let $S_{2}(\lambda,\mu)$ be the set of pairs of strict partitions $(\lambda'',\mu'')$
such that $\lambda''\supseteq\lambda$, $\mu''\subseteq\mu$, 
$\lambda''\cup\mu''=\lambda\cup\mu$ and $\lambda''\cap\mu''=\lambda\cap\mu$.

When $(\lambda',\mu')\in S_{1}(\lambda,\mu)$ or $(\lambda'',\mu'')\in S_{2}(\lambda,\mu)$, 
we define two signs $\mathrm{sign}_{1}(\lambda',\mu';\lambda,\mu)$ and 
$\mathrm{sign}_{2}(\lambda'',\mu'';\lambda,\mu)$ as 
\begin{eqnarray*}
\mathrm{sign}_{1}(\lambda',\mu';\lambda,\mu)
&:=&\prod_{a\in A\cap(\lambda\setminus\lambda')}\mathrm{sign}(a), \\
\mathrm{sign}_{2}(\lambda'',\mu'';\lambda,\mu)&:=&(-1)^{|\mu\setminus\mu''|}
\prod_{a\in A\cap(\mu\setminus\mu'')}\mathrm{sign}(a).
\end{eqnarray*}

\begin{theorem}
\label{thrm:SinPP}
We have 
\begin{eqnarray}
\label{eqn:SinPP1}
\hat{S}_{\alpha}=2^{l(\mu)}\sum_{(\lambda',\mu')\in S_{1}(\lambda,\mu)}
\mathrm{sign}_{1}(\lambda', \mu';\lambda,\mu)P_{\lambda'}P_{\mu'},
\end{eqnarray}
for $l(\lambda)-l(\mu)=0$ or $1$  and 
\begin{eqnarray}
\label{eqn:SinPP2}
\begin{aligned}
\hat{S}_{\alpha}&=&2^{l(\lambda)}\sum_{(\lambda',\mu')\in S^{+}_{1}(\lambda,\mu)}
\mathrm{sign}_{1}(\lambda', \mu';\lambda,\mu)P_{\lambda'}P_{\mu'} \\
&=&2^{l(\lambda)}
\sum_{(\lambda',\mu')\in S^{-}_{1}(\lambda,\mu)}
\mathrm{sign}_{1}(\lambda', \mu';\lambda,\mu)P_{\lambda'}P_{\mu'}
\end{aligned}
\end{eqnarray}
for $l(\lambda)=l(\mu)+1$.
We also have
\begin{eqnarray}
\label{eqn:SinPP3}
\hat{S}_{\alpha}=2^{l(\lambda)}\sum_{(\lambda'',\mu'')\in S_{2}(\lambda,\mu)}
\mathrm{sign}_{2}(\lambda'', \mu'';\lambda,\mu)P_{\lambda''}P_{\mu''}.
\end{eqnarray}
for $l(\lambda)-l(\mu)=0$ or $1$.
\end{theorem}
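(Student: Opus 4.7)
The strategy is to realise both sides of the identity as Pfaffians and compare, extending the shift-symmetric case of J\'ozefiak--Pragacz to arbitrary $\lambda\otimes\mu$. The key observation is that the determinant (\ref{Sdet1}) for $\hat{S}_{\alpha}$ and the Pfaffian (\ref{eqn:Ppf}) for $P_{\nu}$ are linked through the two-row identity $Q_{(r,s)}=q_{r}q_{s}+2\sum_{i=1}^{s}(-1)^{i}q_{r+i}q_{s-i}$. Inverting this identity expresses every product $q_{r}q_{s}$ as a $\mathbb{Z}$-linear combination of $Q_{(r',s')}$'s, under the conventions $Q_{(a,b)}=-Q_{(b,a)}$ and $Q_{(a,a)}=0$. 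Using this substitution, I would rewrite $\det[q_{\alpha_{i}-i+j}]$ as the Pfaffian of an antisymmetric matrix of size $2l(\alpha)$ whose nonzero entries, after applying $Q_{(r,s)}=4P_{(r,s)}$ and $Q_{(r)}=2P_{(r)}$, are precisely the two-part $P_{(r,s)}$'s indexed by pairs drawn from the combined index set of $\lambda\otimes\mu$.

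Next, I would exploit the structure of $\alpha=\lambda\otimes\mu$ to see that the $2l(\alpha)$ indices split naturally into a $\lambda$-group and a $\mu$-group sitting on opposite sides of the main diagonal. Expanding the large Pfaffian by perfect matchings of its index set, each matching can be classified by which indices it pairs within the $\lambda$-group, within the $\mu$-group, and across the two groups. Matchings with no crossings yield $P_{\lambda}P_{\mu}$; matchings with crossings correspond exactly to swapping certain parts in $A=(\lambda\cup\mu)\setminus(\lambda\cap\mu)$ between $\lambda$ and $\mu$, producing each term $P_{\lambda'}P_{\mu'}$ with $(\lambda',\mu')\in S_{1}(\lambda,\mu)$. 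The Pfaffian signature associated with an across-group pairing reduces to the product of $\mathrm{sign}(A_{i})=(-1)^{i-1}$ over the moved parts, matching the definition of $\mathrm{sign}_{1}(\lambda',\mu';\lambda,\mu)$, while the overall scalar $2^{l(\mu)}$ in (\ref{eqn:SinPP1}) tracks the conversion of $Q$-pairs back to $P$-pairs in the $\mu$-group.

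For the case $l(\lambda)=l(\mu)+1$ in (\ref{eqn:SinPP2}), the Pfaffian requires an even number of indices, so one index on the $\mu$-side must be padded with a formal zero part. The choice of whether to pair this zero part with a kept index of $\mu$ or to reassign it to $\lambda'$ partitions $S_{1}(\lambda,\mu)$ according to the parity of $|\mu'\setminus\mu|$; each half carries twice the weight, which absorbs one factor of $2$ and explains why both the $S_{1}^{+}$ and $S_{1}^{-}$ sums equal $\hat{S}_{\alpha}$ with the larger prefactor $2^{l(\lambda)}$. Finally, Eqn. (\ref{eqn:SinPP3}) can be obtained from (\ref{eqn:SinPP1}) by running the Pfaffian expansion from the opposite end of the index set, equivalently by the involution $(\lambda',\mu')\leftrightarrow(\lambda'',\mu'')$ that swaps the roles of donor and receiver partition; the extra global factor $(-1)^{|\mu\setminus\mu''|}$ appearing in $\mathrm{sign}_{2}$ is exactly the sign incurred by reversing the pairing direction on every moved part.

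The main obstacle will be the sign bookkeeping: verifying that the Pfaffian signature of a crossing matching aligns with $\prod_{a\in A\cap(\lambda\setminus\lambda')}\mathrm{sign}(a)$ requires a carefully chosen total ordering on $\lambda\cup\mu$ and consistent use of the antisymmetry $P_{(r,s)}=-P_{(s,r)}$ throughout. A secondary obstacle is handling the common parts $\lambda\cap\mu$: these should contribute trivially to the sum, yet they appear as nonzero diagonal blocks in the large antisymmetric matrix and must be isolated (producing $P_{(r,r)}=0$ or a trivial factor) before the final matching expansion produces the clean form stated in the theorem.
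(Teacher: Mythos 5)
Your route reverses the paper's logical order: you want to first turn the determinant (\ref{Sdet1}) into a Pfaffian with two-row entries and then read the theorem off a perfect-matching expansion, whereas the paper proves Theorem \ref{thrm:SinPP} by induction on $|\alpha|$ --- expanding the determinant of $\hat{S}_{\alpha^{T}}$ along a column to get $\hat{S}_{\alpha}=\sum_{i}(-1)^{i-1}q_{\lambda_{i}}\hat{S}_{\mu\otimes\widehat{\lambda}_{i}}$ and combining this with the identity $\sum_{i}(-1)^{i-1}P_{\lambda_{i}}P_{\widehat{\lambda}_{i}}\in\{0,P_{\lambda}\}$ (Lemma \ref{lemma:SinPP1}), and separately multiplying by $\hat{S}_{(1)}=2P_{(1)}$ and using the two Pieri rules (Lemma \ref{lemma:SinPP2}) --- and only afterwards derives the perfect-matching and Pfaffian forms (Theorem \ref{thrm:SinPP2} and Section \ref{Giambelli}). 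Your plan therefore amounts to proving the Giambelli--Pfaffian formula first, which is legitimate in principle but is not carried out.

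Two concrete gaps remain. First, the pivotal step --- rewriting $\det[q_{\alpha_{i}-i+j}]$ as a Pfaffian whose nonzero entries are precisely the $P[(\lambda_{i},\mu_{j})]$ --- is only announced. Inverting $Q_{(r,s)}=q_{r}q_{s}+2\sum_{i}(-1)^{i}q_{r+i}q_{s-i}$ inside every entry of the determinant and reassembling the result into an antisymmetric matrix indexed by the parts of $\lambda$ and $\mu$ is essentially the whole content of the theorem; in particular you must show that all would-be entries $P[(\lambda_{i},\lambda_{j})]$, $P[(\mu_{i},\mu_{j})]$ and all contributions from $\lambda\cap\mu$ cancel, which is exactly the cancellation the paper has to establish in Theorem \ref{thrm:SinPP2}. (Note also that the determinant has size $l(\alpha)=\lambda_{1}$, not $l(\lambda)+l(\mu)$, so the claimed Pfaffian of size $2l(\alpha)$ is not of the right dimension without further reduction.) Second, your matching classification conflates two levels of expansion: an individual perfect matching contributes a product of two-row functions $\prod_{i}P[\lambda_{i},\mu_{\pi(i)}]$ as in Eqn.\ (\ref{eqn:SinPPP}), not a product $P_{\lambda'}P_{\mu'}$ of multi-row $P$-functions; each term $P_{\lambda'}P_{\mu'}$ with $(\lambda',\mu')\in S_{1}(\lambda,\mu)$ is itself a signed sum over many matchings via Eqn.\ (\ref{eqn:Ppf}). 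So ``matchings with no crossings yield $P_{\lambda}P_{\mu}$'' is not correct, and regrouping the matching expansion into the $S_{1}$-sum with the signs $\mathrm{sign}_{1}$ and the stated powers of $2$ is an additional nontrivial step. Relatedly, the parity bookkeeping you sketch does not by itself show that the $S_{1}^{+}$ and $S_{1}^{-}$ sums in Eqn.\ (\ref{eqn:SinPP2}) are \emph{separately} equal to $2^{-l(\lambda)}\hat{S}_{\alpha}$; the paper obtains this from the vanishing of $\sum_{i}(-1)^{i-1}P_{\lambda_{i}}P_{\widehat{\lambda}_{i}}$ for $l(\lambda)$ even, and your framework needs a substitute for that identity.
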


\begin{example}
Let $\lambda:=(5,4,2)$ and $\mu:=(3,1)$. 
Then, we have $\alpha=\lambda\otimes\mu=(4,3,3,3,2)$.
The $\hat{S}_{\alpha}$ is written in terms of $P$-functions in three ways as  
\begin{eqnarray*}
\hat{S}_{\alpha}
&=&2^{3}(P_{(5,4,2)}P_{(3,1)}+P_{(5)}P_{(4,3,2,1)}-P_{(4)}P_{(5,3,2,1)}
-P_{(2)}P_{(5,4,3,1)}) \\
&=&2^{3}(P_{(4,2)}P_{(5,3,1)}-P_{(5,2)}P_{(4,3,1)}-P_{(5,4)}P_{(3,2,1)}+P_{(5,4,3,2,1)}) \\
&=&
2^{3}(P_{(5,4,2)}P_{(3,1)}-P_{(5,4,3,2)}P_{(1)}-P_{(5,4,2,1)}P_{(3)}+P_{(5,4,3,2,1)}) 
\end{eqnarray*}
The first and second expressions are Eqn. (\ref{eqn:SinPP2}) and they imply
Eqn. (\ref{eqn:SinPP1}).
The third expression comes from Eqn. (\ref{eqn:SinPP3}).
\end{example}

We will prove Theorem \ref{thrm:SinPP} by induction.
Before we move to a proof of Theorem \ref{thrm:SinPP}, we introduce two 
lemmas needed later.

\begin{lemma}
\label{lemma:SinPP1}
Suppose that $\alpha:=\lambda\otimes\mu$ with $l(\lambda)=l(\mu)+1$ and 
$\beta:=\sigma\otimes\rho$ with $l(\sigma)=l(\rho)$. 
The function $\hat{S}_{\alpha}$ satisfies Eqn.(\ref{eqn:SinPP2}) if 
Eqn.(\ref{eqn:SinPP1}) is true for all $\beta$'s satisfying $|\beta|<|\alpha|$.
\end{lemma}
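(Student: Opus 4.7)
The plan is to prove Lemma \ref{lemma:SinPP1} by an induction on $|\alpha|$, converting $\hat{S}_{\alpha}$ into a Pfaffian and then using the hypothesis Eqn.~(\ref{eqn:SinPP1}) to expand the sub-Pfaffians arising after a row expansion. Set $l := l(\lambda) = l(\mu)+1$. Observe that Eqn.~(\ref{eqn:SinPP2}) consists of two equalities; since $S_1(\lambda,\mu) = S_1^+ \sqcup S_1^-$, the two expressions are jointly equivalent to Eqn.~(\ref{eqn:SinPP1}) at $\alpha$ (with prefactor $2^{l(\mu)} = 2^{l}/2$) coupled with the sign symmetry $\sum_{S_1^+} \mathrm{sign}_1\, P_{\lambda'}P_{\mu'} = \sum_{S_1^-} \mathrm{sign}_1\, P_{\lambda'}P_{\mu'}$, so it suffices to establish one of the two formulations together with this symmetry.

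First, using the skew Schur identification $\hat{S}_{\alpha} = Q_{\alpha+\delta/\delta}$ (with $\delta = (l(\alpha)-1,\ldots,1,0)$) from the proof of Theorem \ref{thrm:sP}, combined with $Q_{\lambda/\mu} = 2^{l(\lambda)-l(\mu)}P_{\lambda/\mu}$, I obtain $\hat{S}_{\alpha} = 2\, P_{\alpha+\delta/\delta}$, which by Eqn.~(\ref{eqn:skewPpf}) becomes $\hat{S}_{\alpha} = 2\, \mathrm{pf}[P_{i,j}(\alpha+\delta,\delta)]$. I then expand this Pfaffian along a well-chosen row (corresponding to the smallest element of $A = (\lambda\cup\mu)\setminus(\lambda\cap\mu)$), producing a linear combination of products of two sub-Pfaffians. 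By reversing the same identification, each sub-Pfaffian is recognizable either as another $\hat{S}_{\beta}$ with $\beta = \sigma \otimes \rho$ balanced ($l(\sigma)=l(\rho)$) and $|\beta| < |\alpha|$, or as a single $P_{\nu}$ appearing as a ``boundary'' term.

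Second, I apply the inductive hypothesis Eqn.~(\ref{eqn:SinPP1}) to each $\hat{S}_{\beta}$ that appears, rewriting it as a sum of $P$-function products. The entire expression becomes a linear combination of terms of the form $P_{\lambda'}P_{\mu'}$. Reorganising this combination by $(\lambda',\mu') \in S_1(\lambda,\mu)$ and matching the cumulative sign against $\mathrm{sign}_1(\lambda',\mu';\lambda,\mu)$, together with the prefactor $2^{l(\lambda)}$ arising from the product of the two inductive $2^{l(\rho)}$ factors, yields the right-hand side of Eqn.~(\ref{eqn:SinPP2}). The split into $S_1^+$ and $S_1^-$ arises naturally from which end of the expanded row each element of $A$ is ultimately assigned to.

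The main obstacle will be the sign bookkeeping. The Pfaffian assigns its signs via perfect matchings of $\{1,\ldots,2l\}$, whereas $\mathrm{sign}_1(\lambda',\mu';\lambda,\mu) = \prod_{a \in A\cap(\lambda\setminus\lambda')} (-1)^{i-1}$ is defined via positions in the sorted decreasing sequence $A$. Reconciling these two conventions — and, in particular, verifying that the odd parity of $|A|$ (which is forced by $l(\lambda) = l(\mu)+1$) is precisely what makes the $S_1^+$ and $S_1^-$ halves contribute equally, so that both expressions in Eqn.~(\ref{eqn:SinPP2}) are simultaneously valid — will constitute the technical heart of the argument.
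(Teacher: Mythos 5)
Your overall strategy -- expand a determinant/Pfaffian representation of $\hat{S}_{\alpha}$ along one row to produce single-row factors times smaller balanced $\hat{S}_{\beta}$'s, then invoke the inductive hypothesis -- is essentially the route the paper takes (the paper expands $\hat{S}_{\alpha}=\hat{S}_{\alpha^{T}}=\det[q_{\cdot}]$ along its first column, which via $q_{n}=2P_{(n)}$ and $\hat{S}_{\beta}=2P_{\beta+\delta/\delta}$ is the same computation as your Pfaffian expansion). The minors are indeed $\hat{S}_{\mu\otimes\widehat{\lambda}_{i}}$ with $l(\mu)=l(\widehat{\lambda}_{i})$, so the hypothesis applies to them.

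However, there is a genuine gap at the step where you claim that after substituting Eqn.~(\ref{eqn:SinPP1}) ``the entire expression becomes a linear combination of terms of the form $P_{\lambda'}P_{\mu'}$.'' It does not: each term at that stage is a \emph{triple} product $(-1)^{i-1}P_{(\lambda_{i})}P_{\lambda'}P_{\mu'}$ with $(\lambda',\mu')\in S_{1}(\mu,\widehat{\lambda}_{i})$, and no amount of regrouping by index set or sign-matching turns a triple product into a double one. The missing ingredient is the alternating-sum identity obtained from the Pfaffian expansion of $P_{\nu}$ (Eqn.~(\ref{eqn:Ppf})), namely
$\sum_{i=1}^{l(\nu)}(-1)^{i-1}P_{(\nu_{i})}P_{\widehat{\nu}_{i}}$
equals $P_{\nu}$ when $l(\nu)$ is odd and $0$ when $l(\nu)$ is even (Eqn.~(\ref{eqn:SinPP6}) in the paper). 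Applying this to the factors $P_{(\lambda_{i})}P_{\mu'}$, summed over $i$ with the sign $(-1)^{i-1}$ coming from the row expansion, is what fuses the single-row factor into $\mu'$, produces the double products indexed by $S_{1}^{\pm}(\lambda,\mu)$, and simultaneously explains (via the even/odd dichotomy) why the two halves contribute equally. You correctly sense that the parity of $l(\lambda)$ is the crux, but you file it under ``sign bookkeeping,'' whereas it is a substantive recombination identity without which the reorganisation step cannot be executed. A secondary caution: the sub-Pfaffians you obtain must be checked to be of the form $2^{-1}\hat{S}_{\beta}$ with $\beta=\sigma\otimes\rho$ \emph{balanced} ($l(\sigma)=l(\rho)$), since the hypothesis is only assumed for such $\beta$; and ``the row corresponding to the smallest element of $A$'' is not literally a row of $P_{i,j}(\alpha+\delta,\delta)$, whose rows are indexed by parts of $\alpha+\delta$ and of $\delta$, so that choice needs to be made precise.
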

\begin{proof}

Let $\widehat{\lambda}_{i}$ be a strict partition obtained by 
deleting $\lambda_{i}$ from $\lambda$.
Recall that $\alpha^{T}$ is the conjugate partition of $\alpha$.
A $\hat{S}$-function satisfies $\hat{S}_{\alpha}=\hat{S}_{\alpha^{T}}$  
and $S_{\alpha}^{T}$ has a determinant expression (\ref{Sdet1}).
By expanding a determinant with respect to the first column,
we have 
\begin{eqnarray}
\label{eqn:SinPP5}
\begin{aligned}
\hat{S}_{\alpha}&=\hat{S}_{\alpha^{T}} \\
&=\sum_{i=1}^{l(\lambda)}(-1)^{i-1}
q_{\lambda_{i}}\cdot\hat{S}_{\mu\otimes\widehat{\lambda}_{i}} \\
&=2^{l(\lambda)}
\sum_{i=1}^{l(\lambda)}\sum_{(\lambda',\mu')\in S_{1}(\mu,\widehat{\lambda}_{i})}
(-1)^{i-1}\mathrm{sign}_{1}(\lambda',\mu';\mu,\widehat{\lambda}_{i})
P_{\lambda_{i}}P_{\lambda'}P_{\mu'}
\end{aligned}
\end{eqnarray}
Since $P_{\lambda}$ has a expression by a Pfaffian (see Eqn. (\ref{eqn:Ppf})), we have 
\begin{eqnarray}
\label{eqn:SinPP6}
\sum_{i=1}^{l(\lambda)}(-1)^{i-1}P_{\lambda_{i}}P_{\widehat{\lambda}_{i}}
=
\left\{
\begin{matrix}
0, & \text {for } l(\lambda)\equiv0\pmod{2}\\
P_{\lambda}, & \text{ for } l(\lambda)\equiv1\pmod{2}
\end{matrix}
\right..
\end{eqnarray}
By Substituting Eqn.(\ref{eqn:SinPP6}) to a product $P_{\lambda_{i}}P_{\mu'}$ in 
Eqn.(\ref{eqn:SinPP5}) and rearranging the terms,
we obtain Eqn.(\ref{eqn:SinPP2}).
\end{proof}

\begin{lemma}
\label{lemma:SinPP2}
Suppose that $\alpha:=\lambda\otimes\mu$ with $l(\lambda)=l(\mu)$ and 
$\beta=\sigma\otimes\rho$ with $l(\sigma)=l(\rho)$.
The function $\hat{S}_{\alpha}$ satisfies Eqn.(\ref{eqn:SinPP1}) 
if Eqn.(\ref{eqn:SinPP1}) is true for all $\beta$ such that 
$|\beta|<|\alpha|$ or $\mu_{l(\mu)}>\rho_{l(\rho)}$ for 
$|\alpha|=|\beta|$ with $l(\mu)=l(\rho)$. 
\end{lemma}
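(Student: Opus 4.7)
My plan is to mirror the proof of Lemma \ref{lemma:SinPP1}. I begin from the identity $\hat{S}_\alpha = \hat{S}_{\alpha^T}$ and the determinant expression (\ref{Sdet1}), and perform a suitable Laplace expansion to arrive at a recursion
\begin{eqnarray*}
\hat{S}_\alpha = \sum_{i=1}^{l(\mu)} (-1)^{i-1} q_{\mu_i}\,\hat{S}_{\lambda \otimes \widehat{\mu}_i} + R,
\end{eqnarray*}
where $\widehat{\mu}_i$ denotes $\mu$ with its $i$-th part removed and $R$ collects the remaining contributions from the expansion. Re-expanding $R$ via (\ref{Sdet1}), each of its terms is expressible as $\hat{S}_{\sigma \otimes \rho}$ where either $|\sigma \otimes \rho| < |\alpha|$, or $|\sigma \otimes \rho| = |\alpha|$ with $l(\sigma) = l(\rho) = l(\mu)$ and $\rho_{l(\rho)} < \mu_{l(\mu)}$. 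In both cases the induction hypothesis of the lemma applies.

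Next, since $l(\widehat{\mu}_i) = l(\mu)-1 = l(\lambda)-1$, Lemma \ref{lemma:SinPP1} applies to every $\hat{S}_{\lambda \otimes \widehat{\mu}_i}$, and its conclusion Eqn.(\ref{eqn:SinPP2}) can be rewritten in this length-difference case as Eqn.(\ref{eqn:SinPP1}) with coefficient $2^{l(\mu)-1}$. Invoking this together with the induction hypothesis on the components of $R$, and using $q_{\mu_i} = 2\,P_{(\mu_i)}$ to supply the missing factor of two, I obtain $\hat{S}_\alpha$ as a signed sum of triple products of $P$-functions with the correct overall coefficient $2^{l(\mu)}$. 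Then, applying the Pfaffian expansion (\ref{eqn:Ppf}) in reverse---namely the Laplace identity $\sum_i (-1)^{i-1} P_{(\mu_i)} P_{\widehat{\mu}'_i} = P_{\mu'}$ whenever $\mu' = \widehat{\mu}'_i \cup \{\mu_i\}$ is a valid strict partition---reassembles the alternating sums in $i$ into single $P_{\mu'}$-factors. The combined signs from the $(-1)^{i-1}$ weights of the determinant expansion, from $\mathrm{sign}_1$ supplied by the induction, and from the Pfaffian reassembly should produce exactly $\mathrm{sign}_1(\lambda',\mu';\lambda,\mu)$ as defined via the alternating signs $\mathrm{sign}(A_j) = (-1)^{j-1}$ on the ordered set $A = (\lambda \cup \mu) \setminus (\lambda \cap \mu)$.

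The hard part will be the sign bookkeeping. In particular, the contributions in which $\mu_i$ is already a part of the intermediate partition $\widehat{\mu}'_i$ must cancel via the antisymmetry of the Pfaffian, so that the final index set is correctly restricted to pairs $(\lambda',\mu') \in S_1(\lambda,\mu)$ satisfying $\lambda' \cap \mu' = \lambda \cap \mu$. The shift-symmetric specialization $\lambda = \mu$, in which $A = \emptyset$ and the formula reduces to the Pfaffian identity of J\'ozefiak and Pragacz, provides a useful consistency check throughout.
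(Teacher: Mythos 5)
Your proposal hinges on a recursion $\hat{S}_{\alpha}=\sum_{i=1}^{l(\mu)}(-1)^{i-1}q_{\mu_{i}}\hat{S}_{\lambda\otimes\widehat{\mu}_{i}}+R$ that you never derive, and I do not believe it exists in the form you need. The expansion in Lemma \ref{lemma:SinPP1} works because, when $l(\lambda)=l(\mu)+1$, the first column of the Jacobi--Trudi matrix for $\hat{S}_{\alpha^{T}}$ is exactly $(q_{\lambda_{1}},\ldots,q_{\lambda_{l(\lambda)}},0,\ldots,0)^{T}$, so the cofactor expansion closes up with no remainder. When $l(\lambda)=l(\mu)=n$ that column becomes $(q_{\lambda_{1}},\ldots,q_{\lambda_{n}},1,0,\ldots,0)^{T}$ (the entry $q_{0}=1$ appears because $\mu_{n}\ge1$ forces $\alpha^{T}_{n+1}=n$), and the extra cofactor is a function $\hat{S}_{\sigma\otimes\rho}$ with $|\sigma\otimes\rho|=|\alpha|$ and $\rho=\lambda$ (for $\lambda=(4,2)$, $\mu=(3,1)$ it is $\hat{S}_{(5,4,1)}=\hat{S}_{\mu\otimes\lambda}$). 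Since $\lambda_{l(\lambda)}>\mu_{l(\mu)}$ always holds for $\lambda\otimes\mu$, this term has $\rho_{l(\rho)}>\mu_{l(\mu)}$, which is precisely the configuration \emph{excluded} by the induction hypothesis. Expanding the determinant for $\alpha$ itself instead of $\alpha^{T}$ produces first-column entries $q_{\mu_{i}+1}$, not $q_{\mu_{i}}$, so that route does not give your recursion either. Your claim that every term of $R$ is an $\hat{S}_{\sigma\otimes\rho}$ with either $|\sigma\otimes\rho|<|\alpha|$ or $\rho_{l(\rho)}<\mu_{l(\mu)}$ is exactly what needs proof, and the most natural candidate expansion contradicts it. (Secondarily, the reassembly identity $\sum_{i}(-1)^{i-1}P_{(\mu_{i})}P_{\widehat{\mu}_{i}}=P_{\mu}$ you invoke only holds for $l(\mu)$ odd; by Eqn.~(\ref{eqn:SinPP6}) the sum vanishes for $l(\mu)$ even, a parity issue you do not address.)

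The paper avoids the determinant entirely at this step. It peels a single box off $\mu_{l(\mu)}$ --- deleting the part when $\mu_{l(\mu)}=1$, decrementing it otherwise --- to form $\widehat{\alpha}=\lambda\otimes\widehat{\mu}$ with $|\widehat{\alpha}|=|\alpha|-1$, to which Eqn.~(\ref{eqn:SinPP2}) (via Lemma \ref{lemma:SinPP1}) or the induction hypothesis applies. It then multiplies by $\hat{S}_{(1)}=2P_{(1)}$ and compares the two Pieri expansions: $\hat{S}_{\widehat{\alpha}}\hat{S}_{(1)}=\hat{S}_{\alpha}+\sum_{\alpha'\neq\alpha}\hat{S}_{\alpha'}$ on one side and $2P_{(1)}P_{\lambda'}P_{\mu'}=\sum P_{\lambda''}P_{\mu'}+\sum P_{\lambda'}P_{\mu''}$ on the other. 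The partitions $\alpha'\neq\alpha$ obtained by adding one box to $\widehat{\alpha}$ all have either a different length profile or a strictly smaller last $\mu$-part; this is exactly why the hypothesis carries the clause $\mu_{l(\mu)}>\rho_{l(\rho)}$, and solving for $\hat{S}_{\alpha}$ yields Eqn.~(\ref{eqn:SinPP1}). As written, your argument does not go through; you would need either a genuinely different expansion whose remainder terms really are covered by the hypothesis, or to switch to the box-adding argument above.
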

\begin{proof}
Let $\gamma:=\lambda'\otimes\mu'$ with $l(\lambda')=l(\mu')+1$.
Since Eqn. (\ref{eqn:SinPP1}) is true for all $\beta$ satisfying 
$|\beta|<|\alpha|$, by Lemma \ref{lemma:SinPP1}, 
Eqn. (\ref{eqn:SinPP2}) is true for all 
$\gamma$ with $|\gamma|\le|\alpha|$.
We have two cases: (a) $\mu_{l(\mu)}=1$ and (b) $\mu_{l(\mu)}\ge2$. 

For (a), suppose that $\mu_{l(\mu)}=1$.
We denote by $\widehat{\mu}$ a strict partition obtained from $\mu$ 
by deleting $\mu_{l(\mu)}$ and 
by $\widehat{\alpha}:=\lambda\otimes\widehat{\mu}$.
Since $|\widehat{\alpha}|=|\alpha|-1$ and $l(\lambda)=l(\widehat{\mu})+1$,
the function $\hat{S}_{\widehat{\alpha}}$ satisfies Eqn. (\ref{eqn:SinPP2}).
We consider the product of $\hat{S}_{\widehat{\alpha}}$ and $\hat{S}_{(1)}$.
A product of two $\hat{S}$-functions is expressed in terms of the 
Littlewood--Richardson rule (see Eqn. (\ref{eqn:LRSS})), 
\begin{eqnarray}
\label{eqn:SinPP7}
\hat{S}_{\widehat{\alpha}}\hat{S}_{(1)}
=\hat{S}_{\alpha}+\sum_{\alpha'\neq\alpha}\hat{S}_{\alpha'}
\end{eqnarray}
where the sum is taken over all $\alpha'$'s such that $\alpha'/\widehat{\alpha}$ is 
a single box and $\alpha'\neq\alpha$.
Note that multiplicity of a $\hat{S}$-function is one.
We have $\hat{S}_{(1)}=2P_{(1)}$
On the other hand, a $\hat{S}$-function is expressed as a sum of 
products of $P$-functions.
Recall that a product $P_{\lambda}P_{(1)}$ is expressed by the 
Littlewood--Richardson--Stembridge rule (see Theorem \ref{thrm:StemPP}), 
namely 
\begin{eqnarray*}
P_{\lambda}P_{(1)}=\sum_{\lambda'}P_{\lambda'}
\end{eqnarray*}
where the sum is taken over all $\lambda'$'s such that $\lambda'/\lambda$ 
is a single box.
Thus, a product of $P_{\lambda}P_{\mu}$ and $\hat{S}_{(1)}$ is expressed 
as 
\begin{eqnarray}
\label{eqn:SPPinP}
\begin{aligned}
\hat{S}_{(1)}P_{\lambda}P_{\mu}&=2P_{(1)}P_{\lambda}P_{\mu} \\
&=(P_{(1)}P_{\lambda})P_{\mu}+P_{\lambda}(P_{(1)}P_{\mu}) \\
&=\sum_{\lambda'}P_{\lambda'}P_{\mu}+\sum_{\mu'}P_{\lambda}P_{\mu'},
\end{aligned}
\end{eqnarray}
where the skew shapes $\lambda'/\lambda$ and $\mu'/\mu$ are a single box.
In Eqn. (\ref{eqn:SinPP7}), $\hat{S}_{\alpha'}$ with $\alpha'\neq\alpha$
satisfies Eqn. (\ref{eqn:SinPP1}) or Eqn. (\ref{eqn:SinPP2}) by 
the assumption.
We multiply $\hat{S}_{(1)}$ by both sides of Eqn. (\ref{eqn:SinPP2}) 
for $\widehat{\alpha}$ and rearrange the terms for $\hat{S}_{\alpha'}$.
By a direct calculation, one can show that $\hat{S}_{\alpha}$ satisfies 
Eqn. (\ref{eqn:SinPP1}).

For (b), we have $\mu_{l(\mu)}\ge2$.
We denote by $\widehat{\mu}$ a strict partition obtained from $\mu$ 
by replacing $\mu_{l(\mu)}$ with $\mu_{l(\mu)}-1$ and 
define $\widehat{\alpha}:=\lambda\otimes\widehat{\mu}$.
We consider the product $\hat{S}_{\widehat{\alpha}}\hat{S}_{(1)}$.
By a similar argument to case (a), 
one can show that $\hat{S}_{\alpha}$ satisfies Eqn. (\ref{eqn:SinPP1}).
\end{proof}

Given an ordinary partition $\alpha$, we have $\alpha=\lambda\otimes\mu$.
Let $l:=l(\lambda)$ if $l(\lambda)=l(\mu)+1$ and $l:=l(\lambda)+1$ if
$l(\lambda)=l(\mu)$.
If $l(\lambda)=l(\mu)+1$, we regard $\mu$ as a strict partition of length
$l$ by adding zero to $\mu$.
Similarly, if $l(\lambda)=l(\mu)$, we regard $\lambda$ and $\mu$ as strict 
partitions of length $l$ by adding zero to $\lambda$ and $\mu$.

We place integers in $\lambda\cup\mu$ in a decreasing order from left to right.
When an integer $n$ is in $\lambda\cap\mu$, we put two $n$'s next to each other 
and denote by $n_{\lambda}$ (resp. $n_{\mu}$) the left (resp. right) $n$. 
The index of $n_{\lambda}$ stands for $n\in\lambda$.
We construct a perfect matching of length $2l$ by connecting two integers
via an arc.
A perfect matching satisfies the following conditions.
\begin{enumerate}
\item We connect two integers $n$ and $m$ if and only if $n\in\lambda$ 
and $m\in\mu$.
\item We do not connect the same integers $n_{\lambda}$ and $n_{\mu}$ for $n\ge1$. 
\end{enumerate}
Note that two integers $0_{\lambda}$ and $0_{\mu}$ can be connected by an arc.
Thus, a perfect matching characterizes a permutation $\pi\in\mathcal{S}_{l}$ 
since an arc connects two integers $\lambda_{i}$ and $\mu_{\pi(i)}$.
We denote by $\mathrm{cr}(\lambda,\mu;\pi)$ the number of crossings 
in a perfect matching $\pi$. 
We assign a sign for an element of the sequence $s:=s_{1}\ldots s_{2l}$ 
of integers $\lambda\cup\mu$. 
Here, the multiplicity of $n$ with $n\in\lambda\cap\mu$ is two 
in the sequence $s$.
The sign $\mathrm{sign}'(s_{i})$ is defined as $(-1)^{i-1}$.
We define a sign of a perfect matching $\pi$ as
\begin{eqnarray*}
\mathrm{sign}_{3}(\lambda,\mu;\pi):=
(-1)^{\mathrm{cr}(\lambda,\mu;\pi)}
\cdot\prod_{s_{i}\in\lambda}\mathrm{sign}'(s_{i})
\end{eqnarray*}

\begin{example}
\label{example:pm}
Let $\lambda=(5,4,3)$ and $\mu=(4,2)$.
We have four perfect matchings:
\begin{eqnarray*}
\tikzpic{-0.5}{
\draw(0,0)node[anchor=north]{$5$}..controls(0,1.3)and(2.5,1.3)..(2.5,0)node[anchor=north]{$0$}
     (0.5,0)node[anchor=north]{$4_{\lambda}$}
           ..controls(0.5,0.78)and(2,0.78)..(2,0)node[anchor=north]{$2$}
     (1,0)node[anchor=north]{$4_{\mu}$}
           ..controls(1,0.26)and(1.5,0.26)..(1.5,0)node[anchor=north]{$3$};
}\qquad
\tikzpic{-0.7}{
\draw(0,0)node[anchor=north]{$5$}
           ..controls(0,0.6)and(1,0.6)..(1,0)node[anchor=north]{$4_{\mu}$}
     (0.5,0)node[anchor=north]{$4_{\lambda}$}
           ..controls(0.5,0.78)and(2,0.78)..(2,0)node[anchor=north]{$2$}
     (1.5,0)node[anchor=north]{$3$}
           ..controls(1.5,0.6)and(2.5,0.6)..(2.5,0)node[anchor=north]{$0$};
}\quad
\tikzpic{-0.65}{
\draw(0,0)node[anchor=north]{$5$}
           ..controls(0,0.9)and(2,0.9)..(2,0)node[anchor=north]{$2$}
     (0.5,0)node[anchor=north]{$4_{\lambda}$}
           ..controls(0.5,0.9)and(2.5,0.9)..(2.5,0)node[anchor=north]{$0$}
     (1,0)node[anchor=north]{$4_{\mu}$}
           ..controls(1,0.5)and(1.5,0.5)..(1.5,0)node[anchor=north]{$3$};
}\quad
\tikzpic{-0.65}{
\draw(0,0)node[anchor=north]{$5$}
           ..controls(0,0.6)and(1,0.6)..(1,0)node[anchor=north]{$4_{\mu}$}
     (0.5,0)node[anchor=north]{$4_{\lambda}$}
           ..controls(0.5,0.9)and(2.5,0.9)..(2.5,0)node[anchor=north]{$0$}
     (1.5,0)node[anchor=north]{$3$}
           ..controls(1.5,0.5)and(2,0.5)..(2,0)node[anchor=north]{$2$};
} 
\end{eqnarray*}
The numbers of crossings are $0,2,1$ and $1$ from left to right. 
The first two perfect matchings have sign plus and the last two have minus. 
\end{example}

We denote by $P[m,n]$ the $P$-function $P_{(m,n)}$ for simplicity.
We define $P[m,n]:=P[n,m]$ for $m<n$.
By definition, note that $P[n,n]=0$ for $n\ge1$ and $P[0,0]=1$.
\begin{theorem}
\label{thrm:SinPP2}
Suppose that $\hat{S}_{\alpha}$ satisfies Eqn.(\ref{eqn:SinPP1}) or 
Eqn.(\ref{eqn:SinPP2}).
Then, a $\hat{S}$-function is expressed in terms of $P$-functions as 
\begin{eqnarray}
\label{eqn:SinPPP}
\hat{S}_{\alpha}
=2^{l(\lambda)}
\sum_{\pi\in\mathcal{S}_{l}}
\mathrm{sign}_{3}(\lambda,\mu;\pi)
\prod_{i=1}^{l} P[\lambda_{i},\mu_{\pi(i)}]
\end{eqnarray}
where $\mathcal{S}_{l}$ is the symmetric group of order $l$.
\end{theorem}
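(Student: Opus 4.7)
The strategy is to expand both sides as polynomials in the two-part functions $P[a,b]$ (with $a,b$ drawn from $\lambda\cup\mu$, augmented by zeros as required for even-length padding) and verify they agree term by term. Starting from the hypothesis~(\ref{eqn:SinPP1}) (the case~(\ref{eqn:SinPP2}) being analogous), I would apply the Pfaffian formula~(\ref{eqn:Ppf}) to each factor $P_{\lambda'}$ and $P_{\mu'}$. Each Pfaffian expands as a signed sum over perfect matchings, with the sign of a matching $M$ equal to $(-1)^{c(M)}$, its crossing number relative to the decreasing-order arrangement of the partition. This expresses $\hat{S}_\alpha$ as a signed sum over quadruples $(\lambda',\mu',M_1,M_2)$, each contributing $\mathrm{sign}_{1}(\lambda',\mu';\lambda,\mu)\cdot\varepsilon(M_1)\cdot\varepsilon(M_2)$ times a product of two-part $P$-functions.

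Next, I would establish a bijection between these quadruples and permutations $\pi\in\mathcal{S}_l$. Given $\pi$, the pairs $\{(\lambda_i,\mu_{\pi(i)}):1\le i\le l\}$ form a perfect matching $M(\pi)$ on the padded sequence $\lambda\sqcup\mu$. Since a matching on $\lambda'$ (respectively $\mu'$) has both endpoints in $\lambda'\subseteq\lambda$ (respectively $\mu'\supseteq\mu$), any $\lambda_i$ paired with a positive $\mu_{\pi(i)}$ must lie in $\mu'$; this requirement determines $\lambda'$ and $\mu'$ uniquely, after which $M_1$ and $M_2$ arise as the restrictions of $M(\pi)$ to the padded $\lambda'$- and $\mu'$-subsequences. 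Conversely, each quadruple reconstructs a unique $\pi$ from the union $M_1\cup M_2$, so this is indeed a bijection.

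The remaining step is the sign verification: under the bijection, the product $\mathrm{sign}_{1}(\lambda',\mu')\cdot\varepsilon(M_1)\cdot\varepsilon(M_2)$ should equal $2^{l(\lambda)-l(\mu)}\cdot\mathrm{sign}_{3}(\lambda,\mu;\pi)$, with the prefactor absorbing the difference between $2^{l(\mu)}$ in~(\ref{eqn:SinPP1}) and $2^{l(\lambda)}$ in~(\ref{eqn:SinPPP}). The Pfaffian signs $\varepsilon(M_i)$ count crossings within the two subsequences, whereas $(-1)^{c(\pi)}$ counts crossings in the full sequence; their discrepancy is the number of crossings between arcs of $M_1$ and arcs of $M_2$, which is controlled by the interleaving of the $\lambda'$- and $\mu'$-elements in the combined decreasing order. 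This interleaving contribution should combine with $\mathrm{sign}_{1}$ (recording positions of the elements moved from $\lambda$ into $\mu'$) and with the positional factor $\prod_{s_i\in\lambda}\mathrm{sign}'(s_i)$ appearing in $\mathrm{sign}_{3}$ via the elementary identity that partitioning a totally ordered sequence into two sub-sequences introduces a sign equal to the product of positional signs of one block.

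I expect the sign bookkeeping to be the main technical obstacle: reconciling the Pfaffian signs of the two restricted matchings with the crossing count $c(\pi)$ of the combined matching, while simultaneously accounting for the positional signs in $\mathrm{sign}_{1}$ and $\mathrm{sign}_{3}$, requires a careful permutation-parity computation, plausibly established by induction on $|\lambda\setminus\lambda'|$, the number of elements moved from $\lambda$ to $\mu'$. The base case $|\lambda\setminus\lambda'|=0$ corresponds to the ``bipartite'' matching where $\pi$ has no $\lambda$-to-$\lambda$ or $\mu$-to-$\mu$ arcs, and the inductive step corresponds to swapping a $\lambda$-element between $\lambda'$ and $\mu'$, which changes all three contributing signs in a compatible fashion.
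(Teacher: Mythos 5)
Your starting point matches the paper's: expand each $P_{\lambda'}$ and $P_{\mu'}$ in the sum (\ref{eqn:SinPP1}) via the Pfaffian formula (\ref{eqn:Ppf}) into signed sums over perfect matchings of two-part functions $P[a,b]$. But the next step of your plan contains a false claim that hides the actual content of the theorem. You assert a bijection between \emph{all} quadruples $(\lambda',\mu',M_1,M_2)$ and permutations $\pi\in\mathcal{S}_{l}$, with the converse direction ``each quadruple reconstructs a unique $\pi$ from the union $M_1\cup M_2$.'' This is not true: $M_1$ is a perfect matching on $\lambda'$, so as soon as $\lambda'$ has two or more positive parts it necessarily contains an arc joining two elements of $\lambda$; likewise $M_2$ can join two elements of $\mu$, or two elements of $\mu'\setminus\mu\subseteq\lambda$. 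The union $M_1\cup M_2$ of such a quadruple is not of the form $\{(\lambda_i,\mu_{\pi(i)})\}$ for any $\pi$, so these quadruples have no preimage under your forward map. Already in the paper's own example $\lambda=(5,4,3)$, $\mu=(4,2)$, the very first term $P_{(5,4,3)}P_{(4,2)}$ of (\ref{eqn:SinPP1}) expands into products each containing a $\lambda$--$\lambda$ arc and a $\mu$--$\mu$ arc.

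The step you are missing is a cancellation argument, not a bijection: one must show that the total coefficient of every product involving an arc $P[\mu_i,\mu_j]$ (and, symmetrically, an arc joining two $\lambda$-elements not accounted for by the bipartite picture) vanishes, with contributions from different pairs $(\lambda',\mu')\in S_1(\lambda,\mu)$ cancelling in pairs; only after this does the sum collapse onto the bipartite matchings indexed by $\pi\in\mathcal{S}_l$, each surviving with multiplicity one. This is exactly the point the paper's proof isolates (``the coefficient of $P[\mu_i,\mu_j]$ is zero''). The bookkeeping is further complicated by the fact that the map from arcs to symmetric functions is many-to-one (e.g.\ $P_{(5,4)}$ can arise either as $P[\lambda_1,\lambda_2]$ or as $P[\lambda_1,\mu_1]$ when $4\in\lambda\cap\mu$), so the cancellation must be organized at the level of which elements of the multiset $\lambda\sqcup\mu$ each arc uses. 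Your sign analysis via induction on $|\lambda\setminus\lambda'|$ is a reasonable sketch for the surviving terms, but without the cancellation lemma the argument does not get to the point where that analysis applies.
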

\begin{proof}
From equations (\ref{eqn:SinPP1}) and (\ref{eqn:SinPP2}), 
we have an expression of $\hat{S}_{\alpha}$ in terms of 
products of two $P$-functions.
Since a $P$-function is written in terms of a Pfaffian, 
$\hat{S}_{\alpha}$ is expressed as a sum of products of 
$P$-functions $P_{\lambda}$'s where $\lambda$ is of length 
two.
A term in $\hat{S}_{\alpha}$ is the from $P_{\lambda'}P_{\mu'}$ and 
$\mu'\supseteq\mu$.
Note that $\lambda=\lambda'\cup(\mu'\setminus\mu)$.
We expand $P_{\mu'}$ in terms of $P$-functions of length two or one 
according to the length of $\mu'$.
By a direct calculation using an expansion formula for a Pfaffian,
one can easily show that the coefficient of $P[\mu_{i},\mu_{j}]$ is zero.
Thus, $\hat{S}_{\alpha}$ is a sum of products of $P$-functions 
$P[\lambda_{i},\mu_{\pi(i)}]$ for some $\pi\in\mathcal{S}_{l}$.
By an expansion of a Pfaffian, we have the coefficient of 
$\prod_{i=1}^{l}P[\lambda_{i},\mu_{\pi(i)}]$ is one except for 
the overall factor. 
By a direct calculation, one can show that the sign 
of $\prod_{i=1}^{l}P[\lambda_{i},\mu_{\pi(i)}]$ is 
given by $\mathrm{sign}_{3}(\lambda,\mu;\pi)$.
\end{proof}

\begin{example}
Let $\lambda=(5,4,3)$ and $\mu=(4,2)$.
Then, $\alpha:=\lambda\otimes\mu=(5,4,3,3,3)$.
\begin{eqnarray}
\label{eqn:SinPP8}
\begin{aligned}
\hat{S}_{\alpha}&=2^{3}(P_{(5,4,3)}P_{(4,2)}-P_{(4)}P_{(5,4,3,2)}) \\
&=2^{3}(P_{(5)}P_{(4,2)}P_{(4,3)}+P_{(5,4)}P_{(4,2)}P_{(3)}
-P_{(5,2)}P_{(4)}P_{(4,3)}-P_{(5,4)}P_{(4)}P_{(3,2)}).
\end{aligned}
\end{eqnarray}
Note that the terms in the last line in Eqn. (\ref{eqn:SinPP8}) 
correspond to the perfect matchings in Example \ref{example:pm}.

\end{example}

\begin{proof}[Proof of Theorem \ref{thrm:SinPP}]
We prove Theorem by induction.
For $\alpha=(1)$, it is obvious that $\hat{S}_{\alpha}=2P_{(1)}$ and 
Eqn. (\ref{eqn:SinPP1}) is true.

Let $\beta:=\sigma\otimes\rho$.
Suppose that Equations (\ref{eqn:SinPP1}) and (\ref{eqn:SinPP2}) are 
true for all $\beta$'s such that $|\beta|<|\alpha|$ or 
$\mu_{l(\mu)}>\rho_{l(\rho)}$ for $|\alpha|=|\beta|$ with $l(\mu)=l(\rho)$.
From Lemmas \ref{lemma:SinPP1} and \ref{lemma:SinPP2}, 
the function $\hat{S}_{\alpha}$ satisfies either Eqn. (\ref{eqn:SinPP1}) 
or Eqn. (\ref{eqn:SinPP2}).

If we expand $P$-functions in Eqn. (\ref{eqn:SinPP3}) by using an expansion 
formula for a Pfaffian, we obtain the same expression 
as Theorem \ref{thrm:SinPP2}.
This implies that Eqn. (\ref{eqn:SinPP3}) is true. 
\end{proof}

From Theorem \ref{thrm:SinPP}, we have the following corollary.
\begin{cor}[J\'ozefiak and Pragacz \cite{JozPra91}]
Let $\alpha$ be shift-symmetric, {\it i.e.}, $\alpha:=\lambda\otimes\lambda$.
Then, $\hat{S}_{\alpha}$ is given by the square of $P_{\lambda}$:
\begin{eqnarray*}
\hat{S}_{\alpha}=2^{l(\lambda)}P_{\lambda}^{2}.
\end{eqnarray*}
\end{cor}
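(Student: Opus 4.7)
The plan is to derive this corollary as an immediate specialization of Theorem \ref{thrm:SinPP}, Equation (\ref{eqn:SinPP1}), to the shift-symmetric case $\mu = \lambda$. Since $l(\lambda) = l(\mu)$ holds trivially, Equation (\ref{eqn:SinPP1}) applies. The key observation is that the index set $S_1(\lambda, \lambda)$ collapses to a single element.

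First, I would compute the set $A$. Viewing $\lambda$ and $\mu = \lambda$ as sets of positive integers, we have $\lambda \cup \mu = \lambda$ and $\lambda \cap \mu = \lambda$, so
\begin{eqnarray*}
A = (\lambda \cup \mu) \setminus (\lambda \cap \mu) = \emptyset.
\end{eqnarray*}
Next, I would show $S_1(\lambda, \lambda) = \{(\lambda, \lambda)\}$. Any pair $(\lambda', \mu') \in S_1(\lambda, \lambda)$ must satisfy $\lambda' \subseteq \lambda$, $\mu' \supseteq \lambda$, together with $\lambda' \cup \mu' = \lambda$ and $\lambda' \cap \mu' = \lambda$. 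From $\mu' \supseteq \lambda$ and $\mu' \subseteq \lambda' \cup \mu' = \lambda$, we force $\mu' = \lambda$; then $\lambda' \cap \mu' = \lambda$ together with $\lambda' \subseteq \lambda$ forces $\lambda' = \lambda$.

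Finally, the sign $\mathrm{sign}_1(\lambda, \lambda; \lambda, \lambda)$ is an empty product over $A \cap (\lambda \setminus \lambda') = \emptyset$, hence equals $+1$. Substituting into Equation (\ref{eqn:SinPP1}) yields
\begin{eqnarray*}
\hat{S}_\alpha = 2^{l(\mu)} \cdot P_\lambda \cdot P_\lambda = 2^{l(\lambda)} P_\lambda^2.
\end{eqnarray*}
There is essentially no obstacle; the entire proof is a one-line specialization once the bookkeeping for $S_1$ and $\mathrm{sign}_1$ is unwound. As a sanity check, I would verify consistency with Equation (\ref{eqn:SinPP3}): in that formulation $S_2(\lambda, \lambda)$ similarly collapses to $\{(\lambda, \lambda)\}$ with sign $(-1)^{|\lambda \setminus \lambda|} = 1$, giving the same answer $2^{l(\lambda)} P_\lambda^2$.
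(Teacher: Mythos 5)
Your specialization is correct: with $\mu=\lambda$ the set $A$ is empty, $S_{1}(\lambda,\lambda)$ collapses to the single pair $(\lambda,\lambda)$ with trivial sign, and Eqn.~(\ref{eqn:SinPP1}) reduces immediately to $\hat{S}_{\alpha}=2^{l(\lambda)}P_{\lambda}^{2}$. This is exactly the route the paper takes (it simply cites Theorem \ref{thrm:SinPP} without writing out the bookkeeping), so your argument matches the intended proof.
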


\subsection{A product of \texorpdfstring{$P$}{P}-functions 
in terms of \texorpdfstring{$\hat{S}$}{S}-functions}

Theorem \ref{thrm:SinPP} shows that a $\hat{S}$-function can be expressed 
as a sum of products of two Schur $P$-functions.
By solving the relation reversely, one can show that a product of 
Schur $P$-function can be expressed in terms of a sum of Schur $\hat{S}$-function.
This expression does not have positivity, {\it i.e.}, a sign of the coefficient 
of a $\hat{S}$-function can be minus.
However, the expression is multiplicity free except for the overall factor ,{\it i.e.}, 
the coefficient of a $\hat{S}$-function is either $1$ or $-1$.

Fix strict partitions $\lambda$ and $\mu$.
Let $\lambda_{0}$ and $\mu_{0}$ be sets of positive integers  
such that $\lambda_0:=\lambda\cup\mu$ and $\mu_{0}:=\lambda\cap\mu$.
We define a set $S_{3}(\lambda,\mu)$ by
\begin{eqnarray*}
S_{3}(\lambda,\mu):=\{
\nu\  |\  \nu\subseteq\lambda_{0}\setminus\mu_{0}
\text{ and } l(\nu)=\lfloor(|\lambda_{0}|-|\mu_{0}|)/2 \rfloor
\}.
\end{eqnarray*}
We define a sign of $\nu\subseteq\lambda_{0}\setminus\mu_{0}$
by 
\begin{eqnarray*}
\mathrm{sign}_{4}(\nu):=
\prod_{m\in\nu}(-1)^{d(m)}
\end{eqnarray*}
where
\begin{eqnarray*}
d(m):=\#\{ l\in\mathbb{N} \ | \ 
l\in\lambda_{0}\setminus\mu_{0} \text{ and } l\ge m \}.
\end{eqnarray*}
We construct an ordinary partition $\alpha(\nu)$ from 
$\nu\in S_{3}(\lambda,\mu)$ by defining 
\begin{eqnarray*}
\alpha(\nu):=(\lambda_{0}\setminus\nu)\otimes(\mu_{0}\cup\nu).
\end{eqnarray*}
Suppose that $\lambda$ and $\mu$ are written by 
$\chi\subseteq\lambda_{0}\setminus\mu_{0}$ as 
$\lambda=\lambda_{0}\setminus\chi$ and $\mu=\mu_{0}\cup\chi$.
Finally, we define a sign with respect to $\lambda,\mu$ and $\alpha(\nu)$
by 
\begin{eqnarray*}
\mathrm{sign}_{5}(\lambda,\mu;\alpha(\nu))
:=(-1)^{l(\chi)+l(\chi\cap\nu)}\cdot\mathrm{sign}_{4}(\nu)
\cdot\mathrm{sign}_{4}(\chi).
\end{eqnarray*}

\begin{theorem}
We have 
\begin{eqnarray}
\label{eqn:PPinS}
P_{\lambda}P_{\mu}
=2^{-(l(\lambda)+l(\mu)-l(\lambda\cap\mu))}
\sum_{\nu\in S_{3}(\lambda,\mu)}
\mathrm{sign}_{5}(\lambda,\mu;\alpha(\nu))
\cdot \hat{S}_{\alpha(\nu)}.
\end{eqnarray}
\end{theorem}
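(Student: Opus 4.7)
The plan is to establish (\ref{eqn:PPinS}) as the inversion of the linear system provided by Theorem \ref{thrm:SinPP}. Set $\lambda_{0} := \lambda \cup \mu$ and $\mu_{0} := \lambda \cap \mu$ as sets of positive integers, so that $l(\lambda)+l(\mu)-l(\lambda\cap\mu) = l(\lambda_{0})$. For any $\nu \in S_{3}(\lambda,\mu)$, the pair $(\lambda_{0}\setminus\nu, \mu_{0}\cup\nu)$ again satisfies $(\lambda_{0}\setminus\nu)\cup(\mu_{0}\cup\nu) = \lambda_{0}$ and $(\lambda_{0}\setminus\nu)\cap(\mu_{0}\cup\nu) = \mu_{0}$; hence every product $P_{\lambda'}P_{\mu'}$ appearing in the expansion of $\hat{S}_{\alpha(\nu)}$ given by Theorem \ref{thrm:SinPP} stays inside the finite-dimensional ``block'' $\{P_{\lambda_{0}\setminus\chi}P_{\mu_{0}\cup\chi}: \chi \subseteq A\}$ with $A := \lambda_{0}\setminus\mu_{0}$. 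Thus (\ref{eqn:PPinS}) amounts to a finite linear inversion inside this block, and the whole question reduces to a sign-and-power-of-$2$ computation on subsets of $A$.

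The first step is to substitute the perfect-matching form of $\hat{S}_{\alpha(\nu)}$ given by Theorem \ref{thrm:SinPP2} into the right-hand side of (\ref{eqn:PPinS}). This produces a double sum over $\nu \in S_{3}$ and perfect matchings $\pi$ between $\lambda_{0}\setminus\nu$ and $\mu_{0}\cup\nu$, weighted by $\mathrm{sign}_{5}(\lambda,\mu;\alpha(\nu)) \cdot \mathrm{sign}_{3}$; after cancellation of the factor $2^{l(\lambda_{0}\setminus\nu)}$ against the overall prefactor $2^{-l(\lambda_{0})}$, the residual power of $2$ becomes $2^{-l(\nu)}$, a constant since the cardinality of $\nu$ is fixed in $S_{3}$. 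Simultaneously, the target expression $P_{\lambda}P_{\mu}$ can be written, via (\ref{eqn:Ppf}), as a product of two Pfaffians, which expands as a signed sum over perfect matchings of $\lambda \sqcup \mu$ respecting the block-diagonal structure (pairings occur only within $\lambda$ or within $\mu$). The proof reduces to showing that these two signed sums over perfect matchings of $P[a,b]$ factors agree term-by-term.

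The second and crucial step is a sign-cancellation identity: for each perfect matching $\rho$ appearing in the Pfaffian expansion of $P_{\lambda}P_{\mu}$, the $(\nu,\pi)$ pairs producing the same product of $P[a,b]$ factors must have signed contributions summing to the Pfaffian sign of $\rho$, while all ``cross'' matchings (those containing a pair that connects an element of $\lambda$ with an element of $\mu$) must cancel across the $\nu$-sum. This cancellation can be traced to a telescoping identity: when $\mathrm{sign}_{5}$ is combined with $\mathrm{sign}_{3}$ and rewritten in terms of $\mathrm{sign}_{4}$ on subsets of the fixed sequence $A$, the $\nu$-dependence factors as $\prod_{a \in S}(1 - 1) = 0$ over the ``mismatch'' set $S$, whenever the matching deviates from a block-diagonal one. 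The main obstacle is the bookkeeping: the reference sequences for $\mathrm{sign}_{3}$ (the ordering of $\lambda_{0}\setminus\nu$ interleaved with $\mu_{0}\cup\nu$) and $\mathrm{sign}_{5}$ (the fixed ordering of $A$ together with $\chi_{0} := \mu\setminus\mu_{0}$) differ, and the cardinality constraint $l(\nu) = \lfloor|A|/2\rfloor$ prevents $\nu$ from ranging over all subsets of $A$, so the telescoping must be adapted to a size-restricted sum. An alternative route, substituting (\ref{eqn:SinPP1}) directly and grouping contributions to each unordered product $P_{\lambda'}P_{\mu'}$ from the two ordered pairs $\chi$ and $A\setminus\chi$ (which can give the same product), leads to the same cancellation analysis and provides a useful cross-check.
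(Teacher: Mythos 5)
Your overall strategy---substitute the perfect-matching expansion of Theorem \ref{thrm:SinPP2} into the right-hand side, compare with the product of the two Pfaffians (\ref{eqn:Ppf}) for $P_{\lambda}P_{\mu}$, and argue that the ``cross'' monomials $P[\lambda_{i},\mu_{j}]$ cancel while the block-diagonal ones survive---is exactly the route the paper takes, down to the observation that the residual power of two is $2^{-l(\nu)}$. The problem is that the two steps on which the whole argument rests are either mis-described or left unresolved. The cancellation of cross terms is not obtained from a telescoping product $\prod_{a\in S}(1-1)=0$, and as you yourself note, the cardinality constraint $l(\nu)=\lfloor(|\lambda_{0}|-|\mu_{0}|)/2\rfloor$ makes such a factorization unavailable. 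What actually works---and what the paper does---is a sign-reversing involution $\nu\mapsto\nu'$ that exchanges a \emph{single} element: if an arc of $\pi$ joins $\lambda_{i}\in\lambda_{0}\setminus\nu$ to $\mu_{j}\in\nu$ (both outside $\mu_{0}$), the unique $\nu'$ with $\lambda_{i}\in\nu'$, $\mu_{j}\notin\nu'$ and $l(\nu\cap\nu')=l(\nu)-1$ yields the same product of $P[\cdot,\cdot]$ factors with the opposite value of $\mathrm{sign}_{3}\cdot\mathrm{sign}_{5}$. This involution preserves $|\nu|$, so the size restriction is not an obstacle to be ``adapted to'' but precisely what makes the pairing well defined; your sketch flags it as an unresolved bookkeeping issue when it is the heart of the proof.

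The second gap is that ``term-by-term agreement'' of the two signed sums cannot be literally correct: your own accounting leaves the constant prefactor $2^{-l(\nu)}$ in front of the double sum, and nothing in the sketch absorbs it. In fact each surviving block-diagonal monomial $\prod_{i}P[\lambda_{\rho(2i-1)},\lambda_{\rho(2i)}]\prod_{i}P[\mu_{\sigma(2i-1)},\mu_{\sigma(2i)}]$ occurs for $2^{l(\nu)}$ distinct admissible $\nu$ (for each arc one may move either endpoint lying in $\lambda_{0}\setminus\mu_{0}$ into $\nu$, and the paper verifies---via chains of pairs threading through $\mu_{0}$ when an endpoint lies in $\lambda\cap\mu$---that all these choices carry the \emph{same} sign); this multiplicity is exactly what cancels $2^{-l(\nu)}$. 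Finally, identifying the common sign of each surviving monomial with the Pfaffian sign $\epsilon(\rho)\epsilon(\sigma)$ is itself a nontrivial induction (the paper peels off the arc containing the largest part and tracks how each of $\mathrm{sign}_{4}$, the crossing number, and the alternating signs changes), which your proposal does not address. So the plan is the right one, but the three combinatorial verifications that constitute the actual proof are missing.
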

\begin{proof}
Suppose that $\lambda$ and $\mu$ is written as 
$\lambda=\lambda_{0}\setminus\chi$ and $\mu=\mu_{0}\cup\chi$ 
where $\lambda_{0}=\lambda\cup\mu$ and $\mu_{0}=\lambda\cap\mu$.
Then, $l(\lambda)+l(\mu)-l(\lambda\cap\mu)=l(\lambda_{0})$.
By substituting Eqn. (\ref{eqn:SinPPP}) into the right hand side 
of Eqn. (\ref{eqn:PPinS}), we obtain 
\begin{eqnarray}
\label{eqn:PPinS2}
\qquad
2^{-l(\nu)}\sum_{\nu\in S_{3}(\lambda,\mu)}\sum_{\pi\in\mathcal{S}_{l}}
\mathrm{sign}_{5}(\lambda,\mu;\alpha(\nu))
\mathrm{sign}_{3}(\lambda_{0}\setminus\nu,\mu_{0}\cup\nu;\pi)
\prod_{i=1}^{l}P[(\lambda_{0}\setminus\nu)_{i},(\mu_{0}\cup\nu)_{\pi(i)}].
\end{eqnarray}
Note that in the expression of 
$\mathrm{sign}_{3}(\lambda_{0}\setminus\nu,\mu_{0}\cup\nu;\pi)$ we have 
\begin{eqnarray}
\prod_{s_{i}\in\lambda_{0}\setminus\nu}\mathrm{sign}'(s_{i})
=
\prod_{s_{i}\in\lambda_{0}}\mathrm{sign}'(s_{i})
\prod_{s_{i}\in\nu}\mathrm{sign}'(s_{i}).
\end{eqnarray}
We denote 
\begin{eqnarray*}
\mathrm{Sign}(\lambda,\mu;\nu)
=
\mathrm{sign}_{3}(\lambda_{0}\setminus\nu,\mu_{0}\cup\nu;\pi)
\mathrm{sign}_{5}(\lambda,\mu;\alpha(\nu)).
\end{eqnarray*}

We will show that Eqn. (\ref{eqn:PPinS2}) is equal to $P_{\lambda}P_{\mu}$.
We first show that the coefficient of $P[\lambda_{i},\mu_{j}]$ in
Eqn. (\ref{eqn:PPinS2}) is zero.

Suppose that $\lambda_{i},\mu_{j}\notin\mu_{0}$, 
$\lambda_{i}\in\lambda_{0}\setminus\nu$ and 
$\mu_{j}\in\mu_{0}\cup\nu$.
Then, we have $\lambda_{i}\notin\nu$, $\mu_{j}\in\nu$, 
$\lambda_{i}\notin\chi$ and $\mu_{j}\in\chi$.
There exists $\nu'$ such that $l(\nu\cap\nu')=l(\nu)-1$, 
$\lambda_{i}\in\nu'$ and 
$\mu_{j}\notin\nu'$.
We have $\mu_{j}\subset(\chi\cap\nu)$, $\mu_{j}\not\subset(\chi\cap\nu')$
and $((\chi\cap\nu)\setminus\{\mu_{j}\})=\chi\cap\nu'$.
This implies that $l(\chi\cap\nu)=l(\chi\cap\nu')+1$.
If $\mathrm{sign}'(\lambda_{i})=\mathrm{sign}'(\mu_{j})$, 
we have $\mathrm{sign}_{4}(\nu)=\mathrm{sign}_{4}(\nu')$, 
$\prod_{s_{i}\in\nu}\mathrm{sign}'(s_{i})=\prod_{s_{i}\in\nu'}\mathrm{sign}'(s_{i})$
and 
$\mathrm{cr}(\lambda_{0}\setminus\nu,\mu_{0}\cup\nu;\pi)
=\mathrm{cr}(\lambda_{0}\setminus\nu',\mu_{0}\cup\nu';\pi)$.
Here, $\lambda_{i}$ and $\mu_{j}$ is connected by an arc in 
the perfect matching $\pi$.
Thus we have 
$\mathrm{Sign}(\lambda,\mu;\nu)=-\mathrm{Sign}(\lambda,\mu;\nu')$.
If $\mathrm{sign}'(\lambda_{i})\neq\mathrm{sign}'(\mu_{j})$,
we have $\mathrm{sign}_{4}(\nu)=-\mathrm{sign}_{4}(\nu')$. 
We also have 
$\prod_{s_{i}\in\nu}\mathrm{sign}'(s_{i})
=-\prod_{s_{i}\in\nu'}\mathrm{sign}'(s_{i})$
and 
$\mathrm{cr}(\lambda_{0}\setminus\nu,\mu_{0}\cup\nu;\pi)
=\mathrm{cr}(\lambda_{0}\setminus\nu',\mu_{0}\cup\nu';\pi)$.
Thus we have $\mathrm{Sign}(\lambda,\mu;\nu)=-\mathrm{Sign}(\lambda,\mu;\nu')$.
The coefficient of $P[\lambda_{i},\mu_{j}]$ in Eqn. (\ref{eqn:PPinS2}) is 
zero since the contributions from $\nu$ and $\nu'$ cancel each other. 

The observation above implies that 
Eqn. (\ref{eqn:PPinS2}) contains only the terms $P[\lambda_{i},\lambda_{j}]$
and $P[\mu_{i},\mu_{j}]$.
We will show that there exits $\nu'$ such that 
$\prod_{i=1}^{l}P[(\lambda\setminus\nu)_{i},(\mu_{0}\cup\nu)_{\pi(i)}]
=\prod_{i=1}^{l}P[(\lambda\setminus\nu')_{i},(\mu_{0}\cup\nu')_{\pi(i)}]$
and $l(\nu\cap\nu')=l(\nu)-1$ and 
$\mathrm{Sign}(\lambda,\mu;\nu)=\mathrm{Sign}(\lambda,\mu;\nu')$ 
if $\lambda_{i}$ or $\lambda_{j}$ is in $\nu$.
We consider the coefficient of $P[\lambda_i,\lambda_j]$ since one can apply the same 
argument to $P[\mu_{i},\mu_{j}]$ by the symmetry between 
$\lambda$ and $\mu$.

Suppose that $\lambda_{i},\lambda_{j}\notin\mu_{0}$,
$\lambda_{i}\in\lambda_{0}\setminus\nu$ and $\lambda_{j}\in\mu_{0}\cup\nu$.
Then, we have $\lambda_{i}\notin\nu$, $\lambda_{j}\in\nu$ and 
$\lambda_{i},\lambda_{j}\notin\chi$.
There exists a unique $\nu'$ such that $l(\nu\cap\nu')=l(\nu)-1$, 
$\lambda_{i}\in\nu'$ and $\lambda_{j}\notin\nu'$.
Since $\lambda_{i},\lambda_{j}\notin\chi$, we have 
$l(\chi\cap\nu)=l(\chi\cap\nu')$.
Since $\lambda_{i}$ and $\lambda_{j}$ are connected by an arc 
in perfect matchings corresponding to $\nu$ and $\nu'$,
we have $\mathrm{cr}(\lambda_{0}\setminus\nu,\mu_{0}\cup\nu;\pi)
=\mathrm{cr}(\lambda_{0}\setminus\nu',\mu_{0}\cup\nu';\pi)$.
If $\mathrm{sign}'(\lambda_{i})=\mathrm{sign}'(\lambda_{j})$, 
we have $\mathrm{sign}_{4}(\nu)=\mathrm{sign}_{4}(\nu')$
and $\prod_{s_{i}\in\nu}\mathrm{sign}'(s_{i})
=\prod_{s_{i}\in\nu'}\mathrm{sign}'(s_{i})$.
Thus we have $\mathrm{Sign}(\lambda,\mu;\nu)=\mathrm{Sign}(\lambda,\mu;\nu')$.
Similarly, if $\mathrm{sign}'(\lambda_{i})\neq\mathrm{sign}'(\lambda_{j})$,
we have $\mathrm{sign}_{4}(\nu)=-\mathrm{sign}_{4}(\nu')$
and $\prod_{s_{i}\in\nu}\mathrm{sign}'(s_{i})
=-\prod_{s_{i}\in\nu'}\mathrm{sign}'(s_{i})$,
which implies 
$\mathrm{Sign}(\lambda,\mu;\nu)=\mathrm{Sign}(\lambda,\mu;\nu')$.
The sets $\nu$ and $\nu'$ give the same contribution in Eqn. (\ref{eqn:PPinS2}),
which gives the coefficient two as an overall factor.

Suppose that $\lambda_{i}\in\mu_{0}$ and $\lambda_{j}\notin\mu_{0}$.
Then, $\lambda_{i}\notin\nu$ and $\lambda_{i},\lambda_{j}\notin\chi$.
We first consider the case where $\lambda_{i}\in\lambda_{0}\setminus\nu$ 
and $\lambda_{j}\in\mu_{0}\cup\nu$.
Then, $\lambda_{j}\in\nu$.
We define a sequence of a pair of integers 
$\alpha_{a}:=(\alpha_{a,1},\alpha_{a,2})$, $1\le a\le p$ for 
some positive integer $p$, such that it satisfies the following 
five conditions:
\begin{enumerate}
\item $\alpha_{1}:=(\lambda_{i},\lambda_{j})$, 
\item $\alpha_{a+1,2}=\alpha_{a,1}\in\mu_{0}$ for $1\le a\le p-1$, 
\item $\alpha_{a,2}\neq \alpha_{b,2}$ if $b\neq a$,
\item $\alpha_{p,1}\notin\mu_{0}$,
\item $\alpha_{a,1}\in\lambda_{0}\setminus\nu$ and 
$\alpha_{a,2}\in\mu_{0}\cup\nu$ for $1\le a\le p$.
\end{enumerate}
From condition (2), we have $\alpha_{a,1}\notin\nu$ for $1\le a\le p$.
From (4) and (5), we have $\alpha_{p,1}\notin\nu$.
There exists $\nu'$ such that $\alpha_{p,1}\in\nu'$, $\lambda_{j}\notin\nu'$
and $l(\nu\cap\nu')=l(\nu)-1$.
This $\nu'$ is characterized by a sequence of pairs of positive 
integers $\beta_{a}$, $1\le a\le p$, where 
$\beta_{a}=(\alpha_{a,2},\alpha_{a,1})$.
We have two cases: (a) $\alpha_{p,1}=\lambda_{k}$ and 
(b) $\alpha_{p,1}=\mu_{k}$ for some integer $k$.
For (a), observe that $\lambda_{j},\lambda_{k}\notin\chi$ and 
$\lambda_{k}\notin\mu_{0}$.
By a similar argument above, we have 
$\mathrm{Sign}(\lambda,\mu;\nu)=\mathrm{Sign}(\lambda,\mu;\nu')$.
For (b), we have $\mu_{k}\notin\mu_{0}$ and $\mu_{k}\in\chi$.
If $\mathrm{sign}'(\lambda_{j})=\mathrm{sign}'(\mu_{k})$, 
we have $l(\chi\cap\nu)=l(\chi\cap\nu')-1$, 
$\mathrm{cr}(\lambda_{0}\setminus\nu,\mu_{0}\cup\nu;\pi)=
\mathrm{cr}(\lambda_{0}\setminus\nu',\mu_{0}\cup\nu';\pi)-1$,
and $\mathrm{sign}_{4}(\nu)=\mathrm{sign}_{4}(\nu')$.
These imply that 
$\mathrm{Sign}(\lambda,\mu;\nu)=\mathrm{Sign}(\lambda,\mu;\nu')$.
If $\mathrm{sign}'(\lambda_{j})\neq\mathrm{sign}'(\mu_{k})$,
we have $l(\chi\cap\nu)=l(\chi\cap\nu')-1$, 
$\mathrm{cr}(\lambda_{0}\setminus\nu,\mu_{0}\cup\nu;\pi)=
\mathrm{cr}(\lambda_{0}\setminus\nu',\mu_{0}\cup\nu';\pi)-1$,
$\mathrm{sign}_{4}(\nu)=-\mathrm{sign}_{4}(\nu')$ 
and $\prod_{s_{i}\in\nu}\mathrm{sign}'(s_{i})
=-\prod_{s_{i}\in\nu'}\mathrm{sign}'(s_{i})$.
Thus we have $\mathrm{Sign}(\lambda,\mu;\nu)=\mathrm{Sign}(\lambda,\mu;\nu')$.
In case of $\lambda_{j}\in\lambda_{0}\setminus\nu$ and 
$\lambda_{i}\in\mu_{0}\cup\nu$, one can similarly define a sequence $\alpha_{a}$
and show that there exists $\nu'$ such that 
$\mathrm{Sign}(\lambda,\mu;\nu)=\mathrm{Sign}(\lambda,\mu;\nu')$ and 
$\prod_{i=1}^{l}P[(\lambda\setminus\nu)_{i},(\mu_{0}\cup\nu)_{\pi(i)}]
=\prod_{i=1}^{l}P[(\lambda\setminus\nu')_{i},(\mu_{0}\cup\nu')_{\pi(i)}]$.
Therefore, the sets $\nu$ and $\nu'$ give the overall factor two.

Suppose that $\lambda_{i},\lambda_{j}\in\mu_{0}$. 
Then, $\lambda_{i},\lambda_{j}\notin\nu$ and 
$\lambda_{i},\lambda_{j}\notin\chi$.
We define a sequence of a pair of integers $\alpha_{a}:=(\alpha_{a,1},\alpha_{a,2})$, 
$1\le a\le 2p$ for some integer $p$, such that it satisfies the following four conditions:
\begin{enumerate} 
\item $\alpha_{1}:=(\lambda_{i},\lambda_{j})$, 
\item $\alpha_{a,1},\alpha_{a,2}\in\mu_{0}$, 
\item $\alpha_{a+1,1}=\alpha_{a,2}$ for $1\le a\le p$ with $\alpha_{p+1,1}:=\alpha_{1,1}$,
\item $\alpha_{a,1}\neq\alpha_{b,1}$ if $a\neq b$.
\end{enumerate}
We define $\beta_{a}$, $1\le a\le 2p$, by $\beta_{a}:=(\alpha_{a,2},\alpha_{a,1})$.
A sequence $\alpha_{a}$, $1\le a\le 2p$, corresponds to 
a product $\prod_{i=1}^{2p}P[a_{i,1},a_{i,2}]$ and so does $\beta$.
However, we have a freedom to choose $a_{i,1}\in\lambda$ or $\mu$.
Thus we have two-to-two bijection the choice of $\alpha$ or $\beta$ 
and the choice of $\lambda$ or $\mu$.
Note that these give the same contribution in Eqn. (\ref{eqn:PPinS2}) but 
the corresponding perfect matchings are different.

By summarizing the observations above and taking care of the overall factor, 
Eqn. (\ref{eqn:PPinS2}) is rewritten as 
\begin{eqnarray*}
\sum_{\rho\in\mathcal{F}_{n},\sigma\in\mathcal{F}_{m}}
\mathrm{Sign}'(\rho,\sigma)
\prod_{i=1}^{n}P[\lambda_{\rho(2i-1)},\lambda_{\rho(2i)}]
\prod_{i=1}^{m}P[\mu_{\sigma(2i-1)},\mu_{\sigma(2i)}],
\end{eqnarray*}
where $n=\lfloor(l(\lambda)+1)/2\rfloor$ and 
$m=\lfloor(l(\mu)+1)/2\rfloor$.
We will show that $\mathrm{Sign}'(\rho,\sigma)=\epsilon(\rho)\epsilon(\sigma)$
where $\epsilon(\rho)$ (resp. $\epsilon(\sigma)$) is a sign for the permutation 
$\rho$ (resp. $\sigma$).
Since we have a symmetry between $\lambda$ and $\mu$ in $P_{\lambda}P_{\mu}$,
we set $\lambda_{1}$ is the largest integer in $\lambda\cup\mu$ without 
loss of generality.
Given $\lambda$ and $\mu$, we define 
$\overline{\lambda}:=\lambda\setminus\{\lambda_{1},\lambda_{r}\}$, 
$\overline{\lambda}_{0}:=\overline{\lambda}\cup\mu$ 
and $\overline{\nu}:=\nu\setminus\{\lambda_{1},\lambda_{r}\}$
for some $r$. 
This corresponds to considering the perfect matchings such that  
$\lambda_{1}$ and $\lambda_{r}$ are connected by an arc.
We define $\overline{\chi}$ such that 
$\overline{\lambda}=\overline{\lambda}_{0}\setminus\overline{\chi}$
and $\mu=\overline{\lambda}_{0}\cup\overline{\chi}$.
Since we consider a perfect matching $\pi$ which contains an arc connecting 
$\lambda_{1}$ and $\lambda_{r}$, we denote $\overline{\pi}$ by a perfect matching
obtained from $\pi$ by deleting this arc.
For simplicity, we write 
$\mathrm{cr}(\nu;\pi)
:=\mathrm{cr}(\lambda_{0}\setminus\nu,\mu_{0}\cup\nu;\pi)$. 
Since $P_{\lambda}$ has an expression in terms of Pfaffian, 
to show $\mathrm{Sign}'(\rho,\sigma)=\epsilon(\rho)\epsilon(\sigma)$ 
is equivalent to show   
$\mathrm{Sign}(\lambda,\mu;\nu)
=(-1)^{r}\mathrm{Sign}(\overline{\lambda},\overline{\mu};\overline{\nu})$.

Suppose that 
\begin{eqnarray*}
\lambda_{1}\ge\mu_{1,1}>\cdots>\mu_{1,k_{1}}>\lambda_{2}\ge
\mu_{2,1}>\cdots>\mu_{1,k_{2}}>\lambda_{3}\ge\ldots 
>\mu_{r-1,k_{r-1}}>\lambda_{r}\ge\mu_{r,1}.
\end{eqnarray*}
We assume that $a$ $\lambda_{i}$'s ($2\le i<r$) are in $\mu_{0}$,  
$b$ $\mu_{i,j}$'s are in $\nu$ and $c$ $\lambda_{k}$'s ($2\le k<r$) 
are in $\nu$.
We define $k:=\sum_{i=1}^{r-1}k_{i}$.
We have four cases: (1) $\lambda_{1},\lambda_{r}\notin\mu_{0}$,
(2) $\lambda_{1}\in\mu_{0}$ and $\lambda_{r}\notin\mu_{0}$,
(3) $\lambda_{1}\notin\mu_{0}$ and $\lambda_{r}\in\mu_{0}$,
and (4) $\lambda_{1},\lambda_{r}\in\mu_{0}$.
We prove $\mathrm{Sign}(\lambda,\mu;\nu)
=(-1)^{r}\mathrm{Sign}(\overline{\lambda},\overline{\mu};\overline{\nu})$ 
for only case (1), since other cases can be proven in a similar way.

For case (1), we have two cases: 
(a) $\lambda_{1}\notin\nu$ and $\lambda_{r}\in\nu$, and 
(b) $\lambda_{1}\in\nu$ and $\lambda_{r}\notin\nu$.
For case (a), we have 
\begin{eqnarray}
\mathrm{sign}_{4}(\nu)&=&(-1)^{b}(-1)^{c}(-1)^{k+r}\mathrm{sign}_{4}(\overline{\nu}), \\
\label{eqn:signchi}
\mathrm{sign}_{4}(\chi)&=&(-1)^{k-a}\mathrm{sign}_{4}(\overline{\chi}), \\
(-1)^{\mathrm{cr}(\nu;\pi)}
&=&(-1)^{k+r-2}
(-1)^{\mathrm{cr}(\overline{\nu};\overline{\pi})}, \\
\label{eqn:signlam0}
\prod_{s_{i}\in\lambda_{0}}\mathrm{sign}'(s_{i})
&=&(-1)^{k+r-2a-2}(-1)^{a}(-1)^{k+r-1}
\prod_{s_{i}\in\overline{\lambda}_{0}}\mathrm{sign}'(s_{i}), \\
\prod_{s_{i}\in\nu}\mathrm{sign}'(s_{i})
&=&(-1)^{b}(-1)^{c}(-1)^{k+r-1}
\prod_{s_{i}\in\overline{\nu}}\mathrm{sign}'(s_{i}), \\
\label{eqn:lchi}
l(\chi)&=&l(\overline{\chi}) \\
\label{eqn:lnuchi}
l(\nu\cap\chi)&=&l(\overline{\nu}\cap\overline{\chi}).
\end{eqnarray}
Therefore, we obtain 
$\mathrm{Sign}(\lambda,\mu;\nu)=
(-1)^{r}\mathrm{Sign}(\overline{\lambda},\overline{\mu};\overline{\nu})$.
For case (b), we have 
\begin{eqnarray*}
\mathrm{sign}_{4}(\nu)&=&(-1)^{b}(-1)^{c}(-1)\mathrm{sign}_{4}(\overline{\nu}), \\
\prod_{s_{i}\in\nu}\mathrm{sign}'(s_{i})
&=&(-1)^{b}(-1)^{c}
\prod_{s_{i}\in\overline{\nu}}\mathrm{sign}'(s_{i})
\end{eqnarray*}
and the same Equations from (\ref{eqn:signchi}) to (\ref{eqn:signlam0}), 
(\ref{eqn:lchi}) and (\ref{eqn:lnuchi}).
Thus, we have $\mathrm{Sign}(\lambda,\mu;\nu)=
(-1)^{r}\mathrm{Sign}(\overline{\lambda},\overline{\mu};\overline{\nu})$.

Note that we have $l(\overline{\chi})=l(\chi)+1$ in case 
of (2) and (3) and $l(\overline{\chi})=l(\chi)+2$ in case of (4).
This completes the proof.
\end{proof}

\begin{example}
Let $\lambda:=(4,1)$ and $\mu:=(3,2)$.
We have 
\begin{eqnarray*}
P_{(4,1)}P_{(3,2)}
&=&2^{-4}(-\hat{S}_{(4,3)\otimes(2,1)}+\hat{S}_{(4,2)\otimes(3,1)}
+\hat{S}_{(4,1)\otimes(3,2)} \\
&&
+\hat{S}_{(3,2)\otimes(4,1)}+\hat{S}_{(3,1)\otimes(4,2)}-\hat{S}_{(2,1)\otimes(4,3)}) \\
&=&2^{-4}(
-\hat{S}_{(3,3,2,2)}+\hat{S}_{(4,3,2,1)}+\hat{S}_{(4,4,1,1)}+\hat{S}_{(5,3,2)}
+\hat{S}_{(5,4,1)}-\hat{S}_{(5,5)}
).
\end{eqnarray*}
\end{example}

\subsection{Giambelli formula}
\label{Giambelli}
A Schur function $s_{\alpha}$ has a Giambelli formula with determinant, and 
$Q$-function has a Giambelli formula in terms of a Pfaffian.
We show two types of Giambelli formula for $\hat{S}$-function: one is 
a determinant and the other is a Pfaffian.

Let $n$ be the length of $\lambda$. 
When $l(\mu)=n-1$, we regard $\mu$ as a strict partition of length $n$ 
by $\mu_{n}=0$.
When $l(\mu)=n$, we regard $\lambda,\mu$ as a strict partition of 
length $n+1$ by $\lambda_{n+1}=\mu_{n+1}=0$.
We define a matrix $\tilde{P}(\lambda,\mu)$ in terms of Schur $P$-functions as 
\begin{eqnarray*}
\tilde{P}(\lambda,\mu)_{i,j}
:=
\begin{cases}
P_{(\lambda_i,\mu_j)}, & \lambda_{i}>\mu_{j}>0, \\
-P_{(\mu_j,\lambda_i)}, & \mu_{j}>\lambda_{i}>0, \\
P_{(\lambda_i)}, & \lambda_{i}\neq0, \mu_{j}=0,\\
-P_{(\mu_j)}, & \lambda_{i}=0, \mu_{j}\neq0,\\
\delta_{\lambda_{i},0}, & \lambda_{i}=\mu_{j}.
\end{cases}
\end{eqnarray*}

\begin{theorem}
Let $\alpha=\lambda\otimes\mu$.
Then,  
\begin{eqnarray*}
\hat{S}_{\alpha}=
2^{n}\det\left[\tilde{P}(\lambda,\mu)_{i,j} \right]_{1\le i,j\le m}.
\end{eqnarray*}
where $m=n$ for $l(\lambda)=l(\mu)+1=n$ and $m=n+1$ for $l(\lambda)=l(\mu)=n$.
\end{theorem}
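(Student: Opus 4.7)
The plan is to recognize the determinant as precisely the signed sum over perfect matchings provided by Theorem~\ref{thrm:SinPP2}. After padding $\lambda$ and $\mu$ with zeros so that each has length $m$, one checks directly that $m$ equals the integer $l$ used in Theorem~\ref{thrm:SinPP2} and that the prefactor $2^{l(\lambda)}$ there equals $2^n$ in both cases covered by the statement. It therefore suffices to match, term by term, the Leibniz expansion
\[
\det[\tilde{P}(\lambda,\mu)]
=\sum_{\pi\in\mathcal{S}_m}\epsilon(\pi)\prod_{i=1}^{m}\tilde{P}(\lambda,\mu)_{i,\pi(i)}
\]
with the sum over $\pi\in\mathcal{S}_l$ in Theorem~\ref{thrm:SinPP2}.

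First I would factor each matrix entry as $\tilde{P}(\lambda,\mu)_{i,j}=\eta(\lambda_i,\mu_j)\,P[\lambda_i,\mu_j]$, where $\eta(a,b)=-1$ exactly when $b>a$ (treating $0$ as smaller than every positive integer). Using the conventions $P[n,n]=0$ for $n\ge 1$ and $P[0,0]=1$, the diagonal-coincidence entry $\delta_{\lambda_i,0}$ matches the $P[\lambda_i,\lambda_i]$ values exactly, so terms with $\lambda_i=\mu_{\pi(i)}$ are treated uniformly on both sides. Next observe that the factor $\prod_{s_k\in\lambda}\mathrm{sign}'(s_k)$ inside $\mathrm{sign}_3(\lambda,\mu;\pi)$ does \emph{not} depend on $\pi$: the labels $n_\lambda$ and $n_\mu$ at a coincidence $n\in\lambda\cap\mu$ are assigned by the fixed convention of Section~\ref{sec:SinPP}, so the set $\{s_k:s_k\in\lambda\}$ depends only on $\lambda$ and $\mu$. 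Consequently the entire $\pi$-dependence of $\mathrm{sign}_3$ lives in $(-1)^{\mathrm{cr}(\lambda,\mu;\pi)}$.

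The crux of the proof is the combinatorial sign identity
\[
\epsilon(\pi)\prod_{i=1}^{m}\eta(\lambda_i,\mu_{\pi(i)})
=\mathrm{sign}_3(\lambda,\mu;\pi)
\qquad\text{for every }\pi\in\mathcal{S}_m,
\]
which I would prove by induction on the number of inversions of $\pi$. The base case is a distinguished $\pi_0$ whose associated arc diagram is noncrossing (and whose matched pairs satisfy $\lambda_i\ge\mu_{\pi_0(i)}$ whenever possible); here a direct count pins down the common constant $\prod_{s_k\in\lambda}\mathrm{sign}'(s_k)$. For the inductive step, right-multiplying $\pi$ by an adjacent transposition $(k,k+1)$ flips $\epsilon(\pi)$ and alters the $\eta$-product by a factor determined by how the intervals $[\min(\lambda_k,\mu_{\pi(k)}),\max(\lambda_k,\mu_{\pi(k)})]$ and $[\min(\lambda_{k+1},\mu_{\pi(k+1)}),\max(\lambda_{k+1},\mu_{\pi(k+1)})]$ interleave; on the other side, swapping the two arcs $\lambda_k\to\mu_{\pi(k)}$ and $\lambda_{k+1}\to\mu_{\pi(k+1)}$ changes $\mathrm{cr}(\lambda,\mu;\pi)$ by the same parity, as verified by a short case analysis on the relative order of the four endpoints in the merged decreasing sequence.

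The main obstacle is the bookkeeping in the inductive step when some of the four positions $\lambda_k,\lambda_{k+1},\mu_{\pi(k)},\mu_{\pi(k+1)}$ coincide (belonging to $\lambda\cap\mu$) or equal a padded zero, since then the labels $n_\lambda,n_\mu$ and the rule forbidding arcs between equal indices must be handled delicately; an additional complication is that the padded zeros at the end of $\lambda$ and $\mu$ in the case $l(\lambda)=l(\mu)$ may be simultaneously matched, in which case the term $\delta_{0,0}=1$ contributes and the constant must be recomputed to reflect this boundary. Once these cases are dispatched the identity is established and the determinantal Giambelli formula follows at once from Theorem~\ref{thrm:SinPP2}.
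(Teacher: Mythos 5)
Your proposal is correct and follows essentially the same route as the paper: the paper likewise reduces the theorem to the sign identity $\mathrm{sign}_3(\lambda,\mu;\pi)=\epsilon(\pi)\prod_{i}(-1)^{\delta(\mu_{\pi(i)}>\lambda_i)}$, verifies a base case (the paper uses $\pi=\mathrm{id}$, handled by induction on $l(\lambda)$), and then propagates to all $\pi$ via adjacent transpositions with a case analysis on the relative order of the four affected endpoints (the paper's six local configurations). The only differences are cosmetic choices of base matching and left- versus right-multiplication by the transposition.
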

\begin{proof}
From the definition of $\tilde{P}(\lambda)$, we have 
\begin{eqnarray*}
\det\left[
\tilde{P}(\lambda,\mu)_{i,j}
\right]
=\sum_{\pi\in\mathcal{S}_{m}}\epsilon(\pi)
\prod_{i=1}^{m}(-1)^{\delta(\mu_{\pi(i)}>\lambda_{i})}
\prod_{i=1}^{m}P[\lambda_{i},\mu_{\pi(i)}].
\end{eqnarray*}
where $\delta(S)=1$ if the statement $S$ is true and zero otherwise.
From Theorem \ref{thrm:SinPP2}, it is enough to show that 
\begin{eqnarray}
\label{eqn:sign3ep}
\mathrm{sign}_{3}(\lambda,\mu;\pi)=
\epsilon(\pi)\prod_{i=1}^{m}(-1)^{\delta(\mu_{\pi(i)}>\lambda_{i})}.
\end{eqnarray}
First, we show Eqn. (\ref{eqn:sign3ep}) is true for $\pi=\mathrm{id}$ by induction.
It is easy to see that Eqn. (\ref{eqn:sign3ep}) is true for $l(\lambda)=1$.
Let $\lambda'=\lambda\setminus\{\lambda_{1}\}$ and 
$\mu'=\mu\setminus\{\mu_{1}\}$. 
If $\lambda_1=\mu_1$, then $P[\lambda_1,\mu_1]=0$ by definition.
Therefore, we have two cases: (1) $\lambda_1>\mu_{1}$ and (2) $\lambda_{1}<\mu_{1}$.
For (1), suppose that $\lambda_{1}>\lambda_{2}>\ldots>\lambda_{k+1}>\mu_{1}$.
Then, 
\begin{eqnarray*}
(-1)^{\mathrm{cr}(\lambda,\mu;\mathrm{id})}
&=&(-1)^{k}\cdot(-1)^{\mathrm{cr}(\lambda',\mu';\mathrm{id})} \\
\prod_{s_{i}\lambda}\mathrm{sign}'(s_{i})
&=&(-1)^{k}\prod_{s_{i}\lambda}\mathrm{sign}'(s_{i}).
\end{eqnarray*}
Thus, we have 
$\mathrm{sign}_{3}(\lambda,\mu;\mathrm{id})
=\mathrm{sign}_{3}(\lambda',\mu';\mathrm{id})$
and the right hand side of Eqn. (\ref{eqn:sign3ep}) is equal to 
$\prod_{i=2}^{m}(-1)^{\delta(\mu_{i}>\lambda_{i})}$.
By induction assumption, we have 
$\mathrm{sign}_{3}(\lambda',\mu';\mathrm{id})
=\prod_{i=2}^{m}(-1)^{\delta(\mu_{i}>\lambda_{i})}$. 
Thus, Eqn. (\ref{eqn:sign3ep}) is true for general $\lambda$ and $\mu$.
For case (2), by a similar argument, 
we have 
$\mathrm{sign}_{3}(\lambda,\mu;\mathrm{id})
=-\mathrm{sign}_{3}(\lambda',\mu';\mathrm{id})$ 
which is compatible with the right hand side of Eqn. (\ref{eqn:sign3ep}).
We successively apply the argument above to $\lambda$ and $\mu$, 
which implies that Eqn. (\ref{eqn:sign3ep}) is true for $\pi=\mathrm{id}$.

Given $\pi$, we denote $\pi'=(i,i+1)\circ\pi$ for some $i$ where $(i,i+1)$ 
is a transposition in $\mathcal{S}_{l}$.
Suppose that we have arcs connecting $\lambda_{i}$ with $\mu_{\pi(i)}$
and $\lambda_{i+1}$ with $\mu_{\pi(i+1)}$ in a perfect matching.
Applying the transposition on the perfect matching means 
that we connect $\lambda_{i}$ with $\mu_{\pi(i+1)}$ and 
$\lambda_{i+1}$ with $\mu_{\pi(i)}$.
Let $j:=\min\{\pi(i),\pi(i+1)\}$ and $k:=\max\{\pi(i),\pi(i+1)\}$.
We have the cases where
$\lambda_{i}$ and $\lambda_{i+1}$ are distinct and so do $\mu_{j}$ and $\mu_{k}$.
We have six local configurations for 
$\lambda_{i},\lambda_{i+1}, \mu_{j}$ and $\mu_{k}$: 
1) $\lambda_{i}>\lambda_{i+1}\ge\mu_{j}>\mu_{k}$,
2) $\lambda_{i}\ge\mu_{j}\ge\lambda_{i+1}\ge\mu_{k}$,
3) $\lambda_{i}\ge\mu_{j}>\mu_{k}\ge\lambda_{i+1}$, 
4) $\mu_{j}\ge\lambda_{i}>\lambda_{i+1}\ge\mu_{k}$,
5) $\mu_{j}\ge\lambda_{i}\ge\mu_{k}\ge\lambda_{i+1}$, and 
6) $\mu_{j}>\mu_{k}\ge\lambda_{i}>\lambda_{i+1}$.
For case 1), the reconnection of arcs is locally given by
\begin{eqnarray*}
\tikzpic{-0.5}{
\draw(0,0)node[anchor=north]{$\lambda_{i+1}$}
      ..controls(0,0.4)and(0.6,0.4)..(0.6,0)node[anchor=north]{$\mu_{j}$};
\draw(-0.6,0)node[anchor=north]{$\lambda_{i}$}
      ..controls(-0.6,0.8)and(1.2,0.8)..(1.2,0)node[anchor=north]{$\mu_{k}$};
}\leftrightarrow
\tikzpic{-0.5}{
\draw(0,0)node[anchor=north]{$\lambda_{i+1}$}
      ..controls(0,0.6)and(1.2,0.6)..(1.2,0)node[anchor=north]{$\mu_{j}$};
\draw(-0.6,0)node[anchor=north]{$\lambda_{i}$}
      ..controls(-0.6,0.6)and(0.6,0.6)..(0.6,0)node[anchor=north]{$\mu_{k}$};
}.
\end{eqnarray*}
It is easy to see that the both sides of Eqn. (\ref{eqn:sign3ep}) gives 
$(-1)$ by the reconnection.
Similarly, we have a factor $(-1)$ for the cases 3), 4) and 6) and 
$1$ for the cases 2) and 5).
Starting from $\pi=\mathrm{id}$, by successively applying the procedure above,
we obtain that Eqn. (\ref{eqn:sign3ep}) is true for any $\pi$.
This completes the proof.
\end{proof}

\begin{example}
Let $\alpha:=(4,3,3,3)=(4,3,2)\otimes(3,1)$. 
The matrix $\tilde{P}:=\tilde{P}((4,3,2),(3,1))$ is given by 
\begin{eqnarray*}
\tilde{P}=
\begin{bmatrix}
P[(4,3)] & P[(4,1)] & P[(4)] \\
0 & P[(3,1)] & P[(3)] \\
-4P[(3,2)]  & P[(2,1)] & P[(2)]
\end{bmatrix}.
\end{eqnarray*}
We have $\hat{S}_{\alpha}=2^{3}\cdot \det[\tilde{P}]$.
\end{example}

Let $n$ be the length of $\lambda$. 
When $l(\mu)=n-1$, we regard $\mu$ as a strict partition of length $n$ 
by $\mu_{n}=0$.
When $l(\mu)=n$, we regard $\lambda,\mu$ as a strict partition of 
length $n+1$ by defining $\lambda_{n+1}=\mu_{n+1}=0$.
Let $m=n$ when $l(\lambda)=l(\mu)+1$ and $m=n+1$ when $l(\lambda)=l(\mu)$.
Let $A:=\lambda\cup\mu$ be a weakly decreasing integer sequence with 
repeated entries such that  an integer in $\lambda\cap\mu$ 
(resp. $(\lambda\cup\mu)\setminus(\lambda\cap\mu)$ ) appears twice 
(resp. once) in $A$.
We write $A:=(a_{1}\ge a_{2}\ge\ldots\ge a_{2m})$.
For $i\in A$ and $i\notin\lambda\cap\mu$, we define 
the sign of $i$ as $\mathrm{sign}(i)=+$ if $i\in\lambda$ and 
$\mathrm{sign}(i)=-$ if $i\in\mu$.
We denote the sign of $A$ by 
$\mathrm{sign}(A):=(\mathrm{sign}(a_1),\mathrm{sign}(a_2),\ldots,\mathrm{sign}(a_{2m}))$. 
Let $\mathrm{sign}_{0}(A)$ be a sequence of alternating signs, {\it i.e.}, 
$\mathrm{sign}_{0}(A):=(+,-,+,\ldots,-)$.
We denote by $d(A)$ the number of transpositions to obtain $\mathrm{sign}(A)$ from 
$\mathrm{sign}_{0}(A)$.

For $i\in A$ and $i\in\lambda\cap\mu$, we assign $\mathrm{sign}(i)=+$
for the first $i$ and $\mathrm{sign}(i)=-$ for the second $i$.
We define a skew symmetric matrix $P_{i,j}(A)$, $1\le i,j\le 2m$, in terms of  
Schur $P$-functions: 
\begin{eqnarray*}
P_{i,j}(A):=
\begin{cases}
0, &\mathrm{sign}(a_i)=\mathrm{sign}(a_j) \text{ or } a_{i}=a_{j}\neq0, \\
P_{(a_{i},a_{j})}, & a_i>a_j>0, \\
P_{(a_{i})}, & a_i>a_j=0, \\
1, & a_i=a_j=0.
\end{cases}
\end{eqnarray*}
for $1\le i\le j\le 2m$.

\begin{theorem}
Let $A$, $d(A)$ and $P_{i,j}(A)$ as above.
\begin{eqnarray}
\label{eqn:SinPf1}
\hat{S}_{\alpha}=(-1)^{d(A)}\cdot 2^{n}\cdot \mathrm{pf}\left[P_{i,j}(A)\right]_{1\le i,j\le 2m}.
\end{eqnarray}
\end{theorem}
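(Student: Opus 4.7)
The plan is to derive this Pfaffian formula directly from the perfect-matching expansion in Theorem \ref{thrm:SinPP2}, which already writes $\hat{S}_{\alpha}$ as a signed sum over $\pi \in \mathcal{S}_{l}$ of products $\prod_{i=1}^{l} P[\lambda_{i},\mu_{\pi(i)}]$. The strategy is to expand the right-hand side of Eqn.~(\ref{eqn:SinPf1}) using the definition Eqn.~(\ref{eqn:defPf}) of the Pfaffian and set up a bijection between the perfect matchings on $\{1,\dots,2m\}$ that give nonzero contributions and the permutations $\pi \in \mathcal{S}_{l}$, while carefully matching signs.

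First I would unpack the Pfaffian. Because $P_{i,j}(A) = 0$ whenever $\mathrm{sign}(a_{i}) = \mathrm{sign}(a_{j})$ (or when $a_{i} = a_{j} \neq 0$), the only $\rho \in \mathcal{F}_{m}$ contributing to $\mathrm{pf}[P_{i,j}(A)]$ are those whose arcs each pair a ``$+$'' position (an entry of $\lambda$) with a ``$-$'' position (an entry of $\mu$), and never pair two identical entries coming from $\lambda \cap \mu$. Such $\rho$ is encoded by a permutation $\pi \in \mathcal{S}_{l}$: writing the ``$+$'' positions in left-to-right order as carrying $\lambda_{1},\dots,\lambda_{l}$ and the ``$-$'' positions as carrying $\mu_{1},\dots,\mu_{l}$, the arc through $\lambda_{i}$ points to $\mu_{\pi(i)}$. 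Under this correspondence, the monomial $\prod_{i=1}^{m} P_{\rho(2i-1),\rho(2i)}(A)$ equals $\prod_{i=1}^{l} P[\lambda_{i},\mu_{\pi(i)}]$ (the factors $P_{(a_{i})}$ handle the pairings with $0$, and the factor $1$ handles the pairing $0 \leftrightarrow 0$ when $l(\lambda)=l(\mu)$). So the bijection identifies the set of Pfaffian terms with $\mathcal{S}_{l}$.

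The main obstacle is the sign reconciliation: I need to show
\begin{eqnarray*}
(-1)^{d(A)} \, \epsilon(\rho) = \mathrm{sign}_{3}(\lambda,\mu;\pi)
= (-1)^{\mathrm{cr}(\lambda,\mu;\pi)} \prod_{s_{i}\in\lambda}\mathrm{sign}'(s_{i}).
\end{eqnarray*}
For this I would argue in two steps. Step (a): relate $\epsilon(\rho)$ to the crossing number $\mathrm{cr}(\lambda,\mu;\pi)$. When the entries of $A$ are in the alternating sign pattern $\mathrm{sign}_{0}(A) = (+,-,+,\dots,-)$, the perfect matching on $\{1,\dots,2m\}$ drawn as arcs above the line has $\epsilon(\rho) = (-1)^{\mathrm{cr}}$ where $\mathrm{cr}$ is the number of arc crossings; this is a standard calculation by writing $\rho$ as a product of transpositions that straighten out each crossing. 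Step (b): pass from the alternating sign pattern to the actual sign pattern $\mathrm{sign}(A)$. By the definition of $d(A)$, the true sign pattern is obtained from $\mathrm{sign}_{0}(A)$ by $d(A)$ transpositions of adjacent signs; each such transposition simultaneously changes the parity of $\prod_{s_{i}\in\lambda}\mathrm{sign}'(s_{i})$ (since it swaps a ``$+$'' past a ``$-$'' in the positional read of $A$, changing by one the index of exactly one $\lambda$-entry) and does not change $\epsilon(\rho)$ read off the re-labeled matching, modulo an adjustment that combines with the $(-1)^{d(A)}$ prefactor to exactly match $\mathrm{sign}_{3}$.

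Finally I would handle the boundary cases $l(\lambda) = l(\mu)$ (where $A$ has appended zeros and $m = n+1$) and the cases where $\lambda \cap \mu \neq \emptyset$ (where repeated entries in $A$ force $P_{i,j}(A) = 0$ on the forbidden pairings, which is exactly the condition excluding $n_{\lambda}\leftrightarrow n_{\mu}$ arcs in the perfect matching model of Theorem \ref{thrm:SinPP2}). Combining the bijection and the sign identity term by term gives Eqn.~(\ref{eqn:SinPf1}). A small example (e.g.\ $\lambda=(5,4,3)$, $\mu=(4,2)$ as in Example \ref{example:pm}) would serve as a sanity check of the $d(A)$-prefactor before writing out the general sign computation.
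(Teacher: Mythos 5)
Your proposal is correct and takes essentially the same route as the paper: both derive Eqn.~(\ref{eqn:SinPf1}) by expanding $\mathrm{pf}[P_{i,j}(A)]$ via Eqn.~(\ref{eqn:defPf}), observing that the zero entries of $P_{i,j}(A)$ leave exactly the matchings pairing $\lambda$-entries with $\mu$-entries (i.e.\ the perfect matchings of Theorem~\ref{thrm:SinPP2}), using $\epsilon(\rho)=(-1)^{\mathrm{cr}(\lambda,\mu;\pi)}$, and absorbing the remaining factor through the parity identity $(-1)^{d(A)}=\prod_{s_{i}\in\lambda}\mathrm{sign}'(s_{i})$. Your step~(b) is phrased more circuitously than necessary (the crossing number needs no relabeling argument, since $\mathrm{cr}$ is already defined on the actual arrangement of $A$), but it reduces to the same parity identity the paper invokes.
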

\begin{proof}
Recall that a Pfaffian is defined by Eqn. (\ref{eqn:defPf}) and 
a $\hat{S}$-function has an expression in terms of perfect matchings
as Theorem \ref{thrm:SinPP2}.
We place integers in $[1,2m]$ from left to right.
Then, for a permutation $\pi\in\mathcal{F}_{m}$, we consider 
a perfect matching such that $\pi(2i-1)$ and $\pi(2i)$ 
for $1\le i\le m$ are connected by an arc.
It is obvious that the sign $\epsilon(\pi)$ is equal to 
$(-1)^{\mathrm{cr}}$ where $\mathrm{cr}$ is the number of 
crossings in the perfect matching.
We have 
\begin{eqnarray*}
\mathrm{pf}[P_{i,j}(A)]
=\sum_{\pi\in\mathcal{F}_{m}}(-1)^{\mathrm{cr}(\lambda,\mu;\pi)}
\prod_{i=1}^{m}P[\lambda_{i},\mu_{\pi(i)}],
\end{eqnarray*}
where $\alpha=\lambda\otimes\mu$.
It is easy to see 
\begin{eqnarray*}
\prod_{s_{i}\in\lambda}\mathrm{sign}'(s_{i})=(-1)^{d(A)}.
\end{eqnarray*}
Finally, $n=l(\lambda)$ by definition.
Combining these observations with Theorem \ref{thrm:SinPP2}, 
we obtain Eqn. (\ref{eqn:SinPf1}).
This completes the proof.
\end{proof}

\begin{example}
Let $\alpha:=(4,3,3,3)=(4,3,2)\otimes (3,1)$.
Then, $A=(4,3,3,2,1,0)$ and $\mathrm{sign}(A)=(+,+,-,+,-,-)$.
The skew symmetric matrix $P_{i,j}(A)$ is given by
\begin{eqnarray*}
P_{i,j}(A)=
\begin{bmatrix}
0 & 0 & P[(4,3)] & 0 & P[(4,1)] & P[(4)] \\
0 & 0 & 0 & 0 & P[(3,1)] & P[(3)] \\
-P[(4,3)] & 0 & 0 & P[(3,2)] & 0 & 0 \\
0 & 0 & -P[(3,2)] & 0 & P[(2,1)] & P[(2)] \\
-P[(4,1)] & -P[(3,1)] & 0 & -P[(2,1)] & 0 & 0 \\
-P[(4)] & -P[(3)] & 0 & -P[(2)] & 0 & 0 
\end{bmatrix}.
\end{eqnarray*}
Then, $\hat{S}_{\alpha}=2^{3}\cdot\mathrm{pf}[P_{i,j}(A)]$.
\end{example}

\subsection{Skew \texorpdfstring{$\hat{S}$}{S}-functions}
Given two strict partitions $\lambda$ and $\mu$, 
we define the sets of strict partitions $S_{p}(\lambda,\mu)$, $p=1,2$, and 
signs $\mathrm{sign}_{p}(\lambda',\mu';\lambda,\mu)$, $p=1,2$, as 
in Section \ref{sec:SinPP}.
Let $\alpha$ and $\beta$ be ordinary partitions such that 
$\beta\subseteq\alpha$, $\alpha=\lambda\otimes\mu$ and $\beta=\nu\otimes\xi$.
We define two sets $S(\lambda,\mu,\nu,\xi)$ and $\tilde{S}(\lambda,\mu,\nu,\xi)$ as 
\begin{eqnarray*}
S(\lambda,\mu,\nu,\xi):=
\{(\lambda',\mu',\nu',\xi')\ \vert\  (\lambda',\mu')\in S_{1}(\lambda,\mu) 
\text{ and }(\nu',\xi')\in S_{2}(\nu,\xi) \}, \\
\tilde{S}(\lambda,\mu,\nu,\xi):=
\{(\lambda',\mu',\nu',\xi')\ \vert\ (\lambda',\mu')\in S_{2}(\lambda,\mu) 
\text{ and }(\nu',\xi')\in S_{1}(\nu,\xi) \}.
\end{eqnarray*}
We define $m:=l(\nu)$ (resp. $m:=l(\nu)+1$) if $l(\nu)=l(\xi)+1$ 
(resp. $l(\nu)=l(\xi)$) and $n:=l(\lambda)$ (resp. $n:=l(\lambda)+1$) 
if $l(\lambda)=l(\mu)+1$ (resp. $l(\lambda)=l(\mu)$).

\begin{theorem}
\label{thrm:skewSinQQ}
A $\hat{S}$-function can be expressed as a sum of products of Schur $Q$-functions:
\begin{eqnarray}
\label{eqn:skewSinQQ1}
\hat{S}_{\alpha/\beta}
=2^{l(\nu)-l(\lambda)}\sum_{(\lambda',\mu',\nu',\xi')\in S(\lambda,\mu,\nu,\xi)} 
\mathrm{sign}_{1}(\lambda',\mu';\lambda,\mu)\mathrm{sign}_{2}(\nu',\xi';\nu,\xi)
Q_{\lambda'/\nu'}Q_{\mu'/\xi'},
\end{eqnarray}
and 
\begin{eqnarray}
\label{eqn:skewSinQQ2}
\hat{S}_{\alpha/\beta}
=2^{m-n}\sum_{(\lambda',\mu',\nu',\xi')\in \tilde{S}(\lambda,\mu,\nu,\xi)} 
\mathrm{sign}_{2}(\lambda',\mu';\lambda,\mu)\mathrm{sign}_{1}(\nu',\xi';\nu,\xi)
Q_{\lambda'/\nu'}Q_{\mu'/\xi'}.
\end{eqnarray}
\end{theorem}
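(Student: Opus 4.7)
The plan is to derive Theorem~\ref{thrm:skewSinQQ} from Theorem~\ref{thrm:SinPP} by a coproduct argument: I would evaluate the non-skew identities of Theorem~\ref{thrm:SinPP} at a disjoint union $X \cup Y$ of two alphabets and extract the coefficient of $\hat{S}_\beta(Y)$ on each side. The standard identities
\begin{eqnarray*}
\hat{S}_\alpha(X,Y) &=& \sum_\beta \hat{S}_{\alpha/\beta}(X)\,\hat{S}_\beta(Y), \\
P_\lambda(X,Y) &=& \sum_\mu P_{\lambda/\mu}(X)\,P_\mu(Y),
\end{eqnarray*}
combined with the linear independence of $\{\hat{S}_\beta(Y)\}_\beta$ in the $Y$-variables, reduce the skew case to a collection-of-terms problem in the non-skew formula of Theorem~\ref{thrm:SinPP}.

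Concretely, I would first apply Theorem~\ref{thrm:SinPP} (Eqn.~(\ref{eqn:SinPP1})) to $\hat{S}_\alpha(X,Y)$, obtaining
\begin{eqnarray*}
\sum_\beta \hat{S}_{\alpha/\beta}(X)\,\hat{S}_\beta(Y) = 2^{l(\mu)}\!\!\!\sum_{(\lambda',\mu')\in S_1(\lambda,\mu)}\!\!\! \mathrm{sign}_1(\lambda',\mu';\lambda,\mu)\, P_{\lambda'}(X,Y) P_{\mu'}(X,Y).
\end{eqnarray*}
Expanding each $P_{\lambda'}(X,Y) = \sum_{\sigma} P_{\lambda'/\sigma}(X) P_{\sigma}(Y)$ and $P_{\mu'}(X,Y) = \sum_{\tau} P_{\mu'/\tau}(X) P_{\tau}(Y)$ yields a quadruple sum whose $Y$-part is $P_{\sigma}(Y) P_{\tau}(Y)$. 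For a fixed target $\beta = \nu \otimes \xi$, I would then recognize, via the $S_2$-form of Theorem~\ref{thrm:SinPP} (Eqn.~(\ref{eqn:SinPP3})), which pairs $(\sigma,\tau)$ contribute to $\hat{S}_\beta(Y)$: those with $(\sigma,\tau) \in S_2(\nu,\xi)$ with accompanying sign $\mathrm{sign}_2(\sigma,\tau;\nu,\xi)$. Equating coefficients of $\hat{S}_\beta(Y)$ on both sides and converting each skew $P$-function to a skew $Q$-function via $Q_{\lambda/\mu} = 2^{l(\lambda)-l(\mu)} P_{\lambda/\mu}$ then produces Eqn.~(\ref{eqn:skewSinQQ1}). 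The symmetric argument, starting instead from Eqn.~(\ref{eqn:SinPP3}) for $\hat{S}_\alpha(X,Y)$ and reassembling the $Y$-side via Eqn.~(\ref{eqn:SinPP1}), yields Eqn.~(\ref{eqn:skewSinQQ2}).

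The main obstacle will be establishing that the combinatorial signs factor correctly under this coproduct decomposition. One must verify that the outer sign from Theorem~\ref{thrm:SinPP} applied to $\hat{S}_\alpha$ and the inner sign from Theorem~\ref{thrm:SinPP} applied to $\hat{S}_\beta$, when reindexed under the constraints $\sigma \subseteq \lambda'$ and $\tau \subseteq \mu'$, combine to give precisely $\mathrm{sign}_1(\lambda',\mu';\lambda,\mu)\,\mathrm{sign}_2(\sigma,\tau;\nu,\xi)$. Both signs depend in a nontrivial way on which elements of the symmetric difference $A = (\lambda \cup \mu) \setminus (\lambda \cap \mu)$ have been moved between the two partitions, so a careful case analysis is needed: each element of $A$ must be routed either into $\beta$ (contributing to the inner sign) or into $\alpha/\beta$ (contributing to the outer sign). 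The two presentations (Eqns.~(\ref{eqn:skewSinQQ1}) and (\ref{eqn:skewSinQQ2})) then arise from the freedom to apply Eqn.~(\ref{eqn:SinPP1}) or Eqn.~(\ref{eqn:SinPP3}) at the outer and inner steps, which also explains the asymmetric power-of-two prefactors $2^{l(\nu)-l(\lambda)}$ versus $2^{m-n}$ after conversion to $Q$-functions.
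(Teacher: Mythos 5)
Your reduction hinges on extracting the coefficient of $\hat{S}_{\beta}(Y)$ from $\hat{S}_{\alpha}(X,Y)=\sum_{\beta}\hat{S}_{\alpha/\beta}(X)\hat{S}_{\beta}(Y)$, and this is where the argument breaks: the family $\{\hat{S}_{\beta}\}$ is \emph{not} linearly independent. The paper itself relies on $\hat{S}_{\alpha}=\hat{S}_{\alpha^{T}}$, and concretely $\hat{S}_{(1,1)}=q_{1}^{2}-q_{2}=q_{2}=\hat{S}_{(2)}$ because $q_{1}^{2}=2q_{2}$ in $\Omega$; so ``equating coefficients of $\hat{S}_{\beta}(Y)$'' is not a well-defined operation. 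The same difficulty appears one step earlier in your computation: the products $P_{\sigma}(Y)P_{\tau}(Y)$ are also linearly dependent --- Eqn.~(\ref{eqn:SinPP2}) exhibits two distinct signed expansions of one and the same $\hat{S}_{\alpha}$, over $S_{1}^{+}$ and over $S_{1}^{-}$ --- so the step where you ``recognize which pairs $(\sigma,\tau)$ contribute to $\hat{S}_{\beta}(Y)$'' is not a coefficient extraction either. It would require inverting the relation of Theorem~\ref{thrm:SinPP}, which is precisely the separately proved and nontrivial content of Eqn.~(\ref{eqn:PPinS}) with the set $S_{3}$ and $\mathrm{sign}_{5}$, not a matter of matching terms against $S_{2}(\nu,\xi)$. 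Neither $\langle\cdot,\cdot\rangle$ nor $[\cdot,\cdot]$ gives you a dual family that isolates a single $\hat{S}_{\beta}$, so as written the proposal does not close; the sign bookkeeping you flag as ``the main obstacle'' is real but secondary to this.

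The coproduct identities themselves are correct, and the overall intuition (skew $=$ non-skew formula pushed through branching) is sound, but the paper avoids the independence problem entirely by inducting on $|\alpha|$: it writes $\hat{S}_{\alpha/\beta}=\hat{S}_{\alpha^{T}/\beta^{T}}$, expands the determinant (\ref{Sdet1}) along the column indexed by $\nu_{1}$ to get $\hat{S}_{\alpha/\beta}=\sum_{i}(-1)^{i-1}Q[\lambda_{i}/\nu_{1}]\cdot\hat{S}[\mu\otimes\widehat{\lambda}_{i}/\xi\otimes\widehat{\nu}_{1}]$, applies the induction hypothesis (in the form of Eqn.~(\ref{eqn:skewSinQQ1})) to the smaller skew $\hat{S}$-functions, and reassembles the result with the Pfaffian-expansion identities for skew $Q$-functions of Lemma~\ref{lemma:QQ1}, alternating between the two forms of the statement at each step. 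If you want to retain a coproduct flavor, you must run the branching at the level of the $q_{n}$'s inside the determinant, where it is unambiguous --- but that is essentially the paper's column expansion, not a coefficient comparison in the $\hat{S}$- or $P\cdot P$-``bases''.
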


\begin{example}
Let $\alpha=(4,3,2,1)$ and $\beta=(2,2)$. Then,
$\alpha=(4,2)\otimes(3,1)$ and $\beta=(2,1)\otimes(1)$, 
$m=2$ and $n=3$.
Thus, we have 
\begin{eqnarray*}
\hat{S}_{\alpha/\beta}
&=&
Q_{(4,2)/(2,1)}Q_{(3,1)/(1)}\\
&=&2^{-1}\left(Q_{(4,2)/(2,1)}Q_{(3,1)/(1)}+Q_{(4,2)/(1)}Q_{(3,1)/(2,1)} \right.\\
&&+Q_{(4,3,2)/(2,1)}Q_{(1)/(1)}+Q_{(4,2,1)/(2,1)}Q_{(3)/(1)}
\left.\right).
\end{eqnarray*}
\end{example}

We denote $Q[\lambda/\mu]:=Q_{\lambda/\mu}$ and $\hat{S}[\alpha/\beta]:=\hat{S}_{\alpha/\beta}$ 
for simplicity.
Given a strict partition $\lambda$, let $\widehat{\lambda}_{i}$ be 
a strict partition obtained from $\lambda$ by deleting $\lambda_{i}$.
\begin{lemma}
\label{lemma:QQ1}
Let $\lambda/\mu$ be a skew partition. Then, 
a skew Schur $Q$-function satisfies
\begin{eqnarray}
\label{eqn:QQ1}
Q[\lambda/\mu]=\sum_{i=1}^{l(\lambda)}
(-1)^{i-1}
Q[(\lambda_{i}-\mu_{1})]
Q[\widehat{\lambda}_{i}/\widehat{\mu}_{1}]
\end{eqnarray}
and 
\begin{eqnarray}
\label{eqn:QQ2}
\sum_{i=1}^{l(\lambda)}
(-1)^{i-1}
Q[(\lambda_{i}-\mu_{1})]
Q[\widehat{\lambda}_{i}/\mu]=0.
\end{eqnarray}
\end{lemma}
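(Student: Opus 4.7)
The plan is to deduce both identities from the Pfaffian formula (\ref{eqn:skewPpf}) for $P_{\lambda/\mu}$, then convert to $Q$-functions via $Q_{\lambda/\mu}=2^{l(\lambda)-l(\mu)}P_{\lambda/\mu}$. In the skew-symmetric matrix $[P_{i,j}(\lambda,\mu)]$, the row/column index $j=\tilde l(\lambda)+\tilde l(\mu)$ corresponds to $\mu_1$, and the only nonzero entries in this row are $P_{i,\,\tilde l(\lambda)+\tilde l(\mu)}=P[(\lambda_i-\mu_1)]$ for $1\le i\le \tilde l(\lambda)$, since the lower-right block of the matrix is zero. (Terms with $\lambda_i\le\mu_1$ automatically vanish from the convention $P_{(n)}=0$ for $n<0$, and $\lambda_i=\mu_1$ is excluded by strictness.)

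For identity (\ref{eqn:QQ1}), I will apply the standard Laplace expansion of a Pfaffian along this last row. Deleting both the $\mu_1$ row/column and the $\lambda_i$ row/column leaves precisely the matrix $[P_{i,j}(\widehat\lambda_i,\widehat\mu_1)]$, whose Pfaffian equals $P_{\widehat\lambda_i/\widehat\mu_1}$ by another application of (\ref{eqn:skewPpf}). The signs from Pfaffian Laplace expansion collapse to $(-1)^{i-1}$, yielding
\begin{eqnarray*}
P_{\lambda/\mu}=\sum_{i=1}^{l(\lambda)}(-1)^{i-1}P[(\lambda_i-\mu_1)]\,P_{\widehat\lambda_i/\widehat\mu_1}.
\end{eqnarray*}
Multiplying through by $2^{l(\lambda)-l(\mu)}$ and distributing the powers of $2$ on the right as $2\cdot 2^{(l(\lambda)-1)-(l(\mu)-1)}$ produces the $Q$-function version (\ref{eqn:QQ1}).

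For identity (\ref{eqn:QQ2}), the key idea is to consider the formal skew-symmetric matrix $\widetilde M$ obtained by applying the Pfaffian template (\ref{eqn:skewPpf}) to the pair $(\lambda,\,(\mu_1,\mu_1,\mu_2,\dots,\mu_{l(\mu)}))$ — that is, by adjoining an extra row/column that duplicates the $\mu_1$ row/column. Because $\widetilde M$ has two identical rows and columns, $\mathrm{pf}(\widetilde M)=0$. Expanding this vanishing Pfaffian along one of the two duplicated $\mu_1$ rows: the off-diagonal entry pairing the two $\mu_1$ rows is zero (its value is $P[(\mu_1,\mu_1)]=0$ in the natural extension, and it contributes to the zero off-block), so only the terms pairing the chosen $\mu_1$ row with a $\lambda_i$ row survive. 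Removing that chosen row/column and the $\lambda_i$ row/column leaves the matrix $[P_{i,j}(\widehat\lambda_i,\mu)]$ (the other copy of $\mu_1$ stays), whose Pfaffian is $P_{\widehat\lambda_i/\mu}$. Hence
\begin{eqnarray*}
0=\sum_{i=1}^{l(\lambda)}(-1)^{i-1}P[(\lambda_i-\mu_1)]\,P_{\widehat\lambda_i/\mu},
\end{eqnarray*}
and rescaling by $2^{l(\lambda)-l(\mu)}$ gives (\ref{eqn:QQ2}).

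The main obstacle will be the careful bookkeeping of signs and the fictitious zero entries introduced by the parity convention $\tilde l(\lambda)$, $\tilde l(\mu)$. When $l(\lambda)$ is odd the matrix is padded with a zero part $\lambda_{l(\lambda)+1}=0$, and one must verify that the Laplace expansion term corresponding to this fictitious row collapses (its contribution is $P[(0-\mu_1)]=0$) and hence does not produce a spurious term in the stated sum. The same remark applies to the removal step: one must check that after deleting the $\lambda_i$ row/column, the resulting matrix is, after possibly re-padding to even size, still the Pfaffian template for $P_{\widehat\lambda_i/\widehat\mu_1}$ (respectively $P_{\widehat\lambda_i/\mu}$). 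Once these parity-padding checks are made, both identities follow directly from Pfaffian Laplace expansion.
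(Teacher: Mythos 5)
Your overall strategy coincides with the paper's: identity (\ref{eqn:QQ1}) is obtained by Laplace expansion of the skew Pfaffian along the row/column indexed by $\mu_1$, and your duplicated-row argument for (\ref{eqn:QQ2}) is a legitimate repackaging of the paper's argument (the paper instead substitutes (\ref{eqn:QQ1}) into each $Q[\widehat{\lambda}_{i}/\mu]$ and observes that the coefficient of $Q[(\lambda_{i}-\mu_{1})]Q[(\lambda_{j}-\mu_{1})]$ cancels in pairs; fully expanding your singular Pfaffian produces exactly that double sum). The vanishing $\mathrm{pf}(\widetilde{M})=0$ is justified because $\mathrm{pf}(\widetilde{M})^{2}=\det(\widetilde{M})=0$ and the ring of symmetric functions is an integral domain.

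There is, however, one step that fails as written: the passage from the $P$-function Pfaffian to the $Q$-function identity is not a uniform rescaling by powers of $2$. You convert $Q[(\lambda_{i}-\mu_{1})]$ to $2P[(\lambda_{i}-\mu_{1})]$ for every $i$, but $Q_{(n)}=2P_{(n)}$ only for $n\ge 1$, whereas $Q_{(0)}=P_{(0)}=1$; the terms with $\lambda_{i}=\mu_{1}$ do occur and do contribute, and they carry no factor of $2$. Concretely, for $\lambda=(2,1)$ and $\mu=(1)$ one has $Q_{(2,1)/(1)}=Q_{(1)}^{2}-Q_{(0)}Q_{(2)}=Q_{(2)}$, consistent with (\ref{eqn:QQ1}), while the intermediate $P$-identity you assert would read $P_{(2,1)/(1)}=P_{(1)}^{2}-P_{(0)}P_{(2)}=0$, which is false since $P_{(2,1)/(1)}=P_{(2)}$. (The same example shows that Eqn.\ (\ref{eqn:skewPpf}) cannot be taken at face value with $P$-entries throughout; the paper's own proof tellingly calls it ``a Pfaffian expression of $Q[\lambda/\mu]$''.) The clean repair is to expand the $Q$-normalized Pfaffian directly, namely the skew-symmetric matrix with entries $Q_{(\lambda_{i},\lambda_{j})}$ in the $\lambda\times\lambda$ block and $q_{\lambda_{i}-\mu_{j}}=Q_{(\lambda_{i}-\mu_{j})}$ in the off-diagonal block as in J\'ozefiak--Pragacz, so that no powers of $2$ need to be tracked at all; with that substitution both of your expansions go through, and your remaining parity-padding caveats are real but routine.
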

\begin{proof}
For Eqn. (\ref{eqn:QQ1}), recall that we have a Pfaffian expression 
of $Q[\lambda/\mu]$ (see Eqn. (\ref{eqn:skewPpf})).
By expanding the Pfaffian with respect to the column corresponding 
to $\mu_{1}$, we obtain Eqn. (\ref{eqn:QQ1}).

For Eqn. (\ref{eqn:QQ2}), we expand $Q[\hat{\lambda}_{i}/\mu]$ by using 
Eqn. (\ref{eqn:QQ1}).
If we regard $\mu$ as a set of integers, we have $\mu_{1}\in\mu$.
Then, the coefficient of $Q[(\lambda_{i}-\mu_{1})]Q[(\lambda_{j}-\mu_{1})]$
is zero for any pair of $i$ and $j$, which implies Eqn. (\ref{eqn:QQ2}).
 
\end{proof}

\begin{proof}[Proof of Theorem \ref{thrm:skewSinQQ}]
From Theorem \ref{thrm:SinPP}, Theorem \ref{thrm:skewSinQQ} holds true 
for $\beta=\emptyset$.
We assume that Eqn. (\ref{eqn:skewSinQQ1}) is true for all $\alpha'$
with $|\alpha'|<|\alpha|$. 
 
Recall that $\hat{S}[\alpha/\beta]$ has a determinant expression and 
$\hat{S}[\alpha/\beta]=\hat{S}[\alpha^T/\beta^T]$ by the conjugate partitions.
Then, we have 
\begin{eqnarray*}
\hat{S}_{\alpha/\beta}&=&\hat{S}_{\alpha^T/\beta^T} \\
&=&\sum_{i=1}^{l(\lambda)}
(-1)^{i-1}Q[\lambda_{i}/\nu_{1}]\cdot
\hat{S}[\mu\otimes\widehat{\lambda}_{i}/\xi\otimes\widehat{\nu}_{1}].
\end{eqnarray*}
Note that if $(\mu',\lambda')\in S_{1}(\mu,\lambda)$, then 
$(\lambda',\mu')\in S_{2}(\lambda,\mu)$.
We also have 
\begin{eqnarray*}
\mathrm{sign}_{1}(\lambda',\mu';\lambda,\mu)
=(-1)^{|\lambda\setminus\lambda'|}
\mathrm{sign}_{2}(\mu',\lambda';\mu,\lambda).
\end{eqnarray*}
We have four cases for the power of $2$ in Eqn. (\ref{eqn:skewSinQQ1}):
(1) $l(\lambda)=l(\mu)+1$ and $l(\nu)=l(\xi)+1$, 
(2) $l(\lambda)=l(\mu)+1$ and $l(\nu)=l(\xi)$,
(3) $l(\lambda)=l(\mu)$ and $l(\nu)=l(\xi)+1$,
and 
(4) $l(\lambda)=l(\mu)+1$ and $l(\nu)=l(\xi)$.
In all cases, we have $l(\xi')-l(\mu')=n-m$, which is the 
same power as Eqn. (\ref{eqn:skewSinQQ2}). 
By substituting Eqn. (\ref{eqn:skewSinQQ1}) for 
$\hat{S}[\mu\otimes\widehat{\lambda}_{i}/\xi\otimes\widehat{\nu}_{1}]$
into the equation above and successively applying Lemma \ref{lemma:QQ1},
we obtain Eqn. (\ref{eqn:skewSinQQ2}).

If we assume Eqn. (\ref{eqn:skewSinQQ2}) for $\hat{S}[\alpha'/\beta]$ 
with $|\alpha'|<|\alpha|$, 
we obtain Eqn. (\ref{eqn:skewSinQQ1}) in the same way as above.
\end{proof}

\begin{cor}[J\'ozefiak and Pragacz \cite{JozPra91}]
Let $\alpha$ and $\beta$ be shift-symmetric, {\it i.e.},
$\alpha:=\lambda\otimes\lambda$ and $\beta:=\mu\otimes\mu$.
Then, $\hat{S}_{\alpha/\beta}$ is given by the square of $Q_{\lambda/\mu}$:
\begin{eqnarray*}
S_{\alpha/\beta}=2^{l(\mu)-l(\lambda)}Q_{\lambda/\mu}^{2}.
\end{eqnarray*}
\end{cor}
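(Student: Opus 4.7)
The plan is to specialize Theorem \ref{thrm:skewSinQQ}, specifically Eqn.(\ref{eqn:skewSinQQ1}), to the shift-symmetric setting and observe that the indexing set collapses to a single element. In the notation of that theorem I identify the theorem's $\lambda$ and $\mu$ both with the corollary's $\lambda$ (so $\alpha=\lambda\otimes\lambda$), and the theorem's $\nu$ and $\xi$ both with the corollary's $\mu$ (so $\beta=\mu\otimes\mu$). The power of $2$ appearing in Eqn.(\ref{eqn:skewSinQQ1}) then becomes $2^{l(\mu)-l(\lambda)}$, which already matches the coefficient in the corollary, so it remains only to verify that the sum contains a single nontrivial term with sign $+1$.

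First I would determine the set $S_1(\lambda,\lambda)$. By its definition, a pair $(\lambda',\mu')\in S_1(\lambda,\lambda)$ must satisfy, as sets of positive integers, $\lambda'\subseteq\lambda$, $\mu'\supseteq\lambda$, $\lambda'\cup\mu'=\lambda$, and $\lambda'\cap\mu'=\lambda$. From $\mu'\supseteq\lambda$ together with $\mu'\subseteq\lambda'\cup\mu'=\lambda$, I conclude $\mu'=\lambda$; then $\lambda'\cap\lambda=\lambda$ combined with $\lambda'\subseteq\lambda$ forces $\lambda'=\lambda$. Hence $S_1(\lambda,\lambda)=\{(\lambda,\lambda)\}$, and the analogous argument (using $\nu=\xi=\mu$ in the definition of $S_2$) gives $S_2(\mu,\mu)=\{(\mu,\mu)\}$. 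Therefore the indexing set $S(\lambda,\lambda,\mu,\mu)$ in Eqn.(\ref{eqn:skewSinQQ1}) is a singleton.

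Next I would evaluate the two signs at this unique element. The sequence $A=(\lambda\cup\lambda)\setminus(\lambda\cap\lambda)$ is empty, and $\lambda\setminus\lambda=\emptyset$, so $\mathrm{sign}_1(\lambda,\lambda;\lambda,\lambda)$ is an empty product, equal to $1$. Similarly $\mu\setminus\mu=\emptyset$, giving $\mathrm{sign}_2(\mu,\mu;\mu,\mu)=(-1)^{0}\cdot 1=1$. Substituting these back into Eqn.(\ref{eqn:skewSinQQ1}) produces exactly the single term
\begin{equation*}
\hat{S}_{\alpha/\beta}=2^{l(\mu)-l(\lambda)}\,Q_{\lambda/\mu}\,Q_{\lambda/\mu}=2^{l(\mu)-l(\lambda)}Q_{\lambda/\mu}^{2},
\end{equation*}
which is the claimed identity.

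There is no genuine obstacle here; the argument is purely bookkeeping. The only point that requires care is checking that the set-theoretic constraints defining $S_1$ really do force the unique solution $(\lambda,\lambda)$ when both input partitions coincide, and that the exponent $l(\nu_{\text{thm}})-l(\lambda_{\text{thm}})$ in Theorem \ref{thrm:skewSinQQ} reduces to $l(\mu)-l(\lambda)$ under our identifications. Once these are confirmed, the corollary follows immediately from Theorem \ref{thrm:skewSinQQ}.
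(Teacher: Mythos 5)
Your proposal is correct and is exactly the intended derivation: the paper states this as an immediate corollary of Theorem \ref{thrm:skewSinQQ}, and your verification that $S_{1}(\lambda,\lambda)$ and $S_{2}(\mu,\mu)$ are singletons with trivial signs, so that Eqn.~(\ref{eqn:skewSinQQ1}) collapses to the single term $2^{l(\mu)-l(\lambda)}Q_{\lambda/\mu}^{2}$, is the full content of the argument. As a sanity check, this is also consistent with the non-skew corollary via $Q_{\lambda/\mu}=2^{l(\lambda)-l(\mu)}P_{\lambda/\mu}$.
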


Let $\alpha:=\lambda\otimes\mu$ and $\beta:=\nu\otimes\xi$ with 
$\beta\subseteq\alpha$.
We define $n:=l(\lambda)+1$ if $l(\lambda)=l(\mu)$ and $l(\nu)=l(\mu)$, 
and $n:=l(\lambda)$ otherwise.
We also define $m:=l(\nu)$. 
We regard $\mu$ (resp. $\xi$) as a partition  
of length $n$ (resp. $m$) by adding a zero to $\mu$ (resp. $\xi$)
if $l(\mu)=l(\lambda)-1$ (resp. $l(\xi)=l(\nu)-1$).
If $l(\lambda)=l(\mu)$ and $l(\nu)=l(\xi)$, we regard $\lambda$ 
and $\mu$ as a partition of length $n$ by adding a zero to $\lambda$ 
and $\mu$.
Note that in the last case we do not add zeros to $\nu$ and $\xi$.

We place the integers in $\lambda\cup\mu$ in a decreasing order 
from left to right and successively the integers in $\nu\cup\xi$
in a decreasing order from left to right.
When an integer $p$ is in $\lambda\cap\mu$ or $\nu\cap\xi$,
we place two $p$'s next to each other. 
We denote by $p_{\lambda}$ or $p_{\nu}$ (resp. $p_{\mu}$ or $p_{\xi}$) 
the left (resp. right) $p$.
We construct a perfect matching of length $2(n+m)$ by connecting 
two integers via an arc or a dashed arc.
A perfect matching satisfies:
\begin{enumerate}
\item We connect two integers $p$ and $q$ via an arc if and only if 
$p\in\lambda$ and $q\in\mu$.
\item We connect two integers $p$ and $q$ via a dashed arc 
if and only if $p\in\lambda$ and $q\in\nu$, or $p\in\mu$ and $q\in\xi$.
\item We do not connect the same integers $p_{\lambda}$ and $p_{\mu}$ 
for $p\ge1$. 
\item We do not connect two integers $p$ and $q$ if 
both $p$ and $q$ are in $\nu$ or $\xi$.
\end{enumerate}
Let $\mathrm{PM}(\alpha/\beta)$ be 
the set of perfect matchings satisfying conditions above.
Let $\pi\in\mathrm{PM}(\alpha/\beta)$. 
Then, we denote by $\mathrm{cr}(\lambda,\mu,\nu,\xi;\pi)$ 
the number of crossings in $\pi$ by ignoring whether an arc 
is dashed or not. 
Let $s:=s_{1}\ldots s_{2(m+n)}$ be the sequence of integers in 
$\lambda\cup\mu$ and $\nu\cup\xi$ with repeated entries.
We define a sign $\mathrm{sign}'(s_{i})$ by $\mathrm{sign}'(s_{i})=(-1)^{i-1}$
if $s_{i}\in\lambda\cup\mu$ and by $\mathrm{sign}'(s_{i})=(-1)^{i}$
if $s_{i}\in\nu\cup\xi$.
Then, we define a sign of a perfect matching as 
\begin{eqnarray*}
\mathrm{sign}_{6}(\lambda,\mu,\nu,\xi;\pi):=
(-1)^{\mathrm{cr}(\lambda,\mu,\nu,\xi;\pi)}
\prod_{s_{i}\in\lambda}\mathrm{sign}'(s_{i})\cdot 
\prod_{s_{i}\in\nu}\mathrm{sign}'(s_{i}).
\end{eqnarray*}

Given a perfect matching $\pi$,
we denote by $\mathrm{Arc}(\pi)$ (resp. $\mathrm{dArc}(\pi)$) 
the set of arcs (resp. dashed arcs).
When $s_{i}$ and $s_{j}$, $i<j$, is connected via an arc (resp. a dashed arc), 
we denote $(i,j)\in\mathrm{Arc}(\pi)$ (resp. $(i,j)\in\mathrm{dArc}(\pi)$).
Note that $s_{i}\ge s_{j}$ if $i<j$.

\begin{theorem}
\label{thrm:skewSinQQ2}
Denote $Q[\lambda]:=Q_{\lambda}$. Then, 
\begin{eqnarray*}
\hat{S}_{\alpha/\beta}:=2^{\mathrm{deg}(\alpha,\beta)}
\sum_{\pi\in\mathrm{PM}(\alpha/\beta)}
\mathrm{sign}_{6}(\lambda,\mu,\nu,\xi;\pi)
\prod_{(i,j)\in\mathrm{Arc}(\pi)}Q[(s_{i},s_{j})]
\prod_{(i,j)\in\mathrm{dArc}(\pi)}Q[(s_{i}-s_{j})],
\end{eqnarray*}
where 
$\mathrm{deg}(\alpha,\beta):=l(\nu)-l(\mu)$ if $l(\lambda)=l(\mu)$ 
and $\mathrm{deg}(\alpha,\beta):=l(\xi)-l(\mu)$ if $l(\lambda)=l(\mu)+1$.
\end{theorem}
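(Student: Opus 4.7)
The plan is to derive Theorem \ref{thrm:skewSinQQ2} from Theorem \ref{thrm:skewSinQQ} by expanding the skew $Q$-functions via the Pfaffian formula (\ref{eqn:skewPpf}), mimicking how Theorem \ref{thrm:SinPP2} was deduced from Theorem \ref{thrm:SinPP}. More precisely, I would start from Eqn.~(\ref{eqn:skewSinQQ1}),
\begin{eqnarray*}
\hat{S}_{\alpha/\beta}
=2^{l(\nu)-l(\lambda)}\sum_{(\lambda',\mu',\nu',\xi')\in S(\lambda,\mu,\nu,\xi)}
\mathrm{sign}_{1}(\lambda',\mu';\lambda,\mu)\mathrm{sign}_{2}(\nu',\xi';\nu,\xi)
Q_{\lambda'/\nu'}Q_{\mu'/\xi'},
\end{eqnarray*}
and expand each of $Q_{\lambda'/\nu'}$ and $Q_{\mu'/\xi'}$ into a Pfaffian whose entries are $Q[(s_i,s_j)]$ (two positive parts, corresponding to arcs within $\lambda$ or $\mu$) and $Q[(s_i-s_j)]$ (one part from $\lambda\cup\mu$ and one from $\nu\cup\xi$, corresponding to dashed arcs). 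After this expansion every summand can be indexed by a graph on the sequence $s=s_1\cdots s_{2(n+m)}$ whose edges are either solid (internal to $\lambda\cup\mu$) or dashed (connecting $\lambda\cup\mu$ with $\nu\cup\xi$), together with a choice of $(\lambda',\mu',\nu',\xi')\in S(\lambda,\mu,\nu,\xi)$ specifying which integers of $\nu\cup\xi$ were transferred to $\lambda\cup\mu$ and vice versa.

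The first key step is to show that after the summation over $S(\lambda,\mu,\nu,\xi)$ only those graphs survive that are actual perfect matchings in $\mathrm{PM}(\alpha/\beta)$. I would mimic the cancellation argument already used in the proofs of Theorem~\ref{thrm:SinPP2} and the inversion formula Eqn.~(\ref{eqn:PPinS}): for any prospective match that violates the conditions defining $\mathrm{PM}(\alpha/\beta)$ -- in particular a dashed arc between two elements of $\nu\cup\xi$, an arc $p_\lambda p_\mu$ with the same entry, or an arc within a single side when $(\lambda',\mu')$ could also have put those entries on the opposite side -- one produces an involution on $S(\lambda,\mu,\nu,\xi)$ that flips the inclusion of the two involved entries between $\lambda'\leftrightarrow\mu'$ or $\nu'\leftrightarrow\xi'$. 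A direct case analysis of $\mathrm{sign}_1$, $\mathrm{sign}_2$ and the crossing count (exactly as in the proof following Eqn.~(\ref{eqn:PPinS2})) shows that the two partners in the involution contribute opposite signs, so their terms cancel.

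Next I would verify that for the surviving perfect matchings $\pi\in\mathrm{PM}(\alpha/\beta)$ the coefficient is exactly $\mathrm{sign}_6(\lambda,\mu,\nu,\xi;\pi)$, up to the overall power of $2$. Here the solid-arc part of $\pi$ contributes $(-1)^{\mathrm{cr}}\prod_{s_i\in\lambda}\mathrm{sign}'(s_i)$ by the same computation that establishes Theorem~\ref{thrm:SinPP2}, while the dashed-arc part contributes $\prod_{s_i\in\nu}\mathrm{sign}'(s_i)$, a factor coming from expanding the column of the Pfaffian (\ref{eqn:skewPpf}) that corresponds to $\nu'\cup\xi'$. Combining the two pieces yields $\mathrm{sign}_6$. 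The exponent $\mathrm{deg}(\alpha,\beta)$ is then obtained by combining the $2^{l(\nu)-l(\lambda)}$ from Eqn.~(\ref{eqn:skewSinQQ1}) with the relation $Q_\lambda=2^{l(\lambda)}P_\lambda$ applied to the two-part and one-part $Q$'s produced by the Pfaffian expansion, carefully accounting for the zero parts introduced when $l(\lambda)=l(\mu)$ or $l(\nu)=l(\xi)$ (this is the reason the definition of $\mathrm{deg}(\alpha,\beta)$ splits into two cases). To avoid repeated casework it is cleaner to derive a symmetric version from Eqn.~(\ref{eqn:skewSinQQ2}) in parallel and check the two match.

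The main obstacle I expect is not the cancellation argument per se but the bookkeeping of signs and of the power of $2$. The sign $\mathrm{sign}_6$ mixes two different alternating weights -- one on $\lambda\cup\mu$ and one shifted by the offset of $\nu\cup\xi$ -- and as the Pfaffian expansion rearranges column indices one has to control how the base ordering of $s$ interacts with the two kinds of arcs. Similarly, the exponent $\mathrm{deg}(\alpha,\beta)$ differs by one between the cases $l(\lambda)=l(\mu)$ and $l(\lambda)=l(\mu)+1$, and this must be traced back to the normalization $Q[(0)]=\text{const}$ and the convention $P[0,0]=1$ used in Sec.~\ref{Giambelli}. Once those two accountings are in place, the rest is a direct generalization of the perfect-matching proof of Theorem~\ref{thrm:SinPP2}.
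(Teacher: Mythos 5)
Your overall route is the same as the paper's: start from Theorem \ref{thrm:skewSinQQ}, expand each skew $Q$-function via the Pfaffian (\ref{eqn:skewPpf}) into length-one and length-two pieces $Q[(s_i,s_j)]$ and $Q[(s_i-s_j)]$, kill the terms that do not correspond to matchings in $\mathrm{PM}(\alpha/\beta)$ by a sign-reversing pairing (the paper phrases this as ``the coefficient of $Q[(\lambda_i,\lambda_j)]$ with $\lambda_i,\lambda_j\notin\lambda\cap\mu$ is zero,'' and likewise for $Q[(\lambda_i-\xi_j)]$), and then match the sign with $\mathrm{sign}_6$ by comparing crossing numbers with the signs of the Pfaffian expansion. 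That part of your plan is sound and essentially identical to what the paper does.

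The one step that would fail as written is your accounting for the exponent $\mathrm{deg}(\alpha,\beta)$. You propose to obtain it by ``combining the $2^{l(\nu)-l(\lambda)}$ from Eqn.~(\ref{eqn:skewSinQQ1}) with the relation $Q_\lambda=2^{l(\lambda)}P_\lambda$ applied to the two-part and one-part $Q$'s produced by the Pfaffian expansion.'' But the target formula is stated entirely in $Q$-functions, no conversion to $P$-functions occurs anywhere in the expansion, and the convention $P[0,0]=1$ (equivalently the handling of zero parts) does not by itself change the power of two. The actual source of the discrepancy is combinatorial: in the case $l(\lambda)=l(\mu)+1$ and $l(\nu)=l(\xi)$ one has $l(\xi)-l(\mu)=l(\nu)-l(\lambda)+1$, and the extra factor of $2$ arises because \emph{two distinct} quadruples $(\lambda',\mu',\nu',\xi')\in S(\lambda,\mu,\nu,\xi)$ collapse onto the same perfect matching with the same sign --- concretely, for $\lambda_i\notin\lambda\cap\mu$ contained in $\lambda'$, moving $\lambda_i$ from $\lambda'$ to $\mu'$ produces a second term whose expansion contains the identical product of $Q$'s with the identical sign. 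In the other two cases ($l(\lambda)=l(\mu)$, or $l(\lambda)=l(\mu)+1$ with $l(\nu)=l(\xi)+1$) the correspondence is one-to-one and $\mathrm{deg}(\alpha,\beta)=l(\nu)-l(\lambda)$ already. If you replace your normalization argument by this doubling argument, the rest of your plan goes through and coincides with the paper's proof.
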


\begin{example}
Let $\alpha=(5,4,3,2,1)$ and $\beta=(2,1)$. We have 
$\alpha=(5,3,1)\otimes(4,2)$ and $\beta=(2)\otimes(1)$.
The signs are given by 
$\prod_{s_{i}\in\lambda}\mathrm{sign}'(s_{i})=+1$ and 
$\prod_{s_{i}\in\nu}\mathrm{sign}'(s_{i})=-1$. 
Then, we have
\begin{eqnarray}
\label{eqn:skewSinQQQ}
\begin{aligned}
\hat{S}_{\alpha/\beta}
=&2^{-1}(
Q_{(3)}^3 Q_{(2, 1)} - Q_{(1)}Q_{(3)}Q_{(5)}Q_{(2, 1)} 
+Q_{(1)}Q_{(3)}^2 Q_{(3, 2)} \\
&-Q_{(1)}Q_{(3)}^2 Q_{(4, 1)}+Q_{(1)}^2 Q_{(5)}Q_{(4, 1)}
+Q_{(1)}^2 Q_{(3)}Q_{(4, 3)} \\ 
&-Q_{(1)}^2 Q_{(3)}Q_{(5, 2)} +Q_{(1)}^3 Q_{(5, 4)}).
\end{aligned}
\end{eqnarray}
In Figure \ref{fig:skewSinQQ}, the left (resp. right) picture correspond to 
the fourth (resp. seventh) term in Eqn. (\ref{eqn:skewSinQQQ}). 
\begin{figure}[ht]
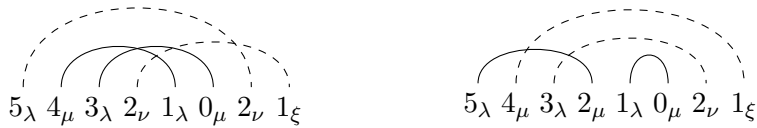

\begin{eqnarray*}
\tikzpic{-0.5}{
\draw[dashed](0,0)node[anchor=north]{$5_{\lambda}$}
  ..controls(0,1.4)and(3.0,1.4)..(3.0,0)node[anchor=north]{$2_{\nu}$};
\draw[dashed](1.5,0)node[anchor=north]{$2_{\nu}$}
  ..controls(1.5,0.8)and(3.5,0.8)..(3.5,0)node[anchor=north]{$1_{\xi}$};
\draw(0.5,0)node[anchor=north]{$4_{\mu}$}
  ..controls(0.5,0.72)and(2,0.72)..(2,0)node[anchor=north]{$1_{\lambda}$};
\draw(1,0)node[anchor=north]{$3_{\lambda}$}
  ..controls(1,0.72)and(2.5,0.72)..(2.5,0)node[anchor=north]{$0_{\mu}$};
}
\qquad\qquad
\tikzpic{-0.5}{
\draw(0,0)node[anchor=north]{$5_{\lambda}$}
   ..controls(0,0.6)and(1.5,0.6)..(1.5,0)node[anchor=north]{$2_{\mu}$};
\draw[dashed](0.5,0)node[anchor=north]{$4_{\mu}$}
   ..controls(0.5,1.3)and(3.5,1.3)..(3.5,0)node[anchor=north]{$1_{\xi}$};
\draw[dashed](1,0)node[anchor=north]{$3_{\lambda}$}
   ..controls(1,0.8)and(3,0.8)..(3,0)node[anchor=north]{$2_{\nu}$};
\draw(2,0)node[anchor=north]{$1_{\lambda}$}
   ..controls(2,0.5)and(2.5,0.5)..(2.5,0)node[anchor=north]{$0_{\mu}$};
}
\end{eqnarray*}
\caption{Examples of perfect matchings.}
\label{fig:skewSinQQ}
\end{figure}
\end{example}

\begin{proof}[Proof of Theorem \ref{thrm:skewSinQQ2}]
From Theorem \ref{thrm:skewSinQQ}, a $\hat{S}$-function can be 
expressed as a sum of products of two (skew) $Q$-functions.
We expand $Q$-functions in terms of skew $Q$-functions of length 
one and two.
By a direct calculation, it is easy to show that
the coefficient of $Q[(\lambda_{i},\lambda_{j})]$ with 
$\lambda_{i},\lambda_{j}\neq\lambda\cap\mu$ is zero.
Similarly, the coefficient of $Q[(\lambda_{i}-\xi_{j})]$
with $\lambda_{i}\neq\lambda\cap\mu$ and 
$\xi_{j}\neq\nu\cap\xi$ is zero.
Thus, we have
\begin{eqnarray*}
\hat{S}_{\alpha/\beta}:=2^{\mathrm{deg'}(\alpha,\beta)}
\sum_{\pi\in\mathrm{PM}(\alpha/\beta)}
\mathrm{sign}'_{6}(\lambda,\mu,\nu,\xi;\pi)
\prod_{(i,j)\in\mathrm{Arc}(\pi)}Q[(s_{i},s_{j})]
\prod_{(i,j)\in\mathrm{dArc}(\pi)}Q[(s_{i}-s_{j})],
\end{eqnarray*}
with some $\mathrm{deg}'(\alpha,\beta)$ and 
$\mathrm{sign}'_{6}(\lambda,\mu,\nu,\xi;\pi)$.
We first show that $\mathrm{deg}'(\alpha,\beta)=\mathrm{deg}(\alpha,\beta)$.  
When $l(\lambda)=l(\mu)$, we have a single term corresponding to a 
perfect matching $\pi$.
Thus, from Theorem \ref{thrm:skewSinQQ}, we have 
$\mathrm{deg}(\alpha,\beta)=l(\nu)-l(\lambda)=l(\nu)-l(\mu)$.
By the same reason, we have 
$\mathrm{deg}(\alpha,\beta)=l(\nu)-l(\lambda)=l(\xi)-l(\mu)$
if $l(\lambda)=l(\mu)+1$ and $l(\nu)=l(\xi)+1$.
However, in case of $l(\lambda)=l(\mu)+1$ and $l(\nu)=l(\xi)$,
we have two terms coming from the right hand side of 
Eqn. (\ref{eqn:skewSinQQ1}), which corresponds to a perfect matching 
$\pi$.
To see this, suppose that the right hand side of Eqn. (\ref{eqn:skewSinQQ1}) 
contains a term $Q[\lambda'/\nu']Q[\mu'/\xi']$ with 
$(\lambda',\mu',\nu',\xi')\in S(\lambda,\mu,\nu,\xi)$. 
Since $l(\lambda)=l(\mu)+1$ and $l(\nu)=l(\xi)$, we have 
$l(\lambda)-l(\nu)=l(\mu)-l(\xi)+1$. 
Assume that $\lambda'$ contains $\lambda_{i}\notin\lambda\cap\mu$. 
We consider the term containing $Q[\lambda_{i}]$.
Let $\widehat{\lambda'}:=\lambda'\setminus\{\lambda_i\}$ and 
$\widehat{\mu'}:=\mu'\cup\{\lambda_{i}\}$ as sets.
Then, an expansion of the term 
$Q[\widehat{\lambda'}/\nu']Q[\widehat{\mu'}/\xi']$ 
also contains $Q[\lambda_{i}]$. 
By taking the sign into account, we can show that 
these two terms contains the same sign. 
This implies that we have a factor $2$ in the expansion.
Thus, the power of $2$ is $l(\nu)-l(\lambda)+1=l(\xi)-l(\mu)$.

We will show that 
$\mathrm{sign}'_{6}(\lambda,\mu,\nu,\xi;\pi)
=\mathrm{sign}_{6}(\lambda,\mu,\nu,\xi;\pi)$.
Recall that we expand a $Q$-function in terms of $Q$-functions 
of length one and two.
It is easy to see that the number of crossings in a perfect matching 
is compatible with the sign arising from the expansion of a $Q$-function.
Thus, we have $\mathrm{sign}'_{6}(\lambda,\mu,\nu,\xi;\pi)$ is equal to
$\mathrm{sign}_{6}(\lambda,\mu,\nu,\xi;\pi)$ up to the overall factor.
By comparing the signs 
$\prod_{s_{i}\in\lambda\cup\nu}\mathrm{sign}'(s_{i})$ 
with the number of crossings, we conclude that the overall factor is one.
This completes the proof.
\end{proof}

Suppose that two ordinary partitions $\alpha$ and $\beta\subseteq\alpha$ are written as 
$\alpha:=\lambda'\otimes\mu'$ and $\beta:=\nu'\otimes\xi'$.
When $l(\lambda')=l(\mu')+1$, we define strict partitions of length $l(\lambda')$ by 
$\lambda=\lambda'$ and $\mu_{i}=\mu'_{i}$ for $1\le i\le l(\mu')$ and $\mu_{l(\mu')+1}=0$.
When $l(\lambda')=l(\mu')$, we define strict partitions of length $l(\lambda')+1$ by 
$\lambda_{i}=\lambda'_{i}$ for $1\le i\le l(\lambda')$ and $\lambda_{l(\lambda')+1}=0$ and 
$\mu_{i}=\mu'_{i}$ for $1\le i\le l(\mu')$ and $\mu_{l(\lambda')+1}=0$.
We also define $\nu$ and $\xi$ from $\nu'$ and $\xi'$ as follows.
If $l(\nu')=l(\xi')$, we define $\nu'=\nu$ and $\xi'=\xi$.
If $l(\nu')=l(\xi')+1$, we define $\nu:=\nu'$ and define $\xi$ 
by adding a zero to $\xi'$.

We define a matrix $\tilde{Q}(\lambda,\mu;\nu,\xi):=(\tilde{Q}_{i,j})_{1\le i,j\le l(\lambda)+l(\nu)}$
in terms of Schur $Q$-functions as follows.
We define 
\begin{eqnarray*}
\tilde{Q}_{i,j}:=
\begin{cases}
Q_{(\lambda_i,\mu_j)}, & \lambda_i>\mu_j>0, \\
-Q_{(\mu_j,\lambda_i)}, & 0<\lambda_i<\mu_j, \\
Q_{(\lambda_i)}, & \lambda_i>\mu_j=0, \\
-Q_{(\mu_j)}, & 0=\lambda_i<\mu_j, \\
0, & \lambda_i=\mu_j\neq0, \\
1 & \lambda_i=\mu_j=0,
\end{cases}
\end{eqnarray*}
for $1\le i,j\le l(\lambda)$, 
\begin{eqnarray*}
\tilde{Q}_{i,j+l(\lambda)}:=
\begin{cases}
Q_{(\lambda_i-\nu_j)}, & \lambda_i>\nu_j, \\
1, & \lambda_i=\nu_j, \\
0, & \lambda_i<\nu_j,
\end{cases}
\end{eqnarray*}
for $1\le i\le l(\lambda)$ and $1\le j\le l(\nu)$, 
\begin{eqnarray*}
\tilde{Q}_{i+l(\lambda),j}:=
\begin{cases}
-Q_{(\mu_j-\xi_i)}, & \mu_j>\xi_i, \\
-1, & \mu_j=\xi_i, \\
0, & \mu_j<\xi_i,
\end{cases}
\end{eqnarray*}
for $1\le i\le l(\nu)$ and $1\le j\le l(\lambda)$, and 
$\tilde{Q}_{i+l(\lambda),j+l(\lambda)}:=0$ for $1\le i,j\le l(\nu)$.

\begin{theorem}
\label{thrm:skewSindet}
In the above notation, a function $\hat{S}_{\alpha/\beta}$ can be expressed 
in terms of a determinant: 
\begin{eqnarray*}
\hat{S}_{\alpha/\beta}=2^{l(\xi')-l(\mu')}\det\left[\tilde{Q}(\lambda,\mu;\nu,\xi)\right].
\end{eqnarray*}
\end{theorem}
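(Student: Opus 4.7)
The plan is to derive the determinantal identity directly from the perfect-matching expansion of Theorem \ref{thrm:skewSinQQ2}, by showing that the Leibniz expansion of $\det[\tilde{Q}(\lambda,\mu;\nu,\xi)]$ matches it term-by-term, up to the prefactor $2^{l(\xi')-l(\mu')}$.

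First I would expand the determinant via the Leibniz formula as a sum over permutations $\sigma$ of $\{1,\ldots,l(\lambda)+l(\nu)\}$. Because the lower-right $l(\nu)\times l(\nu)$ block of $\tilde{Q}(\lambda,\mu;\nu,\xi)$ is identically zero, any non-vanishing term forces $\sigma(i)\le l(\lambda)$ whenever $i>l(\lambda)$; equivalently, each row indexed by a $\xi_i$ must be matched to a column indexed by some $\mu_j$. Restricting to the remaining rows, each row $\lambda_i$ is then matched either to a column $\mu_j$ (contributing $\pm Q_{(\lambda_i,\mu_j)}$) or to a column $\nu_j$ (contributing $Q_{(\lambda_i-\nu_j)}$), while each $\mu_j$-column and each $\nu_j$-column is hit exactly once. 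This is precisely the combinatorial content of a perfect matching in $\mathrm{PM}(\alpha/\beta)$: the entries that vanish when $\lambda_i=\mu_j\neq 0$, when $\lambda_i<\nu_j$, or when $\mu_j<\xi_i$ correspond exactly to the pairings forbidden in the definition of $\mathrm{PM}(\alpha/\beta)$, while the $1$ and $-1$ entries for $\lambda_i=\nu_j$ and $\mu_j=\xi_i$ correspond to the dashed arcs between equal integers.

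Next I would verify that the signs agree. Each non-vanishing Leibniz term carries $\epsilon(\sigma)$ multiplied by the explicit $(-1)$ factors from entries such as $-Q_{(\mu_j,\lambda_i)}$ (when $\mu_j>\lambda_i$), $-Q_{(\mu_j)}$, and from every entry of the lower-left block. I would show that this combined sign equals $\mathrm{sign}_6(\lambda,\mu,\nu,\xi;\pi)$ for the perfect matching $\pi$ obtained from $\sigma$ by an induction on adjacent transpositions: fix a reference matching $\pi_0$ (say, the one obtained by pairing $\lambda_i$ with $\mu_i$ and $\nu_i$ with $\xi_i$ whenever possible), verify the sign identity there directly, and then track how each adjacent transposition in $\sigma$ alters the permutation sign, the explicit $\pm$ entries, and the crossing count in $\pi$ simultaneously. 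Finally, a short case analysis on whether $l(\lambda')=l(\mu')$ or $l(\mu')+1$ and whether $l(\nu')=l(\xi')$ or $l(\xi')+1$, using the padding conventions that define $\lambda,\mu,\nu,\xi$ from $\lambda',\mu',\nu',\xi'$, shows that $l(\xi')-l(\mu')=\mathrm{deg}(\alpha,\beta)$; combined with the preceding two steps and Theorem \ref{thrm:skewSinQQ2} this yields the theorem.

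The main obstacle will be the sign bookkeeping. Matching the sign of a permutation $\sigma\in \mathcal{S}_{l(\lambda)+l(\nu)}$ against $\mathrm{sign}_6$, which involves both the parity of the number of crossings in $\pi$ and the positional signs $\prod_{s_i\in\lambda}\mathrm{sign}'(s_i)\cdot\prod_{s_i\in\nu}\mathrm{sign}'(s_i)$ relative to the joint ordering of $\lambda\cup\mu$ with $\nu\cup\xi$, requires carefully absorbing the explicit minus signs of $\tilde{Q}$ into the correspondence between adjacent transpositions of $\sigma$ and local reconnections of arcs. The cleanest organization is to reduce everything to the reference matching $\pi_0$, check signs there by a direct computation, and then propagate along transpositions.
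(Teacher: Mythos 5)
Your overall strategy is the one the paper follows: expand $\det[\tilde{Q}]$ by the Leibniz formula, use the vanishing lower-right block to force every $\xi$-row onto a $\mu$-column, identify the surviving terms with perfect matchings from Theorem \ref{thrm:skewSinQQ2}, and match signs against $\mathrm{sign}_{6}$ by anchoring at a reference matching and propagating along adjacent transpositions (this mirrors how the paper treats the non-skew determinant formula in Section \ref{Giambelli}). The gap is in your last step. The identity $l(\xi')-l(\mu')=\mathrm{deg}(\alpha,\beta)$ is false in the case $l(\lambda')=l(\mu')$ and $l(\nu')=l(\xi')+1$: there $\mathrm{deg}(\alpha,\beta)=l(\nu')-l(\mu')=l(\xi')-l(\mu')+1$, so the two exponents differ by one. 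Correspondingly, your claimed bijection between nonvanishing Leibniz terms and elements of $\mathrm{PM}(\alpha/\beta)$ breaks down in exactly this case: the padding conventions used to build $\tilde{Q}$ append a zero to both $\lambda$ and $\mu$ (as well as to $\xi$), so the index set of the matrix contains $0_{\lambda}$, $0_{\mu}$ and $0_{\xi}$, whereas the sequence $s$ underlying Theorem \ref{thrm:skewSinQQ2} contains no repeated zero.

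The missing ingredient is a one-to-two correspondence rather than a bijection. If in a matching $\pi$ of Theorem \ref{thrm:skewSinQQ2} some $p_{\mu}$ is joined to $0_{\xi}$ by a dashed arc, the determinant produces two distinct terms realizing the same contribution: one pairs $p_{\mu}$ with $0_{\lambda}$ by an arc and $0_{\mu}$ with $0_{\xi}$ by a dashed arc, the other pairs $p_{\mu}$ with $0_{\xi}$ and $0_{\lambda}$ with $0_{\mu}$. Both yield the same product of $Q$-functions with the same sign, so in this case $\det[\tilde{Q}]$ equals twice the perfect-matching sum, and the prefactor drops from $l(\nu')-l(\lambda')$ to $l(\nu')-l(\lambda')-1=l(\xi')-l(\mu')$. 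Without this accounting your argument would terminate with the wrong power of $2$ in one of the three cases; with it, the rest of your plan goes through as in the paper.
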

\begin{proof}
From Theorem \ref{thrm:skewSinQQ2}, we have an expression of $\hat{S}$-function 
in terms of perfect matchings.
Comparing the definition of the determinant $\det[\tilde{Q}]$ with 
a term in Theorem \ref{thrm:skewSinQQ2},
we have that $\hat{S}_{\alpha/\beta}$ is proportional to the determinant.
We have three cases: (1) $l(\lambda')=l(\mu')+1$, 
(2) $l(\lambda')=l(\mu')$ and $l(\nu')=l(\xi')$, 
and (3) $l(\lambda')=l(\mu')$ and $l(\nu')=l(\xi')+1$.
In case of (1) and (2), the overall factor is $2^{l(\xi')-l(\mu')}$.
In case of (3), note that the last elements in $\lambda,\mu$ and 
$\xi$ are zero. 
In Theorem \ref{thrm:skewSinQQ2}, we have no double zeros in the 
sequence $s$.
However, to connect the determinant to a perfect matching $\pi$ considered in
Theorem \ref{thrm:skewSinQQ2}, we introduce a perfect matching with 
double zeros.
We denote such perfect matching by $\pi'$.
We show that we have a one-to-two bijection between $\pi$ and $\pi'$.
Suppose that $p_{\mu}$ is connected to $0_{\xi}$ by a dashed arc.
Then, in $\pi'$, we connect $p_{\mu}$ and $0_{\lambda}$ by an arc 
and $0_{\mu}$ and $0_{\xi}$ by a dashed arc, or 
connect $p_{\mu}$ and $0_{\xi}$ by a dashed arc and $0_{\lambda}$ and 
$0_{\mu}$ by an arc.
Note that two $\pi'$'s above give the same products of $Q$-functions
and the sign is the same.
Thus, the power of two becomes $l(\nu')-l(\lambda')-1=l(\xi')-l(\mu')$.
This completes the proof.
\end{proof}

\begin{example}
Let $\alpha=(4,3,2,1)$ and $\beta=(1)$. 
Then, $\alpha=(4,2)\otimes(3,1)$ and $\beta=(1)\otimes\emptyset$.
The matrix $\tilde{Q}:=\tilde{Q}(\lambda,\mu;\nu,\xi)$ is given by 
\begin{eqnarray*}
\tilde{Q}=
\begin{bmatrix}
Q_{(4,3)} & Q_{(4,1)} & Q_{(4)} & Q_{(3)} \\
-Q_{(3,2)} & Q_{(2,1)} & Q_{(2)} & Q_{(1)} \\
-Q_{(3)} & -Q_{(1)} & 1 & 0 \\
-Q_{(3)} & -Q_{(1)} & -1 & 0
\end{bmatrix}.
\end{eqnarray*}
Therefore, we have $\hat{S}_{\alpha/\beta}=2^{-2}\det[\tilde{Q}]$.
\end{example}

A given $\alpha=\lambda\otimes\mu$, we define $A:=(a_1\ge a_2\ge\ldots\ge a_{2n})$, 
$\mathrm{sign}(A)$ and $d(A)$ as in Section \ref{Giambelli}.
For $\beta=\nu\otimes\xi$, we define $B$, $\mathrm{sign}(B):=(b_1\ge b_2\ge\ldots\ge b_{2m})$ 
and $d(B)$ similarly.
We define a skew-symmetric matrix $Q(A,B):=(Q_{i,j})_{1\le i,j\le 2(n+m)}$ as follows:
\begin{eqnarray*}
Q_{i,j}:=
\begin{cases}
0, & \mathrm{sign}(a_i)=\mathrm{sign}(a_j) \text { or } a_i=a_j>0, \\
Q_{(a_i,a_j)}, & a_i>a_j>0, \\
Q_{(a_i)}, & a_i>a_j=0, \\
1, & a_i=a_j=0,
\end{cases}
\end{eqnarray*}
for $1\le i\le j\le 2n$. 
For $1\le i\le 2n$ and $2n+1\le j\le2(n+m)$, we define 
\begin{eqnarray*}
Q_{i,j}:=
\begin{cases}
0, & \mathrm{sign}(a_i)\neq\mathrm{sign}(b_{j-2n}), \\
Q_{(a_i-b_{j-2n})}, & otherwise.
\end{cases}
\end{eqnarray*}
We define $Q_{i,j}:=0$ for $2n+1\le i\le j\le2(n+m)$.

\begin{theorem}
Suppose that the skew shape $\alpha/\beta$ exits for 
$\alpha=\lambda\otimes\mu$ and $\beta=\nu\otimes\xi$.
Then, we have 
\begin{eqnarray*}
\hat{S}_{\alpha/\beta}=
(-1)^{d(A)+d(B)+m}2^{l(\xi)-l(\mu)}
\mathrm{pf}\left[Q(A,B)\right]_{1\le i,j\le2(n+m)}
\end{eqnarray*}
\end{theorem}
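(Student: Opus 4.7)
My plan is to derive the Pfaffian formula directly from Theorem~\ref{thrm:skewSinQQ2}, by showing that the Pfaffian expansion of $Q(A,B)$ reproduces the perfect-matching expansion of $\hat{S}_{\alpha/\beta}$ term-by-term, with all the constants in front absorbing the discrepancy between the two sign conventions. Concretely, I will start from the definition \eqref{eqn:defPf}, which writes $\mathrm{pf}[Q(A,B)]$ as a signed sum over $\rho\in\mathcal{F}_{n+m}$, and identify each such $\rho$ with a perfect matching on the sequence $s=(a_1,\ldots,a_{2n},b_1,\ldots,b_{2m})$ used in Theorem~\ref{thrm:skewSinQQ2}.

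The first step will be to verify that a term in $\mathrm{pf}[Q(A,B)]$ is nonzero precisely when the associated matching lies in $\mathrm{PM}(\alpha/\beta)$: the zero entries of $Q(A,B)$ inside the upper-left block rule out arcs between two elements of $\lambda$, two elements of $\mu$, or the twin entries $p_{\lambda},p_{\mu}$, while the zero entries of the upper-right block rule out $\lambda$-$\xi$ and $\mu$-$\nu$ dashed arcs, and the vanishing of the lower-right block rules out arcs internal to $\nu\cup\xi$. When nonzero, the matrix entry equals $Q[(s_i,s_j)]$ for an arc and $Q[(s_i-s_j)]$ for a dashed arc, matching exactly the factors appearing in Theorem~\ref{thrm:skewSinQQ2}. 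Thus the two expansions agree as monomials in the Schur $Q$-functions, up to the sign of each term.

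The heart of the argument will be the sign comparison, which is the main obstacle. The Pfaffian contributes the standard factor $\epsilon(\rho)$; a routine argument (drawing the matching as arcs on a line) shows $\epsilon(\rho)=(-1)^{\mathrm{cr}(\lambda,\mu,\nu,\xi;\pi)}$. Theorem~\ref{thrm:skewSinQQ2} instead contributes $\mathrm{sign}_6$, which carries the same crossing factor times the product $\prod_{s_i\in\lambda}\mathrm{sign}'(s_i)\cdot\prod_{s_i\in\nu}\mathrm{sign}'(s_i)$. I will compute these two products in terms of $d(A)$ and $d(B)$ by the same mechanism used in the proof of the non-skew Pfaffian formula \eqref{eqn:SinPf1}: the product of $(-1)^{i-1}$ over positions of $\lambda$-entries in $A$ equals $(-1)^{d(A)}$, since $d(A)$ counts how far $\mathrm{sign}(A)$ differs from the alternating pattern. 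The analogous computation for $B$ must take into account that $\mathrm{sign}'(s_i)=(-1)^i$ rather than $(-1)^{i-1}$ on the $B$-block, which produces an additional $(-1)^m$ because in the alternating reference pattern all $m$ positive entries occupy positions of opposite parity. Combining these identities gives
\[
\mathrm{sign}_6(\lambda,\mu,\nu,\xi;\pi)\cdot(-1)^{d(A)+d(B)+m}=\epsilon(\rho),
\]
so the scalar prefactor $(-1)^{d(A)+d(B)+m}$ in the statement exactly converts $\mathrm{sign}_6$ into the Pfaffian sign.

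Finally, I will reconcile the powers of $2$. Theorem~\ref{thrm:skewSinQQ2} produces the factor $2^{\mathrm{deg}(\alpha,\beta)}$, which equals $2^{l(\xi)-l(\mu)}$ in the case $l(\lambda)=l(\mu)+1$ and $2^{l(\nu)-l(\mu)}$ in the case $l(\lambda)=l(\mu)$; in the latter case the convention of appending zeros to $\mu$ (and, when necessary, to $\xi$) to reach the matrix $Q(A,B)$ of size $2(n+m)$ introduces pairs of double-zero entries that, exactly as in the proof of Theorem~\ref{thrm:skewSindet}, either contribute $1$ or effectively merge matchings, producing the needed shift to align with $2^{l(\xi)-l(\mu)}$. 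Together with the sign identity above this yields the claimed formula, and the main subtlety in the whole argument is the careful bookkeeping of these double-zero matchings alongside the parity accounting that produces the $(-1)^m$ term.
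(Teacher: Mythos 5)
Your proposal is correct, and it takes a mildly different route from the paper's (very terse) proof: the paper derives the Pfaffian formula by comparison with the determinant expression of Theorem \ref{thrm:skewSindet}, whereas you compare the Pfaffian expansion directly against the perfect-matching expansion of Theorem \ref{thrm:skewSinQQ2}. Your route is in fact the exact analogue of how the paper proves the non-skew Pfaffian formula \eqref{eqn:SinPf1} from Theorem \ref{thrm:SinPP2}, so it is arguably the more natural and more self-contained path; since the elements of $\mathcal{F}_{n+m}$ in the definition \eqref{eqn:defPf} are literally perfect matchings, the identification of nonzero Pfaffian terms with $\mathrm{PM}(\alpha/\beta)$, the identity $\epsilon(\rho)=(-1)^{\mathrm{cr}}$, and the computation $\prod_{s_i\in\lambda}\mathrm{sign}'(s_i)=(-1)^{d(A)}$ all go through as in that earlier proof, and your observation that the shifted convention $\mathrm{sign}'(s_i)=(-1)^{i}$ on the $\nu\cup\xi$ block contributes the extra $(-1)^{m}$ (with $m=l(\nu)$) correctly accounts for the one sign factor that is new in the skew case. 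What the paper's route buys is that the case analysis for the power of $2$ (the one-to-two correspondence between matchings with and without the double-zero arc when $l(\lambda)=l(\mu)$ and $l(\nu)=l(\xi)+1$) has already been carried out in the proof of Theorem \ref{thrm:skewSindet} and can simply be cited; in your direct route you must redo that bookkeeping, and this is the one place where your sketch stays at the level of ``exactly as in the proof of Theorem \ref{thrm:skewSindet}'' rather than spelling out the $l(\nu)=l(\xi)$ versus $l(\nu)=l(\xi)+1$ subcases. That is an acceptable level of detail given that the paper's own proof is two sentences, but it is the step to expand if you write this up fully.
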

\begin{proof}
By comparing  the Pfaffian $\mathrm{pf}[Q(A,B)]$ with Theorem 
\ref{thrm:skewSindet}, 
we can show that $\hat{S}_{\alpha/\beta}$ is proportional to 
the Pfaffian times $2^{l(\xi)-l(\mu)}$.
By calculating the overall sign, we obtain Theorem.
\end{proof}

\begin{example}
Let $\alpha:=(4,3,2,1)=(4,2)\otimes(3,1)$ and $\beta:=(1)=(1)\otimes\emptyset$.
Then, we have $A=(4,3,2,1,0,0)$ and $B=(1,0)$.
The skew-symmetric matrix $Q(A,B)$ is given by 
\begin{eqnarray*}
Q(A,B)=
\begin{bmatrix}
0 & Q[(4, 3)] & 0 & Q[(4, 1)] & 0 & Q[(4)] & Q[(3)] & 0 \\
-Q[(4, 3)] & 0 & Q[(3, 2)] & 0 & Q[(3)] & 0 & 0 & Q[(3)] \\
0 & -Q[(3, 2)] & 0 & Q[(2, 1)] & 0 & Q[(2)] & Q[(1)] & 0 \\
-Q[(4, 1)] & 0 & -Q[(2, 1)] & 0 & Q[(1)] & 0 & 0 & Q[(1)] \\
0 & -Q[(3)] & 0 & -Q[(1)] & 0 & 1 & 0 & 0  \\
-Q[(4)] & 0 & -Q[(2)] & 0 & -1 & 0 & 0 & 1 \\
-Q[(3)] & 0 & -Q[(1)] & 0 & 0 & 0 & 0 & 0  \\
0 & -Q[(3)] & 0 & -Q[(1)] & 0 & -1 & 0 & 0
\end{bmatrix}.
\end{eqnarray*}
Then, we have $\hat{S}_{\alpha/\beta}=-2^{-2}\cdot \mathrm{pf}[Q(A,B)]$.

\end{example}

\bibliographystyle{amsplainhyper} 
\bibliography{biblio}

\end{document}